\documentclass[journal,comsoc]{IEEEtran}

\ifCLASSINFOpdf
\else
\fi
\usepackage{graphicx}
\usepackage[numbers,sort&compress]{natbib}
\usepackage{mathtools}
\usepackage{amsmath}
\usepackage{amssymb}
\usepackage{amsthm}
\newtheorem{theorem}{Theorem}
\newtheorem{corollary}{Corollary}
\newtheorem{lemma}{Lemma}
\theoremstyle{remark}
\newtheorem{remark}{Remark}
\usepackage{capt-of} 
\usepackage{algorithm}
\usepackage{algorithmic}
\usepackage{subfigure}
\usepackage{diagbox}

\usepackage[table]{xcolor} 
\usepackage{colortbl}      
\usepackage{booktabs}      
\definecolor{lightblue}{rgb}{0.85,0.92,0.98} 

\makeatletter
\def\@IEEEBIOskipN{1.5\baselineskip}
\makeatother

\begin{document}

%
\title{Fractional-order SIS-UAU subdiffusion model for rumor propagation}

\author{Qin~Li,~\IEEEmembership{Member,~IEEE,}
        Jiexi~Tang,
        J\"{u}rgen~Kurths,
        Matja\v{z}~Perc,~\IEEEmembership{Member,~IEEE,}
        and~Minyu~Feng,~\IEEEmembership{Senior~Member,~IEEE,}%
\thanks{This work was supported by the Natural Science Foundation of Chongqing under Grant No.~CSTB2025YITP-QCRCX0007, the Fundamental Research Funds for the Central Universities under Grant Nos.~XJ2026004101 and SWU-KT26010, and the Slovenian Research and Innovation Agency under Grant No.~P1-0403 (corresponding authors: Matja\v{z} Perc and Minyu Feng).}%
\thanks{Qin Li is with the Business College, Southwest University, Chongqing 402460, China.}%
\thanks{Jiexi Tang is with the School of Journalism and Communication, Beijing Normal University, Beijing 100875, China.}%
\thanks{J\"{u}rgen Kurths is with the Potsdam Institute for Climate Impact Research, 14437 Potsdam, Germany, and also with the Department of Physics, Humboldt University of Berlin, 10117 Berlin, Germany.}%
\thanks{Matja\v{z} Perc is with the Faculty of Natural Sciences and Mathematics, University of Maribor, Koro\v{s}ka cesta 160, Maribor, 2000, Slovenia, Community Healthcare Center Dr. Adolf Drolc Maribor, Ulica talcev 9, Maribor, 2000, Slovenia, Department of Physics, Kyung Hee University, 26 Kyungheedae-ro, Seoul, 02447, Dongdaemun-gu, Republic of Korea, and University College, Korea University, 145 Anam-ro, Seoul, 02841, Seongbuk-gu, Republic of Korea (email: matjaz.perc@gmail.com).}%
\thanks{Minyu Feng is with the College of Artificial Intelligence, Southwest University, Chongqing 400715, China (email: myfeng@swu.edu.cn).}}


\markboth{IEEE TRANSACTIONS ON SYSTEMS, MAN, AND CYBERNETICS: SYSTEMS}%
{Li \MakeLowercase{\textit{et al.}}: \MakeLowercase{Fractional-Order SIS-UAU Subdiffusion Model for Rumor Propagation}}
%



\IEEEtitleabstractindextext{%
\begin{abstract}
Understanding information and rumor diffusion in complex 
social systems is essential for risk management. Classical SIS-UAU (Susceptible-Infected-Susceptible / Unaware-Aware-Unaware)  
models rely on Markovian assumptions and thus fail to capture the 
memory effects and subdiffusive behaviors commonly observed in real 
social interactions. To address this limitation, we propose a fractional-order SIS-UAU model based on Caputo derivatives, embedding non-Markovian memory into a duplex awareness–rumor spreading process. Methodologically, we derive the corresponding mean-field equations and establish existence, uniqueness, positivity, boundedness, and a Matignon-type stability criterion. The basic reproduction number is obtained, revealing how awareness dynamics reshape the effective rumor threshold. Numerically, we develop a block-adaptive FFT (Fast Fourier Transform) solver to efficiently handle long-horizon fractional dynamics, enabling large-scale simulations. Our results show that fractional memory significantly slows the transient evolution of rumors without altering the steady state. Furthermore, infection-layer parameters dominate long-term rumor prevalence, awareness transmission substantially suppresses rumor levels, and multilayer network topology—especially degree heterogeneity—critically influences propagation outcomes. Collectively, these findings underscore the necessity of integrating memory effects, awareness dynamics, and network structure for realistic rumor modeling.
\end{abstract}

\begin{IEEEkeywords}
Rumor propagation, 
Fractional-order SIS-UAU model, 
Caputo fractional derivative, 
Multiplex network.
\end{IEEEkeywords}}

\maketitle

\IEEEdisplaynontitleabstractindextext

%
\IEEEpeerreviewmaketitle

\section{Introduction}\label{sec:intro}
\small
\IEEEPARstart{T}{he} study of rumor propagation and information diffusion on complex networks has attracted sustained attention in recent years due to its profound impact on public opinion formation, social stability, and online communication dynamics. Rumor spreading processes exhibit rich nonlinear behaviors shaped by human cognitive responses, network topology, and time-dependent interactions, motivating the development of increasingly sophisticated mathematical models. Early studies explored the formation of rumor patterns through reaction-diffusion frameworks and examined how time delay, media correction, and self-correction mechanisms influence the emergence of spatial heterogeneity in rumor dynamics \cite{li2024turing}. Compartmental modeling approaches have further elucidated the fundamental spreading mechanisms of rumors in both homogeneous and heterogeneous complex networks, revealing threshold phenomena, stability properties, and bifurcation behaviors \cite{li2019dynamical, zhu2020delay}. In addition, research on cyber-rumor propagation over mobile and social platforms has incorporated behavioral features, such as memory effects, rumor denial, and user heterogeneity, leading to more realistic representations of online propagation processes \cite{liu2019modeling}. More recent efforts have further incorporated complex social dynamics, such as the game-theoretic competition between rumors and anti-rumors \cite{xiao2022diffusion}, community-aware suppression mechanisms \cite{gao2023multi}, and evolutionary game models with optimal control \cite{li2025rumor}. Further bridging behavior and dynamics, studies on time-dependent infection rates modulated by self-isolation awareness also highlight how adaptive individual responses fundamentally alter spreading patterns \cite{yuan2024impact}. Recent developments have also emphasized the importance of designing adaptive suppression strategies that incorporate dynamic environmental feedback and multi-round decision processes, advancing the practical management of rumor proliferation in social networks \cite{he2025adaptive}. Beyond rumor-specific studies, broader advances in the theory of signal propagation on complex networks have highlighted the central role of network structure and nonlinear node interactions in shaping diffusion processes, further motivating the need to investigate rumor spreading within multilayer and dynamical network environments \cite{yuan2023stability}.

In recent years, substantial progress has been made in understanding coupled information–risk dynamics on complex and multilayer networks. Early multiplex epidemic–awareness frameworks revealed that information diffusion on a virtual layer can fundamentally reshape epidemic thresholds and alter macroscopic outbreak patterns on the physical contact layer \cite{granell2013dynamical}. This framework has been extensively refined to investigate diverse coupling mechanisms, including the competitive effects of positive and negative preventive information \cite{wang2020epidemic}, the spatiotemporal interplay within metapopulation networks \cite{gao2022epidemic}, the coevolution of vaccination opinions and awareness \cite{dai2019coevolution}, and the impact of individual layer preference in temporal multiplex networks \cite{li2021epidemic}. Subsequent works have refined this framework by incorporating effects of both local and global information \cite{zhang2022effect}, and modeling the unaware-aware-unaware (UAU) diffusion process on structured networks \cite{li2022limited}. Subsequent studies have shown that node heterogeneity, temporal network evolution, and the placement of informed individuals can significantly modulate global propagation behavior and critical transitions \cite{liu2020impact}. More recent works have further distinguished multiple information types and incorporated indirect-contact pathways, demonstrating that both constructive and detrimental signals, together with higher-order contacts, exerted nontrivial influence on the final prevalence and system-level stability \cite{li2024multi}. Additionally, physical-layer cues in cyber–physical systems have been shown to affect awareness density and thereby raise effective epidemic thresholds in applied environments \cite{yuan2025impacts}. These developments form part of a broader theoretical landscape of network signal propagation. Parallel to these theoretical advances, data-driven approaches combining epidemiological models with artificial intelligence have also emerged for epidemic forecasting, highlighting the role of public awareness in shaping transmission trends \cite{zheng2020predicting}.

Parallel to these advances, a rich body of research has emphasized that many real spreading processes—biological or informational—exhibit memory, non-Markovian waiting times, and subdiffusive transients, motivating the use of fractional-order operators \cite{metzler2000random}. In particular, memory has been identified as a core mechanism in rumor and information diffusion: the cumulative effects of memory shape rumor dynamics on social networks \cite{zhang2015rumor}, and the bursty, heavy-tailed timing of human actions underlying these processes has been documented empirically \cite{barabasi2005origin}. Correspondingly, fractional-order rumor propagation models have been developed, e.g., with memory effects \cite{gao2022fractional}, with the Mittag--Leffler kernel \cite{singh2019new}, incorporating media reports \cite{yue2024dynamics}, and with refutation mechanisms on complex networks \cite{zhu2025dynamic}. These fractional-order rumor models, however, are formulated as single-layer compartmental systems in which the fractional memory acts on one rumor-related compartment and the bidirectional coupling between awareness formation and risk transmission is absent. By contrast, the present work embeds fractional memory in a two-layer UAU-SIS coevolution framework, where the history-dependent dynamics of the awareness layer feeds back into the risk layer and infection, in turn, forces an immediate awareness update; it is this bidirectional awareness--risk coupling, together with the complete threshold and stability theory of the fractional system, that constitutes the main departure from the existing fractional rumor literature. Fractional epidemic models have been shown to capture long-term memory effects, anomalous diffusion, and heavy-tailed infection or recovery times more accurately than their integer-order counterparts. Such models have been applied to a variety of settings, ranging from generic fractional epidemic dynamics \cite{ahmed2007fractional, pan2021fractional} and fractional-order COVID-19 models with improved data-fitting performance and optimal control strategies \cite{rajagopal2020fractional, hu2025modeling, hu2025dynamic} to study complexity, chaos, and bifurcation structures in fractional SAIR/SIR systems \cite{wang2025complexity, akrami2020hopf}. Complementary approaches also exist, such as employing stochastic queueing models to incorporate individual protective behaviors, offering alternative perspectives on capturing heterogeneity in epidemic dynamics \cite{li2021protection}. Further efforts include fractional discrete-time models based on Caputo difference operators \cite{nabil2024impact} and positivity-preserving numerical schemes for fractional SIS models \cite{hoang2020dynamics}. The utility of fractional operators has been demonstrated in modeling real-world outbreaks, such as capturing memory effects in Monkeypox transmission using fractional neural networks \cite{igiri2025monkeypox}, and in analyzing malware propagation dynamics in wireless sensor networks \cite{srivastava2023generalized}. More recently, fractional-order approaches have been applied to a range of infectious diseases, including measles modeling using Caputo--Fabrizio operators~\cite{nisar2025assessment}, malaria transmission via piecewise Caputo fractional operators~\cite{farman2025dynamics}, and neurological disorder modeling with fractal--fractional operators~\cite{farman2025modeling}. Beyond epidemiological settings, fractional-order dynamical analysis and control techniques have also been successfully applied to financial chaotic systems \cite{johansyah2022dynamical} and supply chain network dynamics \cite{johansyah2024global}, further illustrating the versatility of fractional-order modeling across diverse complex systems. Together, these works demonstrate the empirical relevance and modeling flexibility of fractional calculus in capturing memory-embedded spreading processes. Furthermore, non-Markovian stochastic simulations have demonstrated that non-exponential interevent time distributions can drastically alter epidemic spreading behavior on complex networks \cite{boguna2014simulating}, while percolation-based analyses of social distancing reveal that memory-dependent behavioral interventions fundamentally reshape propagation outcomes \cite{valdez2013social}. Recent data-driven approaches combining compartmental models with deep neural networks further underscore the practical relevance of memory-aware epidemic forecasting \cite{cheng2025leveraging}. While high-resolution rumor propagation datasets with network-level interaction records and awareness-state annotations are currently scarce, the growing availability of online social media data and digital contact-tracing records is rapidly closing this gap, making empirical calibration and quantitative validation of memory-aware rumor models an increasingly feasible and important direction.

Despite these advances, significant gaps remain and limit the applicability of existing fractional-order approaches to awareness–epidemic coevolution in particular the following ones: (1) most fractional epidemic models focus on single-layer biological contagion and do not incorporate a bidirectional coupling between awareness diffusion and risk transmission that underlies modern information–risk systems \cite{ahmed2007fractional, rajagopal2020fractional, pan2021fractional, wang2025complexity, akrami2020hopf, nabil2024impact, hoang2020dynamics}. As a result, memory-induced effects such as delayed awareness adoption, non-Markovian behavioral responses, and history-dependent risk perception—central to real information–risk coevolution—remain insufficiently understood. (2) Although heterogeneity, indirect contacts, and additional information sources have been shown to strongly shape epidemic outcomes in multilayer settings \cite{liu2020impact, li2024multi, yuan2025impacts}, current fractional-order models rarely integrate these components within a unified framework. (3) The theoretical understanding of stability, bifurcation, and threshold behavior in fractional, multiplex, and behaviorally adaptive systems remains limited, particularly when nonlocal temporal operators interact with multilayer network topology. (4) Scalable numerical solvers tailored to fractional, multilayer, and node-level stochastic processes are still underdeveloped, hindering rigorous calibration and large-scale simulation necessary for policy-relevant predictions \cite{he2022fractional, hu2024stability, minhos2025mathematical}. These limitations highlight the need for a comprehensive fractional-order framework capable of capturing memory, heterogeneity, and multilayer coupling in awareness–risk coevolution.

Motivated by these limitations of existing models that either neglect memory effects or overlook bidirectional coupling between awareness formation and risk diffusion, this study develops a new fractional-order UAU-SIS spreading framework to investigate how memory effects and awareness--risk coupling jointly influence rumor propagation in multiplex networks. Rather than simply introducing another variant of the classical SIS-UAU model, the main focus of this work is to characterize the theoretical properties of the fractional-order system, clarify the role of the fractional derivative in describing non-Markovian spreading dynamics, and examine how the fractional order, awareness mechanism, and network structure affect rumor suppression and transient evolution. Building upon the classical awareness-epidemic coevolution paradigm, the main contributions of this work are summarized as follows.

(1) We construct a novel multiplex network model that couples an information layer (governing awareness updating) with a physical layer (regulating risk propagation), and embed Caputo fractional derivatives into its dynamics to capture the non-Markovian, history-dependent behavioral responses of informed populations.

(2) The resulting fractional-order model simultaneously incorporates several realistic key features: multilayer structural interactions, heterogeneous awareness sensitivity, fractional temporal memory, and feedback-driven state transitions, thereby extending traditional UAU-SIS dynamics to a generalized setting.

(3) We establish a complete theoretical framework, which includes deriving the fractional mean-field equations, proving the system’s positivity and boundedness, determining its equilibrium conditions, and rigorously characterizing both rumor-free and rumor-endemic equilibria via a reproduction-number-based criterion. This provides the first theoretical description of stability and threshold behavior for fractional awareness–risk coevolution.

(4) We develop an adaptive FFT-based numerical solver for the resulting Caputo fractional-order differential equations and perform comprehensive simulations to demonstrate how memory strength, awareness responsiveness, and cross-layer coupling reshape outbreak profiles, delay phase transitions, and modify critical thresholds.

Collectively, these advancements provide a unified, memory-aware analytical framework for understanding and predicting coupled information–risk dynamics in high-knowledge social systems. To the best of our knowledge, this is the first study that embeds fractional-order memory in a two-layer awareness–risk coevolution framework and provides a complete threshold, local-stability, and global-stability theory for the resulting multiplex system, together with a mechanism-based stochastic validation of the mean-field closure. 

The rest of this paper is organized as follows. In Section~\ref{sec:modeling}, we develop a fractional-order extension of the classical SIS-UAU model to capture memory effects and establish its theoretical foundations. Section~\ref{sec:simulation} presents a comprehensive numerical study, where we examine the impact of fractional order, infection dynamics, awareness feedback, and network topology on rumor spreading. Section~\ref{sec:conclusion} concludes the paper by summarizing our main findings and suggesting potential avenues for future research. Technical details and proofs of all theoretical results are provided in the Supplementary Material.

\section{Model formulation and theoretical analysis}\label{sec:modeling}
To realistically capture the coevolution of information and risk in settings with bursty communication and memory effects---where individual decision-making exhibits memory effects and subdiffusive dynamics---we extend the classical awareness-epidemic paradigm to a fractional-order framework. This section serves as the analytical core of our work, structured in two main parts. First, we present the Fractional-Order SIS-UAU Model. We begin by situating the model within the context of informed, socially active populations and then detail its construction: starting from the classical two-layer UAU-SIS network model, we derive its mean-field formulation and subsequently generalize it by replacing the ordinary time derivative with a Caputo fractional derivative of order $\alpha\in(0,1)$. This key step formally incorporates the non-Markovian, historically weighted dynamics into the system. Second, we conduct a complete theoretical analysis of the Fractional-Order System. We establish the well-posedness of the model by proving the existence, uniqueness, positivity, and boundedness of its solutions. We then derive the basic reproduction number $\mathcal{R}_0$ 
and perform a comprehensive stability analysis, providing rigorous criteria for the local asymptotic stability of both the rumor-free equilibrium (RFE) and the rumor-endemic equilibrium. This systematic formulation and analysis lay the essential mathematical foundation for understanding the interplay between network structure, awareness diffusion, risk propagation, and memory in the subsequent numerical investigations.

\subsection{Fractional-Order SIS-UAU Modeling}\label{sec:fractional_modeling}
Our model is formulated within settings with bursty communication and memory effects, such as universities, research institutes, or high-technology companies. In such settings, individuals frequently exchange information and collaborate to jointly address the challenges posed by risk and crisis propagation. Due to the distinctive characteristics of individuals in these environments, the information diffusion process exhibits \textit{memory effects}: individuals make judgments and decisions based on past experiences and previously received information. Moreover, the diffusion process demonstrates a \textit{subdiffusive} nature, i.e., the information spreads more slowly than the exponential growth predicted by traditional models. A schematic representation of the entire model is shown in Fig.~\ref{fig:fractional}.

\begin{figure*}[htbp]
  \centering
  \subfigure[]{
        \includegraphics[width=0.20574683\linewidth]{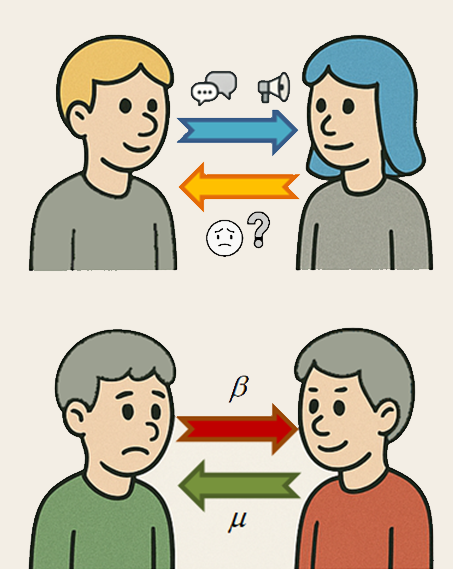}
        \label{fig:fractional_1}
    }
  \subfigure[]{
        \includegraphics[width=0.19530052\linewidth]{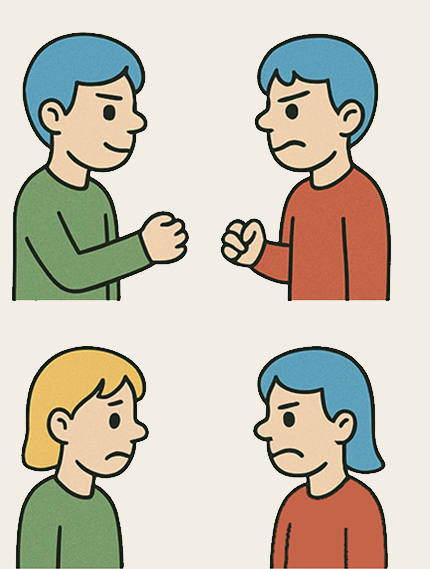}
        \label{fig:fractional_2}
    }
  \subfigure[]{
        \includegraphics[width=0.1918\linewidth]{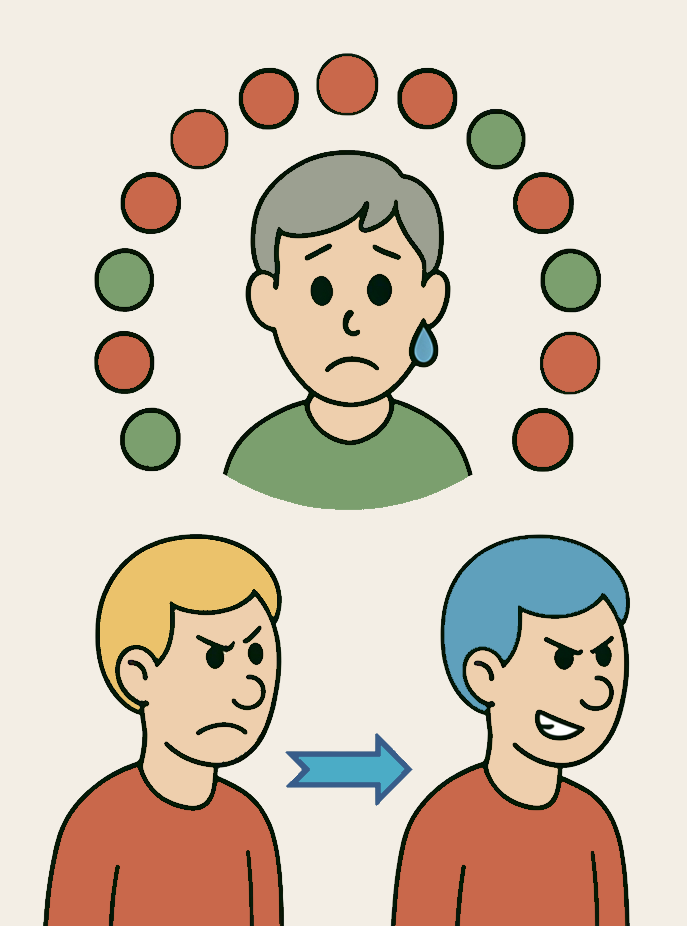}
        \label{fig:fractional_3}
    }
  \subfigure[]{
        \includegraphics[width=0.19620889\linewidth]{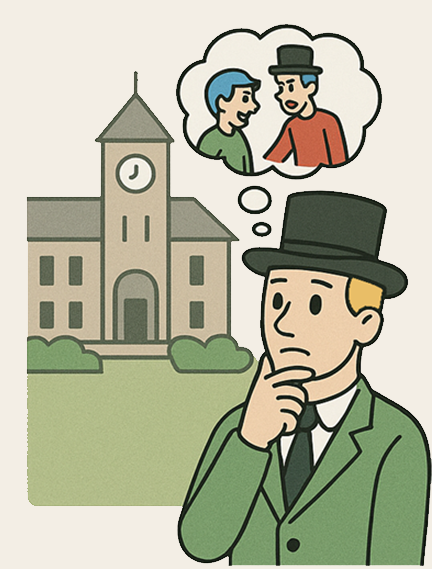}
        \label{fig:fractional_4}
    }
    \vspace{-3mm}
  \caption{
    \textbf{Fractional-order SIS-UAU rumor diffusion model.} 
    The diagram illustrates the fundamental mechanisms of the fractional-order SIS-UAU model. 
    Individuals with yellow and blue hair represent the unaware and aware states, respectively; 
    green and red clothing indicate the susceptible and infected states, while gray is used to suppress visually the attribute that remains unchanged in a given sub-panel, directing attention to the active transition mechanism. (a) Information diffusion occurs through local interactions in the 
    consciousness layer, such as face-to-face conversations or online communications. Unaware individuals become aware through these local contacts, while 
    aware individuals may spontaneously forget and revert to the unaware state. Simultaneously, 
    in the risk layer, rumor propagation follows contact-based transmission, where susceptible 
    and infected individuals transition between states at certain rates through interactions 
    with infected neighbors. (b) Aware individuals exhibit stronger resistance to rumor infection 
    compared to unaware individuals. (c) The infection process involves each infected neighbor 
    interacting with susceptible individuals at a certain rate to spread rumors. Higher local 
    infection density leads to shorter actual infection times for susceptible individuals. 
    Additionally, an unaware individual, upon infection, immediately becomes aware and exhibits an instantaneous transition to the aware state.
    (d) In settings with bursty communication and memory effects, information spreading exhibits memory effects and subdiffusive behavior, which are captured by the fractional-order derivative formulation.
      }
  \label{fig:fractional}
  \vspace{-2mm}
\end{figure*}

To realistically capture the coevolution of information and risk in settings with bursty communication and memory effects, we extend the classical awareness-epidemic paradigm to a fractional-order framework. Rumor propagation in such systems is often characterized by non-Markovian temporal dependence, where individuals' current decisions are influenced not only by the present network state but also by past interactions and previously received information. The classical integer-order SIS-UAU model is able to describe the basic coupling between awareness diffusion and risk transmission, but it cannot explicitly account for memory effects and subdiffusive spreading behavior. By introducing a Caputo fractional derivative of order $\alpha \in (0,1)$, the proposed model provides a more flexible and realistic description of rumor propagation while preserving the analytical structure needed for theoretical analysis and numerical simulation. We first outline the structure and dynamics of the underlying classical model, which forms the basis for our fractional-order extension.

The classical SIS-UAU model evolves on a duplex network with two layers, each containing $N$ nodes in a one-to-one correspondence. The connectivity patterns in both layers may differ. The upper layer represents the information layer, with the adjacency matrix $\mathbf{A}=\{\mathbf{a}_{ij}\}_{N\times N}$, while the lower layer represents the risk-spreading layer, with the adjacency matrix $\mathbf{B}=\{\mathbf{b}_{ij}\}_{N\times N}$.

The proposed model is built on a duplex network with one-to-one correspondence between the two layers, where the information layer governs awareness diffusion and the risk layer governs rumor transmission. Compared with the classical SIS-UAU model, the present framework incorporates the following key features, which collectively provide a tractable yet more realistic description of rumor propagation in multiplex networks.
\begin{itemize}
    \item \textbf{Duplex network structure.} The system evolves on two coupled layers with adjacency matrices $\mathbf{A}=\{a_{ij}\}$ (information layer) and $\mathbf{B}=\{b_{ij}\}$ (risk layer). Nodes in the two layers correspond one-to-one, but the connectivity patterns may differ, allowing independent characterization of the awareness-spreading and risk-transmission topologies.
    \item \textbf{Reduced state space with instantaneous post-infection awareness.} Each node possesses two binary attributes---susceptible or infected ($S/I$) on the risk layer, and unaware or aware ($U/A$) on the information layer---yielding four possible joint states $\{US, AS, UI, AI\}$ in principle. However, we adopt the limit $\kappa\to\infty$, meaning that an infected node becomes aware immediately upon exposure to risk. Consequently, the $UI$ state is transient and the effective dynamical system operates on the reduced space $\{US, AS, AI\}$. Although a finite $\kappa$ would be more realistic, it would re-introduce the $UI$ state as a genuine dynamical variable, enlarging the state space to four dimensions and requiring the full $4\times4$ generator matrix---substantially complicating the theoretical analysis. The $\kappa\to\infty$ limit is therefore adopted as a deliberate simplification that preserves analytical tractability while capturing the essential bidirectional coupling.
    \item \textbf{Continuous-time Poisson transition rules.} All state changes are governed by independent Poisson processes. A $US$ node becomes $AS$ at rate $\lambda$ per aware neighbor; both $US$ and $AS$ nodes become infected ($\to AI$) upon contact with infected neighbors, with an infection rate that depends on awareness; $AS$ nodes revert to $US$ at forgetting rate $\delta$; and $AI$ nodes recover to $AS$ at rate $\mu$. These elementary rules define a well-posed microscopic stochastic process for the classical integer-order setting.
    \item \textbf{Awareness-dependent infection susceptibility.} Awareness reduces an individual's vulnerability to rumor infection. Formally, the infection rate satisfies $\beta^A = \gamma\beta^U$ with $0<\gamma<1$, so that aware susceptible individuals ($AS$) are infected at a lower rate than unaware ones ($US$), capturing the protective effect of corrective information.
    \item \textbf{Mean-field closure.} To obtain a tractable system of dynamical equations, the stochastic neighbor-averaged quantities $m_i^A(t)$, $m_i^I(t)$, $\mathbf{X}_i^A(t)$, and $\mathbf{X}_i^I(t)$ are replaced by their population-level expectations $M_i^A(t)$ and $M_i^I(t)$ via a standard heterogeneous-network mean-field approximation. This closure neglects higher-order correlations but preserves the leading-order effect of network topology on the coevolutionary dynamics.
    \item \textbf{Fractional-order temporal memory.} The ordinary time derivative in the closed mean-field system is replaced by a Caputo fractional derivative ${}^C_0\mathcal{D}_t^\alpha$ of order $\alpha\in(0,1)$. This substitution introduces non-Markovian memory effects and subdiffusive dynamics at the macroscopic level, allowing the spreading process to depend on the entire history of the system rather than only on the instantaneous state. Importantly, the Caputo derivative is applied to the already-closed mean-field equations rather than being introduced through a modification of the underlying microscopic interaction rules; as a result, the present model studies how memory effects, once incorporated at the continuum level, modify the transient behavior, equilibrium conditions, and local stability properties of the coevolutionary process.
\end{itemize}
The main symbols used in the model are summarized in Table~\ref{tab:symbols}.

\begin{table}[htbp]
\centering
\renewcommand{\arraystretch}{1.05}
\setlength{\tabcolsep}{6pt}
\caption{Symbols and definitions of the extended model}
\label{tab:symbols}
\begin{tabular}{ll}
\toprule
\textbf{Symbol} & \textbf{Definition} \\
\midrule
$N$ & Total number of nodes in the network \\
$\mathbf{A}$ & Adjacency matrix of the information layer $(a_{ij})_{N\times N}$ \\
$\mathbf{B}$ & Adjacency matrix of the risk layer $(b_{ij})_{N\times N}$ \\
$p^{XY}_i(t)$ & Probability that node $i$ is in state $XY$ at time $t$ \\
$\lambda$ & Awareness spreading rate for the $US \to AS$ transition \\
$\kappa$ & Instantaneous awareness transition rate for $UI \to AI$ \\
$\delta$ & Forgetting rate for the $A \to U$ transition \\
$\mu$ & Recovery rate of an infected node to the susceptible state\\
$\beta_{i}^{A/U}$ & Risk transmission rate in aware/unaware susceptible state \\
$m_i^I(t)$ & Proportion of infected (risk-adopted) neighbors of node $i$ \\
$m_i^A(t)$ & Proportion of aware neighbors of node $i$ at time $t$ \\
$\mathbf{X}_i^I(t)$ & Indicator of the infected state for node $i$ at time $t$\\
$\mathbf{X}_i^A(t)$ & Indicator of the awareness state for node $i$ at time $t$\\
$M_i^I(t)$ & Mean-field approximation of $i$'s infected neighbors ratio\\
$M_i^A(t)$ & Mean-field approximation of $i$'s aware neighbors ratio\\
$\gamma$ & Relative transmission strength between $A$ and $U$ individuals \\
$\alpha$ & Order of the fractional derivative \\
\bottomrule
\end{tabular}
\end{table}

\begin{figure}[htbp]
    \centering
    \vspace{-1mm}
    \includegraphics[width=0.8\linewidth]{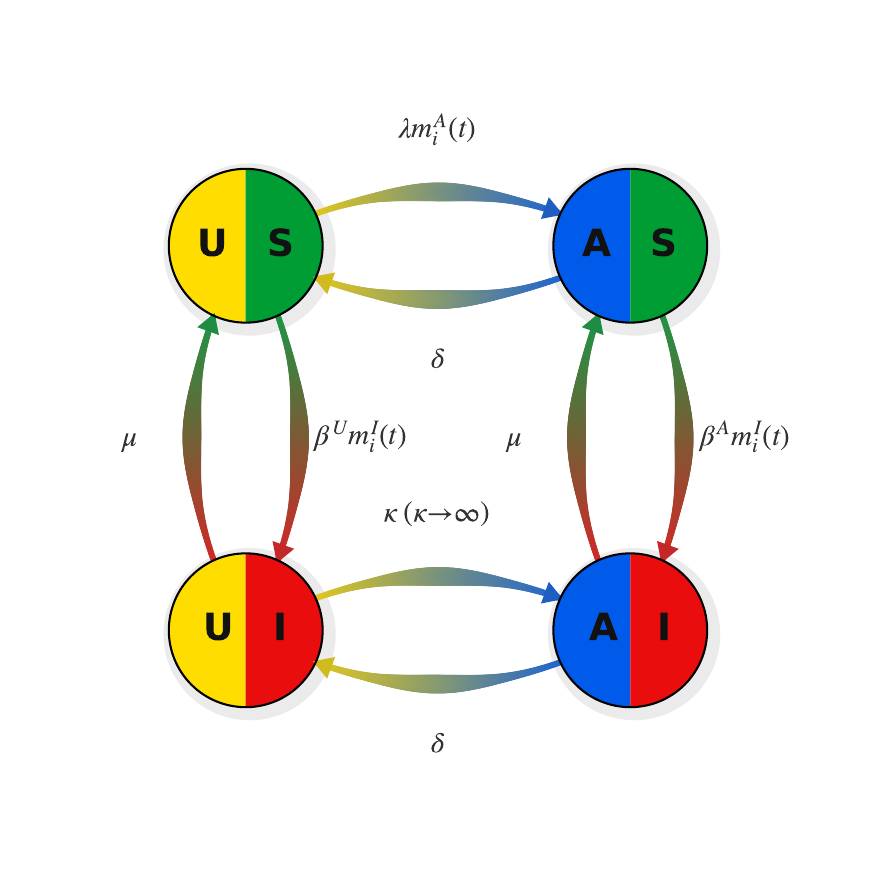}
    \vspace{-6mm}
      \caption{\textbf{Classical continuous-time SIS-UAU model.} This schematic diagram illustrates the structure of the classical SIS-UAU model on a two-layer multiplex network. Each node is represented by a circle vertically divided into two halves: the \emph{left} half corresponds to the information layer and the \emph{right} half to the risk-spreading layer. Yellow and blue indicate the \textbf{unaware} (\(U\)) and \textbf{aware} (\(A\)) states on the information layer, respectively, while green and red represent the \textbf{susceptible} (\(S\)) and \textbf{infected} (\(I\)) states on the risk layer. The four possible joint states are \(US\) (unaware–susceptible), \(AS\) (aware–susceptible), \(UI\) (unaware–infected), and \(AI\) (aware–infected). Arrows indicate the possible transitions between these states, together with their corresponding rates. Individuals can switch between unaware and aware states through awareness spreading (\(\lambda m_i^A(t)\)) and forgetting (\(\delta\)). Meanwhile, susceptible individuals can become infected through contact with infected neighbors, at rates \(\beta^U m_i^I(t)\) for unaware and \(\beta^A m_i^I(t)\) for aware individuals, and recover at rate \(\mu\). Additionally, infected unaware individuals immediately become aware, modeled by an instantaneous transition \(U I \to A I\) with \(\kappa \to \infty\). This figure highlights how awareness diffusion and risk spreading evolve on separate layers but are dynamically coupled through the information-induced modification of infection.}
    \label{fig:transition}
    \vspace{0mm}
\end{figure}

The possible transitions between the four joint states are comprehensively captured by the state-transition diagram shown in Fig.~\ref{fig:transition}. Such a diagram is a directed graph, where nodes represent discrete system states and directed edges indicate possible transitions between them, with edge weights corresponding to the transition rates. In this diagram, it can be observed that at any given time, an individual undergoes a change in either their awareness state or their risk state, but not both simultaneously. Crucially, for transitions of the same type, the specific state in the other layer can lead to different transition rates. For instance, a susceptible individual becomes $I$ at rate $\beta^U m_i^I(t)$ if unaware, but at a reduced rate $\beta^A m_i^I(t)$ if aware ($\beta^A = \gamma \beta^U, 0<\gamma<1$). This explicitly illustrates the dynamic coupling between the two processes: risk perception (awareness state) modifies the risk of infection, while the infection event itself ($UI \to AI$) forces an immediate update of the awareness state.

Based on the transition diagram, we construct the infinitesimal generator matrix:
\begin{small}
  \begin{align}
  \mathcal{Q}^+_i(t) &:=
  \begin{pmatrix}
  Q^+_{11} & \lambda m_i^A(t) & \beta^U m_i^I(t) & 0 \\
  \delta & Q^+_{22} & 0 & \beta^A m_i^I(t) \\
  \mu & 0 & -(\mu+\kappa) & \kappa \\
  0 & \mu & \delta & -(\mu+\delta)
  \end{pmatrix}, 
  \end{align}
\end{small}
where
\begin{small}
\begin{align}
  \begin{cases}
Q^+_{11} &= -(\lambda m_i^{A+}(t)+\beta^U m_i^{I+}(t)), \\
Q^+_{22} &= -(\delta+\beta^{A} m_i^{I+}(t)).
\end{cases}
\end{align}
\end{small}
In the infinitesimal generator matrix $\mathcal{Q}^+_i(t)$, the diagonal elements $Q^+_{11}$ and $Q^+_{22}$ are defined by the outgoing rates. The off-diagonal transitions are governed by the awareness spreading rate $\lambda$, the forgetting rate $\delta$, the recovery rate $\mu$, and the immediate awareness gain rate $\kappa$. The infection rates for unaware and aware susceptible individuals are $\beta^U$ and $\beta^A$ respectively, with $\beta^A = \gamma \beta^U$, where $0 < \gamma < 1$. These transitions are modulated by the proportion of aware neighbors $m_i^{A+}(t)$ and infected neighbors $m_i^{I+}(t)$.
In this derivation, we adopt the following notational convention: variables with a superscript plus sign (e.g., $\mathcal{P}^+$, $\mathcal{Q}^+_i$, $m_i^{I+}$) correspond to quantities defined on the full state space ${US, AS, UI, AI}$, which includes the transient $UI$ state. After taking the limit $\kappa\to\infty$, these reduce to their asterisk‑free counterparts ($\mathcal{P}$, $\mathcal{Q}_i$, $m_i^I$, etc.) defined on the effective state space ${US, AS, AI}$. This distinction keeps the two stages of the derivation clearly separated.

We denote the state probability distribution of node $i$ at time $t$ over the four possible states (including transient ones) as
\begin{small}
\begin{align}
  \mathcal{P}^+_i(t):=\begin{pmatrix}
    p_i^{US}(t) & p_i^{AS}(t) & p_i^{UI}(t) & p_i^{AI}(t)
  \end{pmatrix},
\end{align}
\end{small}
and represent the joint probability distribution of all nodes in matrix form as
\begin{align}
  \mathcal{P}^+(t):=\begin{pmatrix}
    \mathcal{P}^+_1(t) & \mathcal{P}^+_2(t) & \dots & \mathcal{P}^+_N(t)
  \end{pmatrix}^T.
\end{align}
In this way, we obtain the Kolmogorov forward equation:
\begin{align}
  \frac{\text d \mathcal P^+(t)}{\text d t}=\mathcal{P}^+(t)\bigoplus_{i=1}^N{\mathcal{Q}^+_i(t)}.
\label{eq:kolmogorov}
\end{align}
The Kolmogorov forward equation is adopted as the rigorous starting point of our model derivation: it encodes the microscopic Markov transition rules (summarized by the generator $\mathcal{Q}^+_i(t)$) into an exact evolution equation for the state probability distribution, thereby guaranteeing that the mean-field equations derived subsequently inherit the assumed propagation mechanism rather than being introduced phenomenologically. This distinguishes the present framework from purely heuristic rate-equation rumor models. We also note that the Caputo fractional derivative, as a distributed-memory operator with a power-law kernel, inherently accounts for history-dependent and delayed effects in a continuous sense, whereas explicit discrete time-delay terms \cite{zhu2020delay, yuan2023stability} and saturating functional responses \cite{hoang2020dynamics}---while valuable in other rumor models---would break the exact correspondence between the microscopic Poisson rules and the mean-field formulation, and are therefore left as extensions of the present framework.
Here, the symbol $\bigoplus_{i=1}^N$ denotes the direct sum of matrices, constructing a block-diagonal matrix $\text{diag}(\mathcal{Q}^+_1(t), \dots, \mathcal{Q}^+_N(t))$, where each block $\mathcal{Q}^+_i(t)$ governs the local state transitions of node $i$, and off-diagonal blocks are zero.
This Kolmogorov forward equation synthesizes the local transition rules into the core governing dynamics for the system's state probability distribution $\mathcal{P}^+(t)$. The block-diagonal structure of $\bigoplus_{i=1}^N \mathcal{Q}^+_i(t)$ signifies that the global dynamics emerges from the aggregate of all nodes' independent local transitions, with the network coupling implicitly encoded within the definitions of each $\mathcal{Q}^+_i(t)$.

However, since we employ indicator variables in the model, a mean-field approximation is required to close the system. Specifically, we approximate the indicator variables as follows:
\begin{small}
  \begin{align}
    \begin{cases}
      m^{A+}_i(t)\approx M_i^{A+}(t)=\sum_{j=1}^{N}\mathbf{a}_{ij}p_j^{A}(t)=\mathbf{e}_i^T\mathbf{A}\mathcal{P}^+\begin{pmatrix}0,1,0,1\end{pmatrix}^T,\\[0.6em]
      m^{I+}_i(t)\approx M_i^{I+}(t)=\sum_{j=1}^{N}\mathbf{b}_{ij}p_j^{I}(t)=\mathbf{e}_i^T\mathbf{B}\mathcal{P}^+\begin{pmatrix}0,0,1,1\end{pmatrix}^T.
    \end{cases}
    \label{eq:mean_field}
  \end{align}
\end{small}
\noindent where $\mathbf{e}_i$ is the $i$-th standard basis vector in $\mathbb{R}^N$, and $p_j^{A}(t)=p_j^{AS}(t)+p_j^{AI}(t)$ and $p_j^{I}(t)=p_j^{AI}(t)$ denote the marginal probabilities that node $j$ is aware and infected at time $t$, respectively.
Substituting the mean-field approximations (\ref{eq:mean_field}) into the Kolmogorov forward equation (\ref{eq:kolmogorov}) and applying the transient-state assumption, we obtain the following relationship:
\begin{align}
    \frac{\text dp^{UI}_i(t)}{\text dt}=p^{UI}_i(t)=0.
\end{align}
By substituting and eliminating the $UI$ state, we yield the closed $N$-entangled mean-field equations for the continuous-time UAU-SIS model:
\begin{align}
  \frac{\text d}{\text d t}\mathcal{P}(t)=\mathcal{P}(t)\bigoplus_{i=1}^N\mathcal{Q}_i(t),
  \label{eq:classical}
\end{align}
where $\mathcal{P}_i(t)$ denotes the state probability distribution of node $i$ after eliminating the $UI$ state:
\begin{align}
  \mathcal{P}_i(t):=\begin{pmatrix}
    p_i^{US}(t) & p_i^{AS}(t) & p_i^{AI}(t)
  \end{pmatrix}.
\end{align}
The matrix $\mathcal{P}(t)$ is then defined as
\begin{align}
  \mathcal{P}(t):=\begin{pmatrix}
    \mathcal{P}_1(t) & \mathcal{P}_2(t) & \dots & \mathcal{P}_N(t)
  \end{pmatrix}^T,
\end{align}
and $\mathcal{Q}_i(t)$ represents the infinitesimal generator matrix after eliminating the $UI$ state:
\begin{align}
\mathcal{Q}_i(t) &:=
\begin{pmatrix}
Q_{11}(t) & \lambda M_i^A(t) & \beta^U M_i^I(t) \\
\delta & -(\delta + \beta^A M_i^I(t)) & \beta^A M_i^I(t) \\
0 & \mu & -\mu
\end{pmatrix} ,
\end{align}
where
\begin{align}
  \begin{cases}
Q_{11}(t) &= -(\lambda M_i^A(t) + \beta^U M_i^I(t)), \\[2pt]
  M_i^A(t) &= \mathbf{e}_i^T\mathbf{A}\mathcal{P}\begin{pmatrix}0,1,1\end{pmatrix}^T, \\
  M_i^I(t) &= \mathbf{e}_i^T\mathbf{B}\mathcal{P}\begin{pmatrix}0,0,1\end{pmatrix}^T.
\end{cases}
\end{align}
Eq.~\eqref{eq:classical} describes the evolution of state probabilities in the classical Markovian setting, where changes depend only on the current state.

The above expressions constitute the standard mean-field closure. The fractional-order extension is built at the macroscopic level—the Caputo derivative replaces the ordinary derivative in the already-closed equations, rather than modifying the underlying microscopic rules. Hence no agent-based stochastic process exists from which the fractional equations can be recovered, making direct Monte Carlo simulation infeasible. The reason is that a mechanism-based (e.g., Gillespie-type) Monte Carlo simulation requires a complete node-level event prescription: at every instant each node must possess well-defined transition rates from which the next event can be sampled, exactly as the Poisson rules above provide for the integer-order model. The Caputo operator, by contrast, is a macroscopic, history-weighted operator applied to the aggregate equations; it does not determine a unique per-node waiting-time or residual-time law, so any attempt to construct a node-level simulator with memory would require additional modeling assumptions and would simulate a different model rather than the proposed one. The model instead investigates the macroscopic consequences of memory effects on transient dynamics, equilibrium, and stability, and the consistency of the fractional extension is assessed through the convergence test as $\alpha\to1$ (Section~12.1 of the Supplementary Material).

The infinitesimal generator matrix \(\mathcal{Q}_i(t)\) can be decomposed into three distinct components, offering a clear interpretation of the modeled dynamics:
\begin{align}
\mathcal{Q}_i(t) =&
\underbrace{\begin{pmatrix}
0 & 0 & 0 \\
\delta & -\delta & 0 \\
0 & \mu & -\mu
\end{pmatrix}}_{\text{Intrinsic transitions}} 
+ M_i^A(t) \underbrace{\begin{pmatrix}
-\lambda & \lambda & 0 \\
0 & 0 & 0 \\
0 & 0 & 0
\end{pmatrix}}_{\text{Awareness spreading}} \notag\\
&+ M_i^I(t) \underbrace{\begin{pmatrix}
-\beta^U & 0 & \beta^U \\
0 & -\beta^A & \beta^A \\
0 & 0 & 0
\end{pmatrix}}_{\text{Risk transmission}} .
\label{eq:Qi_decomposition}
\end{align}
In Eq.~\eqref{eq:Qi_decomposition}, the three components respectively govern intrinsic transitions (forgetting and recovery), awareness spreading, and risk transmission---the latter two weighted by the local neighbor densities \(M_i^A(t)\) and \(M_i^I(t)\). This additive structure captures the core coupling: awareness inhibits risk (\(\beta^A < \beta^U\)), while infection forces an immediate awareness update, both mediated through the network structure.

To incorporate the memory effects and subdiffusive behavior characteristic of information spreading in settings with bursty communication and memory effects, we generalize this framework by replacing the ordinary time derivative with a Caputo fractional derivative of order $\alpha \in (0,1)$. The Caputo fractional derivative is defined as
\begin{align}
{}^C_0\mathcal{D}_t^\alpha q(t)=\frac{1}{\Gamma(1-\alpha)}\int_0^t \frac{q'(\tau)}{(t-\tau)^\alpha}\, \mathrm{d}\tau,
\end{align}
where $\Gamma(\cdot)$ denotes the Gamma function \cite{podlubny1998fractional, kilbas2006theory}.

By replacing the time derivative in the classical SIS-UAU model with the Caputo fractional derivative, we obtain the fractional-order SIS-UAU model:

\begin{align}
  {}^C_0\mathcal{D}^\alpha_t\mathcal{P}(t)=\mathcal{P}(t)\bigoplus_{i=1}^N\mathcal{Q}_i(t),
  \label{eq:main}
\end{align}
where $\alpha \in (0,1)$; the classical model is recovered at $\alpha = 1$. The component-wise form of Eq.~\eqref{eq:main} is:

\begin{align}
  \begin{cases}
    {}^C_0\mathcal{D}^\alpha_tp_i^{US}(t)=&-\left(\lambda M_i^A(t)+\beta^UM_i^I(t)\right)p_i^{US}(t)+\delta p_i^{AS}(t),\\
    {}^C_0\mathcal{D}^\alpha_tp_i^{AS}(t)=&\left(\lambda M_i^A(t)\right)p_i^{US}(t)-\left(\delta+\beta^AM_i^I(t)\right)p_i^{AS}(t)\\
    &+\mu p_i^{AI}(t),\\
    {}^C_0\mathcal{D}^\alpha_tp_i^{AI}(t)=&\left(\beta^UM_i^I(t)\right)p_i^{US}(t)+\left(\beta^AM_i^I(t)\right)p_i^{AS}(t)\\&
    -\mu p_i^{AI}(t).
  \end{cases}
  \label{eq:main_component}
\end{align}

Note that $M_i^I(t)$ in Eq.~\eqref{eq:main_component} remains an instantaneous function of the current state probabilities; however, because $\mathcal P(t)$ evolves under the Caputo fractional derivative, $M_i^I(t)$ inherits history dependence indirectly through the fractional-order dynamics.

By the definition of the Caputo fractional derivative, Eq.~\eqref{eq:main} implicitly requires that
\begin{align}
\mathcal{P} \in C^{\lceil \alpha \rceil}\bigl(\mathbb{R}+, \mathbb{R}^{N\times 3}\bigr)
= C^1\bigl(\mathbb{R}+, \mathbb{R}^{N\times 3}\bigr), \quad \text{a.e.}
\end{align}
For convenience, in the subsequent analysis we directly assume that 
\begin{align}
  \mathcal{P} \in C^1\bigl(\mathbb{R}_+, \mathbb{R}^{N\times 3}\bigr).
\end{align}
This regularity assumption formalizes the intuitive notion of smooth evolution and ensures the well-posedness of the subsequent existence, uniqueness, and stability analyses.

\subsection{Theoretical Analysis of the Fractional-Order SIS-UAU Model}\label{sec:theoretical_analysis}

In this subsection, we analyze the theoretical properties of the fractional-order SIS-UAU model (\ref{eq:main}), including the existence and uniqueness of solutions, positivity and boundedness, the basic reproduction number, as well as the local asymptotic stability of both the rumor-free equilibrium and the rumor-endemic equilibrium. 
We now present the existence and uniqueness of solutions to system~\eqref{eq:main} under given initial conditions. In what follows, $\Omega$ denotes the feasible state space
\begin{align}
  \Omega :=\Big\{\mathcal P \in \mathbb{R}^{N\times 3} \ \big|\ p_i^{US},\,p_i^{AS},\,p_i^{AI} \ge 0,\\\notag
  \ \ p_i^{US}+p_i^{AS}+p_i^{AI}=1,\ \forall i\in\mathbb Z[1,N]\Big\},
\end{align}
and $L$ denotes the Lipschitz constant of the vector field $F(\mathcal P):=\mathcal P\bigoplus_{i=1}^N\mathcal Q_i(\mathcal P)$ on $\Omega$.
\begin{theorem}\label{theorem:existence_uniqueness}
  (Existence and uniqueness of the solution)
Consider the fractional-order SIS-UAU model defined by Eq.~\eqref{eq:main} with the initial condition $\mathcal{P}(0) = \mathcal{P}_0 \in \Omega$. Then, there exists a unique solution $\mathcal{P}(t)$ defined on $[0, T]$ for any $T \in \left[0,\left(\frac{\Gamma(1+\alpha)}{L}\right)^{1/\alpha}\right)$.
\end{theorem}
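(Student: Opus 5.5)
The plan is to recast Eq.~\eqref{eq:main} as an equivalent Volterra integral equation and then apply the Banach fixed-point theorem on $C([0,T],\mathbb{R}^{N\times 3})$ with the sup norm. Because $\mathcal P\in C^1$, the Caputo problem with $\mathcal P(0)=\mathcal P_0$ is equivalent to
\begin{align*}
\mathcal P(t)=\mathcal P_0+\frac{1}{\Gamma(\alpha)}\int_0^t (t-\tau)^{\alpha-1}F(\mathcal P(\tau))\,\mathrm d\tau .
\end{align*}
This follows from the standard identity $I^\alpha\,{}^C_0\mathcal D^\alpha_t q=q-q(0)$ and its converse, taken from \cite{kilbas2006theory, podlubny1998fractional}. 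I would define the operator $\mathcal T$ by the right-hand side of this equation. Any fixed point of $\mathcal T$ is then a solution of \eqref{eq:main}, and conversely.

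\textbf{Step 1: Lipschitz constant.} I would check that $F$ is Lipschitz on $\Omega$, and on a neighbourhood of it, which justifies the constant $L$. Each component of $F$ is a polynomial of degree at most two in the entries of $\mathcal P$, since $M_i^A$ and $M_i^I$ are linear in $\mathcal P$ through $\mathbf A$ and $\mathbf B$. On the bounded set where every entry lies in $[0,1]$, the derivatives are therefore bounded. The resulting bound on $L$ involves $\lambda,\delta,\mu,\beta^U,\beta^A$ and the maximal row sums of $\mathbf A$ and $\mathbf B$.

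\textbf{Step 2: Contraction estimate.} For $\mathcal P,\tilde{\mathcal P}$ one has
\begin{align*}
\|\mathcal T\mathcal P-\mathcal T\tilde{\mathcal P}\|_\infty\le \frac{L}{\Gamma(\alpha)}\sup_{t\le T}\int_0^t (t-\tau)^{\alpha-1}\mathrm d\tau\,\|\mathcal P-\tilde{\mathcal P}\|_\infty=\frac{LT^\alpha}{\Gamma(1+\alpha)}\|\mathcal P-\tilde{\mathcal P}\|_\infty .
\end{align*}
The factor $LT^\alpha/\Gamma(1+\alpha)$ is strictly less than $1$ exactly when $T<(\Gamma(1+\alpha)/L)^{1/\alpha}$, which is the range stated in the theorem. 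Banach's theorem then gives a unique fixed point, hence a unique continuous solution on $[0,T]$. To recover the assumed $C^1$ regularity on $(0,T]$ one would use a bootstrap argument, or simply interpret the solution in the integral sense.

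\textbf{Main obstacle.} The delicate point is that $L$ is only a Lipschitz constant on $\Omega$, or on a bounded set, whereas the contraction is taken over a whole function space. I would handle this by working on the closed subset $X=\{\mathcal P\in C([0,T]) : \mathcal P(t)\in K\}$. Here $K$ is a bounded convex set containing $\Omega$ on which the Lipschitz bound still holds; for example, $F$ can be extended globally Lipschitz by composing with the nearest-point projection onto $\Omega$, which is nonexpansive. With the projected field, $\mathcal T$ maps the full space into itself and the contraction estimate applies without change.

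Finally, I would check that the fixed point actually stays in $\Omega$, so that it solves the original unprojected system. Conservation of mass holds because each row of $\mathcal Q_i$ sums to zero, so ${}^C_0\mathcal D^\alpha_t$ of $\sum p_i$ vanishes and the sum stays equal to $1$. For nonnegativity I would invoke the positivity result proved next in the paper, or argue directly from the quasi-positive structure of $F$: each ${}^C\mathcal D^\alpha p$ is nonnegative whenever that $p=0$. Together these show the solution is unique in $\Omega$.
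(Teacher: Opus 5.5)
Your route is the paper's: rewrite the problem as a Volterra integral equation, then apply a Banach contraction with factor $LT^\alpha/\Gamma(1+\alpha)$, which is exactly where the admissible range of $T$ comes from. Your `main obstacle' step is a genuine repair of that proof, not a detour. The paper runs the contraction on $C([0,T],\Omega)$ without checking that $\mathcal T$ maps this set into itself, and in general it does not. For example, take a constant path $\mathcal P\equiv\mathcal P_1\in\Omega$ with $F_i^{US}(\mathcal P_1)<0$ and $p_{0,i}^{US}=0$; then $\mathcal T(\mathcal P)_i^{US}(t)=F_i^{US}(\mathcal P_1)\,t^\alpha/\Gamma(1+\alpha)<0$. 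Replacing $F$ by $F\circ\pi_\Omega$ and showing a posteriori that the fixed point stays in $\Omega$ is what makes the argument close. Uniqueness among $\Omega$-valued solutions then follows, because every such solution is a fixed point of the projected operator.

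Two points need tightening.

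First, the nearest-point projection is nonexpansive in the Euclidean (Frobenius) norm, but not in the entrywise max norm the paper uses for $L$. On a simplex, $(1,0,0)$ and $(1-\epsilon,\epsilon,\epsilon)$ are at max-distance $\epsilon$. Their projections, $(1,0,0)$ and $(1-\tfrac{4\epsilon}{3},\tfrac{2\epsilon}{3},\tfrac{2\epsilon}{3})$, are at max-distance $\tfrac{4\epsilon}{3}$. So compute $L$, and take the sup-in-time norm, with respect to the Frobenius norm. Otherwise your claim that the contraction estimate `applies without change' is not justified.

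Second, for the invariance of $\Omega$ do not cite the positivity theorem (Theorem~2). Its proof concerns the unprojected field and takes the solution of the original system as given, which is what you are proving. Instead, run its argument directly on the projected system:
\begin{itemize}
\item add $\varepsilon\mathbf 1$ to the field;
\item use the sign identity for ${}^C_0\mathcal D_t^\alpha$ at the first hitting time;
\item let $\varepsilon\to0$ by fractional Gr\"onwall.
\end{itemize}
This works because the rows of $F\circ\pi_\Omega$ still sum to zero, so the perturbed row sums equal $1+3\varepsilon t^\alpha/\Gamma(1+\alpha)\ge 1$. The simplex projection of a nonnegative row with sum at least $1$ keeps a vanishing entry at zero. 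Hence the quasi-positivity check of $F$ at points of $\Omega$ applies verbatim.

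Because the Caputo derivative is nonlocal, the quasi-positivity of $F$ has to be converted into invariance by this perturbation argument. The tangency condition on its own does not give it.
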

\noindent \textit{Proof.}
The detailed proof is provided in the Supplementary Material, Section~3.

Next, we present that the first-order continuous solution of system~\eqref{eq:main} remains positive and bounded under the initial conditions.
\begin{theorem}\label{theorem:positivity_boundedness}
  (Positivity and boundedness of the solution)
Consider the fractional-order SIS-UAU model defined by Eq.~\eqref{eq:main} with the initial condition $\mathcal{P}(0) = \mathcal{P}_0 \in \Omega$. Then, the solution $\mathcal{P}(t)$ satisfies $p_i^{US}(t), p_i^{AS}(t), p_i^{AI}(t) \geq 0$ and $p_i^{US}(t) + p_i^{AS}(t) + p_i^{AI}(t) = 1$ for all $t \geq 0$ and for all $i\in \mathbb{Z}[1,N]$.
\end{theorem}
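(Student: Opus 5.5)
The plan has two parts: conservation of total probability, then nonnegativity. Global existence on all of $[0,\infty)$ comes along at the end, since Theorem~\ref{theorem:existence_uniqueness} only gives a local solution.

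\textbf{Conservation.} Set $s_i(t):=p_i^{US}(t)+p_i^{AS}(t)+p_i^{AI}(t)$ and add the three equations of \eqref{eq:main_component}. Every row of $\mathcal{Q}_i(t)$ sums to zero, so every gain term cancels a matching loss term. The Caputo derivative is linear, which gives ${}^C_0\mathcal{D}^\alpha_t s_i(t)=0$. By the standard fact that a $C^1$ function with vanishing Caputo derivative is constant (apply the Riemann--Liouville integral $I^\alpha$ to get $s_i(t)-s_i(0)=0$), we conclude $s_i(t)\equiv s_i(0)=1$ for all $i$.

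\textbf{Nonnegativity.} I would use the fractional comparison principle. The key observation is that the vector field is quasi-positive: whenever one component vanishes and the others are nonnegative, the right-hand side for that component is nonnegative. Concretely:
\begin{itemize}
\item if $p_i^{US}=0$, then ${}^C_0\mathcal{D}^\alpha_t p_i^{US}=\delta p_i^{AS}\ge0$;
\item if $p_i^{AS}=0$, then the right-hand side is $\lambda M_i^A p_i^{US}+\mu p_i^{AI}\ge0$;
\item if $p_i^{AI}=0$, then the right-hand side is $M_i^I(\beta^U p_i^{US}+\beta^A p_i^{AS})\ge0$.
\end{itemize}
Here $M_i^A,M_i^I\ge0$ because they are nonnegative combinations of nonnegative probabilities.

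To make this rigorous, I would rewrite each equation as
\[
{}^C_0\mathcal{D}^\alpha_t p=-c(t)\,p+g(t),
\]
where $c(t)$ collects the outgoing rates and $g(t)$ collects the inflow terms. Then I would argue by contradiction. Let $t^*$ be the first time some component becomes negative, and look just after $t^*$. Up to $t^*$ all components are nonnegative, so $g\ge0$ there. The fractional mean value theorem (generalized Taylor formula)
\[
f(t)=f(t^*)+\frac{1}{\Gamma(\alpha)}\,{}^C\mathcal{D}^\alpha f(\xi)\,(t-t^*)^\alpha
\]
does not apply directly because of the memory term, so I expect this step to be the main obstacle. Instead of a pointwise derivative argument, I would use an integral representation. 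Either write the scalar linear equation with a Mittag-Leffler-type resolvent, or more simply perturb the system: add $\varepsilon>0$ to every inflow $g$. I would then use the lemma that if $f(t^*)=0$, $f\ge0$ on $[0,t^*]$, and ${}^C\mathcal{D}^\alpha f(t^*)<0$, we get a contradiction, since the Caputo derivative at a first minimum point is nonpositive (Luchko's extremum principle). Positivity of the $\varepsilon$-perturbed problem then follows, and continuous dependence on $\varepsilon$ (from the Lipschitz estimate behind Theorem~\ref{theorem:existence_uniqueness}) gives nonnegativity as $\varepsilon\to0$.

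\textbf{Global existence.} Once the solution stays in the compact invariant set $\Omega$, the Lipschitz constant $L$ is uniform there. I would then restart the local existence argument of Theorem~\ref{theorem:existence_uniqueness} on successive intervals of fixed length; for the fractional case this means treating the history term as a known forcing. This extends the solution to all $t\ge0$, so both properties hold for every $t\ge0$.
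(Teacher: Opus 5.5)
Your proposal follows the paper's proof essentially step by step: conservation by summing the three equations and inverting the Caputo derivative, nonnegativity via the $\varepsilon$-perturbed system, a first-exit time $t^*$ and the sign of the Caputo derivative at a minimum, then $\varepsilon\to0$ and continuation to $[0,\infty)$. The paper proves that sign inline by integration by parts, which is Luchko's extremum principle; your nonstrict version spares you its separate case of a component vanishing identically on $[0,t^*]$. Fix one garbled step: the contradiction is between ${}^C\mathcal{D}^\alpha_t f(t^*)\le 0$ from the extremum principle and ${}^C\mathcal{D}^\alpha_t f(t^*)=g(t^*)+\varepsilon\ge\varepsilon>0$ from the perturbed equation (since $f(t^*)=0$ kills the $-c f$ term), not from an assumption that ${}^C\mathcal{D}^\alpha_t f(t^*)<0$.
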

\noindent \textit{Proof.}
The detailed proof is provided in the Supplementary Material, Section~4.

We then analyze the equilibria and their stability for system~\eqref{eq:main}. Therefore, we decompose the state matrix $\mathcal P$ column-wise as
\begin{align}
  \mathcal P = \begin{pmatrix} \mathcal P^{US}, \ \mathcal P^{AS}, \ \mathcal P^{AI} \end{pmatrix}.
\end{align}

We first discuss the rumor-free equilibrium (RFE) of the system. In this case, all nodes are in the susceptible state with respect to the risk-spreading process, i.e., $p^{AI}_i = 0$ and $M^I_i = 0$ for all $i \in \mathbb{Z}[1,N]$. Under this condition, the system reduces to a single-layer awareness spreading model. Its equilibrium
\begin{align}
  \mathcal P^{*} = \begin{pmatrix} 1-\mathcal P^{AS*}, \ \mathcal P^{AS*}, \ 0 \end{pmatrix}
\end{align}
satisfies
\begin{align}
  \frac{\delta}{\lambda}\frac{\mathcal P^{AS*}}{1 - \mathcal P^{AS*}} = \mathbf{A} \mathcal P^{AS*},
\end{align}
where $\mathcal P^{AS*}$ denotes the distribution of the awareness states at the rumor-free equilibrium. Moreover,
\begin{align}
  \frac{\mathcal P^{AS*}}{1 - \mathcal P^{AS*}}=
  \begin{pmatrix}
  \frac{p_1^{AS*}}{p_1^{US*}}, \ \frac{p_2^{AS*}}{p_2^{US*}}, \ \ldots, \ \frac{p_N^{AS*}}{p_N^{US*}}
  \end{pmatrix}^T.
\end{align}
This equation indicates that, at equilibrium, the relationship between each node’s awareness acquisition rate and its forgetting rate is influenced by the underlying network topology.

We define the basic reproduction number as
\begin{align}
  \label{eq:R0_def}
  \mathcal R_0 := \frac{1}{\mu}\rho\bigl(\mathrm{diag}(p^{US*}_i\beta^U + p^{AS*}_i\beta^A)\mathbf B\bigr),
\end{align}
where $\rho(\cdot)$ denotes the spectral radius of a matrix and $\mathrm{diag}(\cdot)$ is the diagonal matrix operator \cite{pastor2015epidemic}.

For the stability results, we have the following theorem. The stability analysis follows the standard framework for Caputo fractional-order systems: equilibria are determined by the algebraic steady-state conditions (since the Caputo derivative of a constant vanishes), and local asymptotic stability is characterized by the Matignon-type eigenvalue criterion, as established in the fractional stability theory \cite{li2010stability, deng2007stability, brandibur2021stability}.
\begin{theorem}\label{theorem:stability}
  (Local asymptotic stability of the rumor-free equilibrium)
Consider the fractional-order SIS-UAU model defined by Eq.~\eqref{eq:main} with the initial condition $\mathcal{P}(0)=\mathcal{P}_0\in\Omega$. 
The rumor-free equilibrium (RFE)
$\mathcal{P}^{*} = \begin{pmatrix} \mathcal P^{US*}, & \mathcal P^{AS*}, & \mathbf 0 \end{pmatrix}$
is locally asymptotically stable if every eigenvalue $\xi$ of the matrix
\begin{small}
\begin{align}
  \mathcal J \;:=\; \operatorname{diag}(p^{US*}_i\beta^U + p^{AS*}_i\beta^A)\,\mathbf B \;-\; \mu\,\mathbf I_N
\end{align}
satisfies
\begin{align}
  \big|\arg(\xi)\big| \;>\; \frac{\alpha\pi}{2}.
\end{align}
\end{small}
\end{theorem}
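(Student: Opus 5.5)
The plan is to linearize system~\eqref{eq:main} at the RFE on the invariant manifold $\Omega$, show that the Jacobian is block upper-triangular with the risk block equal to $\mathcal J$, handle the awareness block separately, and finish with the Matignon criterion \cite{li2010stability, deng2007stability}. By Theorem~\ref{theorem:positivity_boundedness} the constraint $p_i^{US}+p_i^{AS}+p_i^{AI}=1$ is preserved. I would therefore eliminate $p_i^{US}=1-p_i^{AS}-p_i^{AI}$ and work with the $2N$ reduced variables $x_i=p_i^{AS}$ and $y_i=p_i^{AI}$. The reduced vector field is smooth (polynomial), and the Caputo derivative of the constant $\mathcal P^*$ vanishes. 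Hence the perturbation $(\tilde x,\tilde y)=(x-\mathcal P^{AS*},\,y)$ satisfies ${}^C_0\mathcal D_t^\alpha(\tilde x,\tilde y)=DF(\mathcal P^*)(\tilde x,\tilde y)+o(\|(\tilde x,\tilde y)\|)$.

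\textbf{Jacobian structure.} The key observation is that $M_i^I=(\mathbf B y)_i$ vanishes at the RFE. In the $p^{AI}$ equation $\beta^U M_i^I p_i^{US}+\beta^A M_i^I p_i^{AS}-\mu p_i^{AI}$, every nonlinear term carries a factor $M_i^I$. Its linearization is therefore $(p_i^{US*}\beta^U+p_i^{AS*}\beta^A)(\mathbf B\tilde y)_i-\mu\tilde y_i$, which contains no $\tilde x$ term. In matrix form this is exactly $\mathcal J\tilde y$. The $p^{AS}$ equation has linear part $\mathcal K\tilde x+\mathcal C\tilde y$ for some coupling matrix $\mathcal C$ that I do not need to compute, where
\begin{align*}
\mathcal K=\lambda\,\mathrm{diag}(p_i^{US*})\mathbf A-\mathrm{diag}(\lambda M_i^{A*}+\delta).
\end{align*}
The Jacobian is thus $\begin{pmatrix}\mathcal K&\mathcal C\\ 0&\mathcal J\end{pmatrix}$, and its spectrum is $\sigma(\mathcal K)\cup\sigma(\mathcal J)$. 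The hypothesis of Theorem~\ref{theorem:stability} already controls $\sigma(\mathcal J)$.

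\textbf{Awareness block (main obstacle).} It remains to show $|\arg\xi|>\alpha\pi/2$ for $\xi\in\sigma(\mathcal K)$. It suffices to prove that the spectral abscissa $s(\mathcal K)<0$, since then $|\arg\xi|>\pi/2>\alpha\pi/2$. The matrix $\mathcal K$ is Metzler. Assume the awareness equilibrium is positive, $\mathcal P^{AS*}>0$ (for instance, $\mathbf A$ irreducible with $\lambda\rho(\mathbf A)>\delta$). Using the equilibrium relation $\delta p_i^{AS*}=\lambda p_i^{US*}(\mathbf A\mathcal P^{AS*})_i$, I would compute
\begin{align*}
\mathcal K\mathcal P^{AS*}=-\lambda\,\mathrm{diag}(M_i^{A*})\mathcal P^{AS*},
\end{align*}
which is componentwise strictly negative because $M_i^{A*}>0$. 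The standard Metzler-matrix lemma (a positive $v$ with $\mathcal Kv\ll0$ implies $s(\mathcal K)<0$) then gives the claim.

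In the degenerate case $\mathcal P^{AS*}=0$ we have $\mathcal K=\lambda\mathbf A-\delta\mathbf I$, and stability requires $\lambda\rho(\mathbf A)<\delta$, the regime in which the trivial awareness state is the relevant equilibrium. I would state this as the implicit standing assumption that the RFE is the stable awareness equilibrium. This is the delicate point, since the theorem's hypothesis alone does not constrain $\mathcal K$.

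\textbf{Conclusion.} With all eigenvalues of the Jacobian satisfying $|\arg\xi|>\alpha\pi/2$, the Matignon theorem for the linearized Caputo system, combined with the linearization principle for fractional systems with $C^1$ nonlinearity \cite{li2010stability}, yields local asymptotic stability of $\mathcal P^*$ within $\Omega$.
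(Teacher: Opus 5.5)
Your proof is correct and uses the same skeleton as the paper's (Supplementary Section~5). In both, you linearize at the RFE and note that every nonlinear term of the $p^{AI}$ equation carries the factor $M_i^I$, which vanishes at $\mathcal P^*$. The infection rows of the Jacobian are therefore exactly $\mathcal J$, with no awareness coupling, and the Matignon criterion applies to that block.

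The difference is that you actually close the argument. The paper applies Matignon only to the infection subsystem and then asserts that the awareness and susceptible components ``remain near their RFE values.'' You instead work on $\Omega$ in the $2N$ variables $(p^{AS},p^{AI})$. This also removes the $N$ zero eigenvalues that the conservation law $p_i^{US}+p_i^{AS}+p_i^{AI}=1$ places in the full $3N\times 3N$ Jacobian, which would otherwise block a direct use of Matignon. You then prove that the awareness block is Hurwitz via the identity $\mathcal K\mathcal P^{AS*}=-\lambda\,\mathrm{diag}(M_i^{A*})\,\mathcal P^{AS*}$ and the Metzler-matrix lemma.

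Your caveat is a defect of the statement, not of your proof. The trivial profile $\mathcal P^{AS*}=\mathbf 0$ always satisfies the awareness equilibrium relation; the Supplementary proof of the endemic-existence theorem overlooks this when it calls the nonnegative awareness fixed point unique. If $\lambda\rho(\mathbf A)>\delta$, then $\mathcal K=\lambda\mathbf A-\delta\mathbf I_N$ has the positive eigenvalue $\lambda\rho(\mathbf A)-\delta$. So $(\mathbf 1,\mathbf 0,\mathbf 0)$ is unstable whatever $\sigma(\mathcal J)$ is.

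The theorem thus holds exactly under your convention: the positive awareness equilibrium in the supercritical case, and $\lambda\rho(\mathbf A)<\delta$ when $\mathcal P^{AS*}=\mathbf 0$. The paper adopts the same convention implicitly in its remark on the supercritical awareness regime. The paper's route is shorter but leaves the awareness directions unproven; yours costs one Perron--Frobenius/M-matrix step and makes the necessary standing hypothesis explicit.
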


\noindent \textit{Proof.}
The detailed proof is provided in the Supplementary Material, Section~5.

We now derive a convenient corollary from Theorem~\ref{theorem:stability} which gives a simple, practically verifiable sufficient condition for the local asymptotic stability of the RFE.

\begin{corollary}\label{corollary:R0_less_1}
If the basic reproduction number defined in \eqref{eq:R0_def} satisfies
\begin{align}\label{eq:R0_lt_1}
  \mathcal R_0 \;<\; 1,
\end{align}
then the rumor-free equilibrium $\mathcal P^*$ is locally asymptotically stable for the fractional-order system \eqref{eq:main} (for every $\alpha\in(0,1]$).
\end{corollary}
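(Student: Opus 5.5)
The plan is to reduce the corollary to Theorem~\ref{theorem:stability} by showing that $\mathcal R_0<1$ forces every eigenvalue of $\mathcal J$ into the open left half-plane. Such eigenvalues satisfy $|\arg(\xi)|>\pi/2\ge\alpha\pi/2$ for every $\alpha\in(0,1]$. Write $\mathbf K:=\operatorname{diag}(p^{US*}_i\beta^U+p^{AS*}_i\beta^A)\,\mathbf B$, so that $\mathcal J=\mathbf K-\mu\mathbf I_N$ and $\mathcal R_0=\rho(\mathbf K)/\mu$.

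First I check that $\mathbf K$ is entrywise nonnegative. Indeed $\mathbf B\ge 0$ is an adjacency matrix, and at the RFE $p^{US*}_i,p^{AS*}_i\ge0$ because $\mathcal P^*\in\Omega$ (Theorem~\ref{theorem:positivity_boundedness}). Also $\beta^U>0$ and $\beta^A=\gamma\beta^U>0$. The diagonal weights are therefore nonnegative, and $\mathbf K$ is a nonnegative matrix.

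Next comes the spectral step. For any square matrix the eigenvalues of $\mathbf K-\mu\mathbf I_N$ are $\xi=\nu-\mu$, where $\nu$ ranges over the eigenvalues of $\mathbf K$. Since $|\nu|\le\rho(\mathbf K)$, we get $\operatorname{Re}\xi\le\rho(\mathbf K)-\mu=\mu(\mathcal R_0-1)<0$ whenever $\mathcal R_0<1$. This bound does not need Perron--Frobenius at all, only the definition of spectral radius, so nonnegativity is not even essential here. It does, however, explain why $\mathcal R_0$ is the sharp threshold: by Perron--Frobenius, $\rho(\mathbf K)$ is itself an eigenvalue, so $\mathcal R_0>1$ produces a real positive eigenvalue of $\mathcal J$, and $\arg=0$ violates the criterion. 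I would remark on this without proving it.

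Finally, $\operatorname{Re}\xi<0$ implies $|\arg(\xi)|>\pi/2$, and $\pi/2\ge\alpha\pi/2$ for $\alpha\in(0,1]$. The hypothesis of Theorem~\ref{theorem:stability} therefore holds uniformly in $\alpha$, and the RFE is locally asymptotically stable. For $\alpha=1$ the criterion is the classical Hurwitz condition, so the same argument covers the integer-order case as well.

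The only point needing care is that Theorem~\ref{theorem:stability} is stated for $\alpha\in(0,1)$, whereas the corollary includes $\alpha=1$. I would handle $\alpha=1$ by invoking standard linearization for ODEs, assuming the reduced Jacobian structure established in the proof of Theorem~\ref{theorem:stability}. A second subtle point is that Theorem~\ref{theorem:stability} evidently only uses $\mathcal J$ as the infection-block of the Jacobian. I take as given that the awareness-block is already stable, as asserted implicitly by that theorem.
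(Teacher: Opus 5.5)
Your proposal is correct and follows essentially the same route as the paper: the paper writes $\mathcal J=\mu(\mathcal K-\mathbf I_N)$, bounds $\operatorname{Re}(\sigma-1)\le|\sigma|-1<0$ for every eigenvalue $\sigma$ of $\mathcal K$, and concludes $|\arg(\xi)|>\pi/2\ge\alpha\pi/2$, so that Theorem~\ref{theorem:stability} applies. As you note yourself, the nonnegativity check and the Perron--Frobenius sharpness remark are not needed, and the paper applies Theorem~\ref{theorem:stability} directly at $\alpha=1$ as well, rather than treating that case separately.
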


\noindent \textit{Proof.}
The detailed proof is provided in the Supplementary Material, Section~6.

We now turn our attention to the existence of a nontrivial equilibrium where the rumor persists in the network.
Let the rumor-endemic equilibrium be denoted by
\begin{align}\label{eq:endemic_equilibrium_def}
  \mathcal P^{**} = 
  \begin{pmatrix}
    \mathcal P^{US**}, &
    \mathcal P^{AS**}, &
    \mathcal P^{AI**}
  \end{pmatrix},
\end{align}
where $\mathcal P^{AI**}\!>\!0$ represents the nonzero steady-state infection probabilities of all nodes.

At equilibrium, ${}^C_0\mathcal D_t^{\alpha}\mathcal P(t)=0$, so that each node satisfies
\begin{align}\label{eq:endemic_equilibrium_equation}
  \mathcal P_i^{**}\,\mathcal Q_i(\mathcal P^{**}) = 0,
  \qquad i\in \mathbb{Z}[1,N].
\end{align}
By expanding \eqref{eq:endemic_equilibrium_equation} for the infection, component we obtain
\begin{align}\label{eq:endemic_balance_I}
  \mu\,p_i^{AI**} 
  = 
  \beta^U M_i^{I**} p_i^{US**}
  + 
  \beta^A M_i^{I**} p_i^{AS**},
\end{align}
where $M_i^{I**}=\sum_{j}b_{ij}p_j^{AI**}$ is the total effective infectious pressure received by node $i$ at equilibrium.
\begin{remark}
The equilibrium points of the fractional-order system are independent of the fractional order $\alpha$. Indeed, at equilibrium, the Caputo derivative vanishes and the steady-state solutions are determined entirely by the algebraic equations obtained from
${}^C_0\mathcal D_t^{\alpha}\mathcal P(t)=0$.
Therefore, changing $\alpha$ affects only the transient dynamics and local stability properties, rather than the equilibrium locations themselves.
\end{remark}
Combining the three steady-state relations implied by \eqref{eq:endemic_equilibrium_equation} gives the following nonlinear system:
\begin{small}
\begin{align}\label{eq:endemic_system}
\left\{
\begin{aligned}
  &\big(\lambda M_i^{A**}+\beta^U M_i^{I**}\big)p_i^{US**}
  = 
  \delta p_i^{AS**}, \\[4pt]
  &\beta^U M_i^{I**}p_i^{US**} + \beta^A M_i^{I**}p_i^{AS**}
  =
  \mu p_i^{AI**}, \\[4pt]
  &p_i^{US**}+p_i^{AS**}+p_i^{AI**}=1.
\end{aligned}
\right.
\end{align}
\end{small}

The first equation in \eqref{eq:endemic_system} expresses the awareness balance at equilibrium, while the second one describes the rumor-spreading balance. Eliminating $p_i^{AS**}$ and $p_i^{US**}$ in favor of $p_i^{AI**}$, we get the self-consistent relation
\begin{align}\label{eq:self_consistent}
p_i^{AI**} := \frac{M_i^{I**}\big(\beta^U\delta+\beta^A(\lambda M_i^{A**}+\beta^U M_i^{I**})\big)}
{\splitfrac{\mu\big(\delta+\lambda M_i^{A**}+\beta^U M_i^{I**}\big)}
           {+ M_i^{I**}\big(\beta^U\delta+\beta^A(\lambda M_i^{A**}+\beta^U M_i^{I**})\big)}}.
\end{align}

To motivate the threshold, we linearize \eqref{eq:self_consistent} around the rumor-free equilibrium, where $p_i^{AI**}\approx 0$ and hence $M_i^{I**}\approx \sum_j b_{ij}p_j^{AI**}$. This yields the approximate relation
\begin{align}\label{eq:linearized_endemic}
  p_i^{AI**}
  \;\approx\;
  \frac{1}{\mu}
  \big(p_i^{US*}\beta^U + p_i^{AS*}\beta^A\big)
  \sum_{j=1}^N b_{ij}p_j^{AI**}.
\end{align}

In vector form, this reads
\begin{align}\label{eq:vector_linearized_endemic}
  \mathbf p^{AI**}
  \;\approx\;
  \frac{1}{\mu}\,
  \operatorname{diag}(p^{US*}_i\beta^U + p^{AS*}_i\beta^A)\,
  \mathbf B\,\mathbf p^{AI**}
  \;=\;
  \mathcal K\,\mathbf p^{AI**},
\end{align}
where
\begin{align}\label{eq:K_def}
  \mathcal K
  :=
  \frac{1}{\mu}\,
  \operatorname{diag}(p^{US*}_i\beta^U + p^{AS*}_i\beta^A)\,\mathbf B
\end{align}
is the corresponding next-generation matrix.

Relation \eqref{eq:vector_linearized_endemic} is only the first-order approximation of \eqref{eq:self_consistent} at the rumor-free equilibrium; it identifies $\mathcal R_0$ as the threshold candidate. The rigorous analysis of the nonlinear relation \eqref{eq:self_consistent} is given in the Supplementary Material, Section~7. Hence we obtain the following existence result.

\begin{theorem}\label{theorem:endemic_existence}
(Existence and uniqueness of the rumor-endemic equilibrium)  
Assume that $\mathbf A$ and $\mathbf B$ are nonnegative zero-diagonal irreducible matrices and that $0<\beta^A<\beta^U$. Then, for the fractional-order UAU-SIS system \eqref{eq:main}, there exists a unique rumor-endemic equilibrium $\mathcal P^{**}$ with $\mathcal P^{AI**}\!>\!0$ if and only if the basic reproduction number satisfies
\begin{align}\label{eq:R0_gt_1}
  \mathcal R_0 = \rho(\mathcal K) > 1.
\end{align}
Conversely, if $\mathcal R_0 \le 1$, the only equilibrium is the rumor-free equilibrium $\mathcal P^*$.
\end{theorem}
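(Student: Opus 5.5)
The plan is to reduce the equilibrium problem to a fixed-point problem for the infection vector $\mathbf x:=\mathcal P^{AI**}\in[0,1]^N$ and then apply the standard concave-operator (Lajmanovich--Yorke / Krasnoselskii) argument.

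\textbf{Step 1: reduce to $\mathbf x$.} Given $\mathbf x$, set $M_i^I=(\mathbf B\mathbf x)_i$. For fixed $\mathbf x$, the awareness balance in \eqref{eq:endemic_system} together with $p_i^{US}+p_i^{AS}=1-x_i$ defines the awareness profile $\mathbf y=\mathcal P^{AS}$ through a fixed point. This is the SIS-type awareness map $y_i=(1-x_i)\,\frac{\lambda M_i^A+\beta^U M_i^I}{\delta+\lambda M_i^A+\beta^U M_i^I}$ with $M_i^A=\sum_j a_{ij}(y_j+x_j)$. The map is monotone and concave in $\mathbf y$, and $\mathbf A$ is irreducible. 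By the same argument used for the single-layer awareness equilibrium $\mathcal P^{AS*}$, it therefore has a unique nontrivial fixed point $\mathbf y(\mathbf x)$, and this fixed point depends continuously on $\mathbf x$. Substituting into \eqref{eq:self_consistent} gives the map $x_i=T_i(\mathbf x):=\frac{M_i^I g_i}{\mu+M_i^I g_i}$, where $g_i=\frac{\beta^U\delta+\beta^A(\lambda M_i^A+\beta^U M_i^I)}{\delta+\lambda M_i^A+\beta^U M_i^I}$. Thus $g_i$ is a convex combination of $\beta^U$ and $\beta^A$, and at $\mathbf x=0$ it equals $p_i^{US*}\beta^U+p_i^{AS*}\beta^A$ after normalisation. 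Equilibria with $\mathbf x>0$ are exactly the positive fixed points of $T$ on $[0,1]^N$.

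\textbf{Step 2: properties of $T$.} I will show the following:
\begin{itemize}
\item[(i)] $T(0)=0$ and $T$ is continuous and bounded by $\mathbf 1$.
\item[(ii)] The derivative is $DT(0)=\mathcal K$, so irreducibility of $\mathbf B$ makes $T$ eventually strongly positive.
\item[(iii)] $T$ is strictly sublinear: $T(s\mathbf x)>sT(\mathbf x)$ for $s\in(0,1)$ and $\mathbf x>0$.
\end{itemize}
The sublinearity in (iii) comes from two effects. First, $u\mapsto u/(\mu+u)$ is strictly concave. Second, $g_i$ is nonincreasing in the pressures $M_i^A$ and $M_i^I$, because $\beta^A<\beta^U$ means more awareness lowers the effective susceptibility. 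This monotonicity of $g_i$ in $\mathbf x$ is the main obstacle. $M_i^A$ grows with $\mathbf x$, which helps, but $T$ itself may fail to be monotone because $g_i$ decreases. I will therefore avoid the monotone-iteration argument and work with the bound $T(\mathbf x)\le \frac{1}{\mu}\,\mathrm{diag}(\beta^U)\mathbf B\mathbf x$ and the comparison $g_i(\mathbf x)\le g_i(0)$. The second inequality requires showing that $M_i^A(\mathbf x)\ge M_i^A(0)$, which should follow from monotonicity of the awareness fixed point in $\mathbf x$; the term $(1-x_i)$ counteracts this, so it needs care. I would rewrite the awareness balance in terms of $p^A_i=y_i+x_i$ to see the monotonicity directly.

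\textbf{Step 3: threshold.} For necessity, assume $\mathbf x>0$ is a fixed point. Then $\mathbf x=T(\mathbf x)<\frac{1}{\mu}\mathrm{diag}(g_i(\mathbf x))\mathbf B\mathbf x\le\mathcal K\mathbf x$, where the first inequality is strict because $\mu+u>\mu$ and the second uses $g_i(\mathbf x)\le g_i(0)$. Let $\mathbf v>0$ be the left Perron vector of $\mathcal K$. Pairing with $\mathbf v$ gives $\mathbf v^T\mathbf x<\rho(\mathcal K)\mathbf v^T\mathbf x$, hence $\mathcal R_0>1$. This proves the converse statement for $\mathcal R_0\le1$.

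For existence when $\mathcal R_0>1$, let $\mathbf w$ be the right Perron vector of $\mathcal K$. For small $\varepsilon$, $T(\varepsilon\mathbf w)=\varepsilon\mathcal K\mathbf w+o(\varepsilon)=\varepsilon\mathcal R_0\mathbf w+o(\varepsilon)\ge\varepsilon\mathbf w$. The set $\{\mathbf x\le\mathbf 1,\ \mathbf x\ge\varepsilon\mathbf w\}$ is therefore mapped into itself by the continuous map $T$, once one checks the lower bound persists using strong positivity. Brouwer's theorem then yields a positive fixed point.

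For uniqueness, suppose $\mathbf x,\mathbf z>0$ are two fixed points. Take $s=\max_i x_i/z_i$, assume $s>1$, and swap roles if needed. Strict sublinearity applied to $\mathbf z\ge s^{-1}\mathbf x$ at the index attaining the maximum gives a contradiction, in the Hethcote--Yorke style. Finally, $\mathcal P^{US**}$ and $\mathcal P^{AS**}$ are recovered from Step 1, and positivity follows from Theorem~\ref{theorem:positivity_boundedness}'s invariant set $\Omega$.
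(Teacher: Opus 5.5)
\textbf{Existence.} The argument breaks at the claim that $\{\varepsilon\mathbf w\le\mathbf x\le\mathbf 1\}$ is mapped into itself, for the reason you already identified. Knowing $T(\varepsilon\mathbf w)\ge\varepsilon\mathbf w$ tells you nothing about the rest of that order interval unless $T$ is monotone.

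Here is a concrete failure. Suppose node $i$ has an $\mathbf A$-neighbour that is not a $\mathbf B$-neighbour. Take $x_j=\varepsilon w_j$ for $j\in\{i\}\cup\{j:b_{ij}>0\}$, and $x_j$ close to $1$ on those other $\mathbf A$-neighbours. Then:
\begin{itemize}
\item $M_i^I=\varepsilon(\mathbf B\mathbf w)_i$ is unchanged.
\item $M_i^A$ exceeds $M_i^{A*}$ by an amount independent of $\varepsilon$, because $p^A_j\ge x_j$.
\item Hence $g_i(\mathbf x)\le c\,g_i(0)$ for some $c<1$ independent of $\varepsilon$.
\item Therefore $T_i(\mathbf x)\le\varepsilon(\mathbf B\mathbf w)_i\,g_i(\mathbf x)/\mu\le c\,\mathcal R_0\,\varepsilon w_i<\varepsilon w_i$ whenever $\mathcal R_0<1/c$.
\end{itemize}
Strong positivity does not repair this. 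The uniform lower bound available from $g_i\ge\beta^A$ is roughly $\varepsilon\beta^A(\mathbf B\mathbf w)_i/\mu$, which need not dominate $\varepsilon w_i$. Shrinking $\varepsilon$ does not help either, since both sides are linear in $\varepsilon$.

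The paper's existence proof needs no monotonicity:
\begin{itemize}
\item Apply Brouwer on the whole cube to $\min(\Phi(t)+\theta\mathbf 1,\mathbf 1)$, then let $\theta\downarrow0$.
\item Exclude the zero limit by pairing $t_\theta=\Phi(t_\theta)+\theta\mathbf 1$ with the left Perron vector of $\mathcal K$ and using $\Phi(t)=\mathcal K t+o(\|t\|)$.
\item Get positivity of the limit from irreducibility of $\mathbf B$.
\end{itemize}
This transfers directly to your $T$. The boundary case $t_{*,i}=1$ even disappears, since $T<\mathbf 1$ on $[0,1]^N$.

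\textbf{Uniqueness.} The Hethcote--Yorke step needs $z_k=T_k(\mathbf z)\ge T_k(s^{-1}\mathbf x)$ from $\mathbf z\ge s^{-1}\mathbf x$, which is again monotonicity of $T$. Strict sublinearity gives only $T_k(s^{-1}\mathbf x)>s^{-1}x_k=z_k$, with nothing to compare it against.

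Your argument does exclude two \emph{ordered} positive fixed points $\mathbf z\le\mathbf x$. Then $p^A(\mathbf z)\le p^A(\mathbf x)$, so $g_k(\mathbf z)\ge g_k(\mathbf x)$ and the chain closes. For incomparable fixed points, however, the sign of $g_k(\mathbf z)-g_k(\mathbf x)$ is undetermined.

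Do not expect to borrow the fix from the paper. It returns to the joint field $(f^I,f^A)$ in $(p^I,p^A)$ and checks cross-partials only for $j\ne i$. The same-node entries are $\partial f^I_i/\partial p^A_i=-(\beta^U-\beta^A)M_i^I<0$ and $\partial f^A_i/\partial p^I_i=\delta>0$. Their product is negative, so no orthant order makes that system cooperative. This is the same awareness--infection negative feedback that makes your $T$ non-monotone. Uniqueness of the endemic equilibrium therefore needs an ingredient missing from both arguments.

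There are also two smaller omissions:
\begin{itemize}
\item For the converse you must also exclude nonzero equilibria with some $x_i=0$. At such an $i$, $M_i^I=0$, and irreducibility of $\mathbf B$ then forces $\mathbf x=0$.
\item You must set $\mathbf y(0)=\mathcal P^{AS*}$ rather than the trivial awareness fixed point when $\lambda\rho(\mathbf A)>\delta$. Continuity at $0$ then follows from $p^A(\mathbf x)\ge\mathcal P^{AS*}$, obtained by iterating the awareness map from $\mathcal P^{AS*}$.
\end{itemize}
Your necessity argument, via $g_i(\mathbf x)\le g_i(0)$ and the left Perron vector, is correct. It is also more direct than the paper's route through $s\mapsto\Phi(st)/s$.
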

\noindent \textit{Proof.}
The detailed proof is provided in the Supplementary Material, Section~7.

\begin{remark}
Theorem~\ref{theorem:endemic_existence} characterizes the threshold behavior of the system as a transcritical bifurcation: the rumor-free equilibrium loses stability and a unique rumor-endemic equilibrium emerges exactly at $\mathcal{R}_0=1$. The bifurcation diagram reported in Section~12.2 of the Supplementary Material is the numerical trace of this analytically predicted threshold.
\end{remark}

\begin{remark}
The basic reproduction number $\mathcal R_0$ is obtained from the equilibrium condition and the associated next-generation matrix, and therefore its expression does not explicitly depend on the fractional order $\alpha$. In the proposed fractional-order UAU-SIS system, the role of $\alpha$ is reflected in the transient evolution and the local stability properties of the equilibria, rather than in the algebraic threshold condition for the existence of the rumor-endemic equilibrium. In particular, the fractional order may change the convergence rate and stability region through the Matignon-type criterion, while the threshold condition $\mathcal R_0>1$ remains the same.
\end{remark}

The local behavior of the fractional-order system~\eqref{eq:main} near the rumor-endemic equilibrium
$\mathcal P^{**}=(\mathcal P^{US**},\mathcal P^{AS**},\mathcal P^{AI**})$ is governed by the Jacobian matrix
$\mathcal J^{*}$ obtained by linearizing the componentwise dynamics~\eqref{eq:main_component} at $\mathcal P^{**}$.
As detailed in Section~1.5 of the Supplementary Material, $\mathcal J^{*}$ admits the compact block representation
\begin{align}\label{eq:Jstar_compact_text}
  \mathcal J^{*}
  = \sum_{m=1}^{3}\sum_{n=1}^{3} E_{mn}\otimes B_{mn},
\end{align}
where $E_{mn}$ are the canonical basis matrices in $\mathbb R^{3\times3}$ and the $N\times N$ blocks $B_{mn}$ are
expressed in terms of the fundamental matrices $\{D_u, D_a, D_{M^A}, D_{M^I}, \mathbf A, \mathbf B, \mathbf I_N\}$.
The explicit expressions of the blocks are collected in Section~1.5 of the Supplementary Material, where the
local stability of $\mathcal P^{**}$ is established.

\begin{theorem}\label{theorem:endemic_stability_compact}
  (Local asymptotic stability of the rumor-endemic equilibrium)
Let $\mathcal P^{**}=(\mathcal P^{US**},\mathcal P^{AS**},\mathcal P^{AI**})$ be a rumor-endemic equilibrium of system~\eqref{eq:main} with $\mathcal P^{AI**}\ge0$. 
If all eigenvalues $\xi$ of the Jacobian matrix $\mathcal J^{*}$ defined in~\eqref{eq:Jstar_compact_text} satisfy
\begin{align}\label{eq:matignon_endemic_compact}
  \big|\arg(\xi)\big| > \frac{\alpha\pi}{2},
\end{align}
then the rumor-endemic equilibrium $\mathcal P^{**}$ is locally asymptotically stable.
\end{theorem}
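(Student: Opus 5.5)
The argument is the standard linearization principle for Caputo systems, combined with Matignon's criterion, applied at $\mathcal P^{**}$. One preliminary point needs care: because of the constraint $p_i^{US}+p_i^{AS}+p_i^{AI}=1$ (Theorem~\ref{theorem:positivity_boundedness}), the full $3N$-dimensional Jacobian $\mathcal J^{*}$ always has $N$ eigenvalues identically equal to zero along directions transverse to $\Omega$. For this reason I would work on the invariant affine set $\Omega$.

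\textbf{Step 1: reduce to $2N$ variables.} Substitute $p_i^{US}=1-p_i^{AS}-p_i^{AI}$. This gives a closed Caputo system in $x=(\mathcal P^{AS},\mathcal P^{AI})\in\mathbb R^{2N}$. The vector field $G$ is polynomial, hence $C^1$ with locally Lipschitz derivative, and $x^{**}$ is an equilibrium. Its Jacobian $DG(x^{**})$ is the restriction of $\mathcal J^{*}$ to the tangent space $\{\sum_{\text{cols}}=0\}$. Under this reading, the hypothesis on the eigenvalues of $\mathcal J^{*}$ is understood as a hypothesis on the spectrum of $DG(x^{**})$.

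\textbf{Step 2: separate the linear part from the remainder.} Write $e=x-x^{**}$, so that
\begin{align}
{}^C_0\mathcal D_t^{\alpha} e = J e + h(e),\qquad J:=DG(x^{**}),
\end{align}
where $\|h(e)\|\le c\|e\|^2$ near $0$, since $G$ is quadratic. For the linear part ${}^C_0\mathcal D_t^{\alpha}e=Je$, the Matignon theorem \cite{li2010stability,deng2007stability} uses the condition $|\arg\xi|>\alpha\pi/2$ for all $\xi\in\sigma(J)$. This condition gives the bound $\|E_\alpha(Jt^\alpha)\|\le C/(1+t^\alpha)$, and the same type of decay for $t^{\alpha-1}E_{\alpha,\alpha}(Jt^\alpha)$. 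Both follow from the Jordan decomposition together with the asymptotic expansion of the Mittag-Leffler function off the sector $|\arg z|\le\alpha\pi/2$.

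\textbf{Step 3: close the nonlinear estimate (main obstacle).} By variation of constants,
\begin{align}
e(t)=E_\alpha(Jt^\alpha)e(0)+\int_0^t (t-s)^{\alpha-1}E_{\alpha,\alpha}(J(t-s)^\alpha)\,h(e(s))\,\mathrm ds .
\end{align}
The difficulty is that decay is only algebraic, not exponential, so a naive Gronwall argument fails. Instead I would use a contraction argument in the weighted space $\{e:\sup_t(1+t^\alpha)\|e(t)\|<\infty\}$. The key estimate is that the convolution of $(t-s)^{\alpha-1}(1+(t-s)^{\alpha})^{-2}$-type kernels with $(1+s^\alpha)^{-2}$ is again $O((1+t^\alpha)^{-1})$. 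This is the Cong--Doan--Siegmund--Tuan linearized-stability argument. When $\|e(0)\|$ is small, the quadratic smallness of $h$ makes the map a contraction, which gives both stability and convergence $e(t)\to0$. Uniqueness (Theorem~\ref{theorem:existence_uniqueness}) identifies this fixed point with the actual solution.

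\textbf{Step 4: conclude.} Lifting back through $p_i^{US}=1-p_i^{AS}-p_i^{AI}$ shows that every solution starting in $\Omega$ near $\mathcal P^{**}$ converges to it. Hence $\mathcal P^{**}$ is locally asymptotically stable. The explicit block form of $\mathcal J^{*}$ in \eqref{eq:Jstar_compact_text} plays no role here; it only supplies the matrix whose spectrum is tested.
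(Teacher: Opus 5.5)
Your proof is correct. It keeps the paper's skeleton (linearize at $\mathcal P^{**}$, then apply Matignon), but it departs from the paper's argument at exactly the two points where that argument falls short. The paper's proof simply asserts that $\Delta\mathcal P$ satisfies ${}^C_0\mathcal D_t^{\alpha}\Delta\mathcal P=\mathcal J^{*}\Delta\mathcal P$ on $\mathbb R^{3N}$ and applies the linear Matignon lemma to that system. It drops the quadratic remainder without comment and never accounts for the conservation law.

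\emph{The conservation law.} Your observation is right and can be stated sharply. The three block columns of $\mathcal J^{*}$ sum to zero, i.e.\ $(\mathbf I_N\ \mathbf I_N\ \mathbf I_N)\,\mathcal J^{*}=0$, so the range of $\mathcal J^{*}$ lies in the tangent space $T$ of $\Omega$. This is what makes the restriction $\mathcal J^{*}|_T$ well defined, and you should say so explicitly. It also yields $\det(\xi\mathbf I_{3N}-\mathcal J^{*})=\xi^{N}\det(\xi\mathbf I_{2N}-DG(x^{**}))$. Hence the hypothesis, read literally for the $3N\times 3N$ matrix, can never be satisfied, whereas your version tests only the $2N$ non-structural eigenvalues. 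Since $\sigma(DG(x^{**}))\subset\sigma(\mathcal J^{*})$, your result also implies the literal statement.

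\emph{The nonlinear remainder.} Your weighted-space contraction, in the spirit of Cong--Doan--Siegmund--Tuan, transfers the algebraic $O(t^{-\alpha})$ decay of the linear flow to the nonlinear system, which the paper takes for granted. Your convolution estimate with the kernel $(t-s)^{\alpha-1}(1+(t-s)^{\alpha})^{-2}$ is correct; splitting the integral at $t/2$ handles every $\alpha\in(0,1)$.

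The price of your route is the Mittag-Leffler asymptotics and this convolution estimate. What it buys is a criterion that can actually be satisfied and a genuine proof of the nonlinear conclusion.

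\emph{A caution on your closing sentence.} The printed block $B_{13}$ carries a term $+\mu\mathbf I_N$ that belongs in $B_{23}$; it comes from $\mu p_i^{AI}$ in the $AS$ equation. The block-column sums are unaffected, so the $N$ structural zeros persist, but the restricted matrix changes. The spectral test must therefore be applied to $DG(x^{**})$ computed from the vector field, as you do, and not to the displayed blocks.
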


\noindent \textit{Proof.}
The detailed proof is provided in the Supplementary Material, Section~8.

\begin{remark}
  The fractional order $\alpha$ affects the local stability boundary of the system through the Matignon-type condition. Specifically, an equilibrium is locally asymptotically stable only if all eigenvalues of the corresponding Jacobian matrix satisfy $|\arg(\xi)|>\alpha\pi/2$. Therefore, the stability region is a sector in the complex plane whose boundary depends explicitly on $\alpha$. As $\alpha$ increases, the threshold angle $\alpha\pi/2$ becomes larger, and the stability region becomes more restrictive. In contrast, the equilibrium locations themselves are determined by the algebraic steady-state equations and do not depend on $\alpha$. Hence, the fractional order influences the stability boundary and convergence behavior, but not the existence or position of the equilibria.
\end{remark}

\begin{remark}
Regarding global asymptotic stability, we establish the following rigorous result for the rumor-free equilibrium under sufficient conditions on both layers of the multiplex system.
\end{remark}

\begin{theorem}\label{theorem:global_RFE}
(Global asymptotic stability of the rumor-free equilibrium)
Let $\mathbf A$ and $\mathbf B$ be nonnegative, zero-diagonal, irreducible adjacency matrices and assume $0<\beta^A<\beta^U$. Define $\bar{\mathcal R}_0 := \beta^U \rho(\mathbf B)/\mu$, where $\rho(\cdot)$ denotes the spectral radius. If
\begin{align}
  \bar{\mathcal R}_0 < 1
  \qquad\text{and}\qquad
  \lambda\rho(\mathbf A) < \delta,
\end{align}
then for every $\alpha \in (0,1)$ and every initial condition $\mathcal P(0) \in \Omega$, the solution of system~\eqref{eq:main} converges to the rumor-free equilibrium $\mathcal P^{*} = (\mathbf 1, \mathbf 0, \mathbf 0)$; i.e., $\mathcal P^{*}$ is globally asymptotically stable.
\end{theorem}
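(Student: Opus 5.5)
The plan is a two-stage comparison argument: first the infection layer is driven to zero, then the awareness layer. Both stages use the positivity and invariance of $\Omega$ from Theorem~\ref{theorem:positivity_boundedness} together with a vector comparison principle for Caputo systems whose linear part is Metzler. Once $p_i^{AI}\to0$ and $p_i^{AS}\to0$, the conservation law $p_i^{US}+p_i^{AS}+p_i^{AI}=1$ gives $p_i^{US}\to1$. This is consistent with the limit point $(\mathbf 1,\mathbf 0,\mathbf 0)$: when $\lambda\rho(\mathbf A)<\delta$, the only solution of $\frac{\delta}{\lambda}\frac{\mathcal P^{AS*}}{1-\mathcal P^{AS*}}=\mathbf A\mathcal P^{AS*}$ is $\mathcal P^{AS*}=\mathbf 0$.

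\emph{Step 1 (infection layer).} Since $0<\beta^A<\beta^U$, $M_i^I=(\mathbf B\mathbf p^{AI})_i\ge0$ and $p_i^{US}+p_i^{AS}\le1$ on $\Omega$, the third equation of \eqref{eq:main_component} gives
\begin{align}
{}^C_0\mathcal D_t^\alpha \mathbf p^{AI}(t)\le(\beta^U\mathbf B-\mu\mathbf I_N)\,\mathbf p^{AI}(t)
\end{align}
componentwise. The matrix $\beta^U\mathbf B-\mu\mathbf I_N$ is Metzler and irreducible. Its spectral abscissa is $\beta^U\rho(\mathbf B)-\mu=\mu(\bar{\mathcal R}_0-1)<0$, so all its eigenvalues have $|\arg\xi|>\pi/2\ge\alpha\pi/2$.

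Next I will invoke the comparison principle for Caputo systems with Metzler linear part. It states that if ${}^C_0\mathcal D_t^\alpha u\le Mu$, ${}^C_0\mathcal D_t^\alpha x=Mx$ and $u(0)=x(0)$, then $u\le x$. It is proved through the Mittag-Leffler representation $x(t)=E_\alpha(Mt^\alpha)x(0)$ and the fact that $E_\alpha(Mt^\alpha)$ and $t^{\alpha-1}E_{\alpha,\alpha}(Mt^\alpha)$ are entrywise nonnegative when $M$ is Metzler. This yields $0\le \mathbf p^{AI}(t)\le E_\alpha((\beta^U\mathbf B-\mu\mathbf I)t^\alpha)\mathbf p^{AI}(0)$. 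The right-hand side tends to $0$ at the algebraic rate $t^{-\alpha}$.

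\emph{Step 2 (awareness layer).} Let $\mathbf p^A=\mathbf p^{AS}+\mathbf p^{AI}$. Adding the second and third equations of \eqref{eq:main_component} gives
\begin{align}
{}^C_0\mathcal D_t^\alpha p_i^A=\lambda M_i^A p_i^{US}-\delta p_i^{AS}+\beta^U M_i^I p_i^{US}.
\end{align}
Using $p_i^{US}\le1$, $M_i^A=(\mathbf A\mathbf p^A)_i$ and $-\delta p_i^{AS}=-\delta p_i^A+\delta p_i^{AI}$, this becomes
\begin{align}
{}^C_0\mathcal D_t^\alpha \mathbf p^{A}\le(\lambda\mathbf A-\delta\mathbf I_N)\mathbf p^A+\mathbf f(t),\qquad \mathbf f:=\delta\mathbf p^{AI}+\beta^U\mathbf B\mathbf p^{AI}\ge0 .
\end{align}
Here $\lambda\mathbf A-\delta\mathbf I$ is Metzler with spectral abscissa $\lambda\rho(\mathbf A)-\delta<0$, and $\mathbf f(t)\to0$ by Step~1. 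The same comparison principle bounds $\mathbf p^A$ by the variation-of-constants solution
\begin{align}
E_\alpha(Gt^\alpha)\mathbf p^A(0)+\int_0^t (t-s)^{\alpha-1}E_{\alpha,\alpha}(G(t-s)^\alpha)\mathbf f(s)\,ds,\qquad G:=\lambda\mathbf A-\delta\mathbf I_N.
\end{align}

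\emph{Main obstacle.} The delicate point is showing that this convolution term vanishes as $t\to\infty$. For Hurwitz $G$, the kernel $K(\tau)=\tau^{\alpha-1}E_{\alpha,\alpha}(G\tau^\alpha)$ is nonnegative and integrable on $(0,\infty)$, with $\int_0^\infty K=-G^{-1}$ and tail decay $O(\tau^{-\alpha-1})$. Convolving an $L^1$ kernel with a bounded function that tends to zero gives a function that tends to zero. I will make this precise by splitting the integral at $t/2$: on $[t/2,t]$ the factor $\mathbf f$ is small, and on $[0,t/2]$ the kernel tail is small. A secondary technical point is justifying the vector comparison principle for a nonlinear right-hand side. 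I will handle it by writing the defect $Mu-{}^C_0\mathcal D_t^\alpha u\ge0$ as a nonnegative forcing term and applying the positive variation-of-constants formula. This needs only the local existence of Theorem~\ref{theorem:existence_uniqueness}, extended globally by the a priori bound in $\Omega$.

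Combining Steps 1 and 2, $\mathbf p^{AI}\to0$ and $\mathbf p^{AS}\le\mathbf p^A\to0$, hence $\mathbf p^{US}\to\mathbf 1$ for every $\alpha\in(0,1)$ and every initial condition in $\Omega$. Local stability at $(\mathbf 1,\mathbf 0,\mathbf 0)$ follows from the same bounds, since they are linear in the initial data. Together with convergence, this gives global asymptotic stability.
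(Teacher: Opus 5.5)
Your proof is correct. Step~1 is exactly the paper's argument; the differences are in Step~2, in how the comparison principle is established, and in how local stability is obtained.

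\emph{What the paper does.} It works with $\mathbf p^{AS}$, with forcing $\lambda\mathbf A\mathbf p^{AI}+\mu\mathbf p^{AI}$, and then scalarizes. Pairing with the strictly positive left Perron vector $v$ of $\mathbf A$ gives ${}^C_0\mathcal D_t^\alpha s\le(\lambda\rho(\mathbf A)-\delta)s+Q(t)$ for $s=v^\top\mathbf p^{AS}$. This is closed by a scalar relaxation lemma whose kernel $u^{\alpha-1}E_{\alpha,\alpha}(-\gamma u^\alpha)$ is nonnegative by complete monotonicity of $E_\alpha(-x)$ and integrates to $1/\gamma$. Its comparison lemma is proved by an $\varepsilon$-perturbation and a first-crossing argument, using the sign of the Caputo derivative at a first zero. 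Local stability is imported from Corollary~\ref{corollary:R0_less_1}.

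\emph{What you do.} You keep the full vector inequality for the total aware fraction $\mathbf p^A$, which makes $M^A=\mathbf A\mathbf p^A$ exactly linear. You then bound it through the matrix variation-of-constants formula, with the same $t/2$ splitting.

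\emph{What each route buys.} The paper's route needs only scalar Mittag-Leffler facts, but it uses irreducibility of $\mathbf A$ to get $v>0$. Yours needs no Perron vector, only the spectral-abscissa condition on $G=\lambda\mathbf A-\delta\mathbf I_N$. The price is three matrix-valued facts that you should cite or justify:
\begin{itemize}
\item entrywise nonnegativity of $E_\alpha(Gt^\alpha)$ and $t^{\alpha-1}E_{\alpha,\alpha}(Gt^\alpha)$ for Metzler $G$ (for instance from the subordination formula $E_\alpha(Gt^\alpha)=\int_0^\infty \phi_\alpha(s)\,e^{Gst^\alpha}\,ds$ with the nonnegative M-Wright density $\phi_\alpha$, together with $E_{\alpha,\alpha}=\alpha E_\alpha'$);
\item the $O(\tau^{-\alpha-1})$ tail of your kernel $K$;
\item the identity $\int_0^\infty K=-G^{-1}$.
\end{itemize}
In return, your bounds are uniform in $t$ and linear in the initial deviation $(\mathbf p^{A}(0),\mathbf p^{AI}(0))$. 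They therefore give Lyapunov stability of $(\mathbf 1,\mathbf 0,\mathbf 0)$ directly. The paper instead goes through Corollary~\ref{corollary:R0_less_1}, whose underlying linearization argument in Theorem~\ref{theorem:stability} treats only the infection block. Your bounds also make your unproved remark on uniqueness of the rumor-free equilibrium automatic: any equilibrium in $\Omega$ is a constant solution, so it must converge to, and hence equal, $(\mathbf 1,\mathbf 0,\mathbf 0)$.
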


\noindent \textit{Proof.}
The detailed proof is provided in the Supplementary Material, Section~11.2.

\begin{remark}
Four points concerning the scope of Theorem~\ref{theorem:global_RFE} are worth emphasizing. (i) The condition $\bar{\mathcal R}_0<1$ alone already guarantees rumor extinction, i.e., $\mathcal P^{AI}(t)\to \mathbf 0$ for every initial condition; the additional condition $\lambda\rho(\mathbf A)<\delta$ makes the awareness layer itself subcritical, so that the awareness density also decays and the full state converges to $\mathcal P^{*}=(\mathbf 1,\mathbf 0,\mathbf 0)$. (ii) When the awareness layer is supercritical ($\lambda\rho(\mathbf A)>\delta$), the rumor-free equilibrium carries a nontrivial awareness distribution $\mathcal P^{AS*}>0$. A rigorous proof that $\mathcal P^{AS}(t)\to \mathcal P^{AS*}$ after the rumor has died out would require an asymptotic-autonomy (limit-system) principle for Caputo fractional systems, which is an open technical problem because the Caputo flow lacks the semigroup property on which the classical integer-order argument \cite{lajmanovich1976deterministic} relies; we verify this convergence numerically in the Supplementary Material (Section~11.4). (iii) Since $\mathrm{diag}(p^{US*}_i\beta^U + p^{AS*}_i\beta^A) \le \beta^U\mathbf I_N$ componentwise, $\bar{\mathcal R}_0<1$ implies $\mathcal R_0<1$, so the theorem is a sufficient---not necessary---condition; on the BA network of Section~III-C, $\bar{\mathcal R}_0=1$ at $\beta^U\approx0.126$ whereas $\mathcal R_0=1$ at $\beta^U\approx0.289$. The sharp threshold $\mathcal R_0<1$ for the rumor-free equilibrium remains an open problem; the single-layer fractional network SIS analogue in the disease-free case is established in \cite{liu2019fractional}. (iv) For the rumor-endemic regime $\mathcal R_0>1$, Theorem~\ref{theorem:endemic_global_stability} below establishes global asymptotic stability of the rumor-endemic equilibrium under a verifiable quadratic dissipativity condition of Lajmanovich--Yorke type; for the BA network used in this paper, evaluation of the ratio of the left-hand side of \eqref{eq:dissipativity_main} to $\sum_i(\Delta x_i^2+\Delta y_i^2)$ over $1.4\times10^5$ sampled feasible states, combined with local maximization of the ratio, yields a maximum value of approximately $-0.19$ with $d_i=e_i=1$ (attained on the boundary face $p^{AI}\equiv0$), providing numerical evidence in support of the dissipativity condition. Since finite sampling cannot verify the inequality over the entire continuous feasible region, Theorem~7 remains a conditional criterion whose applicability to a given network is to be established analytically or by exhaustive verification. The multi-initial-condition numerical experiment reported in the Supplementary Material provides direct evidence of the global convergence behavior discussed here.
\end{remark}

\begin{theorem}\label{theorem:endemic_global_stability}
(Global asymptotic stability of the rumor-endemic equilibrium)
Let $\mathbf A$ and $\mathbf B$ be nonnegative, zero-diagonal, irreducible, symmetric adjacency matrices, let $0<\beta^A<\beta^U$ and $\mathcal R_0>1$, and let $\mathcal P^{**}=(p^{US**},p^{AS**},p^{AI**})$ be the unique rumor-endemic equilibrium of system~\eqref{eq:main} with all components positive. For $\mathcal P\in\Omega$ write $\Delta x_i=p_i^{AI}-p_i^{AI**}$ for the deviation of the infected fraction and $\Delta y_i=(p_i^{AS}+p_i^{AI})-(p_i^{AS**}+p_i^{AI**})$ for the deviation of the total aware fraction, and define
\begin{align}\label{eq:GH_def}
  G_i(\mathcal P)&=M_i^I\big(\beta^U p_i^{US}+\beta^A p_i^{AS}\big)-\mu p_i^{AI},\\
  H_i(\mathcal P)&=\lambda M_i^A p_i^{US}-\delta p_i^{AS}+\beta^U M_i^I p_i^{US},
\end{align}
where $H_i$ is the right-hand side of the equation satisfied by the total aware fraction $p_i^{AS}+p_i^{AI}$, obtained by adding the second and third equations of \eqref{eq:main_component}. If there exist positive numbers $d_i$, $e_i$ and a constant $c>0$ such that, for every $\mathcal P\in\Omega$,
\begin{align}\label{eq:dissipativity_main}
  2\sum_{i=1}^{N}d_i\,\Delta x_i\,G_i(\mathcal P)
  +2\sum_{i=1}^{N}e_i\,\Delta y_i\,H_i(\mathcal P)
  \le -c\Big(\sum_{i=1}^{N}\Delta x_i^2+\sum_{i=1}^{N}\Delta y_i^2\Big),
\end{align}
then, for every $\alpha\in(0,1)$ and every initial condition in $\Omega$, the solution of system~\eqref{eq:main} converges to $\mathcal P^{**}$; i.e., $\mathcal P^{**}$ is globally asymptotically stable.
\end{theorem}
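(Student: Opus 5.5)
The plan is a quadratic Lyapunov function combined with the standard Caputo inequality for quadratic forms. Using the constraint $p_i^{US}+p_i^{AS}+p_i^{AI}=1$ from Theorem~\ref{theorem:positivity_boundedness}, I will reduce \eqref{eq:main_component} to the $2N$ variables $x_i=p_i^{AI}$ and $y_i=p_i^{AS}+p_i^{AI}$. Then $p_i^{US}=1-y_i$ and $p_i^{AS}=y_i-x_i$, and $M_i^A=\sum_j a_{ij}y_j$, $M_i^I=\sum_j b_{ij}x_j$. The third equation of \eqref{eq:main_component} reads ${}^C_0\mathcal D_t^\alpha x_i=G_i(\mathcal P)$. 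Adding the second and third equations gives ${}^C_0\mathcal D_t^\alpha y_i=H_i(\mathcal P)$, which is exactly the definition in \eqref{eq:GH_def}. This change of variables is a bijection between $\Omega$ and the image set $\{0\le x_i\le y_i\le1\}$. Hence convergence of $(x,y)$ to $(x^{**},y^{**})$ is equivalent to $\mathcal P(t)\to\mathcal P^{**}$.

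I will take the Lyapunov function
\begin{align*}
V(t)=\sum_{i=1}^N d_i\,\Delta x_i(t)^2+\sum_{i=1}^N e_i\,\Delta y_i(t)^2 .
\end{align*}
The equilibrium values are constant and $\mathcal P\in C^1$, so ${}^C_0\mathcal D_t^\alpha \Delta x_i={}^C_0\mathcal D_t^\alpha x_i$, and similarly for $\Delta y_i$. The Aguila-Camacho--Duarte-Mermoud--Gallegos inequality
\begin{align*}
\tfrac12\,{}^C_0\mathcal D_t^\alpha z(t)^2\le z(t)\,{}^C_0\mathcal D_t^\alpha z(t)
\end{align*}
holds for every scalar $C^1$ function $z$. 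Applying it with positive weights gives
\begin{align*}
{}^C_0\mathcal D_t^\alpha V(t)\le 2\sum_i d_i\Delta x_i G_i(\mathcal P(t))+2\sum_i e_i\Delta y_i H_i(\mathcal P(t)).
\end{align*}
Theorem~\ref{theorem:positivity_boundedness} keeps $\mathcal P(t)$ in $\Omega$ for all $t\ge0$, so hypothesis \eqref{eq:dissipativity_main} applies along the trajectory. Setting $w=\max_i\max(d_i,e_i)$, this yields
\begin{align*}
{}^C_0\mathcal D_t^\alpha V\le -c\sum_i(\Delta x_i^2+\Delta y_i^2)\le -\frac{c}{w}\,V .
\end{align*}

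The fractional comparison principle then gives
\begin{align*}
V(t)\le V(0)\,E_\alpha\!\left(-\tfrac{c}{w}t^\alpha\right),
\end{align*}
where $E_\alpha$ is the Mittag-Leffler function. One way to see this is to write $V$ as the solution of ${}^C_0\mathcal D_t^\alpha V=-\tfrac{c}{w}V-h(t)$ with $h\ge0$, and then use the variation-of-constants formula, whose kernel $t^{\alpha-1}E_{\alpha,\alpha}(-kt^\alpha)$ is nonnegative. Since $E_\alpha(-kt^\alpha)\to0$ as $t\to\infty$ (like $t^{-\alpha}$), $V(t)\to0$. Because every $d_i,e_i>0$, this forces $\Delta x_i,\Delta y_i\to0$, and therefore $\mathcal P(t)\to\mathcal P^{**}$.

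Two points need care. The first is global existence: Theorem~\ref{theorem:existence_uniqueness} gives only a local interval. Because the vector field is globally Lipschitz on $\Omega$ and the solution stays in the compact set $\Omega$, I will extend the solution to $[0,\infty)$ by the standard continuation argument for Volterra equations. The second, and the main obstacle, is justifying the comparison step rigorously for Caputo operators, where no semigroup is available. I plan to cite the established lemma in the form used in Section~11.2 (Li--Chen--Podlubny Mittag-Leffler stability), which requires exactly the $C^1$ regularity assumed in the paper. Symmetry of $\mathbf A$ and $\mathbf B$, positivity of $\mathcal P^{**}$ and the condition $\mathcal R_0>1$ enter only to guarantee that $\mathcal P^{**}$ exists and is unique (Theorem~\ref{theorem:endemic_existence}); the dynamical argument itself uses only \eqref{eq:dissipativity_main}.
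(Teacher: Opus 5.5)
Your proposal is correct and is essentially the paper's proof. It uses the same weighted quadratic $V=\sum_i d_i\Delta x_i^2+\sum_i e_i\Delta y_i^2$, the Aguila-Camacho inequality giving ${}^C_0\mathcal D_t^\alpha V\le -\bigl(c/\max_i\max(d_i,e_i)\bigr)V$, and a Mittag-Leffler comparison; the paper invokes its scalar relaxation lemma with $Q\equiv 0$, which rests on the same nonnegativity of $t^{\alpha-1}E_{\alpha,\alpha}(-kt^\alpha)$ as your variation-of-constants argument.

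One caveat applies equally to the paper. The point $(\mathbf 1,\mathbf 0,\mathbf 0)\in\Omega$ is an equilibrium at which every $G_i$ and $H_i$ vanishes, so the left side of \eqref{eq:dissipativity_main} is $0$ there, while the right side is strictly negative because $p^{AI**}>0$ (consistently, the constant solution at that point never approaches $\mathcal P^{**}$). Hence the dissipativity hypothesis can never hold, and the implication you prove is only vacuously true.
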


\noindent \textit{Proof.}
The detailed proof is provided in the Supplementary Material, Section~11.3.

\section{Numerical simulations}\label{sec:simulation}
In this section, we perform a series of numerical simulations to validate the analytical results and illustrate the dynamic properties of the proposed fractional-order UAU-SIS model. Four groups of simulations are conducted to systematically examine: (i)~the consistency between the integer-order mean-field model and stochastic simulations, (ii)~the influence of the fractional derivative order~$\alpha$ on rumor propagation dynamics and steady states, (iii)~the impact of the consciousness layer parameters and topology on rumor spreading, and (iv)~the sensitivity of the coupled information-rumor dynamics to network topology; the last two groups are reported in Section~12 of the Supplementary Material and summarized at the end of this section. These simulation groups play complementary roles: the first one provides a benchmark for the classical integer-order setting, the second one highlights the dynamical effect of the fractional order, and the last two investigate how awareness mechanisms and multilayer structure shape the propagation outcomes. Before presenting these simulations, we first describe the numerical methods used for solving both integer-order and fractional-order systems. To further clarify the reliability of the numerical results, we note that the proposed fractional-order model is supported by rigorous theoretical analysis in Section~II, where the existence, uniqueness, positivity, boundedness, and local stability of the system are established. In Section~III, the numerical simulations are performed using the corresponding numerical solvers to illustrate the dynamical behavior predicted by the model and to compare it with the analytical conclusions. Together, these theoretical and numerical components provide evidence for the accuracy, convergence behavior, and validity of the proposed fractional-order UAU-SIS framework.

\subsection{Numerical solvers}\label{sec:numerical_methods}

For the integer-order system ($\alpha=1$), two simulation approaches are employed:
\begin{enumerate}
  \item The \emph{Gillespie stochastic simulation algorithm} (SSA), which provides an event-driven realization of the microscopic stochastic process, is used to capture the statistical evolution of the coupled awareness–infection dynamics on networks.
  \item The \emph{individual-based mean-field approximation} (IBMFA), which directly integrates the deterministic ODE system corresponding to Eq.~\eqref{eq:main}, thereby describing the average behavior of large populations.
\end{enumerate}
The integer-order model can therefore be interpreted as two mutually coupled SIS-type processes on a multiplex network, and its trajectories from IBMFA are expected to match the ensemble-averaged results of multiple Gillespie runs.

For the fractional-order system ($0<\alpha<1$), we numerically integrate the Caputo FODEs using a \emph{Block-Adaptive FFT (Fast Fourier Transform) Solver}, which accelerates the Caputo convolution via FFT and adaptively adjusts the block size. The complete pseudocode of the solver, the L1 discretization scheme, the predictor–corrector damping strategy, and the dynamic block-size adaptation rules are provided in the Supplementary Material. The systems simulated below are the mean-field image of the Kolmogorov forward equation (Eq.~\eqref{eq:kolmogorov}): the transfer generator $\mathcal{Q}_i(t)$ entering the simulated equations, including the awareness-dependent infection rates and the immediate awareness update upon infection, is inherited directly from the KFE through the mean-field closure.

Although the fractional-order model in Eq.~\eqref{eq:main} is theoretically well-defined for any $\alpha\in(0,1)$, the convergence performance of the numerical solver degrades when $\alpha$ becomes very small. Therefore, in all subsequent fractional-order simulations, we restrict $\alpha$ to values $\geq 0.5$, which ensures reliable numerical integration, while still capturing a wide range of memory effects. Furthermore, $\alpha\in(0,1)$ corresponds to the subdiffusive, memory-retaining regime that is the physical target of this work, where information spreads more slowly than the exponential prediction of classical models. For $\alpha>1$, the generalized Caputo definition based on the fractional part of $\alpha$ would be required, corresponding to a superdiffusive regime; since this regime lies outside the scope of the present study, we restrict our analysis to the physically relevant interval $\alpha\in(0,1)$.

\subsection{Validation of Integer-Order Model and Effects of Fractional Dynamics}\label{sec:exp1}

In this first simulation, we validate the proposed UAU-SIS modeling framework by comparing the deterministic individual-based mean-field approximation (IBMFA) with stochastic Monte Carlo simulations for the integer-order system ($\alpha=1$), and subsequently investigate the impact of fractional dynamics ($0.5\leq\alpha<1$) on the spreading process.

The parameter values used throughout our numerical study are selected within the ranges commonly adopted in the awareness-epidemic coevolution literature \cite{granell2013dynamical, wang2020epidemic}, with $\mu=1.0$ serving as the time-normalization scale so that the relevant quantities are the rate ratios $\beta^U/\mu$, $\lambda/\mu$, and $\delta/\mu$. As an illustrative interpretation, if we assume one unit of time to be one day, then $\mu=1$ corresponds to a mean infectious period of one day, $\beta^U$ is the expected number of transmissions from an infected node to a given susceptible neighbor per day, and $\delta=0.5$ means that an aware individual forgets, on average, every two days. Under this interpretation, the infection rates $\beta^U\in\{0.3,0.6,1.5\}$ used below range from subcritical to strongly endemic transmission and are comparable with the per-contact transmission rates of rapidly spreading respiratory infections or of rumor propagation in dense online communities, whereas values such as $\beta^U=15$ or $150$ would imply tens to hundreds of transmissions per infectious period and are not representative of human interaction processes. For each simulation, the parameter regime is consistent with the corresponding theoretical conditions: whenever $\beta^U$ lies above the threshold value (i.e., $\mathcal{R}_0>1$), the system converges to the rumor-endemic equilibrium whose local stability is guaranteed by Theorem~5; whenever $\beta^U$ lies below the threshold ($\mathcal{R}_0<1$), the system converges to the rumor-free equilibrium, consistent with Theorem~3 and Corollary~1. The convergence of the numerical trajectories to the analytically predicted equilibria thus provides numerical evidence that the chosen parameters are consistent with the conditions underlying the main theoretical conclusions.

We first examine the consistency between the IBMFA and stochastic simulations using three distinct parameter sets designed to highlight different dynamical regimes:
\begin{itemize}
    \item \textbf{Baseline}: Standard parameters ($\lambda=0.3$, $\beta^U=0.6$, $\gamma=1/3$, $\delta=0.5$, $\mu=1.0$)
    \item \textbf{High Infection}: Reduced awareness spreading ($\lambda=0.1$) with an enhanced infection rate ($\beta^U=1.5$)
    \item \textbf{High Awareness}: Strong awareness spreading ($\lambda=1.0$) with a reduced infection rate ($\beta^U=0.3$) and a lower forgetting rate ($\delta=0.2$)
\end{itemize}

All simulations are conducted on a duplex Barabási-Albert network with $N=1000$ nodes, where the awareness layer is generated with $m_A=4$ and the risk layer with $m_B=2$. Initial conditions are set with $2\%$ of nodes in the $AI$ state and $3\%$ in the $AS$ state (i.e., $5\%$ aware in total), distributed randomly in the network. The results are shown in Fig.~\ref{fig:exp1_integer}.

\begin{figure*}[!ht]
    \centering
    \subfigure[State dynamics comparison between IBMFA and Monte Carlo simulations]{
        \includegraphics[width=0.8\linewidth]{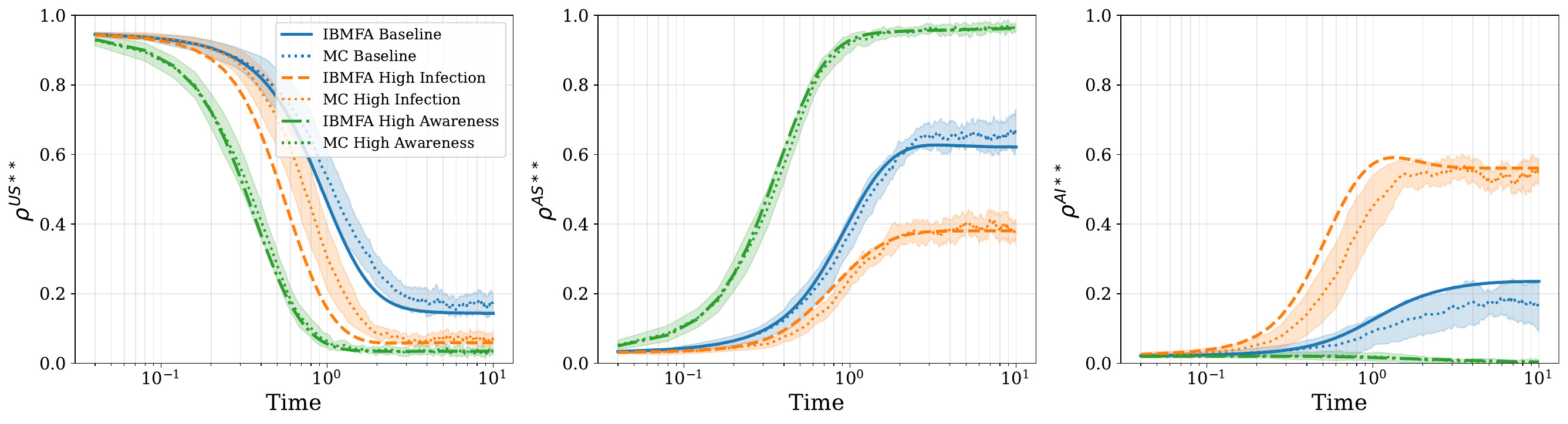}
        \label{fig:exp1a}
    }
    \subfigure[Quantitative analysis of final state densities and parameters]{
        \includegraphics[width=0.8\linewidth]{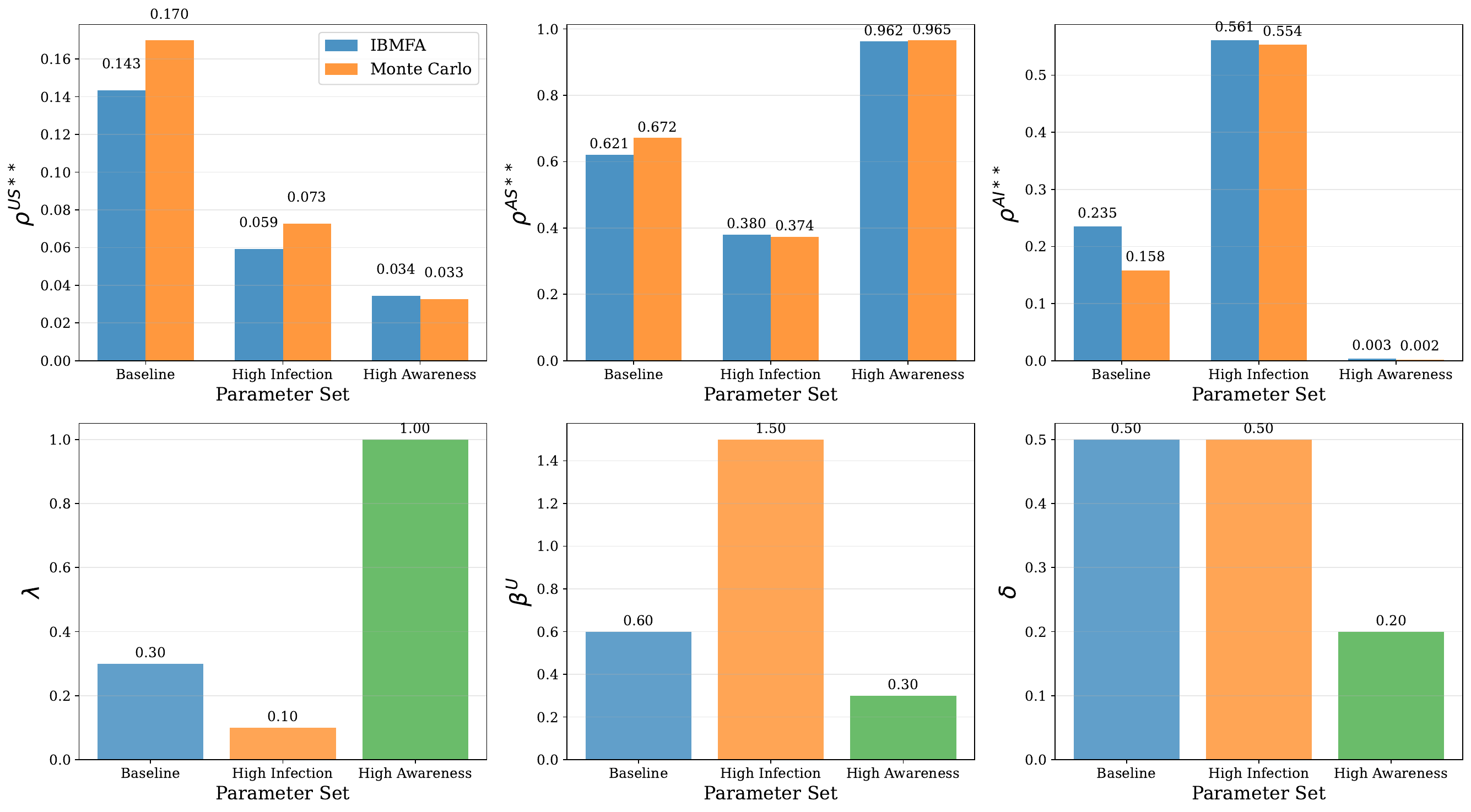}
        \label{fig:exp1b}
    }
    \vspace{-3mm}
    \caption{
    \textbf{Validation of integer-order UAU-SIS model through comparison of IBMFA and Monte Carlo simulations.}
    (a) displays the temporal evolution of state probabilities $p^{US}$, $p^{AS}$, and $p^{AI}$ for three parameter configurations using logarithmic time axis. Solid lines represent IBMFA solutions, while dashed lines with shaded regions show Monte Carlo means $\pm$ one standard deviation from 15 independent runs. 
    (b) provides quantitative analysis of final state densities (top row) and corresponding parameter values (bottom row). The Baseline configuration ($\lambda=0.3$, $\beta^U=0.6$, $\delta=0.5$) serves as reference, while High Infection ($\lambda=0.1$, $\beta^U=1.5$) and High Awareness ($\lambda=1.0$, $\beta^U=0.3$, $\delta=0.2$) configurations demonstrate contrasting dynamical behaviors.
    }
    \vspace{-1mm}
    \label{fig:exp1_integer}
\end{figure*}

Fig.~\ref{fig:exp1a} demonstrates an excellent agreement between the IBMFA solutions and Monte Carlo simulations across all three parameter configurations, validating the accuracy of our mean-field approximation. The logarithmic time axis effectively reveals both the rapid early-stage dynamics and the approach to the steady-state equilibrium. 

Several key observations emerge from the comparative analysis. In the High Infection configuration (orange curves), the reduced awareness spreading rate ($\lambda=0.1$) combined with an enhanced infection rate ($\beta^U=1.5$) leads to a rapid depletion of the unaware susceptible population ($p^{US}$) and sustains elevated infection levels ($p^{AI}$). Conversely, the High Awareness configuration (green curves) exhibits a strong suppression of infection through an enhanced awareness spreading ($\lambda=1.0$) and a reduced infection susceptibility ($\beta^U=0.3$), resulting in a higher fraction of aware susceptible individuals ($p^{AS}$).

The quantitative analysis in Fig.~\ref{fig:exp1b} reveals that both modeling approaches converge to nearly identical steady-state values, with maximum discrepancies of less than $0.02$ across all states and parameter configurations. This remarkable consistency between the deterministic IBMFA and stochastic Monte Carlo simulations provides strong empirical support for the theoretical foundations developed in Section~\ref{sec:modeling}.

We now extend our investigation to the fractional-order UAU-SIS model by solving Eq.~\eqref{eq:main_component} for $\alpha \in \{0.7, 0.8, 0.9\}$ using the same three parameter configurations. The Block-Adaptive FFT Solver is employed to efficiently handle the long-memory effects inherent in Caputo fractional derivatives. The results are shown in Fig.~\ref{fig:exp1_fractional}.

\begin{figure*}[!ht]
    \centering
    \subfigure[Fractional dynamics across parameter configurations]{
        \includegraphics[width=0.8\linewidth]{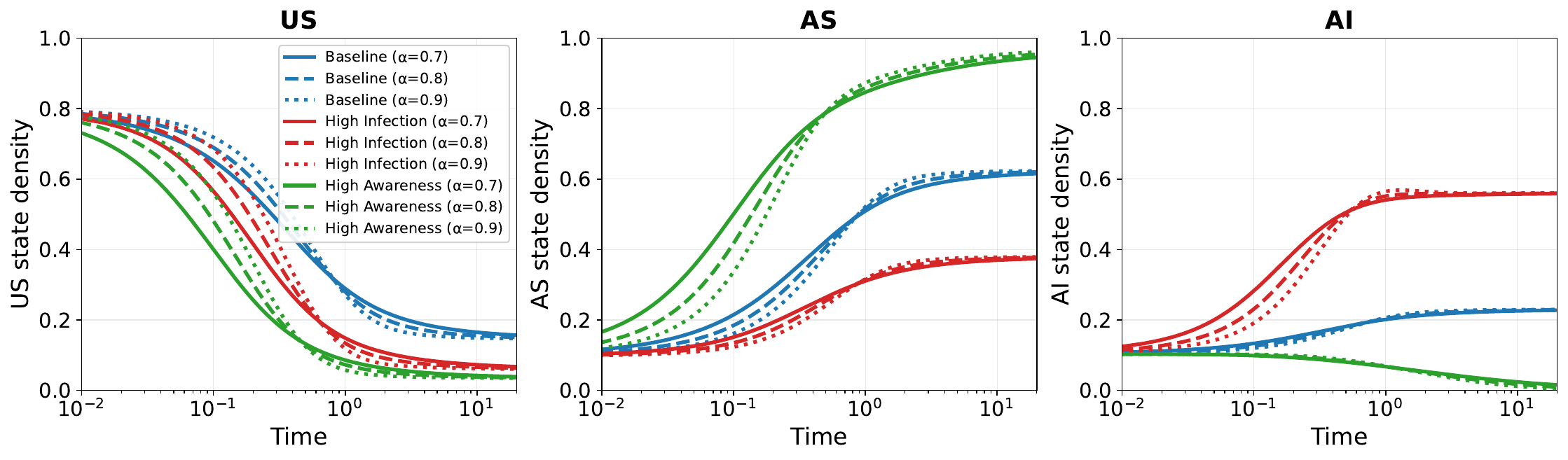}
        \label{fig:exp1c}
    }
    \vspace{-3mm}
    \subfigure[Memory effects in baseline configuration]{
        \includegraphics[width=0.6\linewidth]{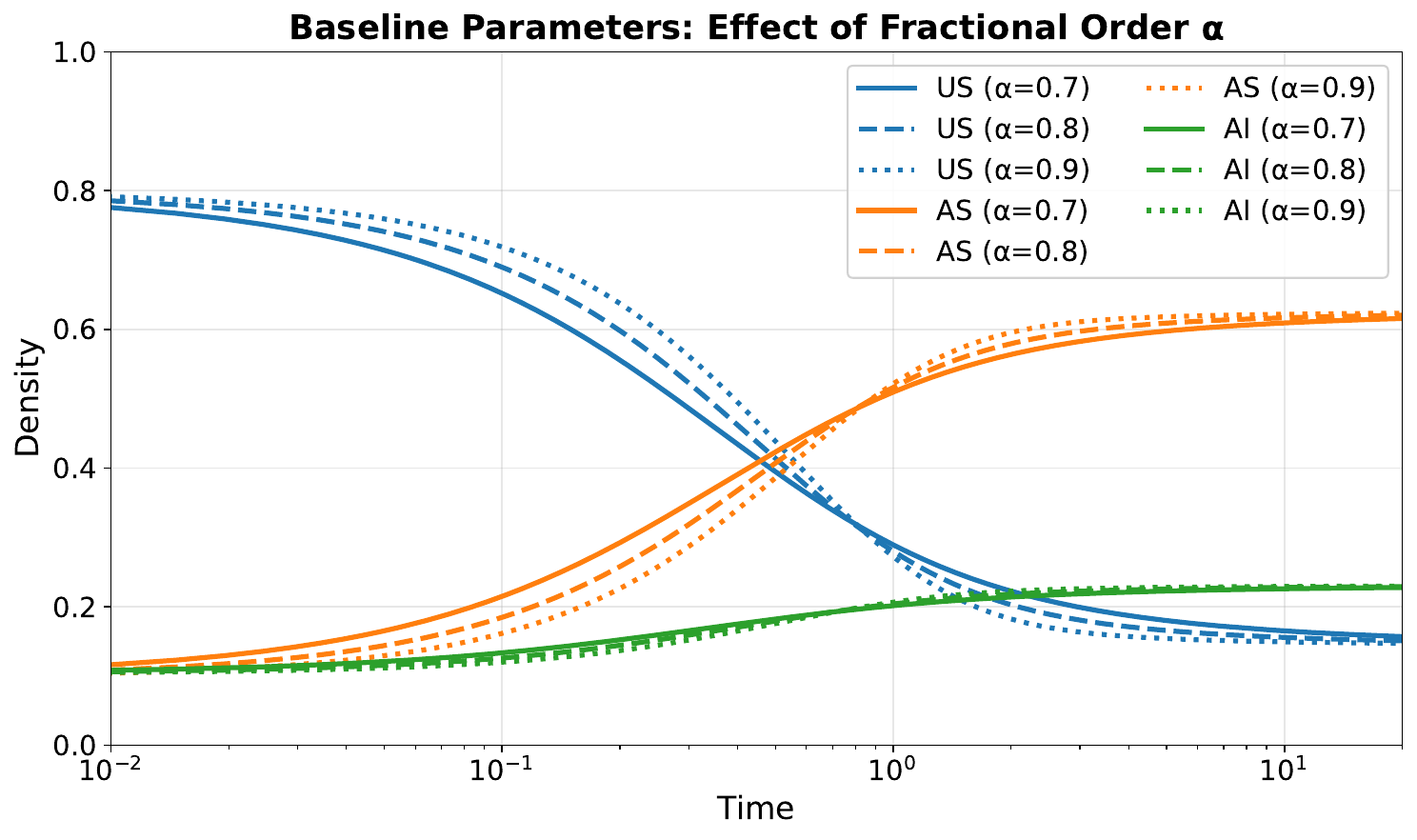}
        \label{fig:exp1d}
    }
    \vspace{-1mm}
    \caption{
    \textbf{Impact of fractional order $\alpha$ on UAU-SIS dynamics.}
    (a) illustrates the temporal evolution of state probabilities for three parameter configurations with varying fractional orders $\alpha \in \{0.7, 0.8, 0.9\}$. Each subplot corresponds to a different state ($p^{US}$, $p^{AS}$, $p^{AI}$), with line styles distinguishing between $\alpha$ values.
    (b) focuses on the Baseline parameter configuration, comparing the dynamics for different $\alpha$ values to highlight memory effects. The subdiffusive nature of fractional dynamics manifests as delayed response and slower convergence to steady states, with more pronounced effects for smaller $\alpha$ values.
    }
    \vspace{-1mm}
    \label{fig:exp1_fractional}
\end{figure*}

Fig.~\ref{fig:exp1c} reveals that while the qualitative behavior of the fractional-order system follows similar patterns to the integer-order case, significant quantitative differences emerge during the transient phase. The memory effects introduced by the Caputo derivative produce a characteristic subdiffusive behavior, particularly evident in the delayed peak infection times and a slower convergence to equilibrium.

Notably, the steady-state values remain largely unaffected by the fractional order $\alpha$, as demonstrated in Fig.~\ref{fig:exp1d}. This observation suggests that while memory effects substantially influence the temporal evolution and early-stage dynamics, they do not alter the long-term equilibrium properties of the system. The persistence of rumor prevalence despite memory-induced delays underscores the fundamental limitation of information-based interventions in completely suppressing rumor propagation once the propagation threshold is exceeded.

The observed subdiffusive behavior becomes more pronounced with decreasing $\alpha$, consistent with the theoretical expectation that smaller fractional orders correspond to stronger memory effects and slower diffusion processes. This memory-dependent temporal scaling provides a mechanistic explanation for the prolonged rumor durations often observed in real-world information diffusion processes with heavy-tailed inter-event times.

To further validate the consistency of the fractional-order formulation, we perform a convergence test as the fractional order approaches the integer limit ($\alpha \to 1$). As detailed in Section~12.1 of the Supplementary Material, the fractional curves gradually approach the integer-order reference as $\alpha$ increases, and the root-mean-square error decreases monotonically to zero at $\alpha=1$, confirming that the memory effects introduced by the Caputo derivative produce a smooth interpolation between the Markovian and strongly non-Markovian regimes.

The simulation results yield four main insights. First, the close agreement between the IBMFA and the Monte Carlo simulations validates the continuous-time formulation of the UAU-SIS model and the underlying mean-field closure. Second, the contrasting dynamics under the different parameter configurations highlight the role of awareness spreading: effective information dissemination substantially reduces the rumor prevalence, although complete suppression remains challenging once the basic reproduction number exceeds unity. Third, memory effects reshape the transient dynamics without changing the steady-state properties, so that interventions should not rely on memory effects alone to suppress rumors in the long run. Fourth, the fractional-order formulation recovers the integer-order dynamics as $\alpha\to1$ (Section~12.1 of the Supplementary Material), so that $\alpha$ provides a well-controlled interpolation between the Markovian and the subdiffusive regimes.

\subsection{Impact of Fractional Order and Infection Rate}\label{sec:exp2}
To systematically investigate the role of memory effects and infection transmission in the coupled UAU-SIS process, 
we conduct a comprehensive parameter scan examining both the fractional order $\alpha$ 
and the infection rate in the unaware state $\beta^U$. This simulation aims to characterize 
the individual and combined effects of these two parameters on the system's transient dynamics 
and steady-state behavior.

We fix all other parameters to their baseline values ($N=1000$, $\mu=1.0$, $\delta=0.5$, 
$\lambda=0.3$, $\gamma=1/3$, $m_{A}=4$, $m_{B}=2$, 
$T=20.0$), and systematically vary $\alpha$ across the set 
$\{0.95,0.9,0.8,0.7,0.6,0.5\}$ and $\beta^U$ across $\{0.3,0.4,0.5,0.6,0.8,1.0\}$. 
The initial condition is configured with $80\%$ of nodes in the $US$ state, while the 
remaining $20\%$ are equally distributed between $AS$ and $AI$ states. All simulations 
are performed using our adaptive block solver for the fractional-order IBMFA equations. The simulation results are presented in Fig.~\ref{fig:exp2_combined}.

\begin{figure}[htbp]
    \centering
    \subfigure[Steady-state infection density]{
        \includegraphics[width=0.6\linewidth]{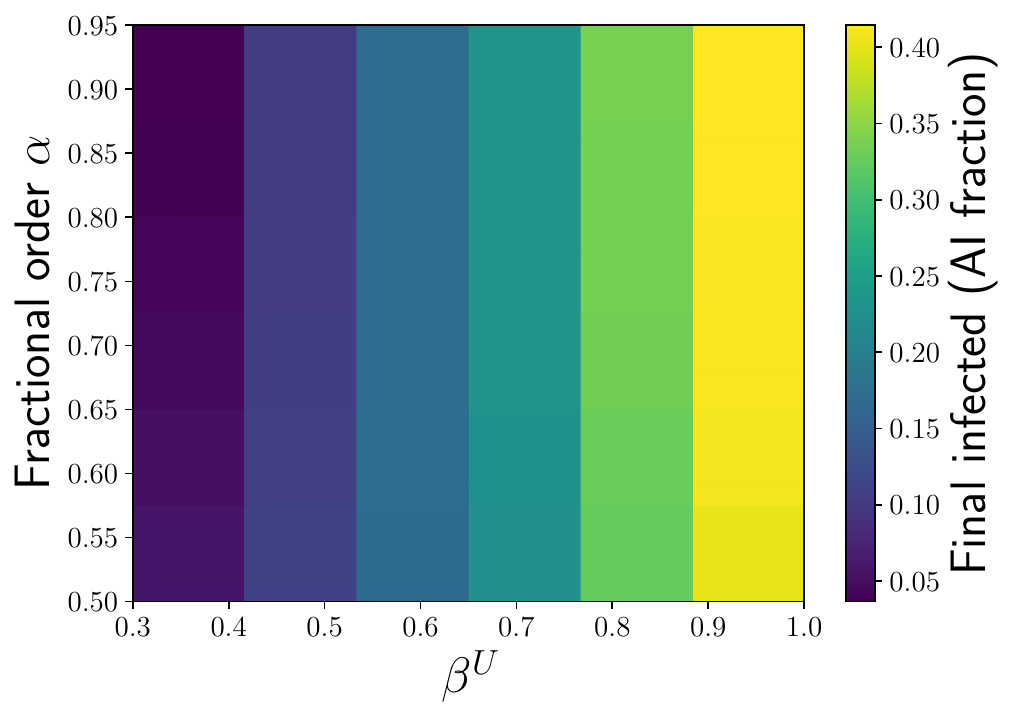}
        \label{fig:exp2heat}
    }
    \vspace{-3mm}
    
    \subfigure[$\beta^U = 0.3$]{
        \includegraphics[width=0.32\linewidth]{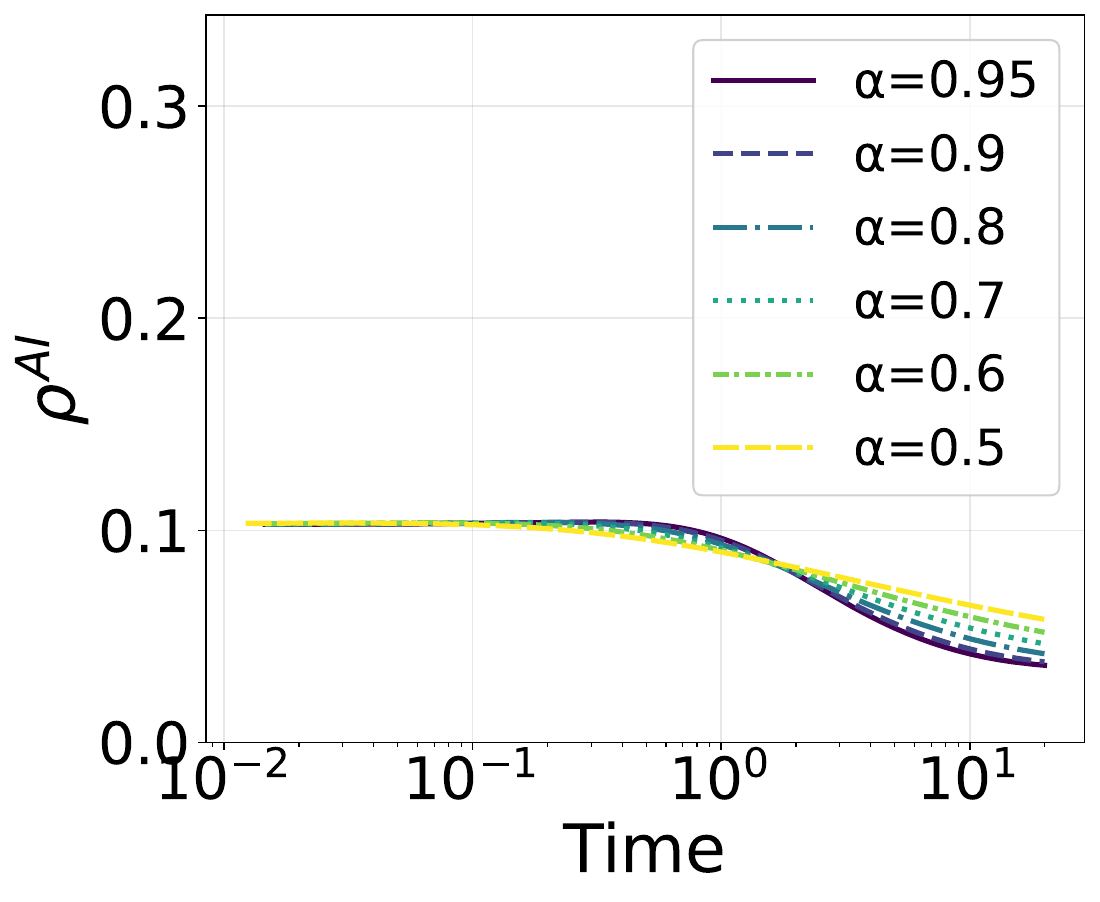}
        \label{fig:traj03}
    }
    \hspace{-5mm}
    \subfigure[$\beta^U = 0.4$]{
        \includegraphics[width=0.32\linewidth]{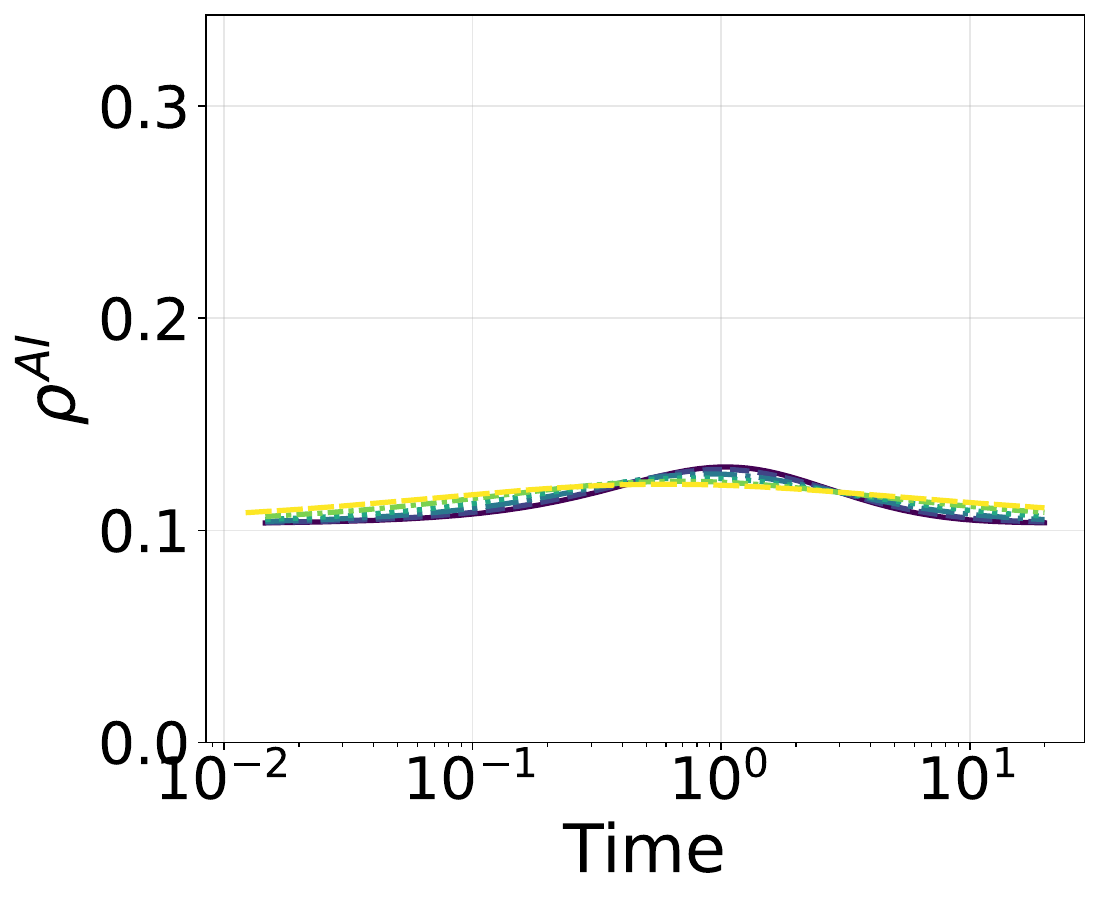}
        \label{fig:traj04}
    }
    \hspace{-5mm}
    \subfigure[$\beta^U = 0.5$]{
        \includegraphics[width=0.32\linewidth]{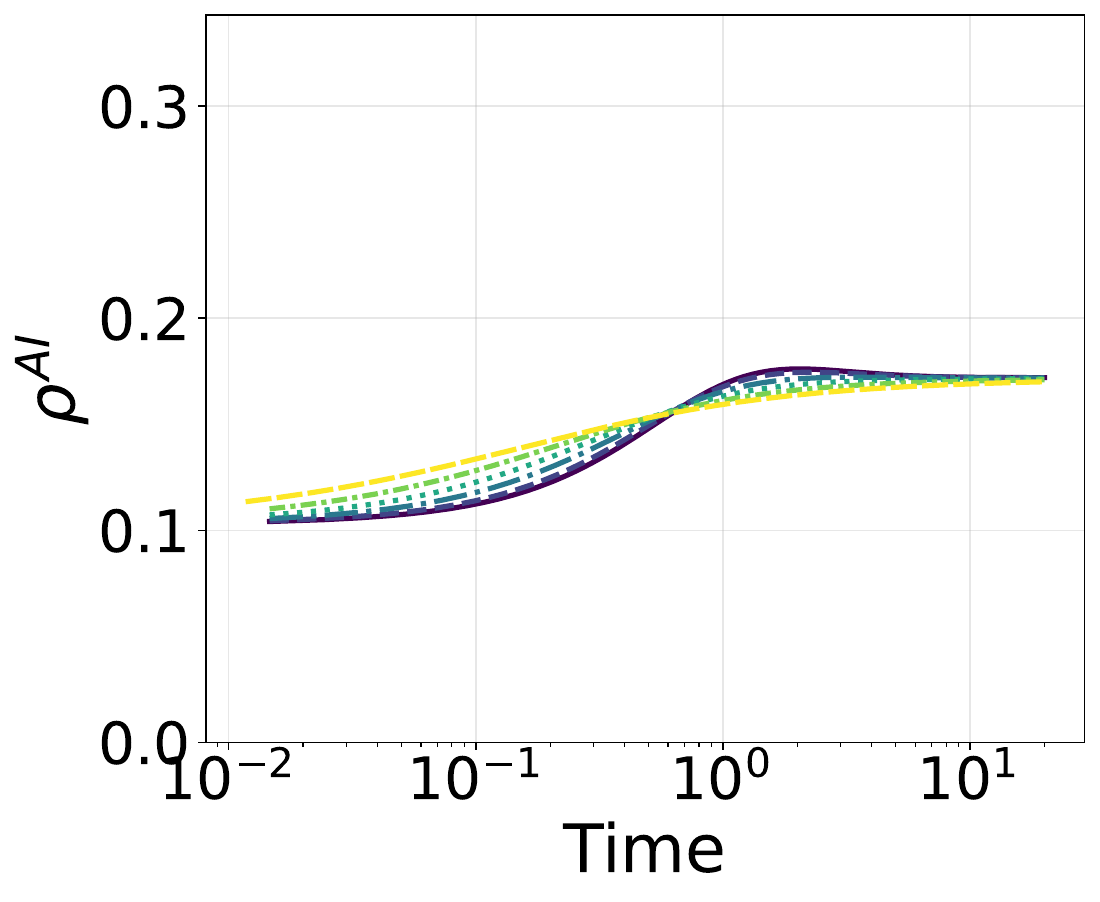}
        \label{fig:traj05}
    }
    \vspace{-3mm}
    
    \subfigure[$\beta^U = 0.6$]{
        \includegraphics[width=0.32\linewidth]{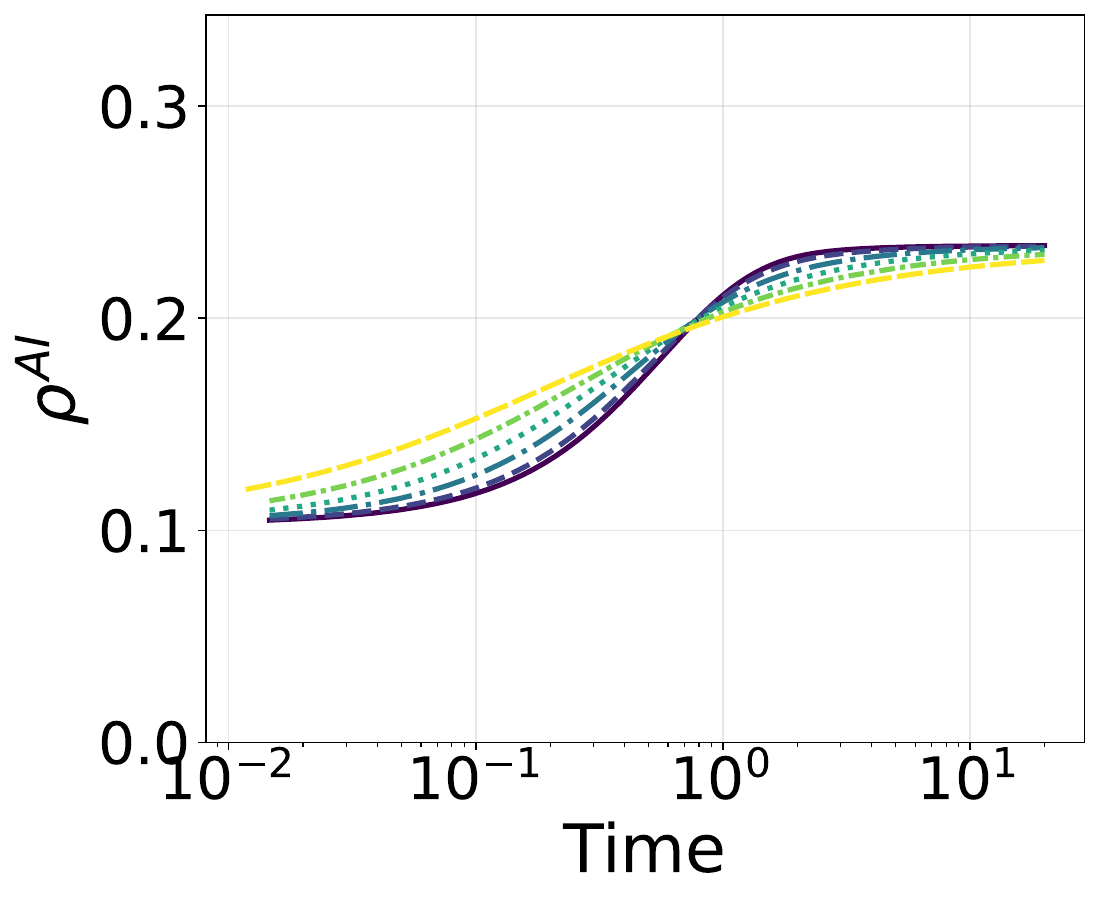}
        \label{fig:traj06}
    }
    \hspace{-5mm}
    \subfigure[$\beta^U = 0.8$]{
        \includegraphics[width=0.32\linewidth]{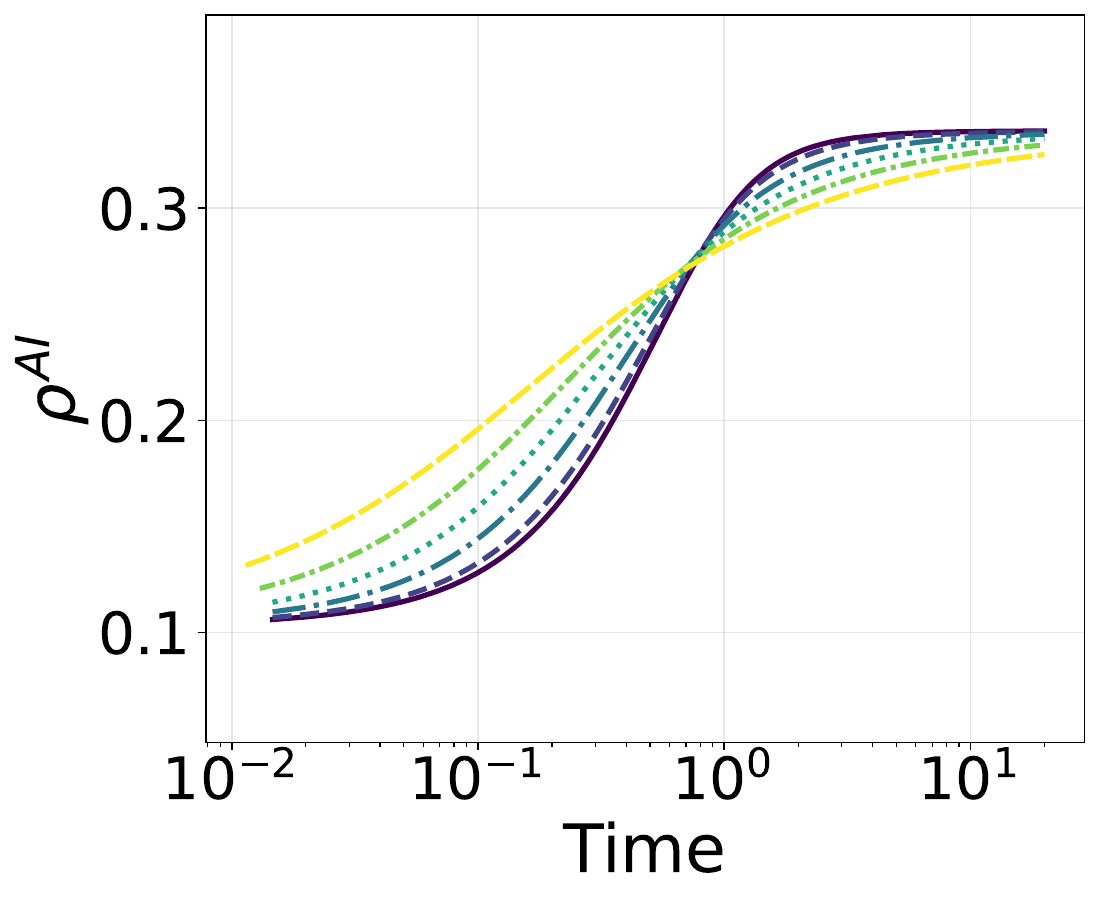}
        \label{fig:traj08}
    }
    \hspace{-5mm}
    \subfigure[$\beta^U = 1.0$]{
        \includegraphics[width=0.32\linewidth]{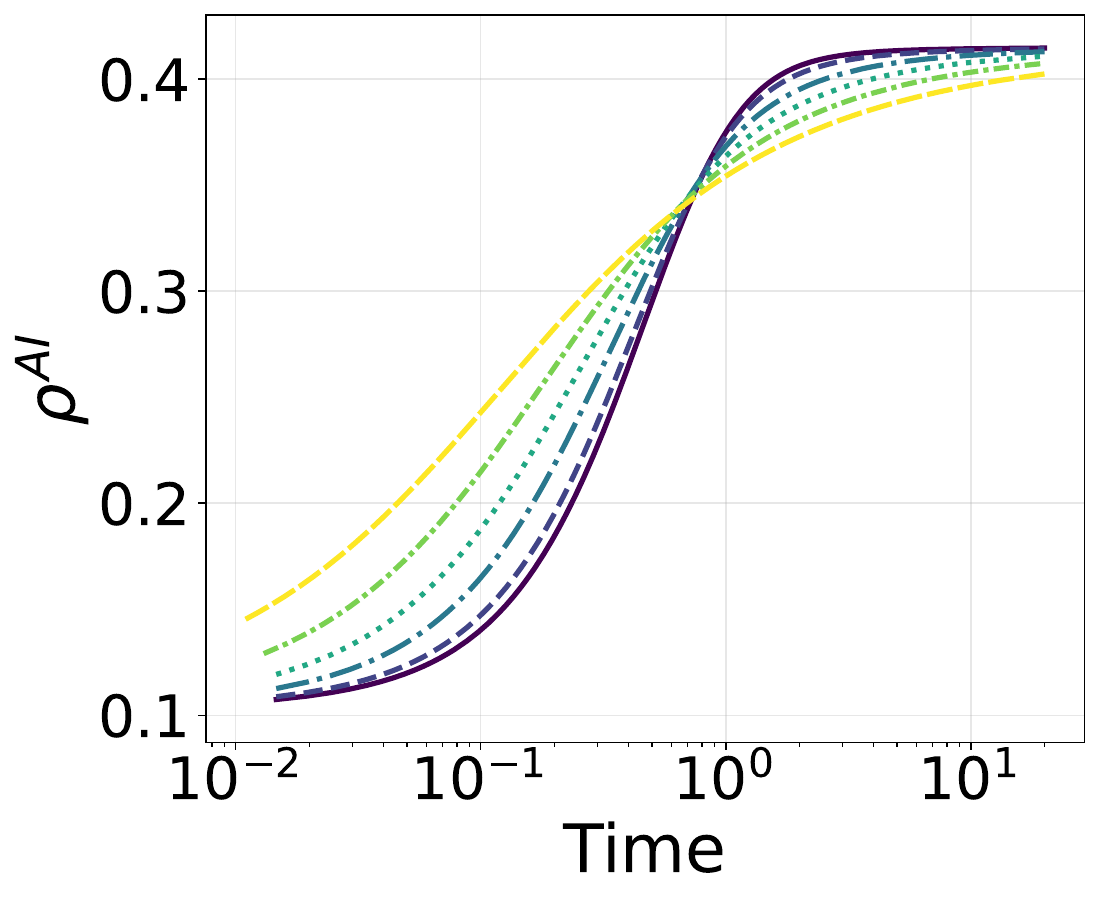}
        \label{fig:traj10}
    }
    \vspace{-3mm}
    \caption{
        \textbf{Steady-state infection density and temporal dynamics as functions of the fractional order $\alpha$ and the infection rate $\beta^U$.}
        (a) Heatmap of the steady-state infection density (AI fraction) over
        $\beta^U \in \{0.3,0.4,0.5,0.6,0.8,1.0\}$ and $\alpha \in \{0.95,0.9,0.8,0.7,0.6,0.5\}$,
        computed with the fractional-order IBMFA solver ($T=20$, $N=1000$, $\mu=1.0$, $\delta=0.5$, $\lambda=0.3$,
        $\gamma=1/3$, $m_A=4$, $m_B=2$, with $80\%$ of the nodes initially in the $US$ state and $10\%$ each in the
        $AS$ and $AI$ states). Higher $\beta^U$ increases the steady-state prevalence, whereas $\alpha$ has little
        effect on the final state. (b)-(g) Temporal dynamics of the infection density for the different $\beta^U$ values,
        each panel showing the trajectories for $\alpha$ from $0.5$ to $0.95$ on a logarithmic time axis. Lower $\alpha$
        (stronger memory) slows the convergence to equilibrium without shifting it, so that the fractional order
        primarily reorganizes the transient dynamics.
    }
    \vspace{-1mm}
    \label{fig:exp2_combined}
\end{figure}
An analysis of the heatmap in Fig.~\ref{fig:exp2heat} reveals a distinct pattern of vertical striping. As anticipated by theory, increasing the infection rate $\beta^U$ significantly elevates the steady-state infection density, confirming that $\beta^U$ is a primary determinant of the system's final rumor prevalence. However, a key observation is that for any fixed value of $\beta^U$, variations in $\alpha$ result in negligible changes to the steady-state prevalence. This indicates that while $\alpha$ profoundly influences the transient dynamics, as will be shown in the subsequent panels, it does not affect the long-term equilibrium of the system.

These observations reveal a fundamental dichotomy of fractional-order rumor models: the memory encoded in $\alpha$ governs the journey of an outbreak, whereas its final destination is dictated by the network topology and the intrinsic reaction rates (such as $\beta^U$ and $\mu$). Memory can accelerate or delay the approach to equilibrium, but it cannot change the equilibrium itself.

The temporal dynamics presented in Figs~\ref{fig:traj03}-\ref{fig:traj10} elucidate the transient behavior of the system across all parameter combinations. These panels are particularly instrumental in analyzing the impact of $\alpha$ on the system's approach to equilibrium. A consistent pattern emerges across all $\beta^U$ values: the infection density trajectories exhibit markedly different temporal profiles depending on $\alpha$, demonstrating that memory effects substantially reshape the transient process.

Specifically, regardless of $\beta^U$, systems with lower $\alpha$ values consistently display more sluggish responses and flatter temporal profiles. This phenomenon directly manifests the \textit{subdiffusive} nature of the diffusion process characterized in Section~\ref{sec:modeling}, where information spreads more slowly than the exponential growth predicted by traditional models. The stronger memory effects (lower $\alpha$) introduce a temporal drag on the system's evolution, causing delayed response to infection pressures and prolonged convergence to steady state.

This decoupling between the transient dynamics and the final outcome has direct implications for rumor management in highly knowledgeable communities. The subdiffusive slowdown associated with strong memory effects opens a longer window for intervention, but it also prolongs the tail of an outbreak and therefore calls for sustained control measures. Memory thus reorganizes the temporal landscape of rumor spreading without altering its final prevalence.

To further corroborate the threshold-governed nature of the coupled awareness--rumor system, we numerically verify the transition from the rumor-free to the rumor-endemic regime. As detailed in Section~12.2 of the Supplementary Material, the bifurcation diagram exhibits a clear threshold transition at $\mathcal R_0\approx1$, in agreement with the transcritical bifurcation predicted by Theorem~\ref{theorem:endemic_existence}, and the fractional-order data points ($\alpha=0.6$ and $\alpha=0.8$) approach the same $\alpha$-independent equilibrium, consistent with Remark~1.

The global convergence behavior underlying the stability analysis of Section~II-B is verified numerically in the Supplementary Material (Section~11.4): starting from twelve markedly different initial conditions (infection fractions from $1\%$ to $30\%$ and awareness fractions from $1\%$ to $60\%$), all trajectories of the integer-order system converge to the same steady-state value $\rho^{AI**}=0.2301$ in the endemic regime ($\beta^U=0.6$, $\mathcal R_0=2.07$, spread below $5\times10^{-6}$) and decay to $\rho^{AI}\approx 0$ in the rumor-free regime ($\beta^U=0.1$, $\mathcal R_0=0.35<1$, down to $10^{-7}$); representative fractional-order trajectories ($\alpha=0.8$) converge to the same equilibria up to the tolerance of the fractional solver over the finite horizon $T=20$.

\subsection{Impact of Awareness Layer and Network Topology}\label{sec:exp3}

We further examine how the awareness layer and the network architecture of the two layers
shape the steady-state rumor prevalence. The complete account of the two experiments is
given in Sections~12.3 and~12.4 of the Supplementary Material, and we summarize their main
findings here.

For the awareness layer, we fix $\alpha=0.8$ and $\beta^U=0.4$ on a multiplex network with
$N=250$ nodes ($m_B=2$) and scan the awareness spreading rate $\lambda\in[0.1,1.0]$, the
forgetting rate $\delta\in[0.1,1.0]$, and the connectivity of the awareness layer
$m_A\in\{2,4,8\}$. The results show a consistent pattern across all connectivity
configurations: slow spreading combined with rapid forgetting of corrective information
(small $\lambda$ and large $\delta$) raises the steady-state rumor prevalence by
$20\%$--$30\%$ relative to the optimal combination (large $\lambda$ and small $\delta$),
whereas denser awareness layers ($m_A=8$) systematically lower the average prevalence with
respect to sparse ones ($m_A=2$). The awareness layer therefore modulates the rumor
dynamics both through its dynamical parameters $(\lambda,\delta)$ and through the topology
of the information channel.

For the network architecture, we compare all nine combinations of Erdős--Rényi (ER),
Barabási--Albert (BA), and regular networks in the two layers, with the average degrees
held at approximately $4$ and $2$ for the awareness and the risk layer, respectively. The
state composition is nearly invariant along each row of the resulting $3\times3$ grid,
which indicates that, under the low awareness spreading rate ($\lambda=0.1$) used in this
experiment, the awareness-layer topology exerts only a limited influence on the final
prevalence. The risk-layer topology, by contrast, plays a dominant role: BA networks yield
the highest infection density, followed by regular and then ER networks. This ordering
follows from degree heterogeneity: the hubs of BA networks are infected early and act as
super-spreaders, the homogeneous random structure of ER networks spreads the rumor more
uniformly and slowly, and regular networks lie in between. The risk-layer topology
additionally modulates the awareness distribution through the infection process itself,
because infected nodes enter the aware state immediately. Taken together, these results
indicate that rumor containment in knowledgeable communities should combine awareness
campaigns with structural interventions on the risk layer.

\section{Conclusions and future work}\label{sec:conclusion}
In this work, we proposed a fractional-order SIS-UAU model on multiplex networks to capture the intertwined dynamics between rumor propagation and awareness diffusion under memory effects. The model distinguishes an awareness layer that disseminates debunking information from a risk layer that governs rumor transmission, and couples both via state-dependent cross-layer interactions. By employing the Caputo fractional derivative, the model incorporates a realistic form of temporal memory that reflects subdiffusive human responses, hesitation, and delayed adoption of awareness information. On the theoretical side, we established several fundamental properties of the system: positivity of the solution components, invariance of the probabilistic simplex, Lipschitz continuity of the system vector field, local existence and uniqueness of solutions, stability conditions for both the rumor-free and endemic equilibria under the Matignon criterion, and a global asymptotic stability result for the rumor-free equilibrium under a sufficient condition on both layers (Theorem~6, with a complete self-contained proof in the Supplementary Material). These results ensure that the fractional-order dynamics are mathematically well-posed and provide rigorous guarantees for stability characterization across different memory regimes.

To efficiently solve the resulting high-dimensional fractional-order dynamical system, we developed a block-adaptive FFT solver tailored for long-horizon Caputo dynamics. With this numerical tool, we systematically investigated the temporal evolution and steady-state behavior of the coupled awareness–rumor process across a broad range of parameters and network structures. Our numerical findings reveal several robust and interpretable phenomena: (1) the fractional order plays a decisive role in shaping transient behavior: strong memory effects markedly slow down the system’s relaxation and induce pronounced subdiffusive transients, yet they do not alter the eventual equilibrium state. (2) The infection-related parameters on the rumor layer dominate the determination of steady-state rumor prevalence, while the fractional order primarily influences the time scale of convergence. (3) The awareness layer has a substantial and non-cosmetic impact on rumor dynamics: stronger awareness transmission, slower fading of awareness, and higher awareness-layer connectivity significantly reduce the long-term rumor burden. (4) The network architecture of the multiplex system heavily influences the outcome. The topology of the rumor layer is the principal determinant of global rumor prevalence—particularly the heterogeneity of degree distributions—while the awareness-layer topology provides a secondary but non-negligible modulation through indirect cross-layer feedback. Together, these results demonstrate that memory effects, awareness-layer dynamics, and multilayer network structures jointly shape the unfolding of rumor propagation, and accurate modeling requires integrating all three dimensions.

This work establishes a theoretical--numerical framework for investigating rumor spreading under the interplay of memory effects, awareness dynamics, and multilayer social connectivity. Several extensions follow naturally. (1) Behavioral heterogeneity could be introduced by allowing individuals to differ in skepticism or sensitivity to corrective information, and higher-order interactions via hypergraphs~\cite{millan2025topology} could capture group-level social reinforcement beyond pairwise contacts. (2) Relaxing the instantaneous awareness assumption ($\kappa\to\infty$) to a finite delay would yield a richer four-state dynamics with more realistic behavioral lags. (3) A systematic multi-kernel analysis---repeating the threshold, stability, and transient analyses under the Caputo--Fabrizio, Atangana--Baleanu, and fractal--fractional operators~\cite{akgul2026harmful,nisar2025assessment,farman2025dynamics}---would clarify how much of the observed memory-induced behavior is intrinsic to the model rather than to the specific kernel. (4) Calibrating the model against empirical or mechanism-based synthetic data, generated by a generalized Gillespie algorithm with heavy-tailed waiting times~\cite{boguna2014simulating}, would enable parameter estimation and quantitative validation. (5) Establishing rigorous global asymptotic stability under the sharp thresholds requires an asymptotic-autonomy (limit-system) principle for Caputo fractional systems, which remains open; Theorem~6 provides a sufficient-condition result, and the multi-initial-condition experiment reported in Section~11.4 of the Supplementary Material offers numerical evidence that the sharp-threshold behavior is globally attracting. These directions would broaden the scope and applicability of the fractional SIS-UAU framework.

\ifCLASSOPTIONcaptionsoff
  \newpage
\fi

\bibliographystyle{unsrtnat}

\scriptsize
\bibliography{cas-refs}

\normalsize


\begin{IEEEbiography}[{\includegraphics[width=1in,height=1.25in,clip,keepaspectratio]{Qin\_Li.png}}]{Qin~Li}
(Member, IEEE) received her Ph.D. degree in management from the School of Public Policy and Administration, Chongqing University, Chongqing, China, in 2025. Since 2025, she has been a lecturer in the Business College, Southwest University, Chongqing, China. Currently, she serves as an Editorial Board Member for Humanities \& Social Sciences Communications and Scientific Reports. Her research interests include Computational Social Science, Evolutionary Games, Epidemiological Models, and Complex Systems.
\end{IEEEbiography}

\begin{IEEEbiography}[{\includegraphics[width=1in,height=1.25in,clip,keepaspectratio]{Jiexi\_Tang.png}}]{Jiexi~Tang}
received the B.E. degree in artificial intelligence from the College of Artificial Intelligence, Southwest University, Chongqing, China, in 2026. He is currently pursuing the M.S. degree in intelligent communication and computational communication with the School of Journalism and Communication, Beijing Normal University, Beijing, China. He has published one journal article in Applied Mathematical Modelling. His research interests include complex networks, information propagation, and computational communication.
\end{IEEEbiography}

\begin{IEEEbiography}[{\includegraphics[width=1in,height=1.25in,clip,keepaspectratio]{Jurgen\_Kurths.png}}]{J\"{u}rgen~Kurths}
received the B.S. degree in mathematics from the University of Rostock, Rostock, Germany, the Ph.D. degree from the Academy of Sciences, German Democratic Republic, Berlin, Germany, in 1983, the Honorary degree from N. I. Lobachevsky State University, Nizhny Novgorod, Russia, in 2008, and the Honorary degree from Saratov State University, Saratov, Russia, in 2012.

From 1994 to 2008, he was a Full Professor with the University of Potsdam, Potsdam, Germany. Since 2008, he has been a Professor in nonlinear dynamics with the Humboldt University of Berlin, Berlin, Germany, and the Chair of the Research Domain Complexity Science with the Potsdam Institute for Climate Impact Research, Potsdam. He has authored more than 700 papers, which are cited more than 60\,000 times (H-index: 111). He is a Highly Cited Researcher in Engineering. His main research interests include synchronization, complex networks, and time series analysis and their applications. He is a member of the Academia Europaea. He is a Fellow of the American Physical Society, the Royal Society of Edinburgh, and the Network Science Society. He was a recipient of the Alexander von Humboldt Research Award from India in 2005 and from Poland in 2021, the Richardson Medal of the European Geophysical Union in 2013, and the Eight Honorary Doctorates. He is the Editor-in-Chief of \textit{Chaos} and on the editorial boards of more than ten journals.
\end{IEEEbiography}

\begin{IEEEbiography}[{\includegraphics[width=1in,height=1.25in,clip,keepaspectratio]{Matjaz\_Perc.png}}]{Matja\v{z}~Perc}
(Member, IEEE) received the Ph.D. degree in physics from the University of Maribor in 2006. He is currently Professor of Physics at the University of Maribor, Staff Researcher at the Community Healthcare Center Dr. Adolf Drolc Maribor, and Adjunct Professor at Kyung Hee University and Korea University. He is a member of Academia Europaea and the European Academy of Sciences and Arts, and among top 1\% most cited physicists according to Clarivate Analytics data. He is also the 2015 recipient of the Young Scientist Award for Socio and Econophysics from the German Physical Society, and the 2017 USERN Laureate. In 2018 he received the Zois Award, which is the highest national research award in Slovenia. In 2019 he became Fellow of the American Physical Society.
\end{IEEEbiography}

\begin{IEEEbiography}[{\includegraphics[width=1in,height=1.25in,clip,keepaspectratio]{Minyu\_Feng.png}}]{Minyu Feng}
(Senior Member, IEEE) received his Ph.D. in Computer Science in 2018 through a joint program between the University of Electronic Science and Technology of China, Chengdu, China, and Humboldt University of Berlin, Berlin, Germany. Since 2019, he has been an Associate Professor at the College of Artificial Intelligence, Southwest University, Chongqing, China.

Dr. Feng has published more than 90 peer-reviewed papers in authoritative journals, including IEEE Transactions on Pattern Analysis and Machine Intelligence, IEEE Transactions on Systems, Man, and Cybernetics: Systems, and IEEE Transactions on Networking. He is a Senior Member of the China Computer Federation (CCF) and the Chinese Association of Automation (CAA). Currently, he serves as a Subject Editor for Applied Mathematical Modelling and an Editorial Advisory Board Member for Chaos. He has also served as a current or former Editorial Board Member of more than 10 journals.

Dr. Feng's research interests include complex systems, evolutionary game theory, computational social science, and mathematical epidemiology.
\end{IEEEbiography}

\clearpage
\setcounter{section}{0}
\setcounter{equation}{0}
\setcounter{figure}{0}
\setcounter{table}{0}
\renewcommand{\thefigure}{S\arabic{figure}}
\renewcommand{\thetable}{S\arabic{table}}

\title{Fractional-order SIS-UAU subdiffusion model for rumor propagation\\(Supplementary Material)}

\author{Qin~Li,~\IEEEmembership{Member,~IEEE,}
        Jiexi~Tang,
        J\"{u}rgen~Kurths,
        Matja\v{z}~Perc,~\IEEEmembership{Member,~IEEE,}
        and~Minyu~Feng,~\IEEEmembership{Senior~Member,~IEEE,}%
\thanks{This work was supported by the Natural Science Foundation of Chongqing under Grant No.~CSTB2025YITP-QCRCX0007, the Fundamental Research Funds for the Central Universities under Grant Nos.~XJ2026004101 and SWU-KT26010, and the Slovenian Research and Innovation Agency under Grant No.~P1-0403 (corresponding authors: Matja\v{z} Perc and Minyu Feng).}%
\thanks{Qin Li is with the Business College, Southwest University, Chongqing 402460, China.}%
\thanks{Jiexi Tang is with the School of Journalism and Communication, Beijing Normal University, Beijing 100875, China.}%
\thanks{J\"{u}rgen Kurths is with the Potsdam Institute for Climate Impact Research, 14437 Potsdam, Germany, and also with the Department of Physics, Humboldt University of Berlin, 10117 Berlin, Germany.}%
\thanks{Matja\v{z} Perc is with the Faculty of Natural Sciences and Mathematics, University of Maribor, Koro\v{s}ka cesta 160, Maribor, 2000, Slovenia, Community Healthcare Center Dr. Adolf Drolc Maribor, Ulica talcev 9, Maribor, 2000, Slovenia, Department of Physics, Kyung Hee University, 26 Kyungheedae-ro, Seoul, 02447, Dongdaemun-gu, Republic of Korea, and University College, Korea University, 145 Anam-ro, Seoul, 02841, Seongbuk-gu, Republic of Korea (email: matjaz.perc@gmail.com).}%
\thanks{Minyu Feng is with the College of Artificial Intelligence, Southwest University, Chongqing 400715, China (email: myfeng@swu.edu.cn).}}

\markboth{IEEE TRANSACTIONS ON SYSTEMS, MAN, AND CYBERNETICS: SYSTEMS}%
{Li \MakeLowercase{\textit{et al.}}: \MakeLowercase{Fractional-Order SIS-UAU Subdiffusion Model for Rumor Propagation}}


\clearpage
\normalsize
\begin{center}
{\Large\bfseries SUPPLEMENTARY MATERIAL}\\[1.5mm]
{\normalsize\bfseries Fractional-order SIS-UAU subdiffusion model for rumor propagation}\\[1.2mm]
{\small Qin Li, Jiexi Tang, J\"{u}rgen Kurths, Matja\v{z} Perc, and Minyu Feng}
\end{center}
\vspace{1mm}
{\small
\noindent\textbf{Abstract---}This supplementary material provides complete proofs for all theoretical results presented in the main manuscript, including lemmas on fractional calculus properties, existence and uniqueness of solutions, positivity and boundedness of the fractional-order SIS-UAU model, stability analysis of both rumor-free and rumor-endemic equilibria, and the basic reproduction number. These detailed derivations are essential for the theoretical completeness but are presented separately to maintain the conciseness of the main manuscript. In addition, Section~12 collects the additional numerical experiments that complement the simulations of the main manuscript, namely the convergence test towards the integer-order limit, the bifurcation diagram of the system, and the analyses of the awareness layer and of the network architecture.
\par\smallskip
\noindent\textbf{Index Terms---}Rumor propagation, 
Fractional-order SIS-UAU model, 
Caputo fractional derivative, 
Multiplex network.
\par}
\vspace{2mm}
\par\vspace*{5ex}


%
\IEEEpeerreviewmaketitle

\IEEEraisesectionheading{\section{Model Summary and Recalled Equations}
\label{sec:model_summary}
}
\IEEEPARstart{F}{or} the convenience of readers, we summarize the fractional-order SIS-UAU model and recall some important equations from the main manuscript that will be referenced in the subsequent proofs.
\subsection{Linear Caputo Fractional-Order System}
We recall the linear single-order Caputo fractional system
\begin{equation}\label{eq:linear-single-order_sm}
{}^{C}\mathcal D_q y(t)=A\,y(t),\qquad q\in(0,1],\;A\in\mathbb{R}^{n\times n},
\end{equation}
with initial condition $y(0)=y_0\in\mathbb{R}^n$, where ${}^{C}\mathcal D_q$
denotes the Caputo derivative applied component-wise.
\subsection{Fractional-Order SIS-UAU Model}
The fractional-order SIS-UAU model studied in the main manuscript is governed by
\begin{align}
{}^C_0\mathcal{D}^\alpha_t\mathcal{P}(t)
=
\mathcal{P}(t)\bigoplus_{i=1}^N\mathcal{Q}_i(t),
\label{eq:main_sm}
\end{align}
where $\mathcal{P}(t)\in\Omega$ denotes the state probability vector and
$\alpha\in(0,1]$ is the fractional order.

The component-wise representation of the fractional-order SIS-UAU model
\eqref{eq:main_sm} is given, for each node $i=1,\dots,N$, by
\begin{align}
  \begin{cases}
    {}^C_0\mathcal{D}^\alpha_t p_i^{US}(t)
    =-\left(\lambda M_i^A(t)+\beta^U M_i^I(t)\right)p_i^{US}(t)
    \\[1mm]
    \quad +\delta p_i^{AS}(t),\\[2mm]
    {}^C_0\mathcal{D}^\alpha_t p_i^{AS}(t)
    =\left(\lambda M_i^A(t)\right)p_i^{US}(t)
    \\[1mm]
    \quad -\left(\delta+\beta^A M_i^I(t)\right)p_i^{AS}(t)
    +\mu p_i^{AI}(t),\\[2mm]
    {}^C_0\mathcal{D}^\alpha_t p_i^{AI}(t)
    =\left(\beta^U M_i^I(t)\right)p_i^{US}(t)
    +\left(\beta^A M_i^I(t)\right)p_i^{AS}(t)\\
    \quad -\mu p_i^{AI}(t),
  \end{cases}
  \label{eq:main_component_sm}
\end{align}
where $M_i^A(t)$ and $M_i^I(t)$ denote the awareness and infection pressures,
respectively.

\subsection{Basic Reproduction Number}
The basic reproduction number is defined as
\begin{align}
\label{eq:R0_def_sm}
\mathcal R_0
:=
\frac{1}{\mu}\,
\rho\!\left(
\operatorname{diag}(p^{US*}_i\beta^U + p^{AS*}_i\beta^A)\mathbf B
\right),
\end{align}
where $\rho(\cdot)$ denotes the spectral radius.
\subsection{Self-Consistent and Linearized Endemic Relations}
At equilibrium, eliminating the awareness variables yields the self-consistent system
\begin{align}
\label{eq:self_consistent_sm}
p_i^{U**}
&=\frac{\delta(1-p_i^{I**})}
{\delta+\lambda M_i^{A**}+\beta^U M_i^{I**}},\\
p_i^{A**}&=1-p_i^{U**},\\
\big(\mu+\beta^A M_i^{I**}\big)p_i^{I**}
&=M_i^{I**}\big(\beta^U p_i^{U**}+\beta^A p_i^{A**}\big).
\end{align}
Its first-order approximation around the rumor-free equilibrium is
\begin{align}
\label{eq:linearized_endemic_sm}
p_i^{AI**}
\approx
\frac{1}{\mu}
\big(p_i^{US*}\beta^U + p_i^{AS*}\beta^A\big)
\sum_{j=1}^N b_{ij}p_j^{AI**},
\end{align}
or, in vector form,
\begin{align}
\label{eq:vector_linearized_endemic_sm}
\mathbf p^{AI**}
\approx
\frac{1}{\mu}\,
\operatorname{diag}(p^{US*}_i\beta^U + p^{AS*}_i\beta^A)\,
\mathbf B\,\mathbf p^{AI**}
=
\mathcal K\,\mathbf p^{AI**},
\end{align}
where $\mathcal K$ denotes the next-generation matrix. The rigorous analysis of the nonlinear relation \eqref{eq:self_consistent_sm} is carried out in Section~\ref{appendix:proof_endemic_existence_sm}.
\subsection{Jacobian Matrix at the Rumor-Endemic Equilibrium}

To analyze the local stability of the rumor-endemic equilibrium
$\mathcal P^{**}=(\mathcal P^{US**},\mathcal P^{AS**},\mathcal P^{AI**})$,
we consider the Jacobian matrix obtained by linearizing the
component-wise system~\eqref{eq:main_component_sm} at $\mathcal P^{**}$.

For notational convenience, we introduce the diagonal matrices
\begin{align}
  D_{u} &:= \mathrm{diag}(\mathcal P^{US**}), &
  D_{a} &:= \mathrm{diag}(\mathcal P^{AS**}), \nonumber\\
  D_{M^A} &:= \mathrm{diag}(\mathbf M^{A**}), &
  D_{M^I} &:= \mathrm{diag}(\mathbf M^{I**}),
\end{align}
where
\begin{align}
  \mathbf M^{A**} := \mathbf A(\mathcal P^{AS**}+\mathcal P^{AI**}),
  \qquad
  \mathbf M^{I**} := \mathbf B\,\mathcal P^{AI**}.
\end{align}

The resulting Jacobian matrix admits the compact block representation
\begin{align}
  \mathcal J^{*}
  = \sum_{m=1}^{3}\sum_{n=1}^{3} E_{mn}\otimes B_{mn},
  \label{eq:Jstar_compact_text_sm}
\end{align}
where $E_{mn}$ denote the canonical basis matrices in $\mathbb R^{3\times3}$,
and the $N\times N$ block matrices $B_{mn}$ are given by
\begin{align}
  \begin{cases}
    B_{11} = -(\lambda D_{M^A} + \beta^U D_{M^I}), \\
    B_{12} = -\lambda D_u\mathbf A + \delta\mathbf I_N, \\
    B_{13} = -\lambda D_u\mathbf A - \beta^U D_u\mathbf B + \mu\mathbf I_N, \\
    B_{21} = \lambda D_{M^A}, \\
    B_{22} = \lambda D_u\mathbf A - (\delta\mathbf I_N + \beta^A D_{M^I}), \\
    B_{23} = \lambda D_u\mathbf A - \beta^A D_a\mathbf B, \\
    B_{31} = \beta^U D_{M^I}, \\
    B_{32} = \beta^A D_{M^I}, \\
    B_{33} = (\beta^U D_u + \beta^A D_a)\mathbf B - \mu\mathbf I_N.
  \end{cases}
  \label{eq:Bmn_compact_text_sm}
\end{align}

This compact formulation highlights that $\mathcal J^{*}$ depends only on
a small set of fundamental matrices and parameters, which facilitates both
theoretical analysis and numerical implementation.

\section{Mathematical Preliminaries}
\label{sec:preliminaries}
This section provides some essential lemmas on fractional calculus and some properties of the SIS-UAU model that will be utilized in the subsequent proofs of the main theoretical results.

\subsection{Lemma on Short-Memory Property of Caputo Derivative}
\label{lemma:short_memory_section}

\begin{lemma}
  \label{lemma:caputo_limit_sm}
Let $p(t) \in C^1([0,\infty))$ be a positive function such that
\begin{align}
p(t) = \mathcal{O}(1), \quad p'(t) = o\left(\frac{1}{t^{1 - \alpha}}\right), \quad \forall \alpha \in (0,1).
\end{align}
Then,
\begin{align}
\lim_{t \to \infty} {}^C_0 \mathcal{D}_t^\alpha p(t) = 0.
\end{align}
\end{lemma}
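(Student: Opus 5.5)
The plan is to work directly from the definition
\begin{align*}
{}^C_0\mathcal{D}_t^\alpha p(t)=\frac{1}{\Gamma(1-\alpha)}\int_0^t \frac{p'(\tau)}{(t-\tau)^\alpha}\,\mathrm{d}\tau
\end{align*}
for a fixed $\alpha\in(0,1)$. We split the integral into a "far past" piece, where $p'$ is merely continuous, and a "recent" piece, where the decay hypothesis $p'(t)=o(t^{\alpha-1})$ applies. The boundedness $p=\mathcal O(1)$ and positivity of $p$ will not be needed; only $p\in C^1$ and the decay of $p'$ enter.

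\emph{Choosing the cutoff.} Fix $\varepsilon>0$. By the hypothesis there is $T=T(\varepsilon)>0$ such that $|p'(\tau)|\le \varepsilon\,\tau^{\alpha-1}$ for all $\tau\ge T$. For $t>2T$ (say) we write
\begin{align*}
\int_0^t \frac{p'(\tau)}{(t-\tau)^\alpha}\,\mathrm{d}\tau=\int_0^T \frac{p'(\tau)}{(t-\tau)^\alpha}\,\mathrm{d}\tau+\int_T^t \frac{p'(\tau)}{(t-\tau)^\alpha}\,\mathrm{d}\tau=:I_1(t)+I_2(t).
\end{align*}

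\emph{The far-past piece.} Since $p'$ is continuous on $[0,T]$, $C_T:=\int_0^T|p'(\tau)|\,\mathrm{d}\tau<\infty$. On $[0,T]$ we have $(t-\tau)^{-\alpha}\le (t-T)^{-\alpha}$, so
\begin{align*}
|I_1(t)|\le C_T\,(t-T)^{-\alpha}\to 0 \quad\text{as } t\to\infty .
\end{align*}

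\emph{The recent piece.} This is the step that carries the real content: we need a bound on $I_2$ that is uniform in $t$. Extending the range of integration to $[0,t]$ and substituting $\tau=ts$ gives
\begin{align*}
|I_2(t)|\le\varepsilon\int_0^t\tau^{\alpha-1}(t-\tau)^{-\alpha}\,\mathrm{d}\tau=\varepsilon\int_0^1 s^{\alpha-1}(1-s)^{-\alpha}\,\mathrm{d}s=\varepsilon\,B(\alpha,1-\alpha)=\frac{\varepsilon\pi}{\sin(\alpha\pi)}.
\end{align*}
The powers of $t$ cancel exactly here, and this is precisely why the rate $o(t^{\alpha-1})$ is the natural hypothesis: a slower decay of $p'$ would leave a growing factor of $t$.

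\emph{Conclusion.} Combining the two estimates,
\begin{align*}
\limsup_{t\to\infty}\bigl|{}^C_0\mathcal{D}_t^\alpha p(t)\bigr|\le \frac{\varepsilon\pi}{\Gamma(1-\alpha)\sin(\alpha\pi)}.
\end{align*}
Since $\varepsilon>0$ is arbitrary, the limit is $0$.
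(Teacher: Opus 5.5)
Your proof is correct and follows the same outline as the paper. Both split the Caputo integral into a far-past piece and a recent piece, and both control the recent piece by the Beta integral $\int_0^t \tau^{\alpha-1}(t-\tau)^{-\alpha}\,d\tau=\pi\csc(\pi\alpha)$, in which the powers of $t$ cancel.

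The difference is in the cutoff and in how the far past is handled. The paper splits at the moving point $t^\alpha$, so no $\varepsilon$ is needed and each piece vanishes on its own. On $[0,t^\alpha]$ it uses a mean value theorem to write the piece as $(t-\xi)^{-\alpha}\bigl(p(t^\alpha)-p(0)\bigr)$; this is where the hypothesis $p=\mathcal O(1)$ is used. On $[t^\alpha,t]$ a weighted mean value theorem gives $\xi^{1-\alpha}p'(\xi)$ times a quantity bounded by $\pi\csc(\pi\alpha)$, with $\xi\ge t^\alpha\to\infty$.

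You instead fix an $\varepsilon$-dependent cutoff $T$ and use plain absolute-value bounds, so the far-past piece needs only $\int_0^T|p'|<\infty$. This gains two things:
\begin{itemize}
\item It shows, as you say, that boundedness and positivity of $p$ are not needed. For example, $p(t)=\log(2+t)$ is covered even though it is unbounded.
\item It avoids the paper's first-mean-value step on $[0,t^\alpha]$. As written, that step requires $p'$ not to change sign there. It can be repaired with Bonnet's second mean value theorem, since $(t-\tau)^{-\alpha}$ is positive and increasing in $\tau$.
\end{itemize}
The only cost is the standard extra $\varepsilon$/$\limsup$ step.
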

\begin{proof}
Recall the definition of the Caputo fractional derivative of order $\alpha \in (0,1)$:
\begin{align}
{}^C_0 \mathcal{D}_t^\alpha p(t) = \frac{1}{\Gamma(1 - \alpha)} \int_0^t \frac{p'(\tau)}{(t - \tau)^\alpha} \, d\tau.
\end{align}
We divide the integral domain into two parts:

\begin{align}
\int_0^t = \int_0^{t^\alpha} + \int_{t^\alpha}^t.
\end{align}
Part I: $[0, t^\alpha]$  
By the first mean value theorem for integrals, there exists $\xi \in (0, t^\alpha)$ such that:
\begin{align}
\int_0^{t^\alpha} \frac{p'(\tau)}{(t - \tau)^\alpha} \, d\tau = \frac{1}{(t - \xi)^\alpha} \int_0^{t^\alpha} p'(\tau) \, d\tau = \frac{p(t^\alpha) - p(0)}{(t - \xi)^\alpha}.
\end{align}
Since $p(t)$ is bounded and $t^\alpha \to \infty$ as $t \to \infty$, we have $p(t^\alpha) - p(0) = \mathcal{O}(1)$, and $(t - \xi)^\alpha \to \infty$. Therefore,
\begin{align}
\lim_{t \to \infty} \int_0^{t^\alpha} \frac{p'(\tau)}{(t - \tau)^\alpha} \, d\tau = 0.
\end{align}
Part II: $[t^\alpha, t]$  
We apply the first mean value theorem for integrals in a more refined way. Since $p(t) \in C^1([0, \infty))$, and $\tau^{1 - \alpha} > 0$, the function
\begin{align}
f(\tau) := \frac{p'(\tau)}{\tau^{1 - \alpha}},
\end{align}
is continuous and integrable on $[t^\alpha, t]$. Hence, there exists $\xi \in (t^\alpha, t)$ such that:
\begin{align}
\int_{t^\alpha}^t \frac{p'(\tau)}{(t - \tau)^\alpha} \, d\tau = \xi^{1 - \alpha} p'(\xi) \int_{t^\alpha}^t \frac{1}{\tau^{1 - \alpha} (t - \tau)^\alpha} \, d\tau.
\end{align}
We now estimate the remaining integral. Since both $\tau \in [t^\alpha, t] \subset [0, t]$, we have:

\begin{align}
\int_{t^\alpha}^t \frac{1}{\tau^{1 - \alpha} (t - \tau)^\alpha} \, d\tau \leq \int_0^t \frac{1}{\tau^{1 - \alpha} (t - \tau)^\alpha} \, d\tau.
\end{align}
We recognize the right-hand side as a classical Beta integral:
\begin{align}
\int_0^t \frac{1}{\tau^{1 - \alpha} (t - \tau)^\alpha} \, d\tau &= t^{0} \cdot B(\alpha, 1 - \alpha) \notag\\
&= \frac{\pi}{\sin(\pi \alpha)} = \pi \csc(\pi \alpha).
\end{align}
This is a finite constant for any $\alpha \in (0,1)$. Hence, we obtain:
\begin{align}
\left| \int_{t^\alpha}^t \frac{p'(\tau)}{(t - \tau)^\alpha} \, d\tau \right| \leq \pi \csc(\pi \alpha) \cdot \xi^{1 - \alpha} |p'(\xi)|.
\end{align}
By assumption, $p'(t) = o\left( \dfrac{1}{t^{1 - \alpha}} \right)$, and since $\xi \to \infty$ as $t \to \infty$, we deduce that:
\begin{align}
\xi^{1 - \alpha} |p'(\xi)| \to 0,
\end{align}
which implies:
\begin{align}
\lim_{t \to \infty} \int_{t^\alpha}^t \frac{p'(\tau)}{(t - \tau)^\alpha} \, d\tau = 0.
\end{align}
This completes the proof.

Conclusion:  
Combining both parts, we obtain:
\begin{align}
\lim_{t \to \infty} {}^C_0 \mathcal{D}_t^\alpha p(t) = 0.
\end{align}
\end{proof}

\subsection{Lemma on Lipschitz Continuity}
\label{lemma:lipschitz_section}
\begin{lemma}
    \label{lemma:lipschitz_sm}
The function $F: \mathbb{R}^{N \times 3} \to \mathbb{R}^{N \times 3}$ defined by
\begin{align}
F(\mathcal{P}(t)) := \mathcal{P}(t) \bigoplus_{i=1}^N \mathcal{Q}_i(t),
\end{align}
is Lipschitz continuous on the domain $\Omega := \{ \mathcal{P}\in \mathbb{R}^{N \times 3} | p_i^{US}, p_i^{AS}, p_i^{AI} \geq 0, p_i^{US} + p_i^{AS} + p_i^{AI} = 1, \forall i\in \mathbb{Z}[1,N] \}$. 
That is, there exists a constant $L > 0$ such that for all $\mathcal{P}, \tilde{\mathcal{P}} \in \Omega$,
\begin{align}
\| F(\mathcal{P}) - F(\tilde{\mathcal{P}}) \| \leq L \| \mathcal{P} - \tilde{\mathcal{P}} \|.
\end{align}
\end{lemma}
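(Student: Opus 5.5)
The plan is to show that $F$ is a polynomial map of degree two in the entries of $\mathcal P$, so it is locally Lipschitz, and then to use the compactness of $\Omega$ to get a uniform constant. Writing $F$ componentwise via \eqref{eq:main_component_sm}, each entry of $F_i$ is a sum of terms of three kinds. The first kind is linear, such as $\delta p_i^{AS}$ or $\mu p_i^{AI}$. The second and third kinds are bilinear, namely $\lambda M_i^A p_i^{US}$ and $\beta^{U/A} M_i^I p_i^{US/AS}$. Here $M_i^A=\sum_j a_{ij}(p_j^{AS}+p_j^{AI})$ and $M_i^I=\sum_j b_{ij}p_j^{AI}$ are themselves linear in $\mathcal P$. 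The argument therefore reduces to bounding differences of linear terms and of products of two linear forms.

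For the linear terms the bound is immediate, for example $|\delta(p_i^{AS}-\tilde p_i^{AS})|\le\delta\|\mathcal P-\tilde{\mathcal P}\|$. For a bilinear term I will use the standard add-and-subtract identity
\begin{align}
M_i^A p_i^{US}-\tilde M_i^A\tilde p_i^{US}=M_i^A\,(p_i^{US}-\tilde p_i^{US})+\tilde p_i^{US}\,(M_i^A-\tilde M_i^A).
\end{align}
On $\Omega$ every probability lies in $[0,1]$, so $|\tilde p_i^{US}|\le1$. The neighbor pressures satisfy $0\le M_i^A\le 2k_i^A$ and $0\le M_i^I\le k_i^B$, where $k_i^A=\sum_j a_{ij}$ and $k_i^B=\sum_j b_{ij}$ are the row sums. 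Taking the max-norm over all entries, $|M_i^A-\tilde M_i^A|\le 2k_i^A\|\mathcal P-\tilde{\mathcal P}\|$, and similarly $|M_i^I-\tilde M_i^I|\le k_i^B\|\mathcal P-\tilde{\mathcal P}\|$.

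Collecting these estimates over the three components and all nodes gives an explicit constant. With $k_{\max}=\max_i\max(k_i^A,k_i^B)$, the constant can be taken as
\begin{align}
L=c\,\big(\lambda k_{\max}+\beta^U k_{\max}+\beta^A k_{\max}+\delta+\mu+\lambda+\beta^U+\beta^A\big),
\end{align}
where $c$ is a small absolute constant. If a Frobenius or operator norm is preferred instead of the max-norm, $c$ picks up a factor depending on $N$ through norm equivalence, which is harmless because $N$ is finite. Since the bound uses only $\|\mathcal P\|,\|\tilde{\mathcal P}\|\le1$, it is uniform over $\Omega$, and indeed over any bounded convex neighborhood of it.

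There is no deep obstacle here; the main point of care is bookkeeping. Two things need attention. First, the boundedness of $M_i^A$ and $M_i^I$ must be used, and it holds only because we work on $\Omega$. Globally on $\mathbb R^{N\times3}$ a quadratic map is not Lipschitz, so the statement must be read as Lipschitz on $\Omega$, or on a bounded set containing it, as stated. Second, the chosen norm must be kept consistent, so that the resulting $L$ matches the one used in Theorem~\ref{theorem:existence_uniqueness}.
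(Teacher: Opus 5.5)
Your proposal is correct and follows essentially the same route as the paper. Both split the bilinear terms by adding and subtracting, use that the entries of $\mathcal P$ lie in $[0,1]$ on $\Omega$ and that $M_i^A$, $M_i^I$ are bounded linear forms, and measure everything in the elementwise max norm. The differences are only in bookkeeping: the paper splits the row vector $\mathcal P_i$ times the $3\times 3$ generator $\mathcal Q_i$ using $\|\mathcal P_i\|_1=1$, and it bounds the pressures by $N$ rather than by the row sums $k_i^A$, $k_i^B$, so your constant is slightly sharper.
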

\begin{proof}
  For matrix/vector norms, we adopt the element-wise maximum norm, that is, for any matrix $\mathbf{X} = \{ x_{ij} \}$ and vector $\mathbf{x} = (x_1, x_2, \ldots, x_n)$, we define:
\begin{align}
\| \mathbf{X} \|_{\max} := \max_{i,j} |x_{ij}| , \quad \| \mathbf{x} \|_\infty := \max_i |x_i|.
\end{align}

For convenience, let:
\begin{align}
  \mathcal{Q}_i(\mathcal P) =&\; \underbrace{\begin{pmatrix} 0&0&0\\ \delta&-\delta&0\\ 0&\mu&-\mu \end{pmatrix}}_{=:Q^{(0)}_i}\notag\\
  &+ \underbrace{\mathbf{e}_i^T\mathbf{A}\mathcal{P}\bigl(0,1,1\bigr)^T}_{=:M_i^A(\mathcal P)}\underbrace{\begin{pmatrix} -\lambda&\lambda&0\\ 0&0&0\\ 0&0&0 \end{pmatrix}}_{=:Q^{(A)}_i}\notag\\
  &+ \underbrace{\mathbf{e}_i^T\mathbf{B}\mathcal{P}\bigl(0,0,1\bigr)^T}_{=:M_i^I(\mathcal P)}\underbrace{\begin{pmatrix} -\beta^U&0&\beta^U\\ 0&-\beta^A&\beta^A\\ 0&0&0 \end{pmatrix}}_{=:Q^{(I)}_i}.
\end{align}

To divide $\mathcal P$ by lines, we only need to prove: $\forall \mathcal{P}, \tilde{\mathcal{P}} \in \Omega,$
\begin{align}
\| \mathcal{P}_i \mathcal{Q}_i(\mathcal{P}) - \tilde{\mathcal{P}}_i \mathcal{Q}_i(\tilde{\mathcal{P}}) \|_{\infty} \leq L_i \| \mathcal{P} - \tilde{\mathcal{P}} \|_{\max}, \forall i\in \mathbb{Z}[1,N].
\end{align}

We have:
\begin{align}
  \mathcal Q_i(\mathcal P) - \mathcal Q_i(\tilde{\mathcal P}) =& (M_i^A(\mathcal P) - M_i^A(\tilde{\mathcal P})) Q^{(A)}_i \notag\\
  &+ (M_i^I(\mathcal P) - M_i^I(\tilde{\mathcal P})) Q^{(I)}_i,
\end{align}
and where:
\begin{align}
  |M_i^A(\mathcal P) - M_i^A(\tilde{\mathcal P})| &\leq \|\mathbf{e}_i^T \mathbf{A}\|_1 \cdot \| \mathcal P - \tilde{\mathcal P} \|_{\max} \cdot \left\| \begin{pmatrix} 0,1,1 \end{pmatrix}^T \right\|_1 \notag\\
  &\leq 2N \cdot \| \mathcal P - \tilde{\mathcal P} \|_{\max}\\
  |M_i^I(\mathcal P) - M_i^I(\tilde{\mathcal P})| &\leq \|\mathbf{e}_i^T \mathbf{B}\|_1 \cdot \| \mathcal P - \tilde{\mathcal P} \|_{\max} \cdot \left\| \begin{pmatrix} 0,0,1 \end{pmatrix}^T \right\|_1 \notag\\
  &\leq N \cdot \| \mathcal P - \tilde{\mathcal P} \|_{\max}.
\end{align}
Here, $\|\mathbf x\|_1$ denotes the vector 1-norm, i.e., the sum of absolute values of its elements.
Therefore, we have:
\begin{align}
  &\| \mathcal{Q}_i(\mathcal{P}) - \mathcal{Q}_i(\tilde{\mathcal{P}}) \|_{\infty} \notag\\
  &\leq \left(\| \mathcal Q^{(A)}_i \|_{\max} \cdot 2N + \| \mathcal Q^{(I)}_i \|_{\max} \cdot N\right) \cdot \| \mathcal P - \tilde{\mathcal P} \|_{\max}\notag\\
  &\leq \left(2N\lambda + N\beta^U\right) \cdot \| \mathcal P - \tilde{\mathcal P} \|_{\max}.
\end{align}

Then, let $F_i(\mathcal P) := \mathcal P_i \mathcal Q_i(\mathcal P)$, we have:
\begin{align}
  &\| F_i(\mathcal P) - F_i(\tilde{\mathcal P}) \|_{\infty} = \| \mathcal P_i \mathcal Q_i(\mathcal P) - \tilde{\mathcal P}_i \mathcal Q_i(\tilde{\mathcal P}) \|_{\infty}\notag\\
  &\leq \| \mathcal P_i (\mathcal Q_i(\mathcal P) - \mathcal Q_i(\tilde{\mathcal P})) \|_{\infty} + \| (\mathcal P_i - \tilde{\mathcal P}_i) \mathcal Q_i(\tilde{\mathcal P}) \|_{\infty} \notag\\
  &\leq \| \mathcal P_i \|_{1} \cdot \| \mathcal Q_i(\mathcal P) - \mathcal Q_i(\tilde{\mathcal P}) \|_{\max} + \| \mathcal P_i - \tilde{\mathcal P}_i \|_{\infty} \cdot \| \mathcal Q_i(\tilde{\mathcal P}) \|_{1}.
\end{align}
Since $\mathcal P \in \Omega$, we have $\| \mathcal P_i \|_1 = p_i^{US} + p_i^{AS} + p_i^{AI} = 1$. Also, we can bound $\| \mathcal Q_i(\tilde{\mathcal P}) \|_1$ as follows:
\begin{align}
  \| \mathcal Q_i(\tilde{\mathcal P}) \|_1 \leq& \| Q^{(0)}_i \|_1 + |M_i^A(\tilde{\mathcal P})| \cdot \| Q^{(A)}_i \|_1 \notag\\
  &+ |M_i^I(\tilde{\mathcal P})| \cdot \| Q^{(I)}_i \|_1 \leq \delta + \mu + 2N \lambda + 2N \beta^U.
\end{align}
Combining the above inequalities, we obtain:
\begin{align}
  \| F_i(\mathcal P) - F_i(\tilde{\mathcal P}) \|_{\infty} \leq L_i \cdot \| \mathcal P - \tilde{\mathcal P} \|_{\max},
\end{align}
where
\begin{align}
L_i := 1 + 4N\lambda + 3N\beta^U + \delta + \mu.
\end{align}
Finally, by taking $L := \max_{1 \leq i \leq N} L_i$, we conclude that $F$ is Lipschitz continuous on $\Omega$.
\end{proof}

\subsection{Matignon-Type Stability Criterion}
\label{matignon_stability_section}
\begin{lemma}\cite{brandibur2021stability}
    \label{lemma:matignon_sm}
\textit{The linear system \eqref{eq:linear-single-order_sm} is}

i. \(\;\mathcal O(t^{-q})\)-\textit{asymptotically stable if and only if}
\[
\sigma(A)\subset S_q=\{\lambda\in\mathbb{C}:\;|\arg(\lambda)|>q\pi/2\}.
\]

ii. \textit{stable if and only if } \(\sigma(A)\subset S_q\) \textit{and the eigenvalues of }\(A\)
\textit{which satisfy } \(|\arg(\lambda)|=q\pi/2\) \textit{have the index 1.}
\end{lemma}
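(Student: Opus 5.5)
The proof will go through the explicit solution of the linear system \eqref{eq:linear-single-order_sm} written with Mittag-Leffler functions, followed by an eigenvalue-by-eigenvalue asymptotic analysis. Taking the Laplace transform of ${}^{C}\mathcal D_q y=Ay$ and using $\mathcal L\{{}^{C}\mathcal D_q y\}(s)=s^qY(s)-s^{q-1}y_0$ gives $Y(s)=s^{q-1}(s^q I-A)^{-1}y_0$. Inverting yields
\begin{align*}
y(t)=E_q(At^q)\,y_0,\qquad E_q(z)=\sum_{k\ge0}\frac{z^k}{\Gamma(qk+1)}.
\end{align*}
Next I pass to a Jordan form $A=TJT^{-1}$, so that $E_q(At^q)=T\,E_q(Jt^q)\,T^{-1}$. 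For a Jordan block of size $m$ with eigenvalue $\lambda$, the matrix function has entries
\begin{align*}
\frac{t^{qk}}{k!}\,E_q^{(k)}(\lambda t^q),\qquad k=0,\dots,m-1.
\end{align*}
The question is therefore reduced to the large-$t$ behaviour of these scalar functions for each eigenvalue.

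The key analytic input is the classical asymptotic expansion of $E_q$ for $0<q<1$ (Podlubny, Kilbas et al.).
\begin{itemize}
\item In the sector $|\arg z|<q\pi/2$ one has $E_q(z)=\tfrac1q\exp(z^{1/q})-\sum_{k=1}^{n}\tfrac{z^{-k}}{\Gamma(1-qk)}+O(|z|^{-n-1})$.
\item In the sector $|\arg z|>q\pi/2$ (with $z\neq0$) only the algebraic tail survives: $E_q(z)=-\tfrac{z^{-1}}{\Gamma(1-q)}+O(|z|^{-2})$.
\end{itemize}
Both expansions may be differentiated termwise. I would then treat the three cases in turn.
\begin{itemize}
\item For $\lambda\in S_q$, $E_q^{(k)}(\lambda t^q)=O(t^{-q(k+1)})$. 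Multiplying by $t^{qk}$ gives $O(t^{-q})$ for every $k$, regardless of the block size. Moreover, the leading coefficient $-\lambda^{-1}/\Gamma(1-q)\ne0$, so the rate $t^{-q}$ is sharp.
\item For $|\arg\lambda|<q\pi/2$, $\operatorname{Re}(\lambda^{1/q})>0$, so $\exp(\lambda^{1/q}t)$ grows exponentially and the solution is unbounded for suitable $y_0$.
\item For $|\arg\lambda|=q\pi/2$ or $\lambda=0$, the leading term has modulus bounded in $t$ but does not decay. For $\lambda=0$ this is the constant $E_q(0)=1$; otherwise it is the oscillatory exponential $\tfrac1q\exp(\lambda^{1/q}t)$ with $|\exp(\lambda^{1/q}t)|=1$.
\end{itemize}
Part (i) follows: the sufficiency is the first bullet, and the necessity follows because any eigenvalue outside $S_q$ produces a non-decaying component (take $y_0$ a corresponding eigenvector).

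For part (ii), the boundary eigenvalues are the only delicate point, and I expect this to be the main obstacle. If such an eigenvalue has index $1$, only $k=0$ appears, so the contribution is bounded and stability holds. If the index is $\ge2$, the $k=1$ entry is the decisive one:
\begin{align*}
t^{q}E_q'(\lambda t^q)\sim t^{q}\,\tfrac{1}{q^2}(\lambda t^q)^{1/q-1}\exp(\lambda^{1/q}t),
\end{align*}
which grows like $t$. For $\lambda=0$ this entry is $t^q/\Gamma(1+q)$, which also grows. In either case stability fails.

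The care needed is in justifying the differentiated asymptotics uniformly on the critical ray $|\arg z|=q\pi/2$. There the exponential and algebraic terms coexist, and one must check that the algebraic remainder does not cancel the leading growth. I would handle this by using the contour-integral (Hankel) representation of $E_q$ and differentiating under the integral sign. The case $q=1$ is the classical Lyapunov criterion and follows directly from the exponential $e^{At}$.
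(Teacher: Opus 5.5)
The paper does not prove this lemma; it quotes it from Brandibur et al.\ (it is Matignon's theorem). Your route is the standard proof: $y(t)=E_q(At^q)y_0$, Jordan reduction to the scalar functions $t^{qk}E_q^{(k)}(\lambda t^q)/k!$, and the sectorial asymptotics of $E_q$. Your argument is correct in substance, with two comments.

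\emph{The statement of part (ii) needs correcting.} As printed, part (ii) cannot be what is meant. If $\sigma(A)\subset S_q$ with $S_q$ open, no eigenvalue satisfies $|\arg\lambda|=q\pi/2$, so the index condition is vacuous. Statement (ii) would then say that stability is equivalent to asymptotic stability. That already fails for $A=0$: all solutions are constant, hence stable, yet $0\notin S_q$. It also fails for any $A$ with a simple pair of eigenvalues on the rays $\arg\lambda=\pm q\pi/2$, as your own analysis shows. The intended hypothesis is $\sigma(A)\subset\overline{S_q}$, i.e.\ $|\arg\lambda|\ge q\pi/2$ or $\lambda=0$, with the critical eigenvalues (including $\lambda=0$) of index one. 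This is exactly what your treatment of (ii) proves, and you should state the correction explicitly rather than silently.

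\emph{The step you single out as the main obstacle is not really one.} No Hankel-contour analysis is needed. The exponential-type expansion $E_q(z)=\tfrac1q e^{z^{1/q}}-\sum_{k=1}^{n}z^{-k}/\Gamma(1-qk)+O(|z|^{-n-1})$ holds uniformly on $|\arg z|\le\mu$ for any $\mu\in(q\pi/2,q\pi)$ (the standard result, e.g.\ in Podlubny's monograph), not merely on $|\arg z|<q\pi/2$. Consequently:
\begin{itemize}
\item each critical ray lies in the interior of a sector of validity;
\item termwise differentiation follows from Cauchy estimates on discs of radius proportional to $|z|$;
\item since each eigenvalue fixes a single ray, no uniformity in $\arg z$ is required.
\end{itemize}
For $k=1$ you can even avoid differentiating an asymptotic series. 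Use the identity $zE_q'(z)=E_{q,0}(z)/q$ together with the same expansion for $E_{q,0}$, which gives $t^qE_q'(\lambda t^q)=q^{-2}\lambda^{1/q-1}\,t\,e^{\lambda^{1/q}t}+O(t^{-q})$. An $O(t^{-q})$ remainder cannot cancel growth of order $t$.

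A cosmetic point: $A$ and $y_0$ are real, so for non-real $\lambda$ take $y_0=\operatorname{Re}v$ or $\operatorname{Im}v$ rather than the eigenvector $v$ itself. Alternatively, simply note that the real matrix $E_q(At^q)$ is unbounded or non-decaying.
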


\section{Proof of Theorem \ref{theorem:existence_uniqueness_sm}}
\label{proof:theorem_existence_uniqueness}
\begin{theorem}\label{theorem:existence_uniqueness_sm}
  (Existence and uniqueness of the solution)
Consider the fractional-order SIS-UAU model defined by equation~\eqref{eq:main_sm} with the initial condition $\mathcal{P}(0) = \mathcal{P}_0 \in \Omega$. Then, there exists a unique solution $\mathcal{P}(t)$ defined on $[0, T]$ for any $T \in \left(0,\left(\frac{\Gamma(1+\alpha)}{L}\right)^{1/\alpha}\right)$.
\end{theorem}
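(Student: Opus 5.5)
We will recast the initial value problem as an equivalent Volterra integral equation and apply the Banach fixed-point theorem. Since $\mathcal P\in C^1$ and the Caputo derivative of order $\alpha\in(0,1)$ satisfies $I^\alpha\,{}^C_0\mathcal D_t^\alpha \mathcal P(t)=\mathcal P(t)-\mathcal P_0$, system~\eqref{eq:main_sm} with $\mathcal P(0)=\mathcal P_0$ is equivalent to
\begin{align}
\mathcal P(t)=\mathcal P_0+\frac{1}{\Gamma(\alpha)}\int_0^t (t-\tau)^{\alpha-1}F(\mathcal P(\tau))\,\mathrm d\tau ,\qquad t\in[0,T],
\end{align}
where $F(\mathcal P)=\mathcal P\bigoplus_{i=1}^N\mathcal Q_i(\mathcal P)$. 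We will define the operator $\mathcal T$ by the right-hand side, acting on $X:=C([0,T],\mathbb R^{N\times 3})$ with the sup norm $\|\mathcal P\|_X:=\sup_{t\in[0,T]}\|\mathcal P(t)\|_{\max}$, and look for a fixed point of $\mathcal T$.

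The key estimate will come from Lemma~\ref{lemma:lipschitz_sm}. For $\mathcal P,\tilde{\mathcal P}$ we will bound
\begin{align}
\|\mathcal T\mathcal P(t)-\mathcal T\tilde{\mathcal P}(t)\|_{\max}\le \frac{L}{\Gamma(\alpha)}\int_0^t (t-\tau)^{\alpha-1}\,\mathrm d\tau\,\|\mathcal P-\tilde{\mathcal P}\|_X=\frac{Lt^\alpha}{\Gamma(1+\alpha)}\|\mathcal P-\tilde{\mathcal P}\|_X .
\end{align}
Taking the supremum gives a contraction constant $LT^\alpha/\Gamma(1+\alpha)<1$ exactly when $T<(\Gamma(1+\alpha)/L)^{1/\alpha}$, which is the interval in the statement. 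Continuity of $\mathcal T\mathcal P$ in $t$ is standard, since the weakly singular kernel is integrable. Banach's theorem then yields a unique fixed point, and hence a unique solution, on $[0,T]$.

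The main obstacle is that Lemma~\ref{lemma:lipschitz_sm} provides Lipschitz continuity only on $\Omega$, whereas the iterates of $\mathcal T$ need not remain in $\Omega$ a priori. We will first try to close this gap by observing that $F$ is polynomial, hence locally Lipschitz. We can therefore work on the closed set $X_\Omega:=C([0,T],\Omega')$, where $\Omega'$ is a bounded convex neighbourhood of $\Omega$, or extend $F$ globally by composing with the Lipschitz projection onto the compact convex set $\Omega$; that projection is $1$-Lipschitz, so the extension keeps the constant $L$. This produces a unique solution of the extended problem. Theorem~\ref{theorem:positivity_boundedness}, or the same invariance argument, then shows that the solution stays in $\Omega$, where the extended field coincides with $F$, so it solves the original system. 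Uniqueness follows from the same contraction estimate applied to any two solutions in $\Omega$. As a final step we will check that the fixed point is regular enough for the Caputo formulation to make sense, in line with the paper's standing $C^1$ assumption.
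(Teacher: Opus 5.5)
Your core argument is the paper's: the Volterra reformulation followed by a Banach contraction with constant $LT^\alpha/\Gamma(1+\alpha)$. The paper, however, applies Banach's theorem directly on $C([0,T],\Omega)$ and never checks that $\mathcal T$ maps this set into itself. The obstacle you isolate is therefore real, and your extension-plus-invariance step is exactly what the paper's proof lacks. Three details of your patch need repair.

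\emph{The Lipschitz constant of the extension.} The metric projection $\pi_\Omega$ is $1$-Lipschitz in the Frobenius norm, not in the max-norm you fixed in order to use Lemma~\ref{lemma:lipschitz_sm}. Row by row, wherever the projection lands in the interior of the simplex, it acts on differences as $d\mapsto d-\tfrac13(\mathbf 1^\top d)\mathbf 1$. Its $\infty$-operator norm is $4/3$: the difference $(\varepsilon,-\varepsilon,-\varepsilon)$ is sent to $(\tfrac43\varepsilon,-\tfrac23\varepsilon,-\tfrac23\varepsilon)$. So in your norm the composition argument only gives the constant $\tfrac43 L$. This is harmless once you use that the extension is \emph{globally} Lipschitz. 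The iterates satisfy $\|\mathcal T^n\mathcal P-\mathcal T^n\tilde{\mathcal P}\|_X\le (\tfrac43 LT^\alpha)^n\|\mathcal P-\tilde{\mathcal P}\|_X/\Gamma(1+n\alpha)$, so some $\mathcal T^n$ is a contraction for every $T>0$. Alternatively, work throughout in the sup--Frobenius norm with $L$ taken as the Frobenius Lipschitz constant; the statement does not fix the norm. Your other option, a bounded neighbourhood $\Omega'$, does not reach the stated interval: the Lipschitz constant on $\Omega'$ exceeds $L$, and self-mapping imposes an extra smallness condition on $T$.

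\emph{Invariance must be proved directly.} Citing Theorem~\ref{theorem:positivity_boundedness} is circular. Its proof obtains existence for the auxiliary system ``exactly as in Theorem~1'' and then works with ``the solution of the original system''. You must run the invariance argument on $\tilde F=F\circ\pi_\Omega$ itself, which requires two checks you have not made.
\begin{itemize}
\item $\tilde F$ has zero row sums everywhere, since $\mathcal Q_i\mathbf 1=0$, so the row sums of the solution stay equal to $1$.
\item The $\varepsilon$-perturbed trajectories have row sums strictly larger than $1$ for $t>0$. There the simplex projection acts as $x_k\mapsto\max(x_k-\tau,0)$ with $\tau>0$. Hence an entry that vanishes at the first touching time $\sigma$ still vanishes after projection, and the quasi-positivity sign check goes through.
\end{itemize}

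\emph{Regularity of the fixed point.} The final check cannot succeed in the $C^1$ sense. Unless $F(\mathcal P_0)=0$, the fixed point satisfies $\mathcal P(t)=\mathcal P_0+t^\alpha F(\mathcal P_0)/\Gamma(1+\alpha)+o(t^\alpha)$. The difference quotient at $t=0$ then blows up like $t^{\alpha-1}$, so $\mathcal P\notin C^1[0,T]$. What you can verify is $\mathcal P\in C[0,T]\cap C^1(0,T]$ with $|\mathcal P'(t)|\le Ct^{\alpha-1}$. This makes $\mathcal P$ absolutely continuous, which is all the Caputo derivative and the equivalence with the Volterra equation need. It is also the only sense in which the paper's standing $C^1$ assumption can hold.
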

\begin{proof}
  We rewrite equation~\eqref{eq:main_sm} in its equivalent integral form:
\begin{align}
\mathcal{P}(t) = \mathcal{P}_0 + \frac{1}{\Gamma(\alpha)} \int_0^t (t - \tau)^{\alpha - 1} F(\mathcal{P}(\tau)) \, d\tau,
\end{align}
where $F(\mathcal{P}(t)) := \mathcal{P}(t) \bigoplus_{i=1}^N \mathcal{Q}_i(t)$.
We define the operator $\mathcal{T}$ on the space $C([0, T], \Omega)$ as follows:
\begin{align}
\mathcal{T}(\mathcal{P})(t) := \mathcal{P}_0 + \frac{1}{\Gamma(\alpha)} \int_0^t (t - \tau)^{\alpha - 1} F(\mathcal{P}(\tau)) \, d\tau.
\end{align}
To prove the existence and uniqueness of the solution, we will show that $\mathcal{T}$ is a contraction mapping on $C([0, T], \Omega)$ for sufficiently small $T$.
Let $\mathcal{P}, \tilde{\mathcal{P}} \in C([0, T], \Omega)$. We have:
\begin{align}
&\| \mathcal{T}(\mathcal{P})(t) - \mathcal{T}(\tilde{\mathcal{P}})(t) \|_{C([0, T], \Omega)} \notag\\
&\leq \frac{1}{\Gamma(\alpha)} \int_0^t (t - \tau)^{\alpha - 1} \| F(\mathcal{P}(\tau)) - F(\tilde{\mathcal{P}}(\tau)) \|_{C([0, T], \Omega)} \, d\tau \notag\\
&\leq \frac{L}{\Gamma(\alpha)} \int_0^t (t - \tau)^{\alpha - 1} \| \mathcal{P}(\tau) - \tilde{\mathcal{P}}(\tau) \|_{C([0, T], \Omega)} \, d\tau \notag\\
&\leq \frac{L}{\Gamma(\alpha)} \| \mathcal{P} - \tilde{\mathcal{P}} \|_{C([0, T], \Omega)} \int_0^t (t - \tau)^{\alpha - 1} \, d\tau \notag\\
&= \frac{L T^\alpha}{\Gamma(1 + \alpha)} \| \mathcal{P} - \tilde{\mathcal{P}} \|_{C([0, T], \Omega)}.
\end{align}
By choosing $T$ such that $\frac{L T^\alpha}{\Gamma(1 + \alpha)} < 1$, we ensure that $\mathcal{T}$ is a contraction mapping. Therefore, by the Banach fixed-point theorem, there exists a unique fixed point $\mathcal{P}(t)$ in $C([0, T], \Omega)$, which is the unique solution to equation~\eqref{eq:main_sm} on the interval $[0, T]$.
\end{proof}

\section{Proof of Theorem \ref{theorem:positivity_boundedness_sm}}
\label{proof:theorem_positivity_boundedness}
\begin{theorem}\label{theorem:positivity_boundedness_sm}
  (Positivity and boundedness of the solution)
Consider the fractional-order SIS-UAU model defined by equation~\eqref{eq:main_sm} with the initial condition $\mathcal{P}(0) = \mathcal{P}_0 \in \Omega$. Then, the solution $\mathcal{P}(t)$ satisfies $p_i^{US}(t), p_i^{AS}(t), p_i^{AI}(t) \geq 0$ and $p_i^{US}(t) + p_i^{AS}(t) + p_i^{AI}(t) = 1$ for all $t \geq 0$ and for all $i\in \mathbb{Z}[1,N]$.
\end{theorem}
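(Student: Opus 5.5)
The proof splits into two parts: conservation of the row sums, and nonnegativity of every component. Throughout we work with the solution given by Theorem~\ref{theorem:existence_uniqueness_sm}, extended to $[0,\infty)$ by repeating the contraction argument on successive intervals. This extension is possible because the Lipschitz constant $L$ of Lemma~\ref{lemma:lipschitz_sm} is uniform on $\Omega$, so the admissible step length $(\Gamma(1+\alpha)/L)^{1/\alpha}$ does not shrink from one interval to the next.

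\emph{Conservation.} Set $s_i(t):=p_i^{US}(t)+p_i^{AS}(t)+p_i^{AI}(t)$. Adding the three equations of \eqref{eq:main_component_sm} shows that every term cancels, because each row of $\mathcal Q_i$ sums to zero. Hence ${}^C_0\mathcal D^\alpha_t s_i(t)=0$. The equivalent integral form
\[
s_i(t)=s_i(0)+\frac{1}{\Gamma(\alpha)}\int_0^t(t-\tau)^{\alpha-1}\cdot 0\,d\tau
\]
then gives $s_i(t)\equiv s_i(0)=1$. This part is routine. Note that it does not require positivity, so it holds even for the extension of $F$ off $\Omega$.

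\emph{Nonnegativity.} Each equation has the quasi-positive (Metzler) structure
\[
{}^C_0\mathcal D^\alpha_t p_i^{X}=(\text{nonnegative terms when the other components are }\ge 0)-c_i^X(t)\,p_i^X .
\]
Here $c_i^{US}=\lambda M_i^A+\beta^U M_i^I$, $c_i^{AS}=\delta+\beta^A M_i^I$ and $c_i^{AI}=\mu$. The gain terms are $\delta p_i^{AS}$, then $\lambda M_i^A p_i^{US}+\mu p_i^{AI}$, then $M_i^I(\beta^U p_i^{US}+\beta^A p_i^{AS})$. They are nonnegative as long as all components, and hence $M_i^A$ and $M_i^I$, are nonnegative.

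I would argue by contradiction. Suppose some component becomes negative, and let $t^*:=\inf\{t:\ \min_{i,X}p_i^X(t)<0\}$. On $[0,t^*]$ every component is $\ge0$, and by continuity some $p_j^Y(t^*)=0$. To push past $t^*$, I would use the fractional comparison principle or the generalized mean value theorem for Caputo derivatives, as in Odibat--Shawagfeh and the positivity lemma of Li--Chen--Podlubny type: if $p\in C[0,T]$ and ${}^C_0\mathcal D^\alpha_t p\ge0$ on $(0,T]$, then $p$ is nondecreasing. Applied naively, this does not work, because ${}^C_0\mathcal D^\alpha_t p_j^Y=-c_j^Y p_j^Y+(\ge0)$ need not be nonnegative.

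The remedy I would use instead is a shift. Let $K:=\max(\delta+\beta^A N,\ \lambda N+\beta^U N,\ \mu)$ and rewrite each equation as
\[
{}^C_0\mathcal D^\alpha_t p_i^X+K p_i^X=G_i^X(\mathcal P),
\qquad G_i^X:=(K-c_i^X)p_i^X+\text{gains}.
\]
On the set $\{\mathcal P:\ p\ge0,\ s_i=1\}$ we have $G_i^X\ge0$, and $G$ is quasi-monotone. The solution then satisfies the Volterra form
\[
p_i^X(t)=E_\alpha(-Kt^\alpha)\,p_i^X(0)+\int_0^t(t-\tau)^{\alpha-1}E_{\alpha,\alpha}\!\bigl(-K(t-\tau)^\alpha\bigr)\,G_i^X(\mathcal P(\tau))\,d\tau ,
\]
whose kernel is positive because $E_\alpha(-x)$ and $E_{\alpha,\alpha}(-x)$ are completely monotone for $\alpha\in(0,1]$. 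A Picard iteration in the closed cone $\{\mathcal P\ge0\}$ then preserves nonnegativity: nonnegative inputs give $G\ge0$, which gives nonnegative outputs. The iterates converge by the same contraction estimate as in Theorem~\ref{theorem:existence_uniqueness_sm}, using $E_{\alpha,\alpha}\le1/\Gamma(\alpha)$. So the fixed point lies in the cone, and by uniqueness it coincides with the solution.

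Combining positivity with $s_i\equiv1$ gives $0\le p_i^X\le1$. Hence $\mathcal P(t)\in\Omega$ for all $t\ge0$, and the continuation argument closes.

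The main obstacle is the positivity step. The issue is that the gain terms are nonnegative only if positivity is already known, while the Caputo derivative lacks the local pointwise sign argument available for ODEs. The Mittag-Leffler shift together with complete monotonicity is the step I would check most carefully. A fallback is the Li--Chen--Podlubny lemma applied at a first exit time, combined with a small perturbation $p+\varepsilon$ to make the inequalities strict.
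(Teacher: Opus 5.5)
Your argument is correct in substance, but it takes a different route from the paper.

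\textbf{How the paper argues.} The paper proves nonnegativity by regularization and a first-zero contradiction. It solves ${}^C_0\mathcal D_t^\alpha\mathcal P^\varepsilon=F(\mathcal P^\varepsilon)+\varepsilon\mathbf 1$ and takes the first time $\sigma$ at which some component would turn negative. Fractional integration by parts then shows that a $C^1$ function which is nonnegative on $[0,\sigma]$, vanishes at $\sigma$ and is not identically zero satisfies ${}^C_0\mathcal D_t^\alpha p(\sigma)<0$. This contradicts the right-hand side being at least $\varepsilon$, since the loss term vanishes at a zero. Finally it lets $\varepsilon\to0$ using the fractional Gr\"onwall inequality.

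\textbf{What your route buys.} You absorb the loss rates into a uniform shift $K$ and read invariance directly off the Mittag-Leffler variation-of-constants formula. This gives you:
\begin{itemize}
\item no $\varepsilon$-regularization and no limit passage;
\item no $C^1$ hypothesis;
\item automatic handling of components that vanish at $t=0$, a case the paper's first-exit argument leaves implicit when $\sigma=0$;
\item as a by-product, a solution map that genuinely sends $C([0,T],\Omega)$ into itself, which the contraction argument of Theorem~\ref{theorem:existence_uniqueness_sm} presupposes but never checks.
\end{itemize}
The paper's route, in turn, needs no Mittag-Leffler theory and no bound on the loss rates, and the paper reuses the same pointwise argument for its Metzler comparison lemma. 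For kernel positivity you only need $u^{\alpha-1}E_{\alpha,\alpha}(-Ku^\alpha)=-K^{-1}\tfrac{d}{du}E_\alpha(-Ku^\alpha)\ge0$, exactly as in the paper's scalar relaxation lemma.

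\textbf{The step to tighten: where the iteration runs.} The inequality $G_i^X\ge0$ holds on $\Omega$, where $M_i^A,M_i^I\le N$ and hence $c_i^X\le K$. It fails on the whole cone $\{\mathcal P\ge0\}$, where $c_i^{US}=\lambda M_i^A+\beta^U M_i^I$ is unbounded. So ``nonnegative inputs give $G\ge0$'' is false for the iteration as you describe it. The fix is to run the iteration in $C([0,T],\Omega)$ and check that it preserves row sums:
\begin{itemize}
\item every $\mathcal Q_i$ has zero row sums for every $\mathcal P$, so $\sum_X G_i^X=Ks_i$;
\item for $s_i\equiv1$ the output row sum is
\[
E_\alpha(-Kt^\alpha)+K\int_0^t u^{\alpha-1}E_{\alpha,\alpha}(-Ku^\alpha)\,du
=E_\alpha(-Kt^\alpha)+\bigl(1-E_\alpha(-Kt^\alpha)\bigr)=1 .
\]
\end{itemize}
With this, the map sends the closed set $C([0,T],\Omega)$ into itself and contracts with constant $(L+K)T^\alpha/\Gamma(1+\alpha)$. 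The continuation step works the same way: freeze the history on $[0,T_1]$ as a nonnegative, sum-preserving forcing, and the step length stays uniform.

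\textbf{Discard the fallback.} A nonnegative Caputo derivative, ${}^C_0\mathcal D_t^\alpha p\ge0$, only gives $p(t)\ge p(0)$, not monotonicity. For a counterexample, take $p$ to be the Riemann--Liouville integral of a nonnegative bump; it increases and then decreases. The correct pointwise tool at a first zero is the integration-by-parts inequality the paper uses.
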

\begin{proof}
We first show that the three probabilities sum to one. Adding the three equations in \eqref{eq:main_component_sm} for a fixed $i$, the right-hand side vanishes identically, hence
\[
{}^C_0\mathcal D_t^\alpha\bigl(p_i^{US}(t)+p_i^{AS}(t)+p_i^{AI}(t)\bigr)=0 .
\]
Writing this as an equivalent integral equation gives
\[
\begin{aligned}
p_i^{US}(t)+p_i^{AS}(t)+p_i^{AI}(t)
&= p_i^{US}(0)+p_i^{AS}(0)+p_i^{AI}(0)\\
&\quad +\frac{1}{\Gamma(\alpha)}\int_0^t (t-\tau)^{\alpha-1}\cdot 0\,d\tau,
\end{aligned}
\]
in which the integral term is zero; together with the initial condition $p_i^{US}(0)+p_i^{AS}(0)+p_i^{AI}(0)=1$ this yields
\[
p_i^{US}(t)+p_i^{AS}(t)+p_i^{AI}(t)=1 .
\]

Next we prove that every component of the solution is nonnegative. Let $F(\mathcal P)$ denote the right-hand side of the original system. For $\varepsilon>0$ consider the auxiliary system
\begin{equation}\label{eq:aux_system}
{}^C_0\mathcal D_t^\alpha\mathcal P^{\varepsilon}(t)=F(\mathcal P^{\varepsilon}(t))+\varepsilon\mathbf 1,\qquad
\mathcal P^{\varepsilon}(0)=\mathcal P_0,
\end{equation}
where $\mathbf 1\in\mathbb R^{3N}$ is the all-ones vector. Since $F$ is locally Lipschitz, local existence and uniqueness for the auxiliary system follow exactly as in Theorem~\ref{theorem:existence_uniqueness_sm}.

We claim that every component of $\mathcal P^{\varepsilon}$ is nonnegative on its interval of existence. Suppose that some component first becomes negative, and set
\begin{multline*}
\sigma=\inf\Bigl\{t>0:\ \exists\,i\in\mathbb Z[1,N],\ X\in\{US,AS,AI\}\\
\text{such that}\ p_i^{X,\varepsilon}(t)<0\Bigr\},
\end{multline*}
choosing a component $p_i^{X,\varepsilon}(t)$ which is negative on the first connected component $(\sigma,\rho)$ of the negative set. Then $p_i^{X,\varepsilon}(\sigma)=0$, and all components are nonnegative on $[0,\sigma]$.

If $p_i^{X,\varepsilon}$ is not identically zero on $[0,\sigma]$, then since $p_i^{X,\varepsilon}\in C^1$, $p_i^{X,\varepsilon}\ge0$ on $[0,\sigma]$, and $p_i^{X,\varepsilon}(\sigma)=0$, integration by parts gives
\begin{multline*}
{}^C_0\mathcal D_t^\alpha p_i^{X,\varepsilon}(\sigma)
=\frac{1}{\Gamma(1-\alpha)}\int_0^\sigma(\sigma-s)^{-\alpha}\,\frac{d}{ds}p_i^{X,\varepsilon}(s)\,ds\\
=-\frac{1}{\Gamma(1-\alpha)}
\Bigl[\frac{p_i^{X,\varepsilon}(0)}{\sigma^\alpha}\\
+\alpha\int_0^\sigma p_i^{X,\varepsilon}(s)(\sigma-s)^{-\alpha-1}\,ds\Bigr]<0 .
\end{multline*}
On the other hand, at time $\sigma$ the right-hand side of \eqref{eq:aux_system} is $F_i^X(\mathcal P^{\varepsilon}(\sigma))+\varepsilon$. Since all components are nonnegative at $\sigma$ and the chosen one is zero, for the three possible states we have
\begin{align*}
X=US:\quad &F_i^{US}=\delta p_i^{AS,\varepsilon}\ge0,\\
X=AS:\quad &F_i^{AS}=\lambda M_i^A p_i^{US,\varepsilon}+\mu p_i^{AI,\varepsilon}\ge0,\\
X=AI:\quad &F_i^{AI}=\beta^U M_i^I p_i^{US,\varepsilon}+\beta^A M_i^I p_i^{AS,\varepsilon}\ge0,
\end{align*}
where $M_i^A$ and $M_i^I$ are nonnegative linear combinations of the remaining nonnegative components (the adjacency matrices have no self-loops). Hence the right-hand side is at least $\varepsilon>0$, which contradicts the strict negativity above.

If $p_i^{X,\varepsilon}$ is identically zero on $[0,\sigma]$, then its Caputo derivative vanishes on $(0,\sigma)$, so the right-hand side of \eqref{eq:aux_system} must vanish on $(0,\sigma)$; but the same inspection shows that it is at least $\varepsilon>0$, again a contradiction. Therefore every component of $\mathcal P^{\varepsilon}$ is nonnegative on its interval of existence.

We now give an a priori bound for the auxiliary system and extend it. Adding the three component equations of \eqref{eq:aux_system}, the right-hand sides sum to $3N\varepsilon$, so
\[
{}^C_0\mathcal D_t^\alpha\sum_{i=1}^N\sum_{X\in\{US,AS,AI\}}p_i^{X,\varepsilon}(t)=3N\varepsilon,
\]
and consequently
\[
\sum_{i=1}^N\sum_{X\in\{US,AS,AI\}}p_i^{X,\varepsilon}(t)
=1+\frac{3N\varepsilon}{\Gamma(1+\alpha)}\,t^\alpha .
\]
Together with nonnegativity, every component is uniformly bounded on any fixed interval $[0,T]$; since $F$ is Lipschitz on this compact set, the standard continuation argument ensures that $\mathcal P^{\varepsilon}$ exists on every $[0,T]$, with a bound uniform for $\varepsilon\in(0,\varepsilon_0]$.

Next we pass to the limit $\varepsilon\to0$. Let $\mathcal P$ be the solution of the original system on $[0,T]$. Subtracting the two integral equations and applying the fractional Gr\"onwall inequality in the $\ell^1$ norm yields
\[
\|\mathcal P^{\varepsilon}(t)-\mathcal P(t)\|
\le \frac{3N\varepsilon\,t^\alpha}{\Gamma(1+\alpha)}\,E_\alpha(Lt^\alpha),\qquad t\in[0,T],
\]
where $E_\alpha(z)=\sum_{k=0}^{\infty}z^k/\Gamma(1+\alpha k)$ is the Mittag--Leffler function and $L$ is a Lipschitz constant of $F$ on a common compact set. Thus $\mathcal P^{\varepsilon}\to\mathcal P$ uniformly on $[0,T]$; since every $\mathcal P^{\varepsilon}$ is nonnegative, we get $\mathcal P\ge0$. Combined with the conservation identity this gives $0\le p_i^X\le1$, and $T>0$ is arbitrary.

Finally, the solution of the original system stays in the compact simplex $\Omega$, on which $F$ is uniformly Lipschitz and bounded; the standard continuation argument extends the local solution to $[0,\infty)$. Hence nonnegativity, boundedness, and the conservation identity hold for all $t\ge0$. For the classical case $\alpha=1$, the integration-by-parts formula above is replaced by the elementary argument that a $C^1$ function which first reaches zero from nonnegative values satisfies $p_i^{X\prime}(\sigma)\le0$; the rest of the proof is identical.
\end{proof}

\section{Proof of Theorem \ref{theorem:stability_sm}}
\label{proof:theorem_stability}
\begin{theorem}\label{theorem:stability_sm}
  (Local asymptotic stability of the rumor-free equilibrium)
Consider the fractional-order SIS-UAU model defined by equation~\eqref{eq:main_sm} with the initial condition $\mathcal{P}(0)=\mathcal{P}_0\in\Omega$. 
The rumor-free equilibrium (RFE)
$\mathcal{P}^{*} = \begin{pmatrix} \mathcal P^{US*}, & \mathcal P^{AS*}, & \mathbf 0 \end{pmatrix}$
is locally asymptotically stable if every eigenvalue $\lambda$ of the matrix
\begin{align}
  \mathcal J \;:=\; \operatorname{diag}(p^{US*}_i\beta^U + p^{AS*}_i\beta^A)\,\mathbf B \;-\; \mu\,\mathbf I_N,
\end{align}
satisfies
\begin{align}
  \big|\arg(\lambda)\big| \;>\; \frac{\alpha\pi}{2}.
\end{align}
\end{theorem}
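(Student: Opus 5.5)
The plan is to apply the linearization principle for Caputo systems together with the Matignon-type criterion of Lemma~\ref{lemma:matignon_sm}. The Jacobian of system~\eqref{eq:main_component_sm} at $\mathcal P^*$ should split into a block-triangular form whose infection block is exactly $\mathcal J$.

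\textbf{Step 1: reduce the state.} By Theorem~\ref{theorem:positivity_boundedness_sm}, $\Omega$ is invariant and $p_i^{US}=1-p_i^{AS}-p_i^{AI}$. I therefore drop $p^{US}$ and work with the deviations
\begin{align}
x_i=p_i^{AI},\qquad y_i=(p_i^{AS}+p_i^{AI})-p_i^{AS*}.
\end{align}
These are the infected fraction and the total aware fraction, as in Theorem~\ref{theorem:endemic_global_stability}. Their right-hand sides are $G_i$ and $H_i$ defined there, so the reduced system is a $2N$-dimensional Caputo system with the equilibrium shifted to the origin.

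\textbf{Step 2: linearize.} Since $M_i^I=\sum_j b_{ij}x_j$ vanishes at $\mathcal P^*$, every bilinear term $M_i^I(\cdot)$ contributes to first order only through $\Delta M_i^I$ multiplied by equilibrium values. This gives
\begin{align}
{}^C_0\mathcal D_t^\alpha x=\bigl(\operatorname{diag}(p_i^{US*}\beta^U+p_i^{AS*}\beta^A)\mathbf B-\mu\mathbf I_N\bigr)x+\text{h.o.t.}=\mathcal J x+\text{h.o.t.}
\end{align}
In particular, no linear $y$-term enters the $x$-equation. For $y$, I expect
\begin{align}
{}^C_0\mathcal D_t^\alpha y=W y+C x+\text{h.o.t.},\qquad W:=\lambda D_u\mathbf A-\operatorname{diag}(\lambda M_i^{A*}+\delta).
\end{align}
Here $D_u=\operatorname{diag}(p^{US*})$, and $C$ collects the terms $\delta\mathbf I_N+\beta^U D_u\mathbf B$. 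The full Jacobian is then block lower-triangular, so its spectrum is $\sigma(\mathcal J)\cup\sigma(W)$.

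\textbf{Step 3: spectrum of the awareness block.} By hypothesis, every eigenvalue of $\mathcal J$ satisfies $|\arg\xi|>\alpha\pi/2$, so it remains to place $\sigma(W)$ in the same sector. The matrix $W$ is Metzler, since its off-diagonal entries are $\lambda p_i^{US*}a_{ij}\ge0$, and it is irreducible with $\mathbf A$. In the case $\mathcal P^{AS*}>0$, I test $W$ on $v=\mathcal P^{AS*}$ and use the equilibrium relation $\delta p_i^{AS*}=\lambda M_i^{A*}p_i^{US*}$. This gives
\begin{align}
(Wv)_i=\lambda M_i^{A*}p_i^{US*}-\lambda M_i^{A*}p_i^{AS*}-\delta p_i^{AS*}=-\lambda M_i^{A*}p_i^{AS*}<0.
\end{align}
A Metzler matrix admitting a positive vector with $Wv<0$ is Hurwitz, so all eigenvalues of $W$ have negative real part. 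Hence $|\arg|>\pi/2>\alpha\pi/2$.

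In the degenerate case $\mathcal P^{AS*}=0$, $W=\lambda\mathbf A-\delta\mathbf I_N$. This block is Hurwitz exactly when $\lambda\rho(\mathbf A)<\delta$, which is the subcritical condition under which the trivial awareness state is the relevant RFE. I would state this explicitly as the regime in which $\mathcal P^*=(\mathbf 1,\mathbf 0,\mathbf 0)$ is the equilibrium under consideration.

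\textbf{Step 4: conclude.} Lemma~\ref{lemma:matignon_sm} gives asymptotic stability of the linearized system. Local asymptotic stability of $\mathcal P^*$ for the nonlinear system then follows from the linearization theorem for Caputo systems with $C^1$ (here polynomial) vector fields, as in \cite{li2010stability,deng2007stability}.

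\textbf{Main obstacle.} I expect Step 3 to be the hardest, since the theorem's hypothesis speaks only about $\mathcal J$ while the awareness block must be controlled separately. The Metzler/positive-vector argument should handle the nontrivial-awareness case cleanly. A secondary care point is checking that the triangular structure survives the elimination of $p^{US}$, in particular that $\partial G/\partial y=0$ at $x=0$, which holds because every term of $G$ carries a factor $M^I$ or $x$.
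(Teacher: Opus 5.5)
Your proposal is correct, and it takes a genuinely different and more complete route than the paper. The paper linearizes only the $p^{AI}$-equations and notes that at the RFE there is no first-order coupling from $p^{US}$ or $p^{AS}$ into them. It then applies Lemma~\ref{lemma:matignon_sm} to $\mathcal J$ and simply asserts that the awareness and susceptible components ``remain near their RFE values.'' Your Steps~1--3 supply exactly what that sentence omits:
\begin{itemize}
\item Eliminating $p^{US}$ through the simplex constraint removes the $N$ zero eigenvalues that the conservation law would otherwise place in a $3N$-dimensional Jacobian.
\item Your block lower-triangular linearization, with diagonal blocks $\mathcal J$ and $W=\lambda D_u\mathbf A-\operatorname{diag}(\lambda M_i^{A*}+\delta)$, is computed correctly.
\item The test vector $v=\mathcal P^{AS*}$, together with $\delta p_i^{AS*}=\lambda M_i^{A*}p_i^{US*}$, shows cleanly that the Metzler matrix $W$ is Hurwitz.
\end{itemize}
This last step is not cosmetic, because the hypothesis on $\mathcal J$ says nothing about the awareness layer. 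The RFE relation $\frac{\delta}{\lambda}\frac{\mathcal P^{AS*}}{1-\mathcal P^{AS*}}=\mathbf A\mathcal P^{AS*}$ always admits $\mathcal P^{AS*}=\mathbf 0$. Hence $(\mathbf 1,\mathbf 0,\mathbf 0)$ is also a rumor-free equilibrium when $\lambda\rho(\mathbf A)>\delta$. There $W=\lambda\mathbf A-\delta\mathbf I_N$ has the positive eigenvalue $\lambda\rho(\mathbf A)-\delta$, and the conclusion fails for that point. Your argument therefore both closes the gap in the paper's proof and pins down which RFE the statement can refer to. In the final nonlinear step, both proofs rest on the same linearized-stability principle for Caputo systems, so your route costs nothing there.

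Two scope points remain. First, you use irreducibility of $\mathbf A$ in two places: the dichotomy $\mathcal P^{AS*}>\mathbf 0$ versus $\mathcal P^{AS*}=\mathbf 0$, and the strict inequality $(Wv)_i=-\lambda M_i^{A*}p_i^{AS*}<0$ for every $i$, which needs $M_i^{A*}>0$. The statement does not assume irreducibility, although the paper does in Theorems~\ref{theorem:endemic_existence}, \ref{theorem:global_RFE} and~\ref{theorem:endemic_global_stability}, so add it as a hypothesis. Second, the critical case $\lambda\rho(\mathbf A)=\delta$ is not covered. There $\mathbf 0$ is the only awareness equilibrium, $W$ is singular, and linearization is inconclusive. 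This case should be excluded explicitly from the statement rather than absorbed into ``the regime under consideration.''
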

\begin{proof}
We work with the componentwise form of system \eqref{eq:main_sm}. For each node \(i\) the infection equation (the third column of \(\mathcal P\)) reads
\begin{equation}\label{eq:ai_dynamics_sm}
{}^C_0\mathcal D_t^\alpha p_i^{AI}(t)
= \beta^U M_i^I(t)\,p_i^{US}(t) + \beta^A M_i^I(t)\,p_i^{AS}(t) - \mu\, p_i^{AI}(t),
\end{equation}
The awareness and susceptible components satisfy
\begin{align}
{}^C_0\mathcal D_t^\alpha p_i^{US}(t)
=& -\big(\lambda M_i^A(t)+\beta^U M_i^I(t)\big)\,p_i^{US}(t)\notag\\
& + \delta\, p_i^{AS}(t),\label{eq:us_dynamics_sm}\\
{}^C_0\mathcal D_t^\alpha p_i^{AS}(t)
=& \lambda M_i^A(t)\, p_i^{US}(t) -\big(\delta+\beta^A M_i^I(t)\big)\, p_i^{AS}(t)\notag\\
& + \mu\, p_i^{AI}(t),\label{eq:as_dynamics_sm}
\end{align}

Let \(\mathcal P^{*}=(\mathcal P^{US*},\mathcal P^{AS*},\mathcal P^{AI*})\) denote the RFE, so that \(\mathcal P^{AI*}=\mathbf 0\), \(M_i^{I*}=0\) and \(\mathcal P^{US*}=\mathbf 1-\mathcal P^{AS*}\). We linearize the full system at the RFE. To obtain the linearization we compute the first-order partial derivatives of the right-hand side of \eqref{eq:ai_dynamics_sm} with respect to the perturbations in the three components.

Since \(M_i^I(t)\) is linear in the infection variables \(p_j^{AI}\), and since at the RFE \(M_i^{I*}=0\), the derivatives of the infection right-hand side at RFE are:
\begin{itemize}
  \item derivative with respect to \(p_j^{AI}\):
  \begin{align}
    \frac{\partial}{\partial p_j^{AI}}\Big( \beta^U M_i^I\,p_i^{US} + \beta^A M_i^I\,p_i^{AS} - \mu p_i^{AI} \Big)\Bigg|_{\text{RFE}}\notag\\
  = b_{ij}\big(\beta^U p_i^{US*} + \beta^A p_i^{AS*}\big) - \mu\delta_{ij},
  \end{align}
  where \(\delta_{ij}\) is the Kronecker delta.
  \item derivative with respect to \(p_j^{AS}\):
  \begin{align}
    \frac{\partial}{\partial p_j^{AS}}\Big(\beta^U M_i^I\,p_i^{US} + \beta^A M_i^I\,p_i^{AS} - \mu p_i^{AI}\Big)\Bigg|_{\text{RFE}}\notag\\
  = \beta^A \,M_i^{I*}\,\delta_{ij} = 0.
  \end{align}
  \item derivative with respect to \(p_j^{US}\):
  \begin{align}
    \frac{\partial}{\partial p_j^{US}}\Big(\beta^U M_i^I\,p_i^{US} + \beta^A M_i^I\,p_i^{AS} - \mu p_i^{AI}\Big)\Bigg|_{\text{RFE}}\notag\\
  = \beta^U \,M_i^{I*}\,\delta_{ij} = 0.
  \end{align}
\end{itemize}
Thus, at the RFE the infection component's linearization depends only on the infection variables themselves (to first order), and there is no first-order coupling from perturbations in \(p^{US}\) or \(p^{AS}\) into the infection equations.

Collecting the coefficients for all \(i,j=1,\dots,N\), the linearized infection subsystem (written in matrix form for the third column \(\mathcal P^{AI}\)) is
\begin{equation}\label{eq:linear_AI_sm}
{}^C_0\mathcal D_t^\alpha \,\mathcal P^{AI}(t)
= \Big(\operatorname{diag}(p^{US*}_i\beta^U + p^{AS*}_i\beta^A)\,\mathbf B - \mu\mathbf I_N\Big)\,\mathcal P^{AI}(t).
\end{equation}
Therefore the Jacobian block acting on the infection vector is precisely
\begin{align}
  \mathcal J \;=\; \operatorname{diag}(p^{US*}_i\beta^U + p^{AS*}_i\beta^A)\,\mathbf B - \mu\mathbf I_N,
\end{align}
as stated in the theorem.

Eq.~\eqref{eq:linear_AI_sm} is a linear Caputo-type system of the form \({}^C_0D_t^\alpha y(t)=A y(t)\) with \(A=\mathcal J\). By lemma \ref{lemma:matignon_sm},
the trivial solution \(\mathcal P^{AI}\equiv\mathbf0\) of the linearized infection subsystem is (locally) asymptotically stable if every eigenvalue \(\lambda\) of \(\mathcal J\) satisfies
\begin{align}
  \big|\arg(\lambda)\big| > \frac{\alpha\pi}{2}.
\end{align}
Local asymptotic stability of the infection subsystem implies that small infection perturbations decay to zero while the awareness and susceptible components remain near their RFE values; hence the full nonlinear system is locally asymptotically stable at the RFE. This completes the proof.
\end{proof}

\section{Proof of Corollary~\ref{corollary:R0_less_1_sm}}\label{appendix:proof_R0_less_1_sm}
\begin{corollary}\label{corollary:R0_less_1_sm}
If the basic reproduction number defined in \eqref{eq:R0_def_sm} satisfies
\begin{align}\label{eq:R0_lt_1_sm}
  \mathcal R_0 \;<\; 1,
\end{align}
then the rumor-free equilibrium $\mathcal P^*$ is locally asymptotically stable for the fractional-order system \eqref{eq:main_sm} (for every $\alpha\in(0,1]$).
\end{corollary}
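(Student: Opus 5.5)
The plan is to reduce the corollary to Theorem~\ref{theorem:stability_sm}: show that $\mathcal R_0<1$ forces every eigenvalue $\xi$ of $\mathcal J=\mathcal D\mathbf B-\mu\mathbf I_N$, with $\mathcal D:=\operatorname{diag}(p^{US*}_i\beta^U+p^{AS*}_i\beta^A)$, to satisfy $\operatorname{Re}\xi<0$. Since $\alpha\pi/2\le\pi/2$ for every $\alpha\in(0,1]$, any $\xi$ with negative real part obeys $|\arg\xi|>\pi/2\ge\alpha\pi/2$. The Matignon-type condition, and hence local asymptotic stability, then follows from Theorem~\ref{theorem:stability_sm}. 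For $\alpha=1$ the same condition is simply the classical Hurwitz criterion, so the statement for all $\alpha\in(0,1]$ comes out at once.

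The key step is a spectral localization. I would write $\sigma(\mathcal J)=\{\eta-\mu:\eta\in\sigma(\mathcal D\mathbf B)\}$, which holds because $\mathcal J$ is a shift of $\mathcal D\mathbf B$ by a multiple of the identity. Every such $\eta$ satisfies $|\eta|\le\rho(\mathcal D\mathbf B)=\mu\mathcal R_0<\mu$, using the definition \eqref{eq:R0_def_sm}. Therefore
\begin{align*}
\operatorname{Re}(\eta-\mu)\le|\eta|-\mu\le\mu(\mathcal R_0-1)<0 .
\end{align*}
This argument does not need Perron--Frobenius theory or irreducibility; it uses only the bound $|\eta|\le\rho$.

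Optionally, I would note why the bound is sharp. The matrix $\mathcal D\mathbf B$ is nonnegative, since $p^{US*}_i,p^{AS*}_i\ge0$ by Theorem~\ref{theorem:positivity_boundedness_sm} applied to the equilibrium and $\beta^U,\beta^A>0$. Perron--Frobenius then makes $\rho(\mathcal D\mathbf B)$ itself an eigenvalue, so $\mathcal R_0=1$ is exactly where the dominant eigenvalue of $\mathcal J$ crosses zero. This is the usual next-generation equivalence $s(\mathcal J)<0\iff\rho(\mathcal K)<1$ for $F-V$ with $F\ge0$ and $V=\mu\mathbf I_N$.

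The main obstacle lies not in the corollary itself but in inheriting the full-system conclusion from Theorem~\ref{theorem:stability_sm}. That theorem asserts stability from the infection block alone, so I simply invoke it as given. For completeness, I would remark on what a fully rigorous version would require. The Jacobian at the RFE is block triangular, because the infection rows have no first-order dependence on $p^{US},p^{AS}$. The remaining awareness block must also satisfy the sector condition on the invariant simplex, that is, restricted to perturbations with $\sum_X\Delta p_i^X=0$, which is the linearized awareness SIS operator at $\mathcal P^{AS*}$. Its stability is standard for a supercritical or subcritical SIS equilibrium via an M-matrix argument. I would mention this but rely on Theorem~\ref{theorem:stability_sm} for the conclusion.
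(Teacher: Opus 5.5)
Your proposal is correct and takes the same route as the paper's proof. The paper writes $\mathcal J=\mu(\mathcal K-\mathbf I_N)$ and bounds $\Re(\sigma-1)\le|\sigma|-1<0$ for every eigenvalue $\sigma$ of $\mathcal K$, which is your shift-by-$\mu$ argument, and then uses $\pi/2\ge\alpha\pi/2$ to invoke Theorem~\ref{theorem:stability_sm}; your extra remarks on Perron--Frobenius sharpness and on the awareness block of the block-triangular Jacobian are not needed there, since both arguments inherit the full-system conclusion from that theorem.
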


\begin{proof}
Recall the next-generation matrix
\begin{align}\label{eq:K_def_sm}
  \mathcal K \;=\; \frac{1}{\mu}\,\operatorname{diag}(p^{US*}_i\beta^U + p^{AS*}_i\beta^A)\,\mathbf B,
\end{align}
so that the Jacobian block at the RFE appearing in Theorem~\ref{theorem:stability_sm} can be written as
\begin{align}\label{eq:J_muK_sm}
  \mathcal J \;=\; \operatorname{diag}(p^{US*}_i\beta^U + p^{AS*}_i\beta^A)\,\mathbf B - \mu\mathbf I_N
  \;=\; \mu\big(\mathcal K - \mathbf I_N\big).
\end{align}

Let $\sigma$ be any eigenvalue of $\mathcal K$. The hypothesis $\mathcal R_0=\rho(\mathcal K)<1$ implies $|\sigma|<1$ for every eigenvalue $\sigma$. For any complex number $\sigma$ with $|\sigma|<1$ we have
\begin{align}\label{eq:real_part_negative_sm}
  \Re(\sigma-1) \le |\sigma|-1 < 0,
\end{align}
hence $\sigma-1$ lies strictly in the open left half-plane. Consequently every eigenvalue of $\mathcal J$, which takes the form
\begin{align}\label{eq:lambda_relation_sm}
  \lambda \;=\; \mu(\sigma-1),
\end{align}
has strictly negative real part. In particular the argument of each $\lambda$ satisfies
\begin{align}\label{eq:arg_bound_sm}
  |\arg(\lambda)| \;>\; \frac{\pi}{2}.
\end{align}

Since $\alpha\in(0,1]$ it follows that $\dfrac{\alpha\pi}{2}\le\dfrac{\pi}{2}$, and therefore the stronger relation \eqref{eq:arg_bound_sm} implies the Matignon spectral condition of Theorem~\ref{theorem:stability_sm}:
\begin{align}\label{eq:matignon_satisfied_sm}
  |\arg(\lambda)| \;>\; \frac{\alpha\pi}{2}\quad\text{for every eigenvalue }\lambda\text{ of }\mathcal J.
\end{align}
By Theorem~\ref{theorem:stability_sm} the RFE is thus locally asymptotically stable. This completes the proof.
\end{proof}
\section{Proof of Theorem~\ref{theorem:endemic_existence_sm}}\label{appendix:proof_endemic_existence_sm}
\begin{theorem}\label{theorem:endemic_existence_sm}
(Existence and uniqueness of the rumor-endemic equilibrium)  
Assume that $\mathbf A$ and $\mathbf B$ are nonnegative zero-diagonal irreducible matrices and that $0<\beta^A<\beta^U$. Then, for the fractional-order UAU-SIS system \eqref{eq:main_sm}, there exists a unique rumor-endemic equilibrium $\mathcal P^{**}$ with $\mathcal P^{AI**}\!>\!0$ if and only if the basic reproduction number satisfies
\begin{align}\label{eq:R0_gt_1_sm}
  \mathcal R_0 = \rho(\mathcal K) > 1.
\end{align}
Conversely, if $\mathcal R_0 \le 1$, the only equilibrium is the rumor-free equilibrium $\mathcal P^*$.
\end{theorem}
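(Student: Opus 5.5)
The plan is to reduce the equilibrium problem to a fixed-point problem for the infection vector $\mathbf x:=\mathcal P^{AI}\in[0,1]^N$ alone, and then to use Perron--Frobenius theory for the irreducible nonnegative matrix $\mathbf B$, together with a monotone (cone) fixed-point argument.

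\textbf{Reduction to the infection vector.} Fix $\mathbf x$, so that $M_i^I=(\mathbf B\mathbf x)_i$ is fixed. The first and third equations of \eqref{eq:endemic_system}, with $p_i^{US}+p_i^{AS}=1-x_i$, determine the unaware fractions through
\[
u_i=\frac{\delta(1-x_i)}{\delta+\lambda M_i^A+\beta^U M_i^I}, \qquad M_i^A=\big(\mathbf A(\mathbf 1-\mathbf u)\big)_i .
\]
For fixed $\mathbf x$, the map $\mathbf u\mapsto$ right-hand side is order-preserving on $[0,\mathbf 1]$, since a larger $\mathbf u$ gives a smaller $M^A$ and hence a larger output. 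I will take its maximal fixed point $\mathbf u(\mathbf x)$, obtained by monotone iteration from $\mathbf u=\mathbf 1-\mathbf x$. The same construction at $\mathbf x=\mathbf 0$ reproduces the awareness profile $\mathcal P^{US*}$ of the RFE, and I will identify the RFE with this maximal branch. Two comparisons then follow from monotonicity, and they are the key structural facts. First, $\mathbf u(\mathbf x)\le\mathcal P^{US*}$ componentwise: increasing $\mathbf x$ lowers the factor $1-x_i$ and raises $\beta^U M^I$ in the denominator, so the iteration starts lower and stays lower. Second, writing
\[
\beta^U u_i+\beta^A a_i=\beta^A(1-x_i)+(\beta^U-\beta^A)u_i
\]
and using $\beta^A<\beta^U$, the effective susceptibility satisfies $\beta^U u_i+\beta^A a_i\le \beta^U p_i^{US*}+\beta^A p_i^{AS*}$. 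This inequality is strict for every $i$ with $x_i>0$. The infection balance then becomes $\mu\mathbf x=\operatorname{diag}(\beta^U u_i+\beta^A a_i)\,\mathbf B\mathbf x=:\mu\,\Phi(\mathbf x)$.

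\textbf{Necessity and the case $\mathcal R_0\le1$.} Suppose $\mathbf x\ge0$, $\mathbf x\neq0$, is an equilibrium. Irreducibility of $\mathbf B$ together with $\mathbf x=\Phi(\mathbf x)$ forces $\mathbf x>0$. By the comparison above, $\mathbf x\le\mathcal K\mathbf x$ with strict inequality in every component. Pair this with the positive left Perron vector $\mathbf w^T$ of $\mathcal K$, which is irreducible because $\mathbf B$ is and the diagonal factor is positive. This gives $\mathbf w^T\mathbf x<\rho(\mathcal K)\,\mathbf w^T\mathbf x$, hence $\mathcal R_0>1$. So for $\mathcal R_0\le1$ only $\mathbf x=0$ is possible, which is the RFE.

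\textbf{Existence for $\mathcal R_0>1$.} The map $\Phi$ is continuous and sends $[0,\mathbf 1]^N$ into itself, because $\beta$-weights times $M^I$ divided by $\mu$ combine with $1-x_i$ as in \eqref{eq:self_consistent}. Its Fréchet derivative at $0$ is exactly $\mathcal K$; this is the linearization \eqref{eq:linearized_endemic}. Take the Perron vector $\mathbf v>0$ of $\mathcal K$. For small $\epsilon>0$ we get $\Phi(\epsilon\mathbf v)\ge\epsilon\mathbf v$, since $\rho(\mathcal K)>1$ dominates the $O(\epsilon^2)$ error. I would then show that $\Phi$ is order-preserving on the order interval $[\epsilon\mathbf v,\mathbf 1]$; failing that, I would invoke Krasnosel'skii's cone expansion theorem, which only needs $\Phi(\epsilon\mathbf v)\ge\epsilon\mathbf v$ and boundedness. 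Either route yields a positive fixed point.

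\textbf{Uniqueness, the main obstacle.} The plan is the standard sublinearity argument. Suppose $\mathbf x,\mathbf y>0$ are two endemic fixed points, and let $t=\min_i y_i/x_i$. Assume $t<1$; then $t\mathbf x\le\mathbf y$ with equality at some index $k$. I would show
\[
\Phi_k(\mathbf y)\ge\Phi_k(t\mathbf x)>t\,\Phi_k(\mathbf x)=t x_k = y_k,
\]
which contradicts $\Phi_k(\mathbf y)=y_k$. The strict sublinearity $\Phi(t\mathbf x)>t\Phi(\mathbf x)$ holds because the effective susceptibility at $t\mathbf x$ strictly exceeds that at $\mathbf x$, by the monotone comparison above. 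The genuine difficulty is the monotonicity step $\Phi_k(\mathbf y)\ge\Phi_k(t\mathbf x)$. More infection raises $M^I$ but lowers susceptibility, both via the factor $1-x_i$ and via awareness, so $\Phi$ is not obviously monotone. I would first try to prove monotonicity directly from \eqref{eq:self_consistent}, writing $p_i^{AI}$ as an increasing function of $M_i^I$ for fixed $M_i^A$ and bounding the $M^A$-feedback using $\beta^A<\beta^U$. If that fails, I would replace monotonicity by a Lajmanovich--Yorke-type argument applied to the ratio $\mathbf y/\mathbf x$ at its minimizing index, using only the infection balance.
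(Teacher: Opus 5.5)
Your key comparison $\mathbf u(\mathbf x)\le\mathcal P^{US*}$ is not established, because the branch you choose is the wrong one at $\mathbf x=\mathbf 0$. There the maximal fixed point is $\mathbf u\equiv\mathbf 1$: with $\mathbf u=\mathbf 1$ one has $M^A=\mathbf A(\mathbf 1-\mathbf u)=\mathbf 0$, and the map returns $\mathbf 1$. This is the awareness-free profile, which equals $\mathcal P^{US*}$ only if $\lambda\rho(\mathbf A)\le\delta$. In the supercritical case, which is the only case in which awareness enters $\mathcal R_0$ at all, there are two rumor-free awareness profiles, and ``starts lower and stays lower'' yields only $\mathbf u(\mathbf x)\le\mathbf 1$. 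Yet everything downstream uses the stronger comparison:
the strict inequality $\mathbf x<\mathcal K\mathbf x$ in your necessity step; the strict sublinearity; and $\Phi'(\mathbf 0)=\mathcal K$ rather than $\beta^U\mathbf B/\mu$ (spectral radius $\bar{\mathcal R}_0$), which needs $\mathbf u(\mathbf x)\to\mathcal P^{US*}$ as $\mathbf x\to\mathbf 0$. In fact your branch is discontinuous at $\mathbf 0$.

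The missing ingredient is what the paper proves in its fixed-infection-vector step: for $\mathbf x\ne\mathbf 0$ the awareness fixed point is unique, since $M^A\mapsto\mathbf A g(M^A)$ is increasing and strictly sublinear and $\mathbf 0$ is no longer a fixed point. Given uniqueness, write $T_{\mathbf x}$ for your map $\mathbf u\mapsto\delta(1-x_i)/(\delta+\lambda M_i^A+\beta^U M_i^I)$. Then $T_{\mathbf x}(\mathcal P^{US*})_i=\delta(1-x_i)/(\delta+\lambda M_i^{A*}+\beta^U(\mathbf B\mathbf x)_i)\le p_i^{US*}$, so iterating from $\mathcal P^{US*}$ reaches a fixed point below it, which must be $\mathbf u(\mathbf x)$. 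Any cluster point as $\mathbf x\to\mathbf 0$ is a fixed point of $T_{\mathbf 0}$ lying below $\mathcal P^{US*}$, hence equal to it. The paper reaches the same conclusions via monotonicity of $M^{A**}(t)$ in $t$ and a left-Perron-vector argument on $\mathbf A$. With this repair your necessity argument is correct, and more direct than the paper's $h(s)=\Phi(st)/s$ argument.

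Existence needs two smaller fixes. First, with your definition $\mu\Phi(\mathbf x)=\operatorname{diag}(\beta^U u_i+\beta^A a_i)\mathbf B\mathbf x$, $\Phi$ does not map $[0,\mathbf 1]^N$ into itself: at a node with $x_i=0$ whose $\mathbf B$-neighbours all have $x_j=1$, $\Phi_i\ge\beta^A\sum_j b_{ij}/\mu$. So use the node-wise solved form \eqref{eq:self_consistent}, or the paper's truncation $\min(\Phi+\theta\mathbf 1,\mathbf 1)$ together with $\Phi_i<1$ when $x_i=1$. Second, $\Phi_i$ is strictly decreasing in its own coordinate wherever $(\mathbf B\mathbf x)_i>0$, so the order-preserving route is closed. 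Krasnosel'skii's theorem needs an expansion condition on the whole inner boundary of the cone, not at the single point $\epsilon\mathbf v$. That condition does hold, as $\mathbf w^\top\Phi(\mathbf x)>\mathbf w^\top\mathbf x$ for small nonzero $\mathbf x\ge\mathbf 0$, by the same left-Perron estimate the paper uses to exclude $t_*=\mathbf 0$.

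Uniqueness is a second genuine gap, and neither of your fallbacks closes it. At the minimizing index $k$ of $y_i/x_i$, the two infection balances alone give $s_k(\mathbf x)\ge s_k(\mathbf y)$. Here $s_k=\beta^U u_k+\beta^A a_k=\beta^U-(\beta^U-\beta^A)p_k^A-\beta^A x_k$, and $p_k^A=1-u_k$ is the total aware fraction. Since $y_k<x_k$, a contradiction would follow from $p_k^A(\mathbf y)\le p_k^A(\mathbf x)$. But $\mathbf y\ge t\mathbf x$ yields only $p_k^A(\mathbf y)\ge p_k^A(t\mathbf x)$, the wrong direction, because infection at the $\mathbf A$-neighbours of $k$ raises $k$'s awareness. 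What the argument does prove is that two distinct endemic equilibria cannot be comparable: if $\mathbf y\le\mathbf x$, monotonicity of $p^A$ supplies exactly the missing bound. The incomparable case is where something new is needed.

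The paper argues uniqueness by a different reduction, treating $(f^I,f^A)$ in the variables $(p^I,p^A)$ as a cooperative, strictly sublinear $2N$-dimensional field. Note, however, that its cooperativity check lists only the partials with $j\ne i$. It omits $\partial f_i^I/\partial p_i^A=-(\beta^U-\beta^A)M_i^{I**}<0$, which is precisely the negative awareness-to-infection feedback you identified. At an infection index that argument needs the same missing bound on $p_i^A$ as yours, so it cannot simply be imported to close this step.
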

\begin{proof}
Throughout this proof we write
\begin{align}
p_i^{U**}&:=p_i^{US**},\qquad
p_i^{I**}:=p_i^{AI**},\\
p_i^{A**}&:=p_i^{AS**}+p_i^{AI**},
\end{align}
and
\begin{align}
M_i^{A**}:=\sum_j a_{ij}p_j^{A**},\qquad
M_i^{I**}:=\sum_j b_{ij}p_j^{I**}.
\end{align}
Since the model has no unaware infected state, \(p_i^{U**}+p_i^{A**}=1\) and \(p_i^{AS**}=p_i^{A**}-p_i^{I**}\).

At an equilibrium the US and the AI components satisfy
\begin{align}
\big(\lambda M_i^{A**}+\beta^U M_i^{I**}\big)p_i^{U**}
&=\delta\big(p_i^{A**}-p_i^{I**}\big), \label{eq:sm4_us}\\
\mu p_i^{I**}
&=M_i^{I**}\big(\beta^U p_i^{U**}+\beta^A p_i^{AS**}\big). \label{eq:sm4_ai}
\end{align}

\emph{The rumor-free equilibrium.} If \(p_i^{I**}=0\) for every \(i\), then \(M_i^{I**}=0\), and \eqref{eq:sm4_us} reduces to \(\lambda M_i^{A**}p_i^{U**}=\delta p_i^{A**}\). Together with \(p_i^{U**}+p_i^{A**}=1\), this gives \(p_i^{A**}=\lambda M_i^{A**}/(\delta+\lambda M_i^{A**})\), so that
\begin{align}
M_i^{A**}=\sum_j a_{ij}\frac{\lambda M_j^{A**}}{\delta+\lambda M_j^{A**}}.
\end{align}
The right-hand side is a continuous, monotone, strictly sublinear, bounded map on \(\mathbb R_+^N\); by the classical fixed-point theorem for monotone strictly sublinear maps, it has a unique nonnegative fixed point \(M^{A*}\), which is positive componentwise if \(\lambda\rho(\mathbf A)>\delta\) and is zero otherwise. Hence the rumor-free equilibrium is unique, with
\begin{align}
p_i^{U*}=\frac{\delta}{\delta+\lambda M_i^{A*}},\qquad
p_i^{A*}=\frac{\lambda M_i^{A*}}{\delta+\lambda M_i^{A*}},\qquad
p_i^{I*}=0.
\end{align}
The next-generation matrix is
\begin{align}
K_{ij}:=\frac1\mu\big(\beta^U p_i^{U*}+\beta^A p_i^{A*}\big)b_{ij},
\end{align}
and \(\mathcal R_0=\rho(K)\). Since \(\mathbf B\) is irreducible and \(\beta^U p_i^{U*}+\beta^A p_i^{A*}>0\), the matrix \(K\) is nonnegative and irreducible.

\emph{The awareness layer for a fixed infection vector.} Fix \(t\in[0,1]^N\), put \(M_i^{I**}(t):=\sum_j b_{ij}t_j\), and consider the self-consistent equation for the awareness pressure,
\begin{align}
M_i^{A**}
=\sum_j a_{ij}\,
\frac{\lambda M_j^{A**}+\beta^U M_j^{I**}(t)+\delta t_j}
{\delta+\lambda M_j^{A**}+\beta^U M_j^{I**}(t)}. \label{eq:sm4_aware}
\end{align}
For fixed \(t\), the right-hand side of \eqref{eq:sm4_aware} is a continuous map of \(M^{A**}\) that sends the box \(\prod_{i=1}^N[0,d_i]\), \(d_i:=\sum_j a_{ij}\), into itself; hence it has a fixed point by Brouwer's theorem. For \(t>0\) this fixed point is unique. Indeed, let \(M,M'\) be two fixed points. Since \(t>0\) and \(\mathbf B\) is irreducible, both are positive. Set \(m:=\min_i M_i/M'_i<1\) and choose \(k\) with \(M_k=mM'_k\). The function
\begin{align}
g_j(s):=\frac{\lambda s+\beta^U M_j^{I**}(t)+\delta t_j}
{\delta+\lambda s+\beta^U M_j^{I**}(t)},
\end{align}
is strictly increasing and strictly concave in \(s\ge0\), so
\begin{align}
M_k&=\sum_j a_{kj}g_j(M_j)\\
&\ge \sum_j a_{kj}g_j(mM'_j)\\
&>m\sum_j a_{kj}g_j(M'_j)=mM'_k=M_k,
\end{align}
a contradiction. Hence \eqref{eq:sm4_aware} has a unique solution \(M^{A**}(t)\) for \(t>0\), and we set \(M^{A**}(0):=M^{A*}\). By compactness and uniqueness, \(M^{A**}(t)\) is continuous in \(t\in(0,1]^N\), and it is nondecreasing in each component of \(t\) (the right-hand side of \eqref{eq:sm4_aware} is increasing in \(t\), and the unique fixed point is obtained as the increasing limit of the iteration starting from zero). Moreover, \(M^{A**}(t)\to M^{A*}\) as \(t\to0\): any cluster point is a fixed point of the limiting equation, hence belongs to \(\{0,M^{A*}\}\), and the alternative \(0\) is excluded whenever \(\lambda\rho(\mathbf A)>\delta\) by multiplying the linearized fixed-point equation near \(0\) by a positive left eigenvector of \(\mathbf A\). Consequently,
\begin{align}
p_i^{U**}(t)&:=\frac{\delta(1-t_i)}
{\delta+\lambda M_i^{A**}(t)+\beta^U M_i^{I**}(t)},\\
p_i^{A**}(t)&:=1-p_i^{U**}(t),
\end{align}
are continuous and nondecreasing in \(t\), and satisfy \(p_i^{A**}(t)\ge t_i\).

\emph{Reduction to a fixed-point equation.} Solving \eqref{eq:sm4_ai} for \(p_i^{I**}\) gives
\begin{align}
\big(\mu+\beta^A M_i^{I**}\big)p_i^{I**}
=M_i^{I**}\big(\beta^U p_i^{U**}+\beta^A p_i^{A**}\big).
\end{align}
Define the map \(\Phi:[0,1]^N\to\mathbb R_+^N\) by
\begin{align}
\Phi_i(t)
:=\frac{M_i^{I**}(t)\big(\beta^U p_i^{U**}(t)+\beta^A p_i^{A**}(t)\big)}
{\mu+\beta^A M_i^{I**}(t)}.
\end{align}
Then \(t\in[0,1]^N\) is an equilibrium infection vector if and only if \(t=\Phi(t)\).

We collect three properties of \(\Phi\). First, \(\Phi(0)=0\), and as \(t\to0\), \(M^{A**}(t)\to M^{A*}\), whence
\begin{align}
\Phi(t)=Kt+o(\|t\|). \label{eq:sm4_lin}
\end{align}
Second, if \(t_i=1\), then \(p_i^{A**}(t)\ge t_i=1\), so \(p_i^{A**}(t)=1\), and therefore
\begin{align}
\Phi_i(t)=\frac{M_i^{I**}(t)\beta^A}{\mu+\beta^A M_i^{I**}(t)}<1.
\end{align}
Third, \(\Phi\) is strictly sublinear: since \(p^{A**}(t)\) is nondecreasing in \(t\) and \(\beta^U>\beta^A\), for \(0<\theta<1\) and \(t>0\),
\begin{align}
\Phi_i(\theta t)
&\ge \frac{\theta M_i^{I**}(t)\big(\beta^U p_i^{U**}(t)+\beta^A p_i^{A**}(t)\big)}
{\mu+\beta^A\theta M_i^{I**}(t)} \notag\\
&=\theta\Phi_i(t)\,\frac{\mu+\beta^A M_i^{I**}(t)}{\mu+\beta^A\theta M_i^{I**}(t)}
>\theta\Phi_i(t). \label{eq:sm4_sub}
\end{align}

\emph{Existence for \(\mathcal R_0>1\).} Let \(v>0\) and \(w>0\) be the right and left Perron vectors of \(K\), so that \(Kv=\rho v\) and \(w^\top K=\rho w^\top\) with \(\rho=\mathcal R_0>1\). For \(\theta>0\) define
\begin{align}
T_\theta(t):=\min\big(\Phi(t)+\theta\mathbf 1,\mathbf 1\big),\qquad t\in[0,1]^N,
\end{align}
where \(\min\) is taken componentwise. By Brouwer's theorem there exists \(t_\theta\in[0,1]^N\) with \(T_\theta(t_\theta)=t_\theta\). Let \(\theta_n\downarrow0\) and assume \(t_{\theta_n}\to t_*\) along a subsequence. Then \(t_*=\min(\Phi(t_*),\mathbf 1)\). If \(t_{*,i}=1\) for some \(i\), then \(\Phi_i(t_*)<1\), contradicting \(t_{*,i}=\min(\Phi_i(t_*),1)=1\). Hence \(t_*\in[0,1)^N\) and \(t_*=\Phi(t_*)\).

It remains to rule out \(t_*=0\). If \(t_{\theta_n}\to0\), then for all large \(n\), \(\Phi(t_{\theta_n})+\theta_n\mathbf 1<\mathbf 1\), so \(t_{\theta_n}=\Phi(t_{\theta_n})+\theta_n\mathbf 1\). Using \eqref{eq:sm4_lin},
\begin{align}
w^\top t_{\theta_n}
=\rho\,w^\top t_{\theta_n}+o\big(\|t_{\theta_n}\|\big)+\theta_n w^\top\mathbf 1,
\end{align}
that is,
\begin{align}
(1-\rho)w^\top t_{\theta_n}
=o\big(\|t_{\theta_n}\|\big)+\theta_n w^\top\mathbf 1.
\end{align}
Since the left-hand side is at most \(-(\rho-1)(\min_i w_i)\|t_{\theta_n}\|_1\) and the \(o(\|t_{\theta_n}\|)\) term can be absorbed into half of this negative bound, the equality would force \(\theta_n w^\top\mathbf 1<0\) for all sufficiently large \(n\), contradicting \(\theta_n>0\). Thus \(t_*>0\), and \(t_*=\Phi(t_*)\) is a rumor-endemic equilibrium.

\emph{Nonexistence for \(\mathcal R_0\le1\).} Suppose \(t>0\) and \(t=\Phi(t)\). For \(0<s\le1\) put \(h(s):=\Phi(st)/s\). By \eqref{eq:sm4_sub}, \(h\) is strictly decreasing and \(h(s)>h(1)=t\) for \(s<1\). By \eqref{eq:sm4_lin}, \(h(s)\to Kt\) as \(s\to0\). Since \(h\) is decreasing,
\begin{align}
Kt=\lim_{s\to0}h(s)\ge h(s)>t,
\end{align}
componentwise, so \(Kt>t\). But then \(\rho(K)\ge\min_i(Kt)_i/t_i>1\), contradicting \(\mathcal R_0=\rho(K)\le1\). Hence no positive equilibrium exists when \(\mathcal R_0\le1\).

\emph{Uniqueness of the positive equilibrium.} It remains to show that there is at most one equilibrium with \(p_i^{I**}>0\) for every \(i\). Consider the vector field in the variables \((p^{I**},p^{A**})\):
\begin{align}
f^I_i
&:=M_i^{I**}\big(\beta^U p_i^{U**}+\beta^A(p_i^{A**}-p_i^{I**})\big)-\mu p_i^{I**},\\
f^A_i
&:=\big(1-p_i^{A**}\big)\big(\lambda M_i^{A**}+\beta^U M_i^{I**}\big)-\delta p_i^{A**}+\delta p_i^{I**},
\end{align}
where \(p_i^{U**}:=1-p_i^{A**}\). On the convex set \(\Omega:=\{(p^I,p^A)\in\mathbb R_+^{2N}:0\le p_i^I\le p_i^A\le1\}\), for \(j\ne i\),
\begin{align}
\frac{\partial f^I_i}{\partial p^I_j}
&=b_{ij}\big(\beta^U p_i^{U**}+\beta^A(p_i^{A**}-p_i^{I**})\big)\ge0,\\
\frac{\partial f^I_i}{\partial p^A_j}&=0,\\
\frac{\partial f^A_i}{\partial p^I_j}
&=(1-p_i^{A**})\beta^U b_{ij}\ge0,\\
\frac{\partial f^A_i}{\partial p^A_j}
&=(1-p_i^{A**})\lambda a_{ij}\ge0,
\end{align}
so the system is cooperative on \(\Omega\). It is also strictly sublinear: for \(0<\theta<1\) and \(z=(p^I,p^A)>0\),
\begin{align}
f^I_i(\theta z)-\theta f^I_i(z)
&=\theta(1-\theta)M_i^{I**}\big((\beta^U-\beta^A)p_i^A\notag\\
&\qquad +\beta^A p_i^I\big)>0,\\
f^A_i(\theta z)-\theta f^A_i(z)
&=\theta(1-\theta)p_i^A\big(\lambda M_i^{A**}\notag\\
&\qquad +\beta^U M_i^{I**}\big)>0.
\end{align}
Clearly \(f(0)=0\).

We claim that a vector field with \(f(0)=0\) that is cooperative and strictly sublinear on a convex cone has at most one positive zero. Indeed, let \(z,z'\) be two positive zeros. If they are comparable, say \(z\le z'\) and \(z\ne z'\), let \(\theta:=\max\{s\ge0:s z'\le z\}\). Then \(\theta<1\), and \(z_k=\theta z'_k\) for some \(k\). Cooperativity gives \(f_k(z)\ge f_k(\theta z')\), and strict sublinearity gives \(f_k(\theta z')>\theta f_k(z')=0\), contradicting \(f_k(z)=0\). If \(z,z'\) are incomparable, let \(\theta:=\max\{s\ge0:s z\le z'\ \text{and}\ s z'\le z\}\); then \(\theta\in(0,1)\) and, say, \(z'_k=\theta z_k\) for some \(k\), and the same argument gives \(0=f_k(z')\ge f_k(\theta z)>\theta f_k(z)=0\), a contradiction. Hence at most one positive equilibrium exists.

Finally, no equilibrium can have a zero infection component unless it is the rumor-free equilibrium: if \(p_i^{I**}=0\) for some \(i\) at an equilibrium, then \eqref{eq:sm4_ai} and the strict positivity of \(\beta^U p_i^{U**}+\beta^A p_i^{AS**}\) imply \(M_i^{I**}=\sum_j b_{ij}p_j^{I**}=0\), and irreducibility of \(\mathbf B\) forces \(p^{I**}=0\). Thus any positive equilibrium has \(p_i^{I**}>0\) for all \(i\), and the claim above applies.

Combining the existence, nonexistence, and uniqueness statements completes the proof.
\end{proof}
\section{Proof of Theorem~\ref{theorem:endemic_stability_compact_sm}}\label{appendix:proof_endemic_stability_compact_sm}
\begin{theorem}\label{theorem:endemic_stability_compact_sm}
  (Local asymptotic stability of the rumor-endemic equilibrium)
Let $\mathcal P^{**}=(\mathcal P^{US**},\mathcal P^{AS**},\mathcal P^{AI**})$ be a rumor-endemic equilibrium of system~\eqref{eq:main_sm} with $\mathcal P^{AI**}\ge0$. 
If all eigenvalues $\lambda$ of the Jacobian matrix $\mathcal J^{*}$, defined by~\eqref{eq:Jstar_compact_text_sm}--\eqref{eq:Bmn_compact_text_sm}, satisfy
\begin{align}\label{eq:matignon_endemic_compact_sm}
  \big|\arg(\lambda)\big| > \frac{\alpha\pi}{2},
\end{align}
then the rumor-endemic equilibrium $\mathcal P^{**}$ is locally asymptotically stable.
\end{theorem}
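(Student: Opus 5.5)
The plan is to reduce the statement to the Matignon criterion (Lemma~\ref{lemma:matignon_sm}) combined with a linearization principle for Caputo systems. I would work in the same way as the proof of Theorem~\ref{theorem:stability_sm}.

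\textbf{Step 1: perturbation form.} Set $x(t):=\operatorname{vec}(\mathcal P(t)-\mathcal P^{**})\in\mathbb R^{3N}$, ordered column-wise as $(\mathcal P^{US},\mathcal P^{AS},\mathcal P^{AI})$. Since the Caputo derivative of a constant vanishes and $F(\mathcal P^{**})=0$ by \eqref{eq:endemic_equilibrium_equation}, system \eqref{eq:main_component_sm} becomes
\begin{align}
{}^C_0\mathcal D_t^\alpha x(t)=\mathcal J^{*}x(t)+g(x(t)).
\end{align}
Here $\mathcal J^{*}=DF(\mathcal P^{**})$, and $g$ collects the remainder. Because $F$ is a polynomial of degree two in $\mathcal P$ (the pressures $M_i^A,M_i^I$ are linear in $\mathcal P$ and multiply $p_i^{US},p_i^{AS}$), $g$ is exactly quadratic. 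Hence $\|g(x)\|\le C\|x\|^2$, with $C$ depending only on $\lambda,\beta^U,\beta^A$ and the row sums of $\mathbf A,\mathbf B$.

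\textbf{Step 2: the Jacobian.} I would check that $DF(\mathcal P^{**})$ coincides with the block form \eqref{eq:Jstar_compact_text_sm}--\eqref{eq:Bmn_compact_text_sm}. This is a routine differentiation. For example, $\partial/\partial p_j^{AI}$ of the $US$ equation gives $-\lambda p_i^{US**}a_{ij}-\beta^U p_i^{US**}b_{ij}$, which is the source of the $D_u\mathbf A$ and $D_u\mathbf B$ terms. The $\mu\mathbf I_N$ and $\delta\mathbf I_N$ entries should be cross-checked against the simplex constraint; this is a bookkeeping point, not a conceptual one. The constraint $\sum_X p_i^X=1$ is preserved by Theorem~\ref{theorem:positivity_boundedness_sm}, so perturbations live in an invariant subspace. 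The spectral hypothesis is imposed on all of $\mathcal J^{*}$, so no restriction argument is needed.

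\textbf{Step 3: linear decay.} By Lemma~\ref{lemma:matignon_sm}(i), the hypothesis $\sigma(\mathcal J^{*})\subset S_\alpha$ gives that the linear part is $\mathcal O(t^{-\alpha})$-asymptotically stable. Concretely, I would use the solution representation via the Mittag--Leffler matrix function,
\begin{align}
x(t)=E_\alpha(\mathcal J^{*}t^\alpha)x(0)+\int_0^t (t-s)^{\alpha-1}E_{\alpha,\alpha}(\mathcal J^{*}(t-s)^\alpha)g(x(s))\,ds .
\end{align}
Under the sector condition, this yields bounds
\begin{align}
\|E_\alpha(\mathcal J^{*}t^\alpha)\|\le \frac{K}{1+t^\alpha},\qquad \|t^{\alpha-1}E_{\alpha,\alpha}(\mathcal J^{*}t^\alpha)\|\le \frac{K t^{\alpha-1}}{1+t^{2\alpha}} .
\end{align}
The second kernel is integrable on $(0,\infty)$.

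\textbf{Step 4 (main obstacle): nonlinear stability.} The difficulty is passing from linear to nonlinear stability, since Caputo flows lack the semigroup property and the decay is only algebraic. I would follow the linearized-stability argument of Li--Ma / Cong et al. I would set up a fixed-point map on the space of continuous functions with $\sup_t(1+t^\alpha)\|x(t)\|<\infty$. The quadratic bound on $g$ and the convolution estimate
\begin{align}
\int_0^t\frac{(t-s)^{\alpha-1}}{1+(t-s)^{2\alpha}}\frac{ds}{(1+s^\alpha)^2}\lesssim \frac1{1+t^\alpha}
\end{align}
show that this map is a contraction on a small ball when $\|x(0)\|$ is small. Its fixed point is the solution, by uniqueness from Theorem~\ref{theorem:existence_uniqueness_sm} and continuation. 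It stays within $\varepsilon$ of $\mathcal P^{**}$ and satisfies $x(t)\to0$, which is local asymptotic stability. Positivity from Theorem~\ref{theorem:positivity_boundedness_sm} guarantees the trajectory remains in $\Omega$, where the constant $C$ is valid.
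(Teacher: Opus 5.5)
Your Steps 3--4 are a sound linearized-stability argument: Mittag--Leffler bounds combined with a weighted fixed point. They go beyond the paper, whose proof only writes ${}^C_0\mathcal D_t^\alpha\Delta\mathcal P=\mathcal J^{*}\Delta\mathcal P$, cites Lemma~\ref{lemma:matignon_sm}, and never treats the quadratic remainder. The gap is in Step 2, where the hypothesis is tied to the dynamics, and both points you wave through there fail.

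First, $DF(\mathcal P^{**})$ is not the block matrix \eqref{eq:Bmn_compact_text_sm}. Your own entry $\partial F^{US}_i/\partial p^{AI}_j=-\lambda p_i^{US**}a_{ij}-\beta^U p_i^{US**}b_{ij}$ contains no $\mu$, yet $B_{13}$ carries $+\mu\mathbf I_N$. Conversely, the term $\mu p_i^{AI}$ of the $AS$ equation contributes $+\mu\mathbf I_N$ to the true $(2,3)$ block, and this is missing from $B_{23}$; the $\delta\mathbf I_N$ entries are fine. This is not a simplex artifact. The difference $\mu(E_{13}-E_{23})\otimes\mathbf I_N$ sends $(u,a,x)$ to $(\mu x,-\mu x,0)$, which is nonzero on tangent directions of the simplex. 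After eliminating $\Delta\mathcal P^{US}$, it shifts the block $B_{23}-B_{21}$ of the reduced matrix by $\mu\mathbf I_N$, while $B_{32}-B_{31}=-(\beta^U-\beta^A)D_{M^I}\neq0$ at an endemic equilibrium. For example, on the symmetric mode of a two-node network the two reduced $2\times2$ matrices have equal trace but determinants differing by $\mu(\beta^U-\beta^A)p^{AI**}$. So the stated hypothesis is not the spectral condition your Step 3 needs.

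Second, the restriction argument you dismiss is the entire content. Since $F^{US}_i+F^{AS}_i+F^{AI}_i\equiv0$ on $\mathbb R^{N\times3}$, one has $[\mathbf I_N\ \mathbf I_N\ \mathbf I_N]\,\mathcal J^{*}=0$. This also holds for the paper's blocks, since $\sum_m B_{mn}=0$ for each $n$. Hence $0$ is an eigenvalue of $\mathcal J^{*}$ with geometric multiplicity at least $N$, and $0\notin S_\alpha$. The hypothesis on all eigenvalues of $\mathcal J^{*}$ can therefore never hold, and Lemma~\ref{lemma:matignon_sm}(i) never applies to the full $3N\times3N$ system. The statement is vacuously true, and the paper's proof shares both defects. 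Consistently, $\mathcal P^{**}$ is not asymptotically stable in $\mathbb R^{3N}$, because each $\sum_X p_i^X$ is conserved.

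The repair is to work in $W=\{\Delta\mathcal P:\ \Delta p_i^{US}+\Delta p_i^{AS}+\Delta p_i^{AI}=0\ \forall i\}$. This subspace contains $\mathcal P_0-\mathcal P^{**}$ for every $\mathcal P_0\in\Omega$, the range of $DF(\mathcal P^{**})$, and all values of $g$. Your variation-of-constants formula and fixed-point argument therefore run verbatim inside $W$. The hypothesis must then be the sector condition on $\sigma\bigl(DF(\mathcal P^{**})|_W\bigr)$. Equivalently, it is imposed on the $2N\times2N$ matrix $\begin{pmatrix}B_{22}-B_{21}&B_{23}-B_{21}\\ B_{32}-B_{31}&B_{33}-B_{31}\end{pmatrix}$ built from the corrected blocks, with $\mu\mathbf I_N$ moved from $B_{13}$ to $B_{23}$. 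With that formulation, your Step 4 supplies exactly the nonlinear step the paper omits.
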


\begin{proof}
The Jacobian matrix $\mathcal J^{*}$ at the equilibrium $\mathcal P^{**}$ is obtained by linearizing system~\eqref{eq:main_component_sm}, resulting in the compact representation given in equations~\eqref{eq:Jstar_compact_text_sm}--\eqref{eq:Bmn_compact_text_sm}. Therefore, the perturbation $\Delta\mathcal P(t) = \mathcal P(t) - \mathcal P^{**}$ satisfies the linearized Caputo system
\begin{align}\label{eq:linearized_endemic_final_sm}
  {}^C_0\mathcal{D}_t^\alpha\,\Delta\mathcal P(t) = \mathcal J^{*}\,\Delta\mathcal P(t).
\end{align}
By Lemma~\ref{lemma:matignon_sm}, the equilibrium $\mathcal P^{**}$ is locally asymptotically stable whenever every eigenvalue $\lambda$ of $\mathcal J^{*}$ satisfies \(|\arg(\lambda)|>\alpha\pi/2\). This completes the proof.
\end{proof}

\ifCLASSOPTIONcaptionsoff
  \newpage
\fi

\section{Block-Adaptive FFT Solver Implementation Details}
\label{sec:solver_details}

This section provides the complete implementation details of the Block-Adaptive FFT solver referenced in Section~III-A of the main manuscript. The solver handles Caputo FODEs of the form
\begin{align}\label{eq:caputo_FODE_general_sm}
  {}^C_0\mathcal D_t^{\alpha} \mathcal P(t) = F(\mathcal P(t)), 
  \qquad \mathcal P(0) = \mathcal P_0.
\end{align}

The pseudocode of the solver is given in Algorithm~\ref{alg:caputo_solver_sm}.

\begin{algorithm}[!h]
    \caption{Block-Adaptive FFT Solver for Caputo FODEs}
    \label{alg:caputo_solver_sm}
    \renewcommand{\algorithmicrequire}{\textbf{Input:}}
    \renewcommand{\algorithmicensure}{\textbf{Output:}}
    
    \begin{algorithmic}[1]
        \REQUIRE $F$, $\alpha$, $y_0$, $T$, $N$, cfg
        \ENSURE $t$, $y$
        
        \STATE Initialize time grid, solution array, and history
        \STATE $M \gets \text{cfg.block\_steps}$, $n_0 \gets 0$
        
        \WHILE{$n_0 < N$}
            \STATE $M_{\text{cur}} \gets \min(M, N - n_0)$
            \STATE Compute convolution weights $b$
            
            \IF{$n_0 > 0$}
                \STATE $H \gets \text{FFTConv}(b, \text{history})[n_0:n_0+M_{\text{cur}}]$
            \ELSE
                \STATE $H \gets 0$
            \ENDIF
            
            \STATE Predictor: $y_{\text{pred}} \gets y_0 + \text{factor} \cdot (H + \text{FFTConv}(b_{\text{loc}}, F(\text{const})))$
            
            \FOR{$it = 1$ to $\text{max\_iter}$}
                \STATE Corrector: $y_{\text{corr}} \gets y_0 + \text{factor} \cdot (H + \text{FFTConv}(b_{\text{loc}}, F(y)))$
                \STATE Damped update: $y_{\text{new}} \gets y + \omega \cdot (y_{\text{corr}} - y)$
                \STATE Check convergence and adjust $\omega$
            \ENDFOR
            
            \IF{converged}
                \STATE Update solution and history
                \STATE Adapt $M$ based on performance
                \STATE $n_0 \gets n_0 + M_{\text{cur}}$
            \ELSE
                \STATE Reduce $M$ and retry
            \ENDIF
        \ENDWHILE
        
        \RETURN $t$, $y$
    \end{algorithmic}
\end{algorithm}

The solver uses a blockwise adaptive memory truncation strategy: each block of $M_{\text{cur}}$ steps is advanced using FFT-based convolution for the history term, and a predictor--corrector iteration with damping is applied within each block to ensure convergence. The numerical treatment of convolution integrals in this solver follows the convolution quadrature framework of Lubich~\cite{lubich1988convolutionI,lubich1988convolutionII}, and the predictor--corrector iteration is based on the Adams-type method for fractional differential equations proposed by Diethelm et al.~\cite{diethelm2002predictor}.

Specifically, we discretize the Caputo derivative using the L1-type scheme on a uniform grid $t_n = n h$ with $h = T/N$. The numerical solution satisfies $y(t_n) = y_0 + \frac{h^\alpha}{\Gamma(\alpha+1)} \sum_{j=0}^{n-1} b_{n-j} F(y(t_j), t_j)$, where the convolution weights are $b_k = (k+1)^\alpha - k^\alpha$. The history sum is accelerated by FFT via the function \texttt{FFTConv}, reducing the cost per block from $O(M_{\text{cur}}^2)$ to $O(M_{\text{cur}} \log M_{\text{cur}})$. 

Inside each block, a predictor--corrector iteration is performed. The predictor assumes a constant right-hand side equal to the value at the block start, giving an initial guess $y_{\text{pred}}$. The corrector recomputes the right-hand side using the current $y$ and updates via a damped Picard iteration: $y_{\text{new}} = y_{\text{old}} + \omega\,(y_{\text{fp}} - y_{\text{old}})$. The damping factor $\omega$ is initialized to $0.7$ and is reduced by a factor of $0.8$ whenever the error does not decrease. Convergence is declared when the maximum absolute change between iterations falls below a tolerance $\text{tol}=10^{-4}$ (or $10^{-10}$ in high-precision tests). If convergence is not achieved after $80$ iterations, the block size is reduced and the block is recomputed. 

The block size $M_{\text{cur}}$ is adapted dynamically: it is increased by a factor of $1.5$ (up to a prescribed maximum) when the current block converges quickly and is decreased by a factor of $0.5$ (down to a prescribed minimum) when convergence is slow or fails. The default configuration uses an initial block of $2048$ steps with bounds $128\le M_{\text{cur}}\le 8192$, whereas the production runs reported in this work start from a smaller block of $16$ steps (with an upper bound of $2048$) so as to guarantee convergence for the network sizes considered. This mechanism maintains high efficiency while guaranteeing numerical stability.

The overall computational complexity is approximately $O(d N \log N)$, where $d = 3N$ is the total number of state variables, because each FFT operation costs $O(M_{\text{cur}} \log M_{\text{cur}})$ and the total number of steps $N$ is covered by a sequence of blocks. For the system sizes considered in this work ($N=250$--$1000$), the solver typically completes a simulation in a few minutes on a standard workstation.

\section{Parameter Sensitivity Analysis}
\label{sec:sensitivity_sm}

This section provides a supplementary parameter sensitivity analysis that complements the numerical study of Section~III of the main manuscript. To provide a systematic and quantitative comparison of the relative influence of each model parameter on the long-term rumor prevalence, we conduct a tornado-style sensitivity analysis synthesizing the simulation data from Section~III-C of the main manuscript and Section~12.3 of this Supplementary Material. For each parameter, the steady-state infection density $\rho^{AI}$ is evaluated at the minimum and maximum values within the scanned range while holding all other parameters fixed at their baseline values ($\alpha=0.8$, $\beta^U=0.6$, $\lambda=0.3$, $\delta=0.5$, $m_A=4$, $\mu=1.0$, $\gamma=1/3$). The resulting tornado plot is shown in Fig.~S1.

\begin{figure}[htbp]
    \centering
    \includegraphics[width=0.8\linewidth]{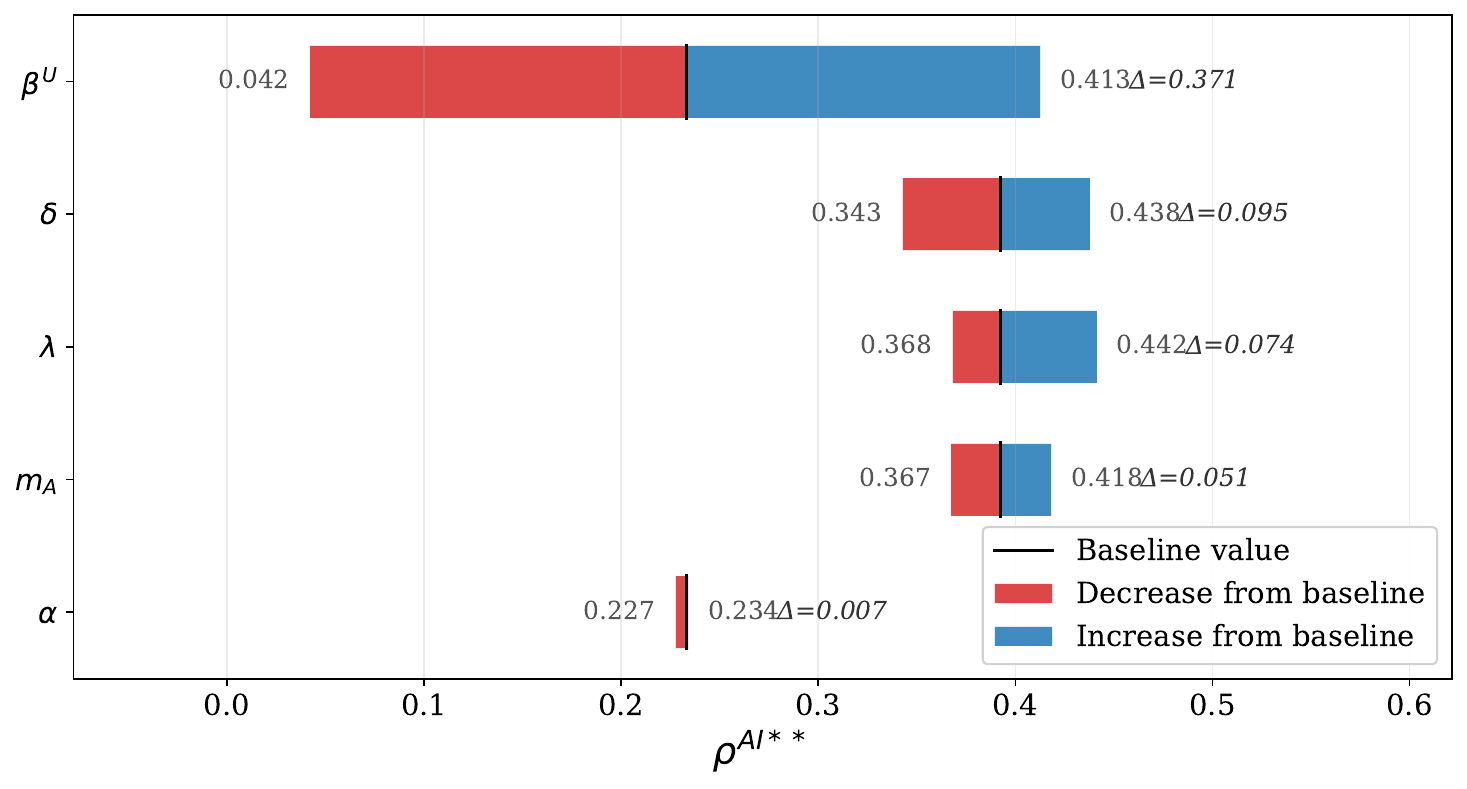}
    \caption{
        \textbf{Parameter sensitivity tornado plot.}
        For each of the five key parameters---$\alpha$ (fractional order), $\beta^U$ (infection rate for unaware individuals), $\lambda$ (awareness spreading rate), $\delta$ (forgetting rate), and $m_A$ (average degree of the awareness layer)---the horizontal bar spans the range of steady-state infection density $\rho^{AI}$ resulting from varying that parameter across its full scanned interval while keeping all other parameters at their baseline values. The black vertical tick within each bar marks the $\rho^{AI}$ value at the baseline parameter setting. The bar is colored red to the left of the baseline (decrease from baseline) and blue to the right (increase from baseline). The scanned ranges and data sources are as follows: $\alpha\in\{0.5,0.6,0.7,0.8,0.9,0.95\}$ and $\beta^U\in\{0.3,0.4,0.5,0.6,0.8,1.0\}$ from the fractional-order simulations on BA networks with $N=1000$ (Section~III-C); $\lambda\in[0.1,1.0]$, $\delta\in[0.1,1.0]$, and $m_A\in\{2,4,8\}$ from the integer-order mean-field computations on BA networks with $N=250$ (Section~12.3 of this Supplementary Material). The parameter with the largest span exerts the strongest influence on the steady-state infection density. The span $\Delta$ is reported to the right of each bar.
    }
    \label{fig:tornado_sm}
\end{figure}

The tornado plot quantitatively confirms several observations already suggested by the individual experiments. The infection rate $\beta^U$ is the dominant determinant of the steady-state rumor prevalence, with its influence ($\Delta=0.371$) exceeding that of all other parameters combined. In stark contrast, the fractional order $\alpha$ exerts a negligible influence on the equilibrium ($\Delta=0.007$), reinforcing the theoretical conclusion that memory effects shape the transient journey of the outbreak but not its final destination. Among the awareness-layer parameters, the forgetting rate $\delta$ ($\Delta=0.095$) and the awareness spreading rate $\lambda$ ($\Delta=0.074$) exhibit moderate but non-negligible impact, with $\delta$ being slightly more influential than $\lambda$, suggesting that sustained retention of corrective information is at least as important as its rapid dissemination. The awareness-layer degree $m_A$ has the smallest effect among the varying parameters ($\Delta=0.051$), consistent with the observation in Section~12.3 that network topology provides a secondary modulation compared with the dynamical parameters of the awareness process. These quantitative rankings provide practical guidance for rumor management: interventions targeting the infection rate (e.g., reducing contact frequency or improving individual resistance to misinformation) yield the largest reductions in final rumor prevalence, while awareness campaigns benefit from strategies that enhance both the reach ($\lambda$) and the durability ($\delta$) of corrective information.

\section{Global Stability Analysis}
\label{sec:global_stability}

This section analyzes the global asymptotic stability (GAS) of the rumor-free equilibrium (RFE) of the fractional-order UAU-SIS system \eqref{eq:main_sm}. Throughout, $p_i^{US}$, $p_i^{AS}$, and $p_i^{AI}$ denote the marginal probabilities that node $i$ is unaware--susceptible, aware--susceptible, and aware--infected, respectively, and
\begin{align}\label{eq:sm_fields}
  M_i^{A}=\sum_{j=1}^{N}a_{ij}\big(p_j^{AS}+p_j^{AI}\big),\qquad
  M_i^{I}=\sum_{j=1}^{N}b_{ij}\,p_j^{AI},
\end{align}
so that the componentwise system reads
\begin{align}\label{eq:sm_x}
  {}^C_0\mathcal D_t^{\alpha} p_i^{AI}
  &= M_i^I\big(\beta^U p_i^{US} + \beta^A p_i^{AS}\big) - \mu p_i^{AI},\\[2pt]
  {}^C_0\mathcal D_t^{\alpha} p_i^{AS}
  &= \lambda M_i^A p_i^{US} - \big(\delta + \beta^A M_i^I\big) p_i^{AS} + \mu p_i^{AI}. \label{eq:sm_y}
\end{align}
The feasible region $\Omega = \{\mathcal P\in\mathbb R^{N\times 3}_+ : p_i^{US}+p_i^{AS}+p_i^{AI}=1,\ \forall i\}$ is positively invariant by Theorem~\ref{theorem:positivity_boundedness_sm}. We write $\rho(\cdot)$ for the spectral radius and set
\begin{align}\label{eq:sm_barR0}
  \bar{\mathcal R}_0 := \frac{\beta^U \rho(\mathbf B)}{\mu}.
\end{align}
Since $\mathrm{diag}(p_i^{US*}\beta^U + p_i^{AS*}\beta^A)\le \beta^U\mathbf I_N$ componentwise and the spectral radius is monotone with respect to the componentwise order on nonnegative matrices, we have $\mathcal R_0\le\bar{\mathcal R}_0$.

\subsection{Fractional Comparison Lemmas}

The proofs below use a positivity/comparison principle for fractional linear systems whose coefficient matrix is Metzler, i.e., all off-diagonal entries are nonnegative.

\begin{lemma}[Positivity with nonnegative forcing]\label{lem:sm_metzler}
Let $M$ be a Metzler matrix and let $h\colon[0,\infty)\to\mathbb R_+^n$ be continuous. If $w\in C^1$ satisfies
\begin{align}\label{eq:sm_metzler_eq}
  {}^C_0\mathcal D_t^{\alpha} w(t) = M w(t) + h(t),\qquad w(0)\ge 0,
\end{align}
then $w(t)\ge 0$ componentwise for every $t\ge 0$.
\end{lemma}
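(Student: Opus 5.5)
The plan is to reuse the perturbation-and-first-crossing argument from the proof of Theorem~\ref{theorem:positivity_boundedness_sm}, now for a linear Metzler system with forcing. Write $M = D + M_{\mathrm{off}}$, where $D$ is the diagonal part (entries of arbitrary sign) and $M_{\mathrm{off}}\ge0$. Fix $T>0$ and $\varepsilon>0$, and consider the perturbed problem
\[
{}^C_0\mathcal D_t^{\alpha} w^\varepsilon = M w^\varepsilon + h + \varepsilon\mathbf 1,\qquad w^\varepsilon(0)=w(0)+\varepsilon\mathbf 1 .
\]
Because the system is linear with continuous forcing, it has a unique global solution, given through the integral form and the Mittag--Leffler resolvent. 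The proof then has three steps: show $w^\varepsilon\ge0$ on $[0,T]$, show $w^\varepsilon\to w$ uniformly on $[0,T]$ as $\varepsilon\to0$, and let $T$ be arbitrary.

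\textbf{Positivity of $w^\varepsilon$.} Suppose $w^\varepsilon$ does not stay nonnegative, and let $\sigma$ be the first time some component $w_k^\varepsilon$ becomes negative. Since $w_k^\varepsilon(0)\ge\varepsilon>0$, we have $\sigma>0$, $w^\varepsilon\ge0$ on $[0,\sigma]$, $w_k^\varepsilon(\sigma)=0$, and $w_k^\varepsilon\not\equiv0$ on $[0,\sigma]$. The integration-by-parts identity from the proof of Theorem~\ref{theorem:positivity_boundedness_sm} gives
\[
{}^C_0\mathcal D_t^{\alpha}w_k^\varepsilon(\sigma)=-\frac{1}{\Gamma(1-\alpha)}\Bigl[\frac{w_k^\varepsilon(0)}{\sigma^\alpha}+\alpha\int_0^\sigma w_k^\varepsilon(s)(\sigma-s)^{-\alpha-1}ds\Bigr]<0 .
\]
The integral converges because $w_k^\varepsilon\in C^1$ vanishes at $\sigma$.

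For the right-hand side at $\sigma$, the diagonal term is $M_{kk}w_k^\varepsilon(\sigma)=0$, so it drops out whatever the sign of $M_{kk}$. The remaining terms are $\sum_{j\ne k}M_{kj}w_j^\varepsilon(\sigma)\ge0$, $h_k(\sigma)\ge0$ and $\varepsilon$, so the right-hand side is at least $\varepsilon>0$. This contradicts the negative sign of the left-hand side.

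I expect this step to be the main obstacle. One point to check is that the solution $w^\varepsilon$ of the perturbed linear problem is $C^1$ up to $\sigma$, or at least regular enough on $(0,\sigma]$ for the integration by parts. If needed, I would work directly with the integral (Volterra) form and the monotonicity of the kernel, using the standing regularity assumption of the paper. The other point is that "first time" must be made precise: $\sigma$ is the infimum of the negativity set, and continuity gives $w_k^\varepsilon(\sigma)=0$ for a component that is negative immediately after $\sigma$.

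\textbf{Passing to the limit.} The difference $e^\varepsilon=w^\varepsilon-w$ solves
\[
{}^C_0\mathcal D_t^{\alpha}e^\varepsilon=Me^\varepsilon+\varepsilon\mathbf 1,\qquad e^\varepsilon(0)=\varepsilon\mathbf 1 .
\]
Taking norms in its integral form and applying the fractional Gr\"onwall inequality gives
\[
\|e^\varepsilon(t)\|\le \varepsilon\Bigl(1+\tfrac{t^\alpha}{\Gamma(1+\alpha)}\Bigr)E_\alpha\bigl(\|M\|t^\alpha\bigr)\cdot C ,
\]
which tends to $0$ uniformly on $[0,T]$. Hence $w=\lim_{\varepsilon\to0}w^\varepsilon\ge0$ on $[0,T]$. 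Since $T$ is arbitrary, $w(t)\ge0$ for all $t\ge0$.
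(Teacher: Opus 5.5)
Your proposal is correct and follows the paper's proof essentially step for step. It uses the same $\varepsilon\mathbf 1$ perturbation of both the forcing and the initial value, and the same first-crossing time $\sigma$ with the integration-by-parts sign argument, where the diagonal term drops out because $w_k^\varepsilon(\sigma)=0$; it then passes to the limit with the same fractional Gr\"onwall estimate. The regularity point you flag is harmless. Indeed $w^\varepsilon=w+e^\varepsilon$ with $e^\varepsilon(t)=\varepsilon\bigl(E_\alpha(Mt^\alpha)+t^\alpha E_{\alpha,\alpha+1}(Mt^\alpha)\bigr)\mathbf 1$, which is smooth on $(0,\infty)$ with a derivative that is integrable at $0$, and this is all the integration by parts needs.
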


\begin{proof}
For $\varepsilon>0$, let $w^{\varepsilon}$ solve \eqref{eq:sm_metzler_eq} with $h$ replaced by $h+\varepsilon\mathbf 1$ and initial value $w(0)+\varepsilon\mathbf 1$, where $\mathbf 1=(1,\dots,1)^\top$. We first show that $w^{\varepsilon}$ is nonnegative on every interval $[0,T]$. Suppose not, and let
\begin{align}
  \sigma=\inf\big\{t\in(0,T]:\ \min_{i}w_i^{\varepsilon}(t)<0\big\}.
\end{align}
By continuity, $w^{\varepsilon}(\sigma)\ge 0$, and there exists a component $i$ with $w_i^{\varepsilon}(\sigma)=0$ while $w_i^{\varepsilon}(t)<0$ for $t$ arbitrarily close to $\sigma$ from the right. Since $w_i^{\varepsilon}$ is $C^1$ on $(0,\infty)$ and $w_i^{\varepsilon}(0)>0$, it is not identically zero on $[0,\sigma]$. As $w_i^{\varepsilon}\ge 0$ on $[0,\sigma]$ and $w_i^{\varepsilon}(\sigma)=0$, the fractional integration-by-parts identity gives
\begin{multline}\label{eq:sm_ipbp}
  {}^C_0\mathcal D_t^{\alpha}w_i^{\varepsilon}(\sigma)
  = -\frac{1}{\Gamma(1-\alpha)}\Bigg[\frac{w_i^{\varepsilon}(0)}{\sigma^{\alpha}}\\
  + \alpha\int_0^{\sigma}\frac{w_i^{\varepsilon}(s)}{(\sigma-s)^{\alpha+1}}\,ds\Bigg] < 0,
\end{multline}
the boundary term at $s=\sigma$ vanishing because $w_i^{\varepsilon}(\sigma)=0$ and $w_i^{\varepsilon}(s)=O(\sigma-s)$. On the other hand, at $t=\sigma$ the right-hand side of the $i$-th component of \eqref{eq:sm_metzler_eq} equals
\begin{align}\label{eq:sm_rhs}
  \sum_{j\ne i} m_{ij} w_j^{\varepsilon}(\sigma) + h_i(\sigma) + \varepsilon \ge \varepsilon > 0,
\end{align}
since $m_{ij}\ge 0$ for $j\ne i$, $w_j^{\varepsilon}(\sigma)\ge 0$, and $w_i^{\varepsilon}(\sigma)=0$. This contradicts \eqref{eq:sm_ipbp}. Hence $w^{\varepsilon}\ge 0$ on $[0,T]$ for every $T>0$. The fractional Gr\"onwall inequality gives $w^{\varepsilon}\to w$ uniformly on compact intervals as $\varepsilon\to 0$, and therefore $w(t)\ge 0$ for all $t\ge 0$.
\end{proof}

\begin{corollary}[Fractional comparison]\label{cor:sm_comparison}
Let $M$ be Metzler. If $x\in C^1$ satisfies ${}^C_0\mathcal D_t^{\alpha} x \le M x$ componentwise for all $t\ge 0$, and if $z$ solves ${}^C_0\mathcal D_t^{\alpha} z = M z$ with $z(0)\ge x(0)$, then $x(t)\le z(t)$ for all $t\ge 0$.
\end{corollary}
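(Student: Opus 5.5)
The plan is to reduce the comparison statement to the positivity result of Lemma~\ref{lem:sm_metzler}, applied to the difference $w:=z-x$.

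\textbf{Step 1: the equation for the difference.} By linearity of the Caputo derivative on $C^1$ functions,
\begin{align*}
{}^C_0\mathcal D_t^{\alpha} w(t) = M z(t) - {}^C_0\mathcal D_t^{\alpha} x(t).
\end{align*}
Define the defect $h(t):=Mx(t)-{}^C_0\mathcal D_t^{\alpha}x(t)$. The hypothesis ${}^C_0\mathcal D_t^{\alpha} x\le Mx$ gives $h(t)\ge 0$ componentwise. Substituting, we obtain
\begin{align*}
{}^C_0\mathcal D_t^{\alpha} w(t)=Mw(t)+h(t),\qquad w(0)=z(0)-x(0)\ge 0,
\end{align*}
which is exactly the form \eqref{eq:sm_metzler_eq}. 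Here $M$ is Metzler by assumption, and $w\in C^1$ because $x\in C^1$ and $z$, the solution of a linear Caputo system, is $C^1$ on $(0,\infty)$ and continuous on $[0,\infty)$.

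\textbf{Step 2: apply Lemma~\ref{lem:sm_metzler}.} Once its hypotheses are checked, the lemma yields $w(t)\ge 0$ for all $t\ge0$, that is, $x(t)\le z(t)$.

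\textbf{Main obstacle: regularity of the forcing.} Lemma~\ref{lem:sm_metzler} asks for $h$ to be continuous, but the corollary only assumes $x\in C^1$. For such $x$, the Caputo derivative ${}^C_0\mathcal D_t^{\alpha}x=I^{1-\alpha}x'$ is the Riemann--Liouville integral of a continuous function, which is itself continuous on $[0,\infty)$. So $h$ is continuous and nonnegative, and the lemma applies.

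A secondary issue is how the proof of Lemma~\ref{lem:sm_metzler} uses $h$. There, $h$ enters only through the integral equation $w=w(0)+I^{\alpha}(Mw+h)$ and the pointwise bound $h_i(\sigma)\ge 0$. Both survive if $h$ is merely locally integrable and nonnegative, provided the perturbed solutions $w^\varepsilon$ remain $C^1$ near the first touching time $\sigma$. I would remark that for the applications below, where $x$ is a component block of the model solution, this regularity follows from the standing assumption $\mathcal P\in C^1$.

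Finally, the $\varepsilon$-perturbation and Gr\"onwall passage are already contained in the lemma, so nothing further is needed.
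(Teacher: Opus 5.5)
Your proof is correct and is the same as the paper's: set $w=z-x$, note that ${}^C_0\mathcal D_t^{\alpha}w=Mw+h$ with the nonnegative defect $h=Mx-{}^C_0\mathcal D_t^{\alpha}x$ and $w(0)\ge 0$, and apply Lemma~\ref{lem:sm_metzler}. Your extra regularity check, namely that $h$ is continuous because ${}^C_0\mathcal D_t^{\alpha}x=I^{1-\alpha}x'$ and that $z$ is only $C^1$ on $(0,\infty)$, is a refinement the paper leaves implicit; strictly it gives $w\in C^1(0,\infty)\cap C[0,\infty)$ rather than $w\in C^1$, which is exactly the regularity the lemma's proof actually uses.
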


\begin{proof}
Let $w=z-x$. Then
\begin{align}
  {}^C_0\mathcal D_t^{\alpha} w = M w + h,
  \qquad h := M x - {}^C_0\mathcal D_t^{\alpha} x \ge 0,
\end{align}
and $w(0)\ge 0$; Lemma~\ref{lem:sm_metzler} applies.
\end{proof}

\begin{lemma}[Stability of the linear fractional system]\label{lem:sm_linear}
If every eigenvalue $\lambda$ of $M$ satisfies $\mathrm{Re}(\lambda)<0$, then the zero solution of ${}^C_0\mathcal D_t^{\alpha}z = Mz$ is asymptotically stable for every $\alpha\in(0,1)$, i.e., $z(t)\to 0$ for every initial condition.
\end{lemma}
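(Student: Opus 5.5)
The plan is to deduce the statement from the Matignon-type criterion, Lemma~\ref{lemma:matignon_sm}(i), specialised to $q=\alpha\in(0,1)$. The only step is to check that the spectral hypothesis of that lemma follows from $\mathrm{Re}(\lambda)<0$. Let $\lambda$ be an eigenvalue of $M$ with $\mathrm{Re}(\lambda)<0$. Then $\lambda\neq 0$ and $\lambda$ lies in the open left half-plane, so its principal argument satisfies $|\arg(\lambda)|\in(\pi/2,\pi]$. Since $\alpha<1$, we have $\alpha\pi/2<\pi/2<|\arg(\lambda)|$, and therefore $\sigma(M)\subset S_\alpha=\{\lambda\in\mathbb C:|\arg(\lambda)|>\alpha\pi/2\}$. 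This is the same argument already used in the proof of Corollary~\ref{corollary:R0_less_1_sm}.

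Lemma~\ref{lemma:matignon_sm}(i) then yields $\mathcal O(t^{-\alpha})$-asymptotic stability of the zero solution of ${}^C_0\mathcal D_t^{\alpha}z=Mz$. The statement also asks that $z(t)\to 0$ for \emph{every} initial condition, not merely for small ones. Because the system is linear, local attractivity is automatically global: rescaling any initial datum $z_0$ by a small factor and using uniqueness of solutions (Theorem~\ref{theorem:existence_uniqueness_sm}, which applies because a linear right-hand side is globally Lipschitz) transfers the decay to $z_0$ itself.

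If a more self-contained argument is preferred, write the solution explicitly as $z(t)=E_\alpha(Mt^\alpha)z_0$ and bound it through the Jordan form of $M$. Every eigenvalue lies in $S_\alpha$, so each entry of $E_\alpha(Mt^\alpha)$ is a finite combination of derivatives of $E_\alpha(\lambda t^\alpha)$. The standard asymptotic expansion $E_\alpha(w)=-\sum_{k\ge1}w^{-k}/\Gamma(1-\alpha k)$, valid as $|w|\to\infty$ with $|\arg w|>\alpha\pi/2$, shows that each such entry is $\mathcal O(t^{-\alpha})$, and hence $z(t)\to 0$.

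The main obstacle is essentially bookkeeping rather than mathematics. One must confirm that ``asymptotically stable'' in Lemma~\ref{lemma:matignon_sm} already means convergence to zero from all initial data, which holds for linear systems, and that the argument convention used there is the principal one in $(-\pi,\pi]$. Beyond that, the result is immediate from Lemma~\ref{lemma:matignon_sm}.
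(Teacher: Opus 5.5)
Your proposal is correct and follows the paper's proof. The paper likewise notes that $\mathrm{Re}(\lambda)<0$ forces $|\arg(\lambda)|>\pi/2>\alpha\pi/2$ and then invokes the Matignon-type criterion (Lemma~\ref{lemma:matignon_sm}). Your extra remarks on globality and on the Mittag--Leffler representation are sound, with two caveats: the rescaling step is unnecessary, since for a linear system the criterion already gives $z(t)=E_\alpha(Mt^\alpha)z_0\to 0$ for every $z_0$; and Theorem~\ref{theorem:existence_uniqueness_sm} concerns only the SIS-UAU vector field on $\Omega$, so it is not the right reference for uniqueness of solutions of a general linear system.
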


\begin{proof}
For every eigenvalue $\lambda$ with $\mathrm{Re}(\lambda)<0$ we have $|\arg(\lambda)|>\pi/2>\alpha\pi/2$; the Matignon-type criterion \cite{brandibur2021stability, li2010stability} yields asymptotic stability of the linear fractional system.
\end{proof}

\begin{lemma}[Scalar relaxation inequality]\label{lem:sm_scalar}
Let $\gamma>0$ and let $Q\in C([0,\infty);\mathbb R_+)$ satisfy $Q(t)\to 0$ as $t\to\infty$. If $s\in C^1$ satisfies
\begin{align}\label{eq:sm_relax}
  {}^C_0\mathcal D_t^{\alpha}s(t) \le -\gamma s(t) + Q(t),
\end{align}
then $\limsup_{t\to\infty}s(t)\le 0$. In particular, if $Q\equiv 0$, then $s(t)\le s(0)E_\alpha(-\gamma t^\alpha)\to 0$.
\end{lemma}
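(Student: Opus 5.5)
The argument compares $s$ with the solution of a linear scalar Caputo equation, using Corollary~\ref{cor:sm_comparison} in the one-dimensional case. A $1\times1$ matrix $M=-\gamma$ is trivially Metzler.

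\emph{Case $Q\equiv0$.} Let $z$ solve ${}^C_0\mathcal D_t^{\alpha}z=-\gamma z$ with $z(0)=s(0)$. Its explicit solution is $z(t)=s(0)E_\alpha(-\gamma t^\alpha)$. Corollary~\ref{cor:sm_comparison} then gives $s(t)\le s(0)E_\alpha(-\gamma t^\alpha)$ for all $t\ge0$. Since $E_\alpha(-\gamma t^\alpha)\to0$ as $t\to\infty$ (it decays like $t^{-\alpha}/(\gamma\Gamma(1-\alpha))$), the claim follows. This is also Lemma~\ref{lem:sm_linear} with $M=-\gamma$.

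\emph{General case.} Let $z$ solve the forced linear problem ${}^C_0\mathcal D_t^{\alpha}z=-\gamma z+Q(t)$ with $z(0)=s(0)$. The difference $w=z-s$ satisfies ${}^C_0\mathcal D_t^{\alpha}w=-\gamma w+h$ with $h:=-\gamma s+Q-{}^C_0\mathcal D_t^{\alpha}s\ge0$ and $w(0)=0$. Lemma~\ref{lem:sm_metzler} therefore gives $s\le z$. It remains to show $z(t)\to0$. By the variation-of-constants formula for Caputo equations,
\begin{align*}
z(t)=s(0)E_\alpha(-\gamma t^\alpha)+\int_0^t (t-\tau)^{\alpha-1}E_{\alpha,\alpha}\bigl(-\gamma(t-\tau)^\alpha\bigr)Q(\tau)\,d\tau .
\end{align*}
The first term tends to zero. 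For the convolution, fix $\varepsilon>0$ and choose $T_\varepsilon$ with $Q\le\varepsilon$ on $[T_\varepsilon,\infty)$. Split the integral at $T_\varepsilon$.
\begin{itemize}
\item[(a)] On $[T_\varepsilon,t]$, bound $Q$ by $\varepsilon$. The kernel $k(r):=r^{\alpha-1}E_{\alpha,\alpha}(-\gamma r^\alpha)$ is nonnegative and satisfies $\int_0^\infty k(r)\,dr=1/\gamma$, since $\int_0^t k=\gamma^{-1}(1-E_\alpha(-\gamma t^\alpha))$. This piece is therefore at most $\varepsilon/\gamma$.
\item[(b)] On $[0,T_\varepsilon]$, bound $Q$ by $\sup_{[0,T_\varepsilon]}Q$. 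The remaining factor $\int_{t-T_\varepsilon}^{t}k(r)\,dr$ tends to zero as $t\to\infty$, because $k$ is integrable on $[0,\infty)$ (indeed $k(r)=O(r^{-\alpha-1})$).
\end{itemize}
Hence $\limsup_{t\to\infty} z(t)\le\varepsilon/\gamma$, and letting $\varepsilon\to0$ gives $\limsup_{t\to\infty}s(t)\le\limsup_{t\to\infty}z(t)\le0$.

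\emph{Main obstacle.} The key step is the use of the Mittag-Leffler kernel properties. These are nonnegativity of $k$ (complete monotonicity of $E_\alpha(-x)$ for $\alpha\in(0,1)$), the exact total mass $1/\gamma$, and the asymptotic decay of $k$ and of $E_\alpha(-\gamma t^\alpha)$. I would quote all of them from standard references such as Podlubny and Kilbas et al. rather than rederive them. A secondary technical point is that $z$ must be regular enough ($C^1$ on $(0,\infty)$) for Lemma~\ref{lem:sm_metzler} to apply. If this is problematic, I would run the comparison argument through the $\varepsilon$-perturbed problem exactly as in the proof of Lemma~\ref{lem:sm_metzler}.
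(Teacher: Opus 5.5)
Your proposal is correct and takes essentially the same route as the paper. You compare $s$ with the solution $z$ of ${}^C_0\mathcal D_t^{\alpha}z=-\gamma z+Q$, $z(0)=s(0)$, write $z$ by variation of constants, and use nonnegativity and total mass $1/\gamma$ of $k(r)=r^{\alpha-1}E_{\alpha,\alpha}(-\gamma r^\alpha)$ to kill the convolution. The differences are minor: you split at a fixed $T_\varepsilon$ rather than at $t/2$, and you apply Lemma~\ref{lem:sm_metzler} directly with forcing $h\ge0$, which is more accurate than the paper's citation of Corollary~\ref{cor:sm_comparison} (stated only for the unforced case); the $C^1$-regularity caveat you flag applies equally to the paper's argument.
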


\begin{proof}
Let $S$ solve ${}^C_0\mathcal D_t^{\alpha}S = -\gamma S + Q$ with $S(0)=s(0)$. Applying Corollary~\ref{cor:sm_comparison} to the scalar Metzler matrix $M=-\gamma$ gives $s(t)\le S(t)$. The variation-of-constants formula yields
\begin{align}\label{eq:sm_S}
  S(t) = s(0)E_\alpha(-\gamma t^\alpha) \notag\\
  + \int_0^t (t-u)^{\alpha-1}E_{\alpha,\alpha}\big(-\gamma(t-u)^\alpha\big)\,Q(u)\,du.
\end{align}
The kernel $k(u)=u^{\alpha-1}E_{\alpha,\alpha}(-\gamma u^\alpha)$ is nonnegative, because the differentiation identity $\frac{d}{du}E_\alpha(-\gamma u^\alpha)=-\gamma k(u)$ and the complete monotonicity of $E_\alpha(-x)$ on $[0,\infty)$ for $0<\alpha\le 1$ \cite{pollard1948completely} imply $k(u)\ge 0$; moreover $\int_0^\infty k(u)\,du = \lim_{t\to\infty}(1-E_\alpha(-\gamma t^\alpha))/\gamma = 1/\gamma$. Since $Q$ is bounded and tends to zero, splitting the integral in \eqref{eq:sm_S} at $t/2$ gives, for all sufficiently large $t$,
\begin{align}
  \int_0^{t/2}k(t-u)Q(u)\,du
  &\le \|Q\|_\infty\int_{t/2}^{\infty}k \to 0,\\
  \int_{t/2}^{t}k(t-u)Q(u)\,du
  &\le \frac{1}{\gamma}\sup_{u\ge t/2}Q(u) \to 0,
\end{align}
while $s(0)E_\alpha(-\gamma t^\alpha)\to 0$. Hence $S(t)\to 0$, and so $\limsup_{t\to\infty}s(t)\le 0$.
\end{proof}

\subsection{Global Asymptotic Stability of the Rumor-Free Equilibrium}

\begin{theorem}[Global asymptotic stability of the rumor-free equilibrium]\label{thm:sm_G1}
Assume that $\mathbf A$ and $\mathbf B$ are nonnegative zero-diagonal irreducible adjacency matrices and that $0<\beta^A<\beta^U$. Define $\bar{\mathcal R}_0$ by \eqref{eq:sm_barR0}. If
\begin{align}\label{eq:sm_G1_cond}
  \bar{\mathcal R}_0 < 1
  \qquad\text{and}\qquad
  \lambda\rho(\mathbf A) < \delta,
\end{align}
then, for every $\alpha\in(0,1)$ and every initial condition $\mathcal P(0)\in\Omega$, the solution of \eqref{eq:main_sm} converges to the rumor-free equilibrium $\mathcal P^{*}=(\mathbf 1,\mathbf 0,\mathbf 0)$; i.e., $p_i^{US}(t)\to 1$, $p_i^{AS}(t)\to 0$, and $p_i^{AI}(t)\to 0$ for every $i$. Thus $\mathcal P^{*}$ is globally asymptotically stable.
\end{theorem}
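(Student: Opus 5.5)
The proof is a two-stage comparison argument built on the lemmas of this section: first the infection vector is driven to zero, then the awareness vector. Throughout, Theorem~\ref{theorem:positivity_boundedness_sm} keeps the solution in $\Omega$, so all components lie in $[0,1]$.

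\emph{Stage 1: extinction of the infection.} Set $x=\mathcal P^{AI}$. On $\Omega$ we have $\beta^U p_i^{US}+\beta^A p_i^{AS}\le\beta^U(p_i^{US}+p_i^{AS})\le\beta^U$, and $M_i^I\ge0$. Hence \eqref{eq:sm_x} gives the componentwise inequality
\begin{align}
{}^C_0\mathcal D_t^{\alpha}x\le(\beta^U\mathbf B-\mu\mathbf I_N)\,x .
\end{align}
The matrix $M_1:=\beta^U\mathbf B-\mu\mathbf I_N$ is Metzler because $\mathbf B\ge0$. Since $\mathbf B$ is nonnegative, its Perron eigenvalue is $\rho(\mathbf B)$, so the spectral abscissa of $M_1$ equals $\beta^U\rho(\mathbf B)-\mu=\mu(\bar{\mathcal R}_0-1)<0$. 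Let $z$ solve ${}^C_0\mathcal D_t^\alpha z=M_1z$ with $z(0)=x(0)$. Corollary~\ref{cor:sm_comparison} gives $0\le x(t)\le z(t)$, and Lemma~\ref{lem:sm_linear} gives $z(t)\to0$. Therefore $x(t)\to0$, and consequently $M^I(t)=\mathbf Bx(t)\to0$.

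\emph{Stage 2: extinction of awareness.} Set $y_i=p_i^{AS}+p_i^{AI}$, the total aware fraction. Adding the second and third equations of \eqref{eq:main_component_sm} gives
\begin{align}
{}^C_0\mathcal D_t^{\alpha}y_i=\lambda M_i^Ap_i^{US}-\delta p_i^{AS}+\beta^UM_i^Ip_i^{US}.
\end{align}
Using $p_i^{US}\le1$, $M_i^A=(\mathbf Ay)_i$ and $p_i^{AS}=y_i-x_i$, we obtain
\begin{align}
{}^C_0\mathcal D_t^{\alpha}y\le(\lambda\mathbf A-\delta\mathbf I_N)\,y+Q(t),
\end{align}
where $Q(t)=\delta x(t)+\beta^U\mathbf Bx(t)$. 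The forcing $Q$ is nonnegative and continuous, and $Q(t)\to0$ by Stage 1. The matrix $M_2:=\lambda\mathbf A-\delta\mathbf I_N$ is Metzler with spectral abscissa $\lambda\rho(\mathbf A)-\delta<0$.

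This is the vector analogue of Lemma~\ref{lem:sm_scalar}, and I would prove it in one of two ways.

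\emph{Option (a): scalarize.} Because $\mathbf A$ is irreducible, it has a positive left Perron vector $u$ with $u^\top\mathbf A=\rho(\mathbf A)u^\top$. Set $s=u^\top y$. Since $u\ge0$ and the Caputo derivative is linear, multiplying the inequality by $u^\top$ gives
\begin{align}
{}^C_0\mathcal D_t^\alpha s\le-(\delta-\lambda\rho(\mathbf A))\,s+u^\top Q(t).
\end{align}
Lemma~\ref{lem:sm_scalar}, with $\gamma=\delta-\lambda\rho(\mathbf A)>0$, then gives $\limsup_{t\to\infty}s(t)\le0$. Since $s\ge0$ and $u>0$, this forces $y(t)\to0$.

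\emph{Option (b): compare with a linear system.} Compare $y$, via Lemma~\ref{lem:sm_metzler}, with the solution of ${}^C_0\mathcal D_t^\alpha Y=M_2Y+Q$. Write $Y$ by the matrix Mittag-Leffler variation-of-constants formula, whose kernel is nonnegative because $M_2$ is Metzler, and split the convolution at $t/2$ exactly as in Lemma~\ref{lem:sm_scalar}.

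Option (a) is cleaner, and the same device would also let me write Stage 1 as a scalar argument using the left Perron vector of $\mathbf B$.

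\emph{Conclusion.} From $y\to0$ and $x\to0$ we get $p^{AS}=y-x\to0$ and $p^{US}=1-y\to1$, which is convergence to $\mathcal P^*=(\mathbf 1,\mathbf 0,\mathbf 0)$. Local stability of $\mathcal P^*$ follows from Corollary~\ref{corollary:R0_less_1_sm}, since $\mathcal R_0\le\bar{\mathcal R}_0<1$. Together with global attractivity this gives global asymptotic stability.

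\emph{Main obstacle.} I expect the technical difficulty to be the regularity required by the comparison lemmas. Lemma~\ref{lem:sm_metzler} and Corollary~\ref{cor:sm_comparison} assume $C^1$ functions and pointwise inequalities for all $t\ge0$; this is covered by the paper's standing assumption $\mathcal P\in C^1$. A second concern is that Lemma~\ref{lem:sm_scalar} requires a bounded forcing, which holds on $\Omega$.
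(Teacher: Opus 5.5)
Your proposal is correct and follows the paper's proof essentially step for step. It uses the same Metzler comparison with $\beta^U\mathbf B-\mu\mathbf I_N$ to kill the infection, the same left-Perron-vector scalarization plus the scalar relaxation lemma to kill awareness, and the local-stability corollary for $\mathcal R_0<1$ (via $\mathcal R_0\le\bar{\mathcal R}_0$). The only difference is cosmetic: you run Stage~2 on the total aware fraction $y=p^{AS}+p^{AI}$ with forcing $\delta x+\beta^U\mathbf Bx$, whereas the paper works with $p^{AS}$ directly (dropping $-\beta^A M_i^I p_i^{AS}$) with forcing $\lambda\mathbf A p^{AI}+\mu p^{AI}$.
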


\begin{proof}
We first establish that the rumor dies out. Since $p_i^{US}+p_i^{AS}=1-p_i^{AI}\le 1$ and $\beta^A\le\beta^U$, equation \eqref{eq:sm_x} gives
\begin{align}\label{eq:sm_xbound}
  {}^C_0\mathcal D_t^{\alpha}p_i^{AI}
  &\le \beta^U M_i^I\big(p_i^{US}+p_i^{AS}\big) - \mu p_i^{AI} \notag\\
  &\le \beta^U\sum_{j=1}^{N}b_{ij}p_j^{AI} - \mu p_i^{AI},
\end{align}
i.e., ${}^C_0\mathcal D_t^{\alpha}p^{AI}\le M p^{AI}$ with the Metzler matrix $M=\beta^U\mathbf B-\mu\mathbf I_N$. Let $z$ solve ${}^C_0\mathcal D_t^{\alpha}z = Mz$ with $z(0)=p^{AI}(0)$. Corollary~\ref{cor:sm_comparison} yields $0\le p^{AI}(t)\le z(t)$. Every eigenvalue $\xi$ of $M$ is of the form $\xi=\beta^U\lambda-\mu$ for an eigenvalue $\lambda$ of $\mathbf B$, and satisfies $\mathrm{Re}(\xi)=\beta^U\mathrm{Re}(\lambda)-\mu\le\beta^U|\lambda|-\mu\le\beta^U\rho(\mathbf B)-\mu<0$, so Lemma~\ref{lem:sm_linear} gives $z(t)\to 0$; hence $p^{AI}(t)\to 0$.

Next, we show that the awareness layer also dies out. From \eqref{eq:sm_y} and $p_i^{US}\le 1$,
\begin{align}\label{eq:sm_ybound}
  {}^C_0\mathcal D_t^{\alpha}p_i^{AS}
  &\le \lambda\Big(\sum_{j}a_{ij}p_j^{AS}+\sum_{j}a_{ij}p_j^{AI}\Big) - \delta p_i^{AS} + \mu p_i^{AI}\notag\\
  &= \big((\lambda\mathbf A-\delta\mathbf I_N)p^{AS}\big)_i + q_i(t),
\end{align}
where $q_i(t)=\lambda\sum_j a_{ij}p_j^{AI}(t)+\mu p_i^{AI}(t)\ge 0$ and $q_i(t)\to 0$ by the previous step. Let $v>0$ be the left Perron vector of the irreducible nonnegative matrix $\mathbf A$, normalized by $v^\top\mathbf A=\rho(\mathbf A)v^\top$, and set $s(t)=\sum_i v_i p_i^{AS}(t)$. Multiplying \eqref{eq:sm_ybound} by $v_i$ and summing over $i$ gives
\begin{align}\label{eq:sm_sbound}
  {}^C_0\mathcal D_t^{\alpha}s(t)
  &\le \big(\lambda\rho(\mathbf A)-\delta\big)s(t) + Q(t),\notag\\
  Q(t)&:=\sum_{i=1}^{N}v_i\,q_i(t)\ge 0,
\end{align}
with $Q(t)\to 0$. Since $\gamma:=\delta-\lambda\rho(\mathbf A)>0$, Lemma~\ref{lem:sm_scalar} yields $s(t)\to 0$, and because $v_i>0$ for every $i$, we obtain $p_i^{AS}(t)\le s(t)/v_i\to 0$ for every $i$.

Finally, $p_i^{US}(t)=1-p_i^{AS}(t)-p_i^{AI}(t)\to 1$, so every solution converges to $\mathcal P^{*}=(\mathbf 1,\mathbf 0,\mathbf 0)$. This point is indeed the RFE: under $\lambda\rho(\mathbf A)<\delta$ the single-layer awareness system has the unique equilibrium $p^{AS}=0$. Indeed, if $y\in[0,1]^N$ satisfies $\lambda(\mathbf A y)_i(1-y_i)=\delta y_i$ for all $i$, then $y_i<1$ and $\lambda(\mathbf A y)_i=\delta y_i/(1-y_i)\ge\delta y_i$; multiplying $\lambda\mathbf A y\ge\delta y$ by $v^\top$ gives $(\lambda\rho(\mathbf A)-\delta)v^\top y\ge 0$, which forces $y=0$. Since $\mathcal R_0\le\bar{\mathcal R}_0<1$, the local asymptotic stability of $\mathcal P^{*}$ follows from Corollary~1 of the main manuscript; global attractivity together with local stability gives global asymptotic stability.
\end{proof}

\begin{remark}[Scope and open problems]\label{rem:sm_scope}
The result above is a sufficient-condition theorem, and we deliberately do not claim more than we prove. Several points deserve emphasis.
\begin{itemize}
\item Under $\bar{\mathcal R}_0<1$ alone (with no condition on $\lambda\rho(\mathbf A)$), the first part of the proof already shows that the rumor dies out: $p^{AI}(t)\to 0$ for every initial condition. This is the fractional analogue of the classical disease-free global result of Lajmanovich--Yorke \cite{lajmanovich1976deterministic} for integer-order network SIS models, restricted to the subcritical case.
\item When the awareness layer is supercritical ($\lambda\rho(\mathbf A)>\delta$), the RFE carries a nontrivial awareness distribution $p^{AS*}>0$ solving $\lambda(\mathbf A p^{AS*})_i(1-p^{AS*}_i)=\delta p^{AS*}_i$. A rigorous proof that $p^{AS}(t)\to p^{AS*}$ after the rumor has vanished would require an asymptotic-autonomy (limit-system) principle for Caputo fractional systems, which is a genuinely open technical problem: the classical integer-order argument \cite{lajmanovich1976deterministic} relies on the semigroup property of the flow, which Caputo systems lack. We therefore leave this case as an open refinement and support it numerically in Section~\ref{sec:sm_gs_num} (see Fig.~\ref{fig:sm_gs}).
\item The sharp condition $\mathcal R_0<1$ (instead of $\bar{\mathcal R}_0<1$) for the rumor-free equilibrium remains open; the single-layer fractional network SIS analogue in the disease-free case is established in \cite{liu2019fractional}. For the rumor-endemic equilibrium in the regime $\mathcal R_0>1$, a rigorous global asymptotic stability theorem under a verifiable quadratic dissipativity condition is established in Section~\ref{sec:sm_endemic} (Theorem~\ref{thm:sm_G2}). On the BA network used in Section~III-C of the main manuscript, $\bar{\mathcal R}_0=1$ at $\beta^U\approx 0.126$ whereas $\mathcal R_0=1$ at $\beta^U\approx 0.289$; the sufficient condition of Theorem~\ref{thm:sm_G1} therefore covers approximately $43\%$ of the subcritical interval, the gap being the price of the crude bound $\beta^U p_i^{US}+\beta^A p_i^{AS}\le\beta^U$.
\end{itemize}
\end{remark}

\subsection{Global Asymptotic Stability of the Rumor-Endemic Equilibrium}
\label{sec:sm_endemic}

We now turn to the rumor-endemic regime $\mathcal R_0>1$, in which Theorem~4 of the main manuscript guarantees the existence and uniqueness of the rumor-endemic equilibrium $\mathcal P^{**}=(p^{US**},p^{AS**},p^{AI**})$ with $p^{AI**}>0$. Throughout this subsection we assume in addition that $p^{US**},p^{AS**}>0$ componentwise, which is the case for the parameter regime of the main manuscript. For $\mathcal P\in\Omega$ write
\begin{align}\label{eq:sm_dev}
  \Delta x_i&=p_i^{AI}-p_i^{AI**},\notag\\
  \Delta y_i&=\big(p_i^{AS}+p_i^{AI}\big)-\big(p_i^{AS**}+p_i^{AI**}\big),
\end{align}
where $\Delta x_i$ is the deviation of the infected fraction and $\Delta y_i$ that of the total aware fraction. The componentwise vector fields of the system \eqref{eq:sm_x}--\eqref{eq:sm_y} are denoted by $G_i(\mathcal P)$ and $H_i(\mathcal P)$:
\begin{align}\label{eq:sm_GH}
  G_i(\mathcal P)&=M_i^I\big(\beta^U p_i^{US}+\beta^A p_i^{AS}\big)-\mu p_i^{AI},\\
  H_i(\mathcal P)&=\lambda M_i^A p_i^{US}-\delta p_i^{AS}+\beta^U M_i^I p_i^{US},
\end{align}
so that $G_i(\mathcal P^{**})=H_i(\mathcal P^{**})=0$ for every $i$; here $H_i$ is the right-hand side of the equation satisfied by the total aware fraction $p_i^{AS}+p_i^{AI}$, obtained by adding the second and third equations of \eqref{eq:main_component_sm}.

\begin{theorem}[Global asymptotic stability of the rumor-endemic equilibrium under a quadratic dissipativity condition]\label{thm:sm_G2}
Let $\mathbf A$ and $\mathbf B$ be nonnegative zero-diagonal irreducible adjacency matrices, let $0<\beta^A<\beta^U$ and $\mathcal R_0>1$, and let $\mathcal P^{**}$ be the unique rumor-endemic equilibrium with $p^{US**},p^{AS**},p^{AI**}>0$. Suppose there exist positive numbers $d_i$, $e_i$ and a constant $c>0$ such that, for every $\mathcal P\in\Omega$,
\begin{align}\label{eq:sm_diss}
  2\sum_{i=1}^{N}d_i\,\Delta x_i\,G_i(\mathcal P)
  +2\sum_{i=1}^{N}e_i\,\Delta y_i\,H_i(\mathcal P) \notag\\
  \le -c\Big(\sum_{i=1}^{N}\Delta x_i^2+\sum_{i=1}^{N}\Delta y_i^2\Big),
\end{align}
where $\Delta x_i$ and $\Delta y_i$ are defined in \eqref{eq:sm_dev}. Then, for every $\alpha\in(0,1)$ and every initial condition in $\Omega$, the solution of \eqref{eq:main_sm} converges to $\mathcal P^{**}$; i.e., $\mathcal P^{**}$ is globally asymptotically stable.
\end{theorem}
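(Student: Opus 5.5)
The proof will be a quadratic Lyapunov argument built from the weighted function
\begin{align*}
V(t)=\sum_{i=1}^N d_i\,\Delta x_i(t)^2+\sum_{i=1}^N e_i\,\Delta y_i(t)^2 .
\end{align*}

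First, I will reduce the system to the $2N$ variables $(p^{AI},p^{AS}+p^{AI})$. By Theorem~\ref{theorem:positivity_boundedness_sm}, $p_i^{US}=1-(p_i^{AS}+p_i^{AI})$, so these variables determine $\mathcal P$ uniquely. Since $\mathcal P^{**}$ is constant, its Caputo derivative vanishes, giving ${}^C_0\mathcal D_t^\alpha \Delta x_i=G_i(\mathcal P)$ and ${}^C_0\mathcal D_t^\alpha \Delta y_i=H_i(\mathcal P)$. The second identity uses the sum of the second and third equations of \eqref{eq:main_component_sm}.

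Second, I will invoke the standard inequality for Caputo derivatives of squares, valid for $C^1$ functions $u$ and $\alpha\in(0,1)$:
\begin{align*}
{}^C_0\mathcal D_t^\alpha u^2(t)\le 2u(t)\,{}^C_0\mathcal D_t^\alpha u(t).
\end{align*}
This is the Aguila-Camacho--Duarte-Mermoud--Gallegos lemma. It follows from writing $u(t)u'(\tau)$ as $\tfrac12\frac{d}{d\tau}u^2(\tau)+(u(t)-u(\tau))u'(\tau)$ and integrating the remainder by parts against $(t-\tau)^{-\alpha}$; the remainder term is nonnegative. Applying it componentwise with the positive weights $d_i,e_i$ and then using the dissipativity hypothesis \eqref{eq:sm_diss} gives
\begin{align*}
{}^C_0\mathcal D_t^\alpha V(t)\le -c\Big(\sum_i\Delta x_i^2+\sum_i\Delta y_i^2\Big)\le -\gamma V(t),
\end{align*}
where $\gamma:=c/\max_i\max(d_i,e_i)>0$. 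The hypothesis applies at every $t\ge0$ because the trajectory stays in $\Omega$.

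Third, I will apply the scalar relaxation inequality Lemma~\ref{lem:sm_scalar} with $Q\equiv0$. This yields $V(t)\le V(0)E_\alpha(-\gamma t^\alpha)\to0$, so $\Delta x_i\to0$ and $\Delta y_i\to0$ for every $i$. Consequently $p_i^{AI}\to p_i^{AI**}$, $p_i^{AS}\to p_i^{AS**}$ and $p_i^{US}\to p_i^{US**}$.

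Fourth, the same bound $V(t)\le V(0)$ gives Lyapunov stability. Because $V$ is equivalent to the squared norm of $\mathcal P-\mathcal P^{**}$ in the reduced coordinates, $V(0)$ small implies $\mathcal P(t)$ stays uniformly close to $\mathcal P^{**}$. Together with global attractivity, this yields global asymptotic stability for every $\alpha\in(0,1)$.

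\textbf{Main obstacle.} The delicate point is justifying the quadratic Caputo inequality and the use of Lemma~\ref{lem:sm_scalar}, both of which need $C^1$ regularity of $V$. This regularity is supplied by the standing assumption $\mathcal P\in C^1(\mathbb R_+,\mathbb R^{N\times3})$. If that assumption were weakened, I would instead use the absolutely continuous version of the inequality, applying the same integration-by-parts argument on $[0,t]$ with the boundary term at $\tau=t$ controlled by $(u(t)-u(\tau))^2=O((t-\tau)^2)$.
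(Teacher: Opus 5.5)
Your proposal is correct and follows the paper's proof essentially step for step. Both use the same weighted quadratic $V$, the Aguila-Camacho inequality ${}^C_0\mathcal D_t^\alpha u^2\le 2u\,{}^C_0\mathcal D_t^\alpha u$ applied componentwise, the bound ${}^C_0\mathcal D_t^\alpha V\le-\gamma V$ with $\gamma=c/\max_i\max(d_i,e_i)$, and the scalar relaxation lemma with $Q\equiv0$; your explicit Lyapunov-stability step via $V(t)\le V(0)$ is a small addition that the paper leaves implicit.

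As an aside that concerns the statement rather than your argument, the dissipativity hypothesis fails at the equilibrium $(\mathbf 1,\mathbf 0,\mathbf 0)\in\Omega$. There $G_i=H_i=0$, so the left-hand side is $0$, while the right-hand side equals $-c\sum_i\bigl[(p_i^{AI**})^2+(p_i^{AS**}+p_i^{AI**})^2\bigr]<0$. Hence the hypothesis can never hold on all of $\Omega$, and the theorem (in your version and the paper's alike) is only vacuously true; consistently with this, trajectories on the invariant face $p^{AI}\equiv0$ never approach $\mathcal P^{**}$.
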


\begin{proof}
Consider the positive definite quadratic function
\begin{align}\label{eq:sm_V2}
  V(\mathcal P)=\sum_{i=1}^{N}d_i\,\Delta x_i^2+\sum_{i=1}^{N}e_i\,\Delta y_i^2.
\end{align}
By the fractional inequality for quadratic functions ${}^C_0\mathcal D_t^{\alpha}z^2\le 2z\,{}^C_0\mathcal D_t^{\alpha}z$ \cite{aguila2014lyapunov} applied componentwise, and since the Caputo derivative of a constant vanishes,
\begin{align}\label{eq:sm_dV2}
  {}^C_0\mathcal D_t^{\alpha}V
  \le 2\sum_{i=1}^{N}d_i\,\Delta x_i\,G_i(\mathcal P)
    +2\sum_{i=1}^{N}e_i\,\Delta y_i\,H_i(\mathcal P) \notag\\
  \le -c\Big(\sum_{i=1}^{N}\Delta x_i^2+\sum_{i=1}^{N}\Delta y_i^2\Big),
\end{align}
where the last inequality is \eqref{eq:sm_diss}. Let $\bar d=\max_i d_i$ and $\bar e=\max_i e_i$; then $\sum_i\Delta x_i^2+\sum_i\Delta y_i^2\ge V/\max(\bar d,\bar e)$, so that ${}^C_0\mathcal D_t^{\alpha}V\le -\big(c/\max(\bar d,\bar e)\big)V$. Lemma~\ref{lem:sm_scalar} (with $Q\equiv 0$) yields
\begin{align}
  V(t)\le V(0)E_\alpha\Big(-\frac{c}{\max(\bar d,\bar e)}\,t^\alpha\Big)\to 0,
\end{align}
hence $\Delta x(t)\to 0$ and $\Delta y(t)\to 0$, i.e., $p^{AI}(t)\to p^{AI**}$ and $p^{AS}(t)+p^{AI}(t)\to p^{AS**}+p^{AI**}$, which gives $p^{AS}(t)\to p^{AS**}$ and consequently $p^{US}(t)=1-p^{AS}(t)-p^{AI}(t)\to p^{US**}$. Global asymptotic stability follows.
\end{proof}

\begin{remark}[On the dissipativity condition \eqref{eq:sm_diss}]\label{rem:sm_diss}
Condition \eqref{eq:sm_diss} is a structural quadratic-dissipativity condition of Lajmanovich--Yorke type, expressed directly in terms of the vector fields of the coupled system; for a given network it can be checked either analytically or by direct evaluation on the compact feasible set $\Omega$. For the BA network of the main manuscript, with $d_i=e_i=1$, $\beta^U=0.6$ ($\mathcal R_0=2.07$), $\beta^A=\gamma\beta^U$ with $\gamma=1/3$, $\lambda=0.3$, $\delta=0.5$ and $\mu=1.0$, we evaluated the ratio of the left-hand side of \eqref{eq:sm_diss} to $\sum_i(\Delta x_i^2+\Delta y_i^2)$ over $1.4\times10^5$ sampled feasible states---including uniform samples of $\Omega$, small perturbations around $\mathcal P^{**}$, states on the boundary faces of $\Omega$, and local maximization of the ratio starting from the best sampled states. The largest value found is approximately $-0.19$, attained on the boundary face $p^{AI}\equiv0$ on which the infection has died out, so that \eqref{eq:sm_diss} is satisfied numerically with $c\approx0.19$ for this network (Section~\ref{sec:sm_gs_num}). Since finite sampling cannot verify the inequality over the entire continuous feasible region, Theorem~\ref{thm:sm_G2} is presented as a conditional global-stability criterion, and we do not claim that the sampling alone constitutes a proof of its applicability to this network.
\end{remark}

\subsection{Numerical Verification of Global Convergence}
\label{sec:sm_gs_num}

Complementing the theoretical results of Section~\ref{sec:sm_endemic}, we verify the global convergence behavior directly by simulation. On the BA network of Section~12.3 of this Supplementary Material ($N=250$, $m_A=4$, $m_B=2$), we simulate the system from twelve markedly different initial conditions (infection fractions ranging from $1\%$ to $30\%$ and awareness fractions from $1\%$ to $60\%$, with independent random perturbations). The results are summarized in Table~\ref{tab:sm_gs} and illustrated in Fig.~\ref{fig:sm_gs}.

\begin{figure}[htbp]
\centering
\subfigure[$\mathcal R_0>1$: convergence to the rumor-endemic equilibrium]{
    \includegraphics[width=0.45\linewidth]{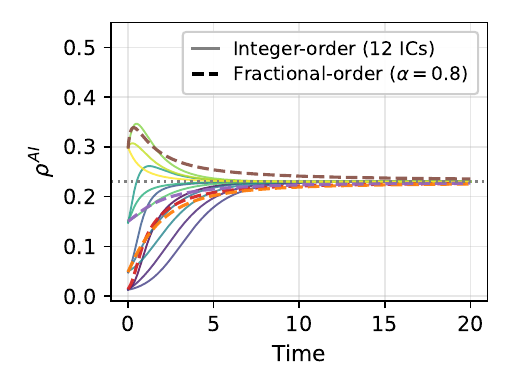}
    \label{fig:sm_gs_endemic}
}
\hfill
\subfigure[$\mathcal R_0<1$: convergence to the rumor-free equilibrium]{
    \includegraphics[width=0.45\linewidth]{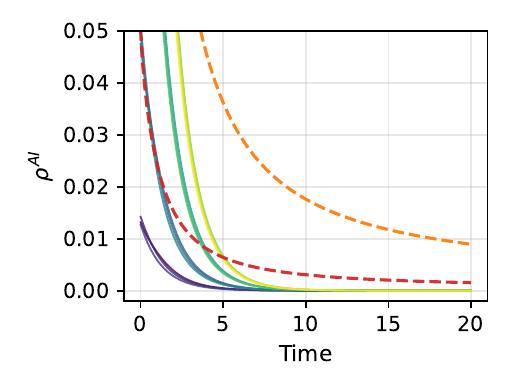}
    \label{fig:sm_gs_free}
}
\caption{\textbf{Numerical verification of global convergence on the BA network ($N=250$, $m_A=4$, $m_B=2$, $\lambda=0.3$, $\delta=0.5$, $\mu=1.0$, $\gamma=1/3$).} Temporal evolution of the infection density $\rho^{AI}$ from twelve markedly different initial conditions (infection fractions from $1\%$ to $30\%$ and awareness fractions from $1\%$ to $60\%$, with independent random perturbations). (a) In the endemic regime ($\beta^U=0.6$, $\mathcal R_0=2.07>1$, left), all twelve integer-order trajectories (solid curves) converge to the same steady-state value $\rho^{AI**}=0.2301$ (gray dotted line), with a spread below $5\times10^{-6}$ across initial conditions; the fractional-order trajectories ($\alpha=0.8$, dashed curves) converge to the same value up to the tolerance of the fractional solver, with final values in $[0.2254,0.2355]$ over the finite horizon $T=20$. (b) In the rumor-free regime ($\beta^U=0.1$, $\mathcal R_0=0.35<1$, right), all trajectories decay to $\rho^{AI}\approx0$ for every initial condition, for both the integer-order system (down to $10^{-7}$) and the fractional-order system (down to $0.0015$--$0.0089$ within the finite horizon, the residual reflecting the slower subdiffusive convergence). Solid curves show the integer-order solutions from the twelve initial conditions, and dashed curves show the fractional-order ($\alpha=0.8$) solutions from representative initial conditions. These results provide numerical evidence that the rumor-free and rumor-endemic equilibria are globally attracting within their respective regimes $\mathcal R_0<1$ and $\mathcal R_0>1$.}
\label{fig:sm_gs}
\end{figure}

\begin{table}[htbp]
\centering
\caption{Numerical verification of global convergence on the BA network ($N=250$).}
\label{tab:sm_gs}
\resizebox{\columnwidth}{!}{%
\begin{tabular}{lcc}
\toprule
 & $\beta^U=0.6$ ($\mathcal R_0=2.07>1$) & $\beta^U=0.1$ ($\mathcal R_0=0.35<1$)\\
\midrule
Final $\rho^{AI}$ (int.) & $0.23008$ & $\sim10^{-7}$\\
Final $\rho^{AI}$ (frac., $\alpha=0.8$) & $0.2254$--$0.2355$ & $0.0015$--$0.0089$ (within $T=20$)\\
12-IC spread (int.) & $<5\times10^{-6}$ & $<10^{-6}$\\
\bottomrule
\end{tabular}%
}
\end{table}

In the endemic regime, all twelve integer-order trajectories converge to the same steady-state value $\rho^{AI**}=0.2301$ with a spread below $5\times10^{-6}$; the fractional-order trajectories ($\alpha=0.8$) converge to the same value up to the tolerance of the fractional solver, the slightly larger spread ($0.2254$--$0.2355$) reflecting the slower memory-induced convergence over the finite horizon $T=20$. In the rumor-free regime, all trajectories decay to zero for every initial condition, for both the integer-order system (down to $10^{-7}$) and the fractional-order system (down to $0.0015$--$0.0089$ within the simulation horizon, with the residual attributable to the subdiffusive transient slowdown). These results constitute numerical evidence---not a proof---that the rumor-free and rumor-endemic equilibria are globally attracting within their respective regimes $\mathcal R_0<1$ and $\mathcal R_0>1$. They also provide numerical evidence in support of the dissipativity condition \eqref{eq:sm_diss} reported in Remark~\ref{rem:sm_diss}: for this network, evaluation of the ratio of the left-hand side of \eqref{eq:sm_diss} to $\sum_i(\Delta x_i^2+\Delta y_i^2)$ over $1.4\times10^5$ sampled feasible states, combined with local maximization of the ratio, yields a maximum value of approximately $-0.19$ with $d_i=e_i=1$, attained on the boundary face $p^{AI}\equiv0$. Since finite sampling cannot verify the inequality over the entire feasible region, Theorem~\ref{thm:sm_G2} is presented as a conditional criterion, and we do not claim that the sampling alone constitutes a proof of its applicability. The rigorous global statements established in this section are Theorem~\ref{thm:sm_G1} (GAS of the RFE under \eqref{eq:sm_G1_cond}) and Theorem~\ref{thm:sm_G2} (GAS of the rumor-endemic equilibrium under \eqref{eq:sm_diss}); the sharp-threshold cases are discussed in Remark~\ref{rem:sm_scope}.

\section{Additional Numerical Experiments}
\label{sec:sm_additional}

This section collects the numerical experiments that complement those reported in Section~III of the main manuscript: the convergence test towards the integer-order limit, the bifurcation diagram of the coupled system, the influence of the awareness layer parameters and connectivity, and the role of the two-layer network architecture.

\subsection{Convergence to the Integer-Order Limit}
\label{sec:sm_convergence}

To further validate the consistency of the fractional-order formulation, we perform a convergence test as the fractional order approaches the integer limit ($\alpha \to 1$). Using the baseline parameter configuration ($\lambda=0.3$, $\beta^U=0.6$, $\beta^A=\gamma\beta^U$ with $\gamma=1/3$, $\delta=0.5$, $\mu=1.0$), we compare the fractional dynamics for $\alpha = 0.5, 0.6, 0.7, 0.8, 0.9, 0.95$ against the integer-order solution ($\alpha=1$). The time evolution of the infection density $\rho^{AI}$ is shown in Fig.~\ref{fig:convergence}, where the fractional curves gradually approach the integer-order reference as $\alpha$ increases. The inset displays the root-mean-square error (RMSE) computed over the entire transient phase ($t>0.1$), which decreases monotonically with $\alpha$, approaching zero at $\alpha=1$. This quantitatively confirms that the fractional model recovers the classical integer-order dynamics in the limit $\alpha\to1$, thereby ensuring that the memory effects introduced by the Caputo derivative do not introduce spurious discrepancies but rather produce a smooth interpolation between Markovian and strongly non-Markovian regimes. In summary, the close agreement between the individual-based mean-field approximation (IBMFA) solutions and the mechanism-based Monte Carlo simulations reported in the main manuscript validates the mean-field closure for the integer-order setting, and since the fractional model reduces to this validated integer-order model as $\alpha\to1$, the convergence test provides a semi-analytical consistency check of the fractional memory extension.

\begin{figure}[!ht]
    \centering
    \includegraphics[width=0.85\linewidth]{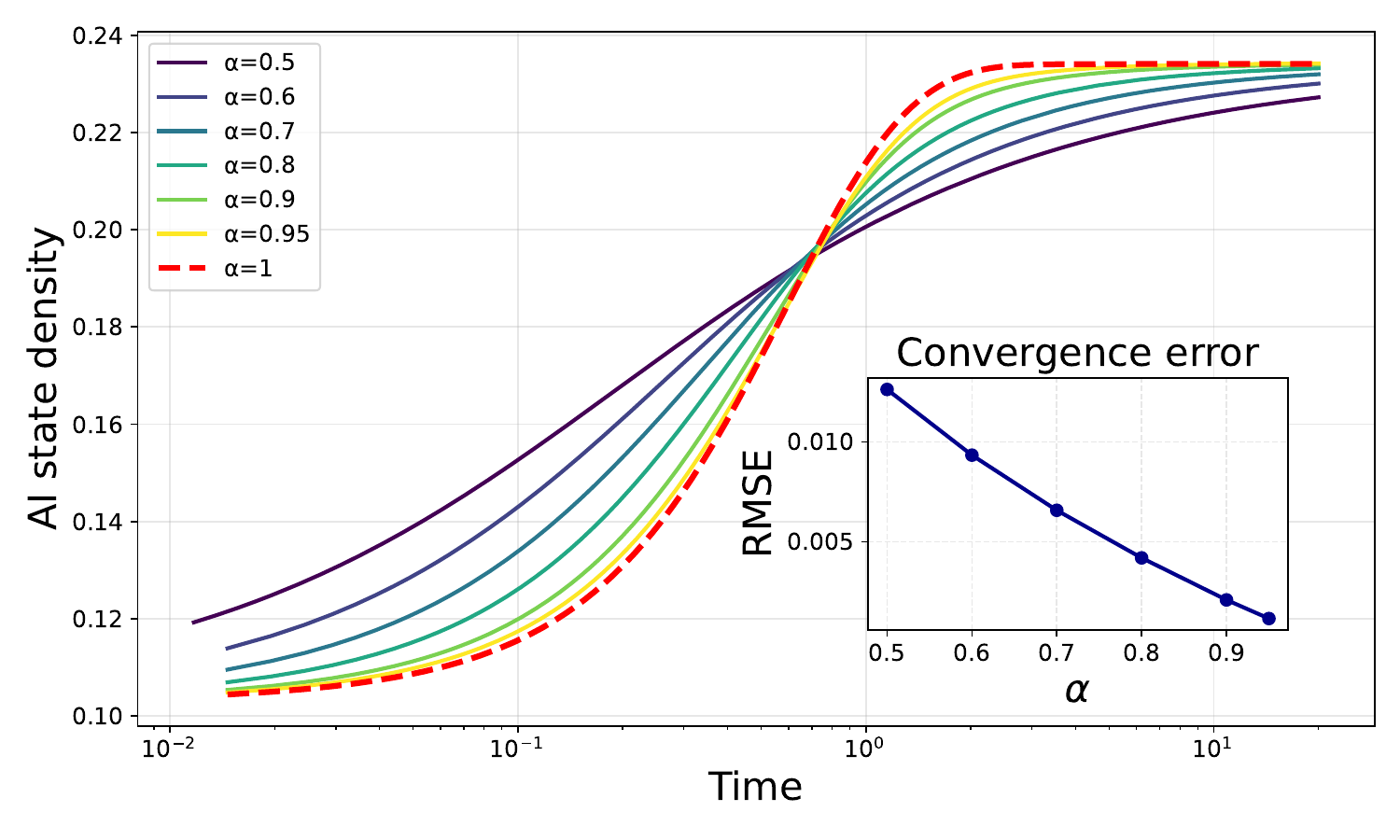}
    \vspace{-4mm}
    \caption{
    \textbf{Convergence of the fractional-order model to the integer-order limit as $\alpha\to1$.}
    Main panel: Temporal evolution of infection density $\rho^{AI}$ for decreasing fractional orders $\alpha$ (colored curves) and the integer-order solution ($\alpha=1$, red dashed) under the Baseline parameter configuration. The curves with $\alpha=0.95$ are nearly indistinguishable from the integer-order reference, demonstrating that the fractional dynamics continuously approach the Markovian limit. The inset shows the root-mean-square error (RMSE) computed over $t>0.1$ as a function of $\alpha$, which monotonically decreases and tends to zero when $\alpha\to1$. This provides strong numerical evidence that the fractional-order model is a consistent extension of the classical UAU-SIS framework.
    }
    \vspace{-8mm}
    \label{fig:convergence}
\end{figure}

\subsection{Bifurcation Analysis and Threshold Validation}
\label{sec:sm_bifurcation}

To further corroborate the threshold-governed nature of the coupled awareness--rumor system and directly visualize the transition from the rumor-free to the rumor-endemic regime, we present a bifurcation diagram in Fig.~\ref{fig:bifurcation}. Using the integer-order mean-field solver, we densely scan $\beta^U$ over the range $[0.05, 2.0]$ on a BA network with $N=250$, $m_A=4$, and $m_B=2$, while keeping other parameters at their baseline values ($\lambda=0.3$, $\delta=0.5$, $\mu=1.0$, $\gamma=1/3$). For each $\beta^U$, the steady-state infection density $\rho^{AI}$ is recorded after $T=20$. Fractional-order simulations with $\alpha=0.6$ and $\alpha=0.8$ are additionally performed at selected $\beta^U$ values on the same network to verify the $\alpha$-independence of the equilibrium. The inset of Fig.~\ref{fig:bifurcation} magnifies the vicinity of $\mathcal{R}_0\approx 1$, where the infection density transitions from near-zero to positive values.

\begin{figure}[htbp]
    \centering
    \includegraphics[width=0.85\linewidth]{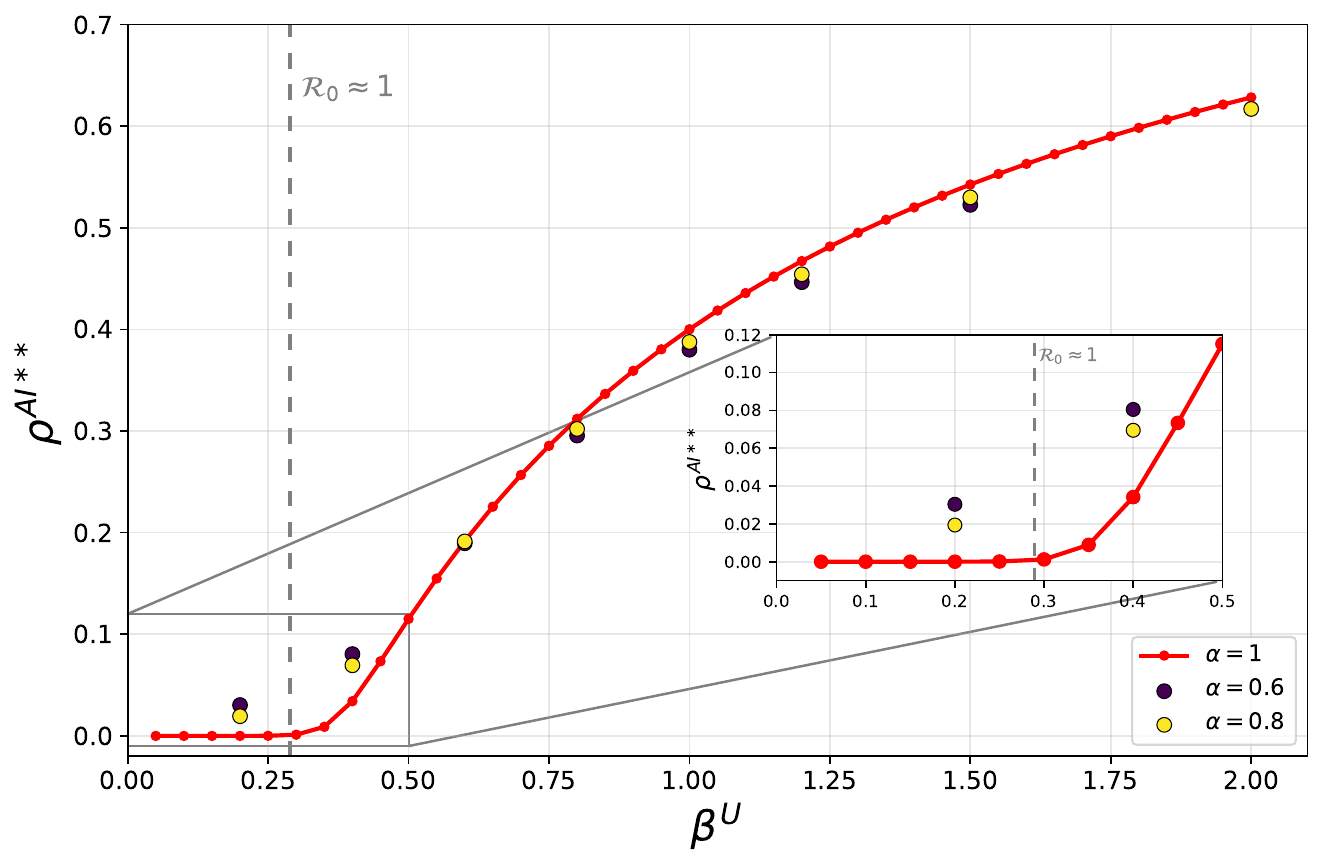}
    \vspace{-5mm}
    \caption{
        \textbf{Bifurcation diagram of the fractional-order UAU-SIS model.}
        Steady-state infection density $\rho^{AI}$ as a function of the infection rate $\beta^U$ for unaware individuals.
        The red curve with circular markers corresponds to the integer-order solution ($\alpha=1$), while the colored scatter points represent fractional-order results with $\alpha=0.6$ and $\alpha=0.8$ on the same BA network ($N=250$, $m_A=4$, $m_B=2$).
        The gray dashed line marks the location of $\mathcal{R}_0=1$ ($\beta^{U,c}\approx 0.289$), below which the system remains in the rumor-free state ($\rho^{AI}\approx 0$) and above which the rumor-endemic equilibrium emerges with $\rho^{AI}$ monotonically increasing with $\beta^U$.
        The inset magnifies the critical region near the threshold. In the endemic regime the fractional-order results agree closely with the integer-order curve; near the threshold the fractional-order points lie slightly above it, a finite-horizon artifact of the slower subdiffusive convergence ($T=20$), while the long-term equilibrium remains $\alpha$-independent (Remark~1 of the main manuscript).
        The zoom area is indicated by the gray rectangular frame and connecting lines in the main panel.
    }
    \vspace{-4mm}
    \label{fig:bifurcation}
\end{figure}

The bifurcation diagram quantitatively confirms several theoretical and numerical findings established earlier. First, the system undergoes a clear threshold transition at $\mathcal{R}_0\approx 1$: for $\beta^U\lesssim 0.25$, the integer-order infection density remains vanishingly small (numerically below $10^{-4}$), while for $\beta^U\gtrsim 0.3$, $\rho^{AI}$ rises continuously with increasing $\beta^U$, consistent with the transcritical nature of the rumor-free to rumor-endemic bifurcation predicted by Theorem~\ref{theorem:endemic_existence_sm}. The threshold value observed here, $\beta^{U,c}\approx0.289$, corresponds precisely to $\mathcal{R}_0=1$: the reproduction numbers computed on the same network are $\mathcal{R}_0=0.35$, $0.69$, $1.04$, $2.07$, and $5.18$ at $\beta^U=0.1$, $0.2$, $0.3$, $0.6$, and $1.5$, respectively, fully consistent with the integer-order bifurcation diagram. Second, in the endemic regime the fractional-order data points ($\alpha=0.6$ and $\alpha=0.8$) agree with the integer-order curve and approach the same threshold; the small positive deviations visible near the threshold (e.g., at $\beta^U=0.2$) reflect the slower subdiffusive convergence within the finite simulation horizon $T=20$, and converge to the same $\alpha$-independent equilibrium as the horizon increases, in agreement with Remark~1. Third, the smooth, monotonic growth of $\rho^{AI}$ with $\beta^U$ beyond the threshold indicates the absence of bistability or discontinuous transitions, which is consistent with the unique-equilibrium structure of the coupled system. Together, these observations reinforce the conclusion that while $\alpha$ governs the transient memory effects and the convergence rate, the ultimate fate of rumor propagation is dictated by the classical reproductive potential encoded in the infection rate $\beta^U$ and the network structure.

\subsection{Impact of the Awareness Layer on Rumor Dynamics}
\label{sec:sm_exp3}
To quantitatively assess the significance of the consciousness layer in the coupled UAU-SIS model, 
we conduct a systematic investigation of how awareness dynamics parameters and network topology 
influence the final rumor prevalence. This simulation aims to validate that the consciousness layer 
is not merely an auxiliary component but plays a crucial role in modulating rumor spread through 
information dissemination and behavioral changes.

We fix the fractional order at $\alpha = 0.8$ and employ a conservative infection rate 
$\beta^U = 0.4$ to ensure observable rumor activity. The network consists of $N = 250$ nodes 
with the risk layer generated using $m_{B} = 2$. We systematically vary three key 
parameters of the consciousness layer: the awareness spreading rate $\lambda \in [0.1, 1.0]$ 
(in 10 evenly spaced steps), the forgetting rate $\delta \in [0.1, 1.0]$ (likewise in 10 steps), and the network connectivity 
$m_{A} \in \{2, 4, 8\}$. For each parameter combination, we measure the 
steady-state infection density after $T = 20.0$ time units using our fractional-order IBMFA solver. The simulation results, which comprehensively characterize the influence of consciousness layer 
properties on rumor outcomes, are presented in Fig.~\ref{fig:exp3_consciousness_impact}.

\begin{figure}[htbp]
  \centering
  \subfigure[Three-dimensional visualization of infection density as functions of awareness parameters]{
    \includegraphics[width=0.8\linewidth]{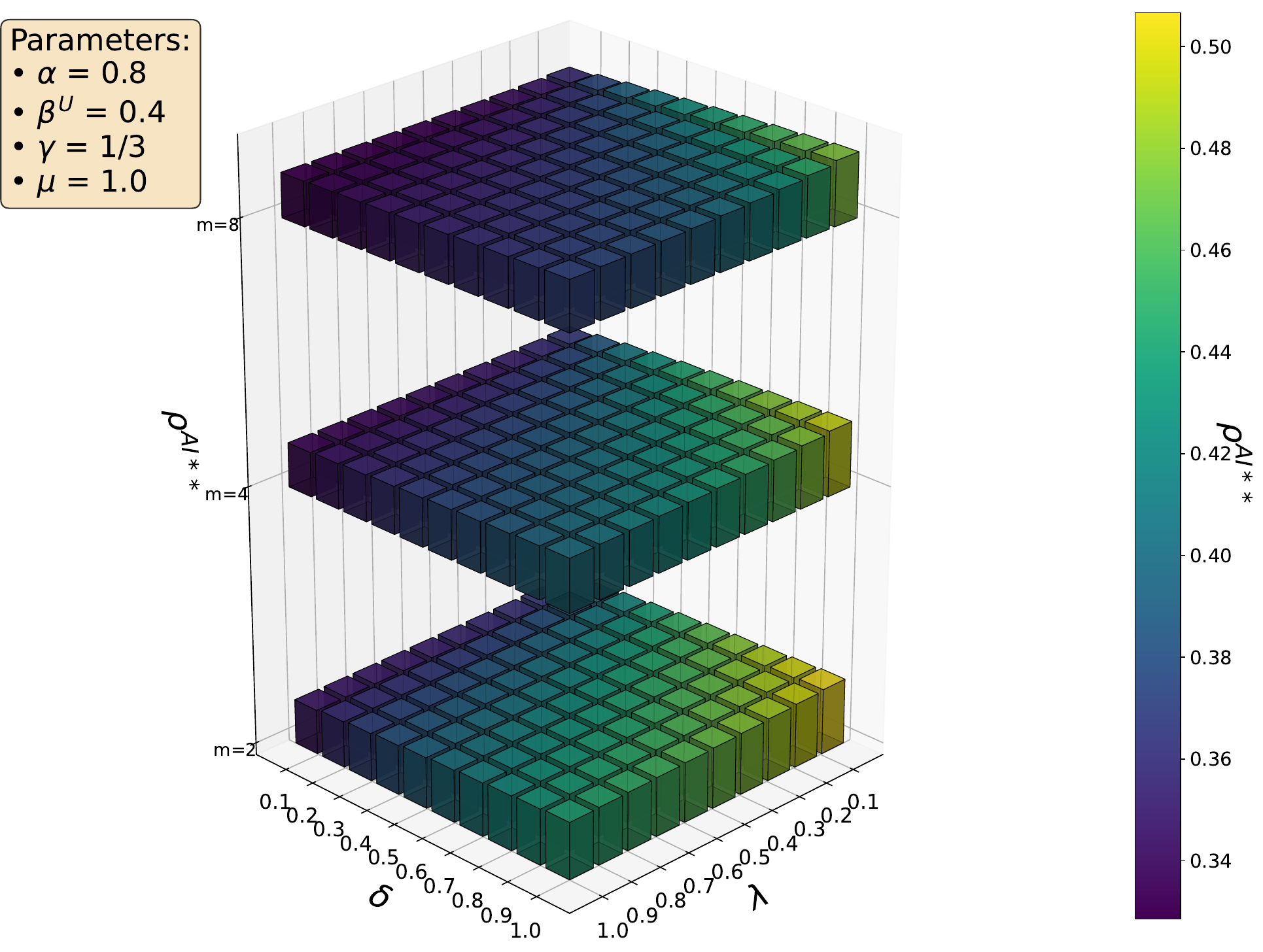}
    \label{fig:exp3a}
  }
  \subfigure[Average infection density across different consciousness layer connectivities]{
    \includegraphics[width=0.6\linewidth]{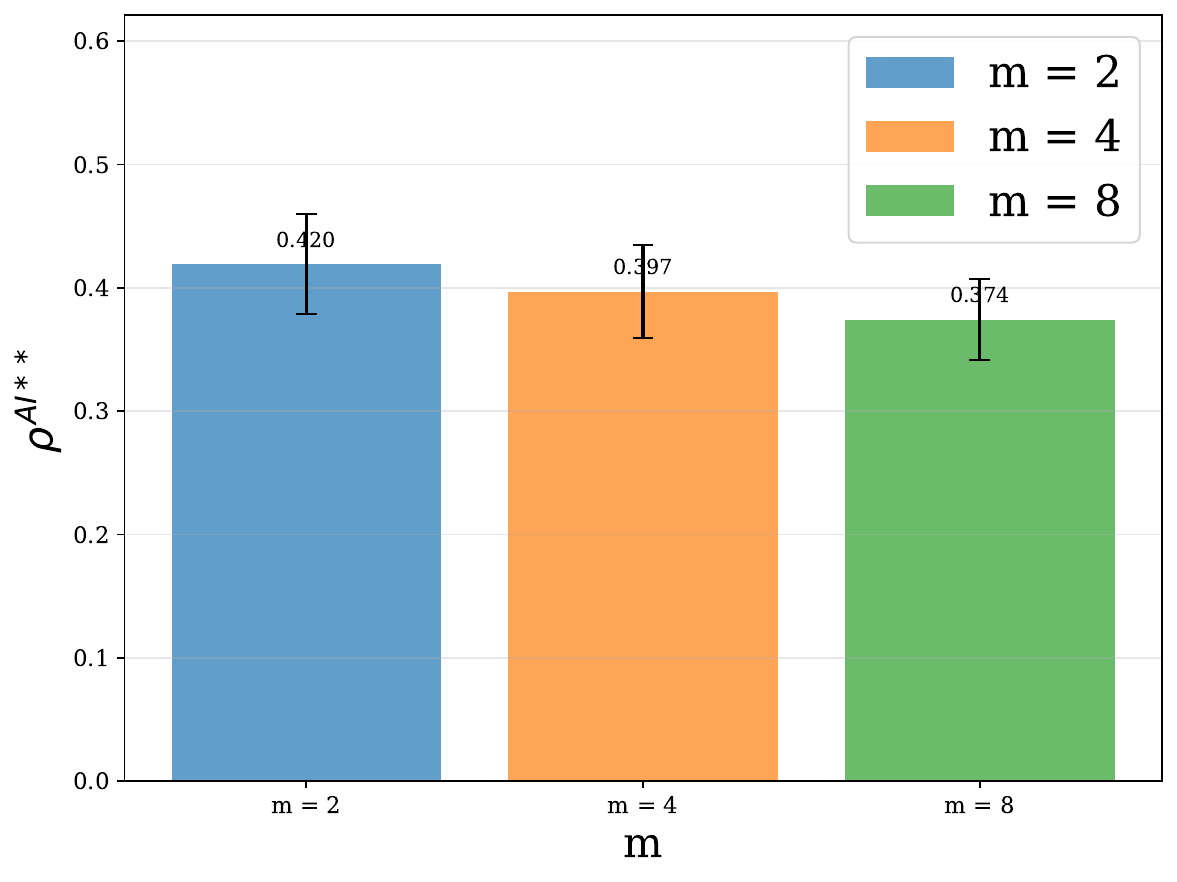}
    \label{fig:exp3b}
  }
  \vspace{-3mm}
  \caption{
    \textbf{Impact of consciousness layer parameters on rumor propagation outcomes.}
    (a) presents a three-dimensional multi-layer visualization of final infection densities as functions of awareness rate $\lambda$ and forgetting rate $\delta$ for different consciousness layer connectivities ($m = 2, 4, 8$). Each bar represents the steady-state infection density (AI fraction) for a specific parameter combination, with height and color indicating the infection level (see colorbar). The visualization demonstrates how optimal combinations of high $\lambda$ and low $\delta$ effectively suppress rumor spread across all network configurations.
    (b) shows the average final infection density for different consciousness layer degrees $m$, computed over all $\lambda$-$\delta$ parameter combinations. Error bars indicate standard deviation. The results clearly show that increased connectivity in the awareness layer ($m = 8$) leads to significantly lower rumor prevalence compared to sparser networks ($m = 2$), highlighting the crucial role of information dissemination topology in rumor control. Together, these findings provide strong empirical evidence for the importance of the consciousness layer in the coupled UAU-SIS model.
}
\vspace{-2mm}
  \label{fig:exp3_consciousness_impact}
\end{figure}

The analysis of the three-dimensional visualization in Fig.~\ref{fig:exp3a} reveals a consistent pattern across all consciousness layer configurations: higher forgetting rates $\delta$ coupled with lower corrective information spreading rates $\lambda$ consistently lead to elevated steady-state rumor prevalence, with observed increases in the range of 20\%--30\% between optimal and worst-case parameter combinations. This observed relationship aligns perfectly with theoretical expectations: rapid forgetting of factual information and sluggish correction propagation naturally diminish the proportion of informed individuals in the population, thereby facilitating more extensive rumor propagation. The systematic variation in rumor outcomes with information parameters provides compelling evidence that the characteristics of corrective information dissemination significantly influence rumor spreading dynamics.

This finding leads to important implications for rumor management strategies in knowledgeable communities. The parameter space defined by $(\lambda, \delta)$ essentially represents different information campaign strategies: aggressive debunking information spreading with persistent memory ($\lambda \rightarrow 1, \delta \rightarrow 0$) versus passive correction dissemination with quick forgetting ($\lambda \rightarrow 0, \delta \rightarrow 1$). The consistent 20\%--30\% variation in rumor prevalence across this parameter space underscores the meaningful impact of strategic information management in responding to rumor outbreaks. Furthermore, the robust emergence of this pattern across different network connectivities suggests that while the network structure modulates the quantitative impact, the fundamental relationship between information dynamics and rumor outcomes remains consistently observable.

Beyond the parameter effects observed in Fig.~\ref{fig:exp3a}, the three-dimensional visualization also reveals distinct infection density distributions for different $m$, indicating that the topological structure of the consciousness layer itself significantly influences rumor propagation dynamics. This structural effect is more explicitly quantified in Fig.~\ref{fig:exp3b}, which clearly demonstrates that higher connectivity in the awareness layer ($m = 8$) leads to a substantially lower average rumor prevalence compared to sparser networks ($m = 2$).

The underlying mechanism for this topological dependence can be attributed to the enhanced efficiency of debunking information dissemination in better-connected networks. Higher values of $m$ increase the average degree of the consciousness layer, facilitating more rapid and extensive spread of corrective information throughout the highly knowledgeable community. This creates a larger cohort of informed individuals who exhibit a reduced susceptibility to believing and spreading rumors, thereby impeding the propagation process in the rumor layer. The observed reduction in rumor prevalence with increasing $m$ highlights the crucial role of network connectivity in determining the effectiveness of information-based interventions in academic and research environments.

From a practical perspective, these findings emphasize the importance of building well-connected communication networks for factual information dissemination during rumor outbreaks in knowledgeable communities. The structural properties of information channels are shown to be as critical as the content parameters ($\lambda$, $\delta$) in determining rumor propagation outcomes. This dual dependence on both topological and dynamical aspects of the consciousness layer reinforces the necessity of the multilayer network framework for accurately modeling coupled information-rumor systems and designing effective debunking strategies in educational and research institutions and high-tech companies.
\vspace{-3mm}
\subsection{Impact of Network Topology on Rumor Propagation}
\label{sec:sm_exp4}
Building upon the findings from the previous simulations, we now investigate how the topological 
characteristics of both network layers collectively influence rumor propagation dynamics in the 
coupled UAU-SIS model. While the previous subsection examined the role of connectivity within 
a fixed network family (Barabási-Albert), this simulation systematically explores the effects of 
different network architectures on both the information and risk layers.

We consider three basic network types for each layer: Erdős-Rényi (ER) random graphs, 
Barabási-Albert (BA) scale-free networks, and regular graphs. These topologies correspond to structures widely observed in real social and information systems: scale-free degree distributions characterize many online social platforms and communication networks, arising naturally through the mechanisms of network growth and preferential attachment \cite{barabasi1999emergence}; random-graph-type structures approximate weakly structured or homogenized communication channels; and regular, strongly clustered structures resemble fixed organizational or acquaintance circles. The spreading of rumors over ER and scale-free network models has been analyzed theoretically \cite{nekovee2007theory}, which showed that scale-free topologies substantially enhance rumor propagation. This design creates a comprehensive 
3×3 parameter space encompassing all possible combinations of information and risk layer topologies. 
To isolate topological effects from connectivity differences, all networks are generated to maintain 
consistent average degrees: approximately 4 for the information layer and 2 for the risk layer. 
Other model parameters are fixed at $\alpha = 0.8$, $\lambda = 0.1$, $\beta^U = 1.5$, 
$\delta = 0.5$, and $\mu = 1.0$, with a population size of $N = 500$ nodes. The comparative results across all nine network combinations, which reveal the complex interplay 
between information dissemination structure and rumor propagation pathways, are presented in 
Fig.~\ref{fig:exp4_network_comparison}.

\begin{figure*}[htbp]
  \centering
  \includegraphics[width=\linewidth]{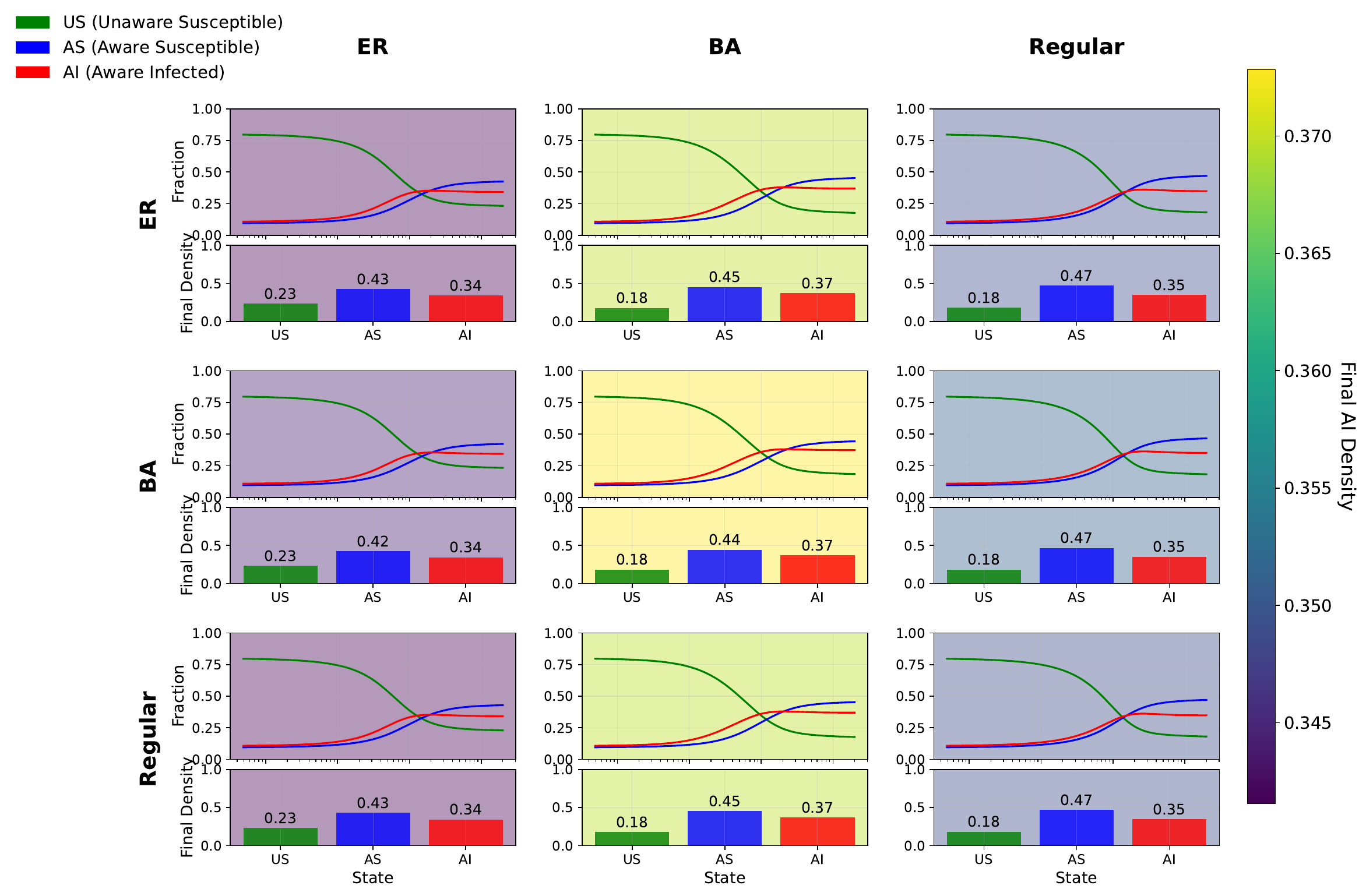}
  \vspace{-10mm}
  \caption{
    \textbf{Comparison of rumor propagation dynamics across different network topology combinations.}
    The 3×3 grid compares nine different combinations of information layer (rows: ER, BA, Regular) and risk layer (columns: ER, BA, Regular) network topologies. Each cell contains two panels: the upper panel shows the temporal evolution of state densities (US, AS, AI) on a logarithmic time axis, while the lower panel displays the final steady-state densities as a bar chart. The background color of each cell corresponds to the final infection density (AI state), with the colorbar providing the reference scale. All networks were constructed to maintain approximate average degrees of 4 for the information layer and 2 for the risk layer.
  }
  \vspace{-6mm}
  \label{fig:exp4_network_comparison}
\end{figure*}

Inspection of Fig.~\ref{fig:exp4_network_comparison} reveals several key patterns that illuminate the complex role of network topology in coupled rumor propagation systems. First, the consistent state proportions (US, AS, AI) within each column indicate that information layer topology exerts relatively limited influence on the final rumor propagation outcomes. This phenomenon can be understood through the model's awareness diffusion mechanism: awareness spreading depends on the proportion of aware neighbors $M_i^A(t)$ in the information layer network (matrix $\mathbf{A}$). Since all information layer networks are constrained to the same average degree (approximately 4), topological variations, such as degree heterogeneity in BA networks, may be masked by the average connectivity. Furthermore, the relatively low awareness spreading rate ($\lambda = 0.1$) likely limits the manifestation of topological effects in awareness diffusion. Consequently, despite differing information layer topologies, awareness propagation dynamics converges to similar patterns under identical risk layer conditions, resulting in consistent state distributions.

In contrast, the risk layer topology demonstrates a pronounced impact on rumor propagation outcomes. Specifically, BA networks exhibit the highest infection density (AI state), ER networks the lowest, and regular networks intermediate values. This pattern primarily stems from degree distribution heterogeneity in the risk layer: highly connected nodes (hubs) in BA networks act as super-spreaders that accelerate rumor diffusion, as infection propagation depends on the proportion of infected neighbors $M_i^I(t)$ in the risk layer network (matrix $\mathbf{B}$). In BA networks, $M_i^I(t)$ grows more rapidly due to the presence of highly connected nodes that become infected early and propagate rumors efficiently. The randomness of ER networks leads to more uniform but slower propagation, while the regular structure of regular networks results in an intermediate propagation speed. Thus, the risk layer topology directly determines rumor severity by modulating infection dynamics.

Interestingly, the risk layer topology also indirectly influences awareness distribution patterns. For instance, regular networks display the highest AS density (aware but uninfected), while ER networks show the lowest, and BA networks intermediate. This phenomenon can be explained through the model's state coupling mechanism: infected nodes immediately convert to the aware state ($\kappa \to \infty$), but the proportion of aware nodes is constrained by infection dynamics. In BA networks, rapid infection causes more nodes to convert to the AI state, reducing accumulation in the AS state; whereas in regular networks, slower infection allows more nodes to maintain the AS state. BA networks exhibit intermediate behavior. Therefore, the risk layer topology regulates the awareness state distribution through the infection process, highlighting the dynamic coupling between the two network layers.

These findings underscore the dominant role of the risk layer topology in rumor propagation, while the information layer effects appear relatively subdued under the current parameterization. This insight suggests that rumor management in highly knowledgeable communities should prioritize the structural characteristics of risk networks (such as degree distribution), for example by identifying and isolating highly connected nodes to suppress rumor diffusion. Simultaneously, the coupled mechanism reveals that intervention strategies must consider the interactions between both network layers to achieve effective rumor control.

\end{document}